\documentclass[11pt]{article}

\usepackage[margin=1in]{geometry}

\usepackage{amsfonts}
\usepackage{amsmath}
\usepackage{amsthm}
\usepackage{amssymb}
\usepackage{mathtools}
\usepackage{bm}
\usepackage{bbm}
\usepackage{mathrsfs}
\usepackage{dsfont}
\usepackage{upgreek}
\usepackage{xparse} 
\usepackage{graphicx}
\usepackage{epsfig}
\usepackage{float}
\usepackage{rotating}
\usepackage{wrapfig}
\usepackage[ruled,vlined]{algorithm2e}

\usepackage{indentfirst}
\usepackage{verbatim}
\usepackage[makeroom]{cancel}
\usepackage{framed}
\usepackage{xcolor}
\usepackage{enumitem}
\usepackage[normalem]{ulem}
\usepackage{soul}
\usepackage{lineno}

\usepackage[font=footnotesize,labelfont=bf]{caption}
\usepackage[colorinlistoftodos,prependcaption,textsize=small]{todonotes}

\usepackage{abstract}
\usepackage[backref=page]{hyperref}
\usepackage[capitalize]{cleveref}

\definecolor{corlinks}{RGB}{0,0,180}
\hypersetup{
  colorlinks=true,
  linkcolor=corlinks,
  citecolor=corlinks,
  urlcolor=corlinks,
  pdfborder={0 0 0}
}

\newcommand{\ket}[1]{|#1\rangle}
\newcommand{\bra}[1]{\langle#1|}

\newcommand{\ketbra}[2]{|#1\rangle\! \langle #2|}
\newcommand{\proj}[1]{|#1\rangle\!\langle #1|}

\newcommand{\Tr}{\operatorname{Tr}}
\newcommand{\tr}[1]{\operatorname{Tr}\left(#1\right)}

\newcommand{\polylog}{\operatorname{polylog}}

\def\01{\{0,1\}}

\newcommand{\diag}{\operatorname{diag}}

\newcommand{\A}{\mathcal A}

\newcommand{\F}{\mathcal{F}}

\newcommand{\D}{\mathrm{D}}

\newcommand{\id}{\mathrm{id}}

\newcommand{\regret}{\mathrm{Regret}}

\DeclarePairedDelimiterX{\norm}[1]{\lVert}{\rVert}{#1}

\newtheorem{theorem}{Theorem}
\numberwithin{theorem}{section}

\newtheorem{definition}[theorem]{Definition}
\newtheorem{lemma}[theorem]{Lemma}
\newtheorem{corollary}[theorem]{Corollary}
\newtheorem{proposition}[theorem]{Proposition}

\newtheorem{assumption}[theorem]{Assumption}

\newtheorem*{proposition*}{Proposition}

\theoremstyle{definition}
\newtheorem{remark}[theorem]{Remark}

\def\01{\{0,1\}}

\DeclareDocumentCommand{\dist}{o}{%
  \IfNoValueTF{#1}{d}{d_{\mathrm{#1}}}%
}

\NewDocumentCommand{\Prob}{e{_} m}{%
  \IfNoValueTF{#1}{%
    \Pr \set*{#2}
  }{%
    \Pr_{#1} \set*{#2}
  }%
}

\newcommand{\dd}{\mathrm{d}}

\newcommand{\beq}{\begin{equation}}
\newcommand{\beql}[1]{\begin{equation}\label{#1}}
\newcommand{\eeq}{\end{equation}}
\newcommand{\eeqp}{\,\,\,.\end{equation}}
\newcommand{\eeqc}{\,\,\,,\end{equation}}

\DeclareMathOperator{\poly}{poly}

\DeclareMathOperator{\Exp}{\mathbb{E}}

\usepackage{tikz,tikz-qtree}
\usetikzlibrary{arrows}

\begin{document}

\title{Reinforcement learning for quantum processes with memory}

\renewcommand{\thefootnote}{\fnsymbol{footnote}}

\author{
Josep Lumbreras\textsuperscript{* 1,2}\\
\href{mailto:josep.lz@ntu.edu.sg}{josep.lz@ntu.edu.sg} 
\and
Ruo Cheng Huang\textsuperscript{* 1,2}\\
\href{mailto:ruocheng001@e.ntu.edu.sg}{ruocheng001@e.ntu.edu.sg}
\and
Yanglin Hu\textsuperscript{* 3}\\
\href{mailto:yanglin.hu@u.nus.edu}{yanglin.hu@u.nus.edu}
\and
Marco Fanizza\textsuperscript{4} \\
\href{mailto:marco.fanizza@inria.fr}{marco.fanizza@inria.fr}
\and
Mile Gu\textsuperscript{1,2,5} \\
\href{mailto:mgu@quantumcomplexity.org}{mgu@quantumcomplexity.org}
}

\date{\today}

\maketitle
\thispagestyle{empty}
\begingroup
\renewcommand\thefootnote{}
\footnotetext{$^*$ J. Lumbreras, R. C. Huang, and Y. Hu contributed equally and share first authorship.}
\footnotetext{$^1$ Nanyang Quantum Hub, School of Physical and Mathematical Sciences, Nanyang Technological University, Singapore}
\footnotetext{$^2$ Centre for Quantum Technologies, Nanyang Technological University, Singapore}
\footnotetext{$^3$ QICI Quantum Information and Computation Initiative, School of Computing and Data Science, The University of Hong Kong, Pokfulam Road, Hong Kong SAR, China}
\footnotetext{$^4$ Inria, Télécom Paris - LTCI, Institut Polytechnique de Paris}
\footnotetext{$^5$ MajuLab, CNRS-UNS-NUS-NTU International Joint Research Unit, UMI 3654, Singapore 117543, Singapore}
\endgroup

\begin{abstract}
In reinforcement learning, an agent interacts sequentially with an environment to maximize a reward, receiving only partial, probabilistic feedback. This creates a fundamental exploration-exploitation trade-off: the agent must explore to learn the hidden dynamics while exploiting this knowledge to maximize its target objective. While extensively studied classically, applying this framework to quantum systems requires dealing with hidden quantum states that evolve via unknown dynamics. We formalize this problem via a framework where the environment maintains a hidden quantum memory evolving via unknown quantum channels, and the agent intervenes sequentially using quantum instruments. For this setting, we adapt an optimistic maximum-likelihood estimation algorithm. We extend the analysis to continuous action spaces, allowing us to model general positive operator-valued measures (POVMs). By controlling the propagation of estimation errors through quantum channels and instruments, we prove that the cumulative regret of our strategy scales as $\widetilde{\mathcal{O}}(\sqrt{K})$ over $K$ episodes. Furthermore, via a reduction to the multi-armed quantum bandit problem, we establish information-theoretic lower bounds demonstrating that this sublinear scaling is strictly optimal up to polylogarithmic factors. As a physical application, we consider state-agnostic work extraction. When extracting free energy from a sequence of non-i.i.d. quantum states correlated by a hidden memory, any lack of knowledge about the source leads to thermodynamic dissipation. In our setting, the mathematical regret exactly quantifies this cumulative dissipation. Using our adaptive algorithm, the agent uses past energy outcomes to improve its extraction protocol on the fly, achieving sublinear cumulative dissipation, and, consequently, an asymptotically zero dissipation rate.
\end{abstract}

\renewcommand{\thefootnote}{\arabic{footnote}}
\setcounter{footnote}{0}

\setlength{\parskip}{4pt}
\setlength{\parindent}{7mm}

\newpage

\tableofcontents

\section{Introduction}
\label{sec:intro}

In reinforcement learning (RL), an agent interacts sequentially with an unknown environment to learn optimal decision-making~\cite{sutton1998reinforcement}. The primary mechanism driving this learning process is the maximization of a cumulative reward signal, a versatile formulation capable of encoding complex tasks such as the control of autonomous robots~\cite{kober2013reinforcement}, autonomous navigation~\cite{levine2016end}, and industrial resource optimization~\cite{mirhoseini2021graph,mao2016resource}. However, in most realistic physical settings, such as a robot navigating a cluttered, partially mapped physical space, the agent lacks direct access to the true underlying state of the system. Such scenarios are traditionally modeled as Partially Observable Markov Decision Processes (POMDPs)~\cite{kaelbling1998planning}. This partial observability severely complicates learning, as the agent must maintain a history of past interactions to infer the latent dynamics while simultaneously balancing the fundamental \emph{exploration-exploitation trade-off}. While the theoretical understanding of exploration in classical POMDPs has seen significant recent breakthroughs~\cite{jin2020sample,liu2022partially}, these classical frameworks are fundamentally incompatible with quantum physical systems.

As quantum technologies mature, there is a critical need for agents capable of learning and controlling quantum processes. Reinforcement learning has emerged as a highly effective tool for this task, demonstrating significant success across several fundamental domains. In quantum control and state preparation, RL agents have been deployed to discover optimal control pulses and stabilize complex target states, frequently outperforming traditional gradient-based optimal control methods by learning robust policies directly from environmental interaction~\cite{Andreasson2019quantumerror,zhang2019does}. In quantum circuit compilation, RL is utilized to autonomously generate hardware-efficient gate sequences and routing strategies that minimize circuit depth and limit decoherence~\cite{quetschlich2023compiler}. Most notably, RL has driven massive breakthroughs in quantum error correction. While foundational works established the ability of RL to autonomously discover novel error-correcting codes and decoding strategies~\cite{Andreasson2019quantumerror}, recent landmark results by Google Quantum AI have repurposed the error correction cycle itself as an RL environment~\cite{sivak2026reinforcement}. By utilizing binary error detection events as learning signals, these agents continuously tune thousands of physical control parameters on the fly, actively stabilizing logical qubits against environmental drift.

Despite these impressive application-specific successes, learning and controlling general processes in these settings remains intrinsically difficult: unlike classical probability distributions, hidden quantum states evolve via completely positive trace-preserving (CPTP) channels, and the agent's very act of gathering information inevitably disturbs the latent state. Consequently, the canonical reinforcement learning problem, where an agent sequentially interacts with an unknown evolving environment to maximize a cumulative reward, lacks a rigorous formulation when the environment itself is quantum. It is fundamentally natural to consider a classical agent probing an evolving quantum process to dynamically extract a continuous resource (Figure~\ref{fig:cartoon}). State-agnostic work extraction is a key example, which has been successfully explored in the i.i.d. setting~\cite{lumbreras2025quantum,watanabe2026universal}. However, realistic physical sources rarely emit uncorrelated states. Instead, they exhibit temporal correlations driven by an inaccessible latent memory. This necessitates a shift to the non-i.i.d. regime, where an agent seeks to harvest thermodynamic free energy from a sequence of correlated quantum states. In such settings, the agent must continually balance learning the underlying emission dynamics with maximizing its physical energy yield. This physical scenario exemplifies the broader challenge of sequential reward maximization under unknown quantum dynamics, motivating our central question:

\begin{center}
    \textit{Can a classical agent efficiently resolve the fundamental exploration-exploitation trade-off over an uncharacterized quantum stochastic process?}
\end{center}

To rigorously formalize the problem of maximizing rewards over quantum processes, we introduce the input-output \emph{Quantum Hidden Markov Model} (QHMM). This framework serves as a true POMDP-style generalization for quantum systems, treating the latent transition dynamics as strictly unknown and learned purely through interaction. Furthermore, it acts as the natural, stateful extension of the Multi-Armed Quantum Bandit (MAQB) framework~\cite{lumbreras2022multi}. In our episodic QHMM setting, the environment carries a finite-dimensional quantum memory that evolves in time. At each step, the agent probes the system by selecting an action corresponding to a quantum instrument, producing a classical outcome (reward) and a corresponding post-measurement memory state.

While the QHMM faithfully captures the memory of the quantum process, learning and controlling its dynamics directly poses a fundamental challenge: the underlying quantum state driving this memory is inherently unobservable. To overcome this limitation, we adopt the framework of Observable Operator Models (OOMs). Rather than attempting to track the inaccessible internal quantum dynamics, the OOM formalism allows us to express the system's evolution entirely in terms of observable trajectory probabilities. By translating the unobservable physical dynamics of the QHMM strictly into these observable statistics, we establish a rigorous foundation not just for learning the underlying process, but for designing algorithmic protocols that systematically resolve the exploration-exploitation trade-off.

By establishing this general mathematical framework, we can then directly resolve the physical work extraction problem motivated above. When an agent attempts to extract free energy from the non-i.i.d.\ sequence generated by a QHMM, any lack of knowledge regarding the hidden memory leads to a sub-optimal extraction protocol. By the laws of thermodynamics, this informational deficit inevitably results in the irreversible dissipation of energy~\cite{kolchinsky2017dependence,riechers2021initial}. Crucially, in our sequential learning framework, the mathematical regret incurred by the agent's policy exactly quantifies this cumulative thermodynamic dissipation. Consequently, designing a regret-minimizing RL algorithm is physically equivalent to engineering an adaptive Maxwell's demon that learns to minimize energy dissipation on the fly.

\paragraph{Organization of the paper.}
The remainder of this manuscript is organized as follows. Section~\ref{sec:results} outlines our main theoretical results and thermodynamic applications, followed by a broader discussion of future directions in Section~\ref{sec:discussion} and related literature in Section~\ref{sec:related_work}. We formally define the input-output QHMM framework and the sequential learning objective in Section~\ref{sec:model}. Section~\ref{sec:oom} introduces the undercomplete assumption and constructs the parameter-free Observable Operator Model (OOM). Section~\ref{sec:omle_discrete} details the Optimistic Maximum Likelihood Estimation (OMLE) algorithm for this quantum setting. We establish the end-to-end $\widetilde{\mathcal{O}}(\sqrt{K})$ regret bounds for discrete action spaces in Section~\ref{sec:regret_discrete}, and extend this mathematical machinery to continuous measurement spaces in Section~\ref{sec:cont-actions}. In Section~\ref{sec:lower_bounds}, we prove the statistical optimality of these bounds via information-theoretic lower bounds. Finally, Section~\ref{sec:work_extraction} maps this framework to state-agnostic work extraction, and Section~\ref{sec:numerics} corroborates our analytical bounds with numerical simulations.

\begin{figure}
    \centering
    \includegraphics[width=0.85\linewidth]{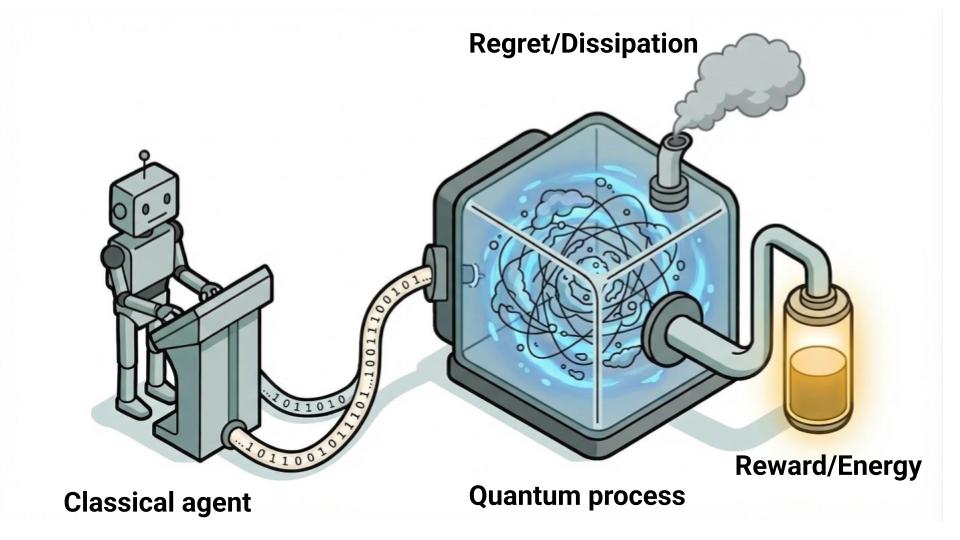}
    \caption{\textbf{The Classical-Quantum RL Interface}. Scheme of the reinforcement learning framework explored in this work. A classical agent (left) interacts with an unknown quantum process (center) via classical control bitstrings. The agent’s goal is to maximize cumulative rewards—represented here as extracted energy—while minimizing ``waste''. This waste can be viewed through two lenses: as regret in the context of learning theory, or as dissipation in the context of quantum thermodynamics.}
    \label{fig:cartoon}
\end{figure}

\subsection{Results}
\label{sec:results}

In this section, we present the formal framework of our learning problem and outline our main results. We first introduce the input-output Quantum Hidden Markov Model (QHMM) and establish the fundamental observability assumptions required for statistical identifiability. We then present our learning algorithm and establish sublinear regret bounds for both discrete and continuous action spaces, highlighting the distinct mathematical innovations necessitated by the quantum setting. Furthermore, we prove the information-theoretic optimality of these bounds. Finally, we demonstrate the physical significance of our framework through a direct application to quantum thermodynamics.

\subsubsection{The Input-Output QHMM Framework}

We consider a sequential decision-making scenario in which a classical agent interacts with a finite-dimensional quantum memory. We formalize this interaction through the input-output Quantum Hidden Markov Model (QHMM), strictly defined in Section~\ref{sec:model}. The environment maintains a hidden quantum state $\rho_1 \in \D(\mathcal{H}_S)$ of dimension $S$. Between interaction rounds, this latent memory evolves via unknown, completely positive trace-preserving (CPTP) channels $\mathbb{E}_l: \mathbb{M}_{S} \to \mathbb{M}_{S}$ for $l \in [L]$.

At each step $l \in [L]$, the agent selects an action $a_l \in \mathcal{A}$ from an available action set. This action dictates the specific quantum instrument $\mathcal{P}^{(a_l)}$ applied to the latent memory. The instrument comprises a collection of completely positive trace-non-increasing (CPTNI) maps $\{\Phi_o^{(a_l)}\}_{o \in [O]}$ that sum to a CPTP map. Applying this instrument yields a classical outcome $o_l \in [O]$ and correspondingly updates the post-measurement memory state. Based on the chosen action and the observed outcome, the agent receives a scalar reward $r_l(a_l, o_l)$.

This sequential interaction produces a trajectory $\tau_l = (a_1, o_1, \dots, a_l, o_l)$. The agent leverages this history to determine its subsequent action $a_{l+1} \in \mathcal{A}$ using a policy $\pi(\cdot | \tau_l)$, which defines a conditional probability distribution over the action set. The agent's performance over a single interaction episode of length $L$ is quantified by the value function, defined as the expected cumulative reward $V^\pi \coloneqq \mathbb{E}_\pi\big[\sum_{l=1}^L r_l(a_l, o_l)\big]$. 

The ultimate goal of the agent is to maximize this expected reward over $K$ independent episodes, which is mathematically equivalent to minimizing the cumulative regret:
\begin{equation}
    \regret(K) := \sum_{k=1}^K (V^* - V^{\pi^k}),
\end{equation}
where $V^*$ denotes the value of the optimal policy for the true environment. We provide a schematic representation of this classical-quantum interface in Figure~\ref{fig:scheme_interaction}.

\begin{figure}[htbp]
    \centering
    \includegraphics[width=\linewidth]{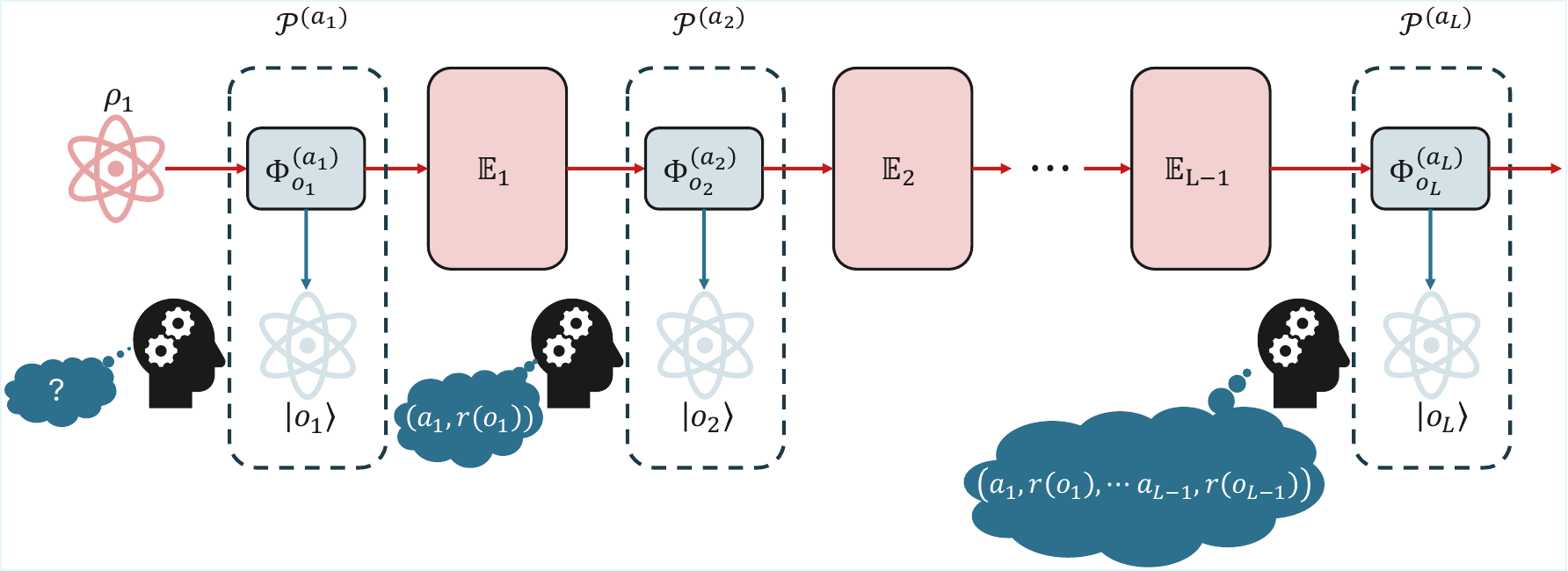}
    \caption{Schematic representation of the $L$-step input-output QHMM environment. The environment's initial hidden memory state, $\rho_1$, evolves sequentially through unknown completely positive trace-preserving (CPTP) channels $\mathbb{E}_l$. At each round $l$, a classical agent utilizes the past trajectory to select an action $a_l$, which dictates the quantum instrument $\mathcal{P}^{(a_l)}$ applied to the latent memory. This physical interaction, governed by the completely positive map $\Phi_{o_l}^{(a_l)}$, yields a classical outcome $o_l$. As depicted in the thought bubbles, the agent internally registers a scalar reward $r(o_l)$ based on this outcome and updates its accumulated trajectory history to optimize future interventions. Simultaneously, the post-measurement memory state is updated for the subsequent channel evolution.}
    \label{fig:scheme_interaction}
\end{figure}

\subsubsection{Identifiability and the Observable Operator Model}
\label{subsec:results_identifiability}

A fundamental challenge in learning hidden Markov models---both classical and quantum---is identifiability. In general, recovering the exact physical latent parameters (such as the specific internal CPTP channels) strictly from classical data is fundamentally ill-posed. The mapping from the internal quantum state space to the observable sequence is non-injective; infinitely many distinct physical environments can produce the exact same observable statistics. Consequently, any learning algorithm attempting to estimate the true physical parameters cannot distinguish between these identical-looking environments. Without a unique target to converge on, direct parameter estimation becomes statistically inefficient and algorithmically intractable.

To bypass this, one can instead construct an observable realization (historically termed a quasi-realization) that perfectly captures the outcome distributions. While such a realization is not strictly unique, it is identifiable up to a similarity transformation: any two minimal (regular) realizations predicting the same trajectory statistics are related by an invertible change of basis~\cite{vidyasagar2011complete}. This ensures that the equivalence class of realizations is unique, making the learning problem mathematically well-posed. This foundational principle has been successfully leveraged in the quantum domain to bound the sample complexity of learning matrix product operators and finitely correlated states~\cite{fanizza2023learning,baumgratz2013scalable, monras2016quantum}.

To ensure that such an observable realization exists and can be efficiently learned in our interactive setting, we require that the observable classical data contains sufficient information to distinguish between different latent memory states. In classical POMDPs, this is typically resolved by requiring the stochastic emission matrices to possess full column rank~\cite{jin2020sample,liu2022partially}, an assumption widely referred to as the \emph{undercomplete} condition.

In our quantum framework, the direct mathematical analogue is the existence of a linear recovery map that perfectly reconstructs the post-instrument quantum state strictly from the classical marginals. We formalize this through the \emph{undercomplete assumption}: for every action $a \in \mathcal{A}$, there exists a linear map $R^{(a)} : \mathbb{M}_{O} \rightarrow \mathbb{M}_{S} \otimes \mathbb{M}_{O}$ such that
\begin{equation}
    R^{(a)} \circ \Tr_S \circ \mathcal{P}^{(a)} = \mathcal{P}^{(a)}.
\end{equation}
To ensure sample-efficient learning, this inversion must be robust. We define the robustness constant $\kappa_{\mathrm{uc}}$ as the uniform upper bound on the induced $1$-norm of these recovery maps (see Section~\ref{sec:oom_uc} for the formal definition). This generalized condition perfectly recovers the classical undercomplete assumption studied in recent literature~\cite{hsu2012spectral,jin2020sample,liu2022partially} when the memory is restricted to diagonal states and the instruments correspond to purely classical operations.

Crucially, the existence of this recovery map allows us to evaluate the statistics of our process using only linear operations within the observable outcome space. Specifically, we construct an Observable Operator Model\footnote{Historically pioneered as the \emph{quasi-realization} problem in control theory; see Section~\ref{sec:oom_uc} for a detailed discussion.} (OOM)~\cite{jaeger2000observable}, which elegantly bypasses the latent quantum space entirely. For any step $l$, outcomes $o$, and actions $a, a'$, the OOM superoperator acting strictly on the classical register is defined as:
\begin{equation}\label{eq:oom_intro}
    \mathbf{A}_l(o, a, a') := \Tr_S \circ \mathcal{P}^{(a')} \circ (\mathbb{E}_l \otimes \mathcal{T}_o) \circ R^{(a)},
\end{equation}
where $\mathcal{T}_o(X) = \Tr_O(\proj{o}X)$. Using these operators, we can compute the sequence of conditional outcome probabilities of the form $\Pr(o_{l'}| \tau_{l'-1}, a_{l'})$ simply by sequentially composing the corresponding operators and tracing the observable space. We detail the construction of these trajectory probabilities in Section~\ref{sec:trajectory_probab}. This parameter-free formulation forms the rigorous backbone of the estimation phase of our learning algorithm.

\subsubsection{The OMLE Algorithm for quantum processes}
\label{subsec:results_algorithm}

To minimize regret, we adapt the Optimistic Maximum Likelihood Estimation (OMLE) algorithm~\cite{liu2022partially} to our quantum setting, fully detailed in Section~\ref{sec:omle_discrete} and Algorithm~\ref{alg:omle-qhmm}. Operating under the principle of optimism in the face of uncertainty, the algorithm proceeds in three main steps at each episode $k \in [K]$:
\begin{enumerate}
    \item \textbf{Estimation (Confidence Set Construction):} Using the dataset of past trajectories, the algorithm computes the Maximum Likelihood Estimator (MLE) for the QHMM parameters. These probabilities are evaluated via the OOM superoperators~\eqref{eq:oom_intro} for plausible quantum processes satisfying the undercomplete assumption. The algorithm then constructs a confidence region $\mathcal{C}_k$ of environments $\omega'$ whose log-likelihood falls within a specified radius $\beta_k$ of the MLE, guaranteeing that the true quantum process lies within this region with high probability.
    \item \textbf{Exploration-Exploitation (Optimistic Planning):} The algorithm selects a candidate model $\omega_k \in \mathcal{C}_k$ and a policy $\pi^k$ that jointly maximize the expected cumulative reward. This step naturally balances the exploration of uncertain dynamics with the exploitation of known rewarding transitions.
    \item \textbf{Execution:} The agent executes $\pi^k$ in the true environment for $L$ steps, records the new trajectory $\tau^k$, and updates the dataset. Intuitively, if the confidence region is sufficiently narrow due to extensive data collection, the selected optimistic policy will perform near-optimally when executed in the true environment.
\end{enumerate}

\subsubsection{End-to-End Learning Guarantees and Technical Contributions}
\label{subsec:results_bounds}

We establish end-to-end regret bounds for the OMLE algorithm. We separate our results into two regimes: discrete action spaces (general instruments) and continuous action spaces. For continuous action spaces we will restrict to quantum instruments that model POVMs. 

\paragraph{Discrete Actions and Error Propagation.} For a finite set of actions corresponding to general quantum instruments, we provide the following guarantee:

\begin{theorem}[Informal: Regret for Discrete Actions]
\label{thm:informal_discrete}
Let the action space $\mathcal{A}$ be discrete with cardinality $A$. Assuming that the true QHMM environment satisfies the undercomplete assumption with robustness $\kappa_{\mathrm{uc}}$, the OMLE algorithm achieves a cumulative regret of $\widetilde{\mathcal{O}}(\poly(A,O,S,L) \times \kappa_{\mathrm{uc}}^2 \sqrt{K})$. 
\end{theorem}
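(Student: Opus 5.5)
We follow the OMLE analysis of Liu et al.~\cite{liu2022partially}, adapted to the OOM representation~\eqref{eq:oom_intro}. The argument has three layers: (i) a likelihood-based confidence guarantee, (ii) an optimism-based reduction of regret to estimation error on the played policies, and (iii) a pigeonhole/eluder-type argument that converts the in-sample estimation guarantees into a bound on the out-of-sample errors actually incurred.

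\textbf{Step 1: MLE concentration.} We discretize the model class (CPTP channels $\mathbb{E}_l$, the initial state $\rho_1$, and the instruments) with an $\epsilon$-bracketing net in the induced $1$-norm on trajectory distributions. The log-covering number is $\mathcal{O}(\poly(S,O,A,L)\log(K/\epsilon))$, since each channel is described by $\mathcal{O}(S^4)$ real parameters. A martingale/Freedman argument on the log-likelihood ratio then yields, with probability $1-\delta$ and for $\beta_k=\mathcal{O}(\log(\mathcal{N}K/\delta))$:
\begin{itemize}
\item the true model satisfies $\omega^*\in\mathcal{C}_k$ for all $k$;
\item every $\omega\in\mathcal{C}_k$ satisfies $\sum_{t<k}\|\Pr^{\pi^t}_{\omega}-\Pr^{\pi^t}_{\omega^*}\|_1^2\le \mathcal{O}(\beta_k)$ over full trajectories.
\end{itemize}
Optimism then gives $V^*-V^{\pi^k}\le V^{\pi^k}_{\omega_k}-V^{\pi^k}_{\omega^*}\le L\,\|\Pr^{\pi^k}_{\omega_k}-\Pr^{\pi^k}_{\omega^*}\|_1$. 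The problem therefore reduces to bounding the sum over $k$ of these total variation distances.

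\textbf{Step 2: error propagation through the OOM.} We write trajectory probabilities as products of the operators $\mathbf{A}_l(o_l,a_l,a_{l+1})$ applied to an initial vector $\mathbf{b}_0=\Tr_S\circ\mathcal{P}^{(a_1)}(\rho_1)$. We then telescope $\Pr_{\omega_k}-\Pr_{\omega^*}$ into a sum over $l$ of terms of the form
\[
(\text{true prefix})\cdot(\mathbf{A}^k_l-\mathbf{A}^*_l)\cdot(\text{estimated suffix}).
\]
The suffix operators, summed over outcomes and weighted by the policy, are controlled in induced $1$-norm by $\kappa_{\mathrm{uc}}$. The reason is that $\Tr_S\circ\mathcal{P}\circ\mathbb{E}\circ R$ is a composition in which $R$ is bounded by $\kappa_{\mathrm{uc}}$ and everything else is a trace-nonincreasing CP map. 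The key quantum point is that the contraction of CPTP maps in trace norm replaces the column-stochasticity used classically, and summing over $o$ of $\mathcal{T}_o$ reconstructs a trace-preserving map.

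For the middle factor, we use the recovery map to express $(\mathbf{A}^k_l-\mathbf{A}^*_l)\,\mathbf{b}^*(\tau_{l-1})$ through differences of conditional outcome distributions at steps $l$ and $l+1$. This costs another factor of $\kappa_{\mathrm{uc}}$, giving the $\kappa_{\mathrm{uc}}^2$ dependence in the statement.

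\textbf{Step 3: from in-sample to out-of-sample (main obstacle).} Step 1 controls errors only on past policies $\pi^t$, $t<k$, whereas the regret involves the current policy $\pi^k$. Following~\cite{liu2022partially}, we write each per-step error as $\sum_j|\langle x_{k,j},w_{k,j}\rangle|$, with
\begin{itemize}
\item $w_{k,j}$ the vectorized operator differences, of dimension $d=\poly(S,O,A)$;
\item $x_{k,j}$ the true prefix vectors induced by $\pi^k$.
\end{itemize}
The generalized $\ell_1$ eluder/pigeonhole lemma then gives
\[
\sum_k \min\bigl(1,\text{error}_k\bigr)\le \widetilde{\mathcal{O}}\bigl(\poly(d,L)\,\kappa_{\mathrm{uc}}^2\sqrt{K\beta_K}\bigr).
\]
Applying this lemma requires norm bounds on $x$ and $w$. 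The prefix vectors, being classical marginals $\Tr_S$ of subnormalized states, have $1$-norm at most $1$. The delicate part is bounding $\|w\|$ uniformly by $\kappa_{\mathrm{uc}}$ for candidate models $\omega_k\in\mathcal{C}_k$, not just the true one. We handle this by restricting the model class in the MLE step to undercomplete models with robustness at most $\kappa_{\mathrm{uc}}$, so that the confidence set inherits the bound.

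Combining Steps 1--3 with $\beta_K=\widetilde{\mathcal{O}}(\poly(S,O,A,L))$ yields $\regret(K)=\widetilde{\mathcal{O}}(\poly(A,O,S,L)\,\kappa_{\mathrm{uc}}^2\sqrt{K})$.
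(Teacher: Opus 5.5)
Your architecture is the paper's: an MLE confidence set intersected with $\Omega_{\mathrm{uc}}$, optimism, telescoping with true prefix and estimated suffix, an in-sample conversion costing one more $\kappa_{\mathrm{uc}}$, then an $\ell_1$ eluder transfer. However, the step on which the polynomial (rather than exponential) dependence on $\kappa_{\mathrm{uc}}$ rests is not justified by what you wrote. In Step 2 you bound the suffix by noting that in each factor $\Tr_S\circ\mathcal{P}^{(a_{t+1})}\circ(\mathbb{E}_t\otimes\mathcal{T}_{o_t})\circ R^{(a_t)}$ the map $R$ is the only non-contractive piece. That is a per-factor estimate. Since $\mathbf{A}^k_{L-1:l+1}$ contains $L-l-1$ recovery maps, composing it gives $\kappa_{\mathrm{uc}}^{L-l-1}$, exponential in the horizon, not the single factor you then use to reach $\kappa_{\mathrm{uc}}^2$. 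The bound $\|R\|_{1\to1,\diag}\le\kappa_{\mathrm{uc}}$ is a worst case over all diagonal inputs, and nothing in a factor-by-factor argument stops each intermediate $R$ from costing that worst case.

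The missing idea, which is the content of Lemma~\ref{lem:one_norm_op}, is that the intermediate recovery maps cancel exactly. In $\tilde{\mathbf{A}}_{t+1}\tilde{\mathbf{A}}_t$ the fragment $R^{(a_{t+1})}\circ\Tr_S\circ\mathcal{P}^{(a_{t+1})}$ appears, and by the undercomplete identity it equals $\mathcal{P}^{(a_{t+1})}$ as a map. For the estimated suffix this is the identity of $\omega_k$, so this is where $\omega_k\in\Omega_{\mathrm{uc}}$ is genuinely needed, beyond the $R_x$ bound you invoke in Step 3. The suffix then collapses to
\[
\mathbf{A}^k_{L-1:l+1}X=\Tr_S\circ\mathcal{P}^{(a_L)}\circ\mathbb{E}_{L-1}\circ\Phi^{(a_{L-1})}_{o_{L-1}}\circ\cdots\circ\Phi^{(a_{l+2})}_{o_{l+2}}\circ\mathbb{E}_{l+1}\big((\id_S\otimes\mathcal{T}_{o_{l+1}})R^{(a_{l+1})}(X)\big),
\]
with all maps those of $\omega_k$. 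That is, one recovery map followed by a purely physical CPTNI chain.

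After summing over outcomes and averaging over actions, that chain does not amplify the trace norm. This follows from a Jordan decomposition plus trace preservation; the paper does it through a Hermitian-basis expansion, which is where its $S^2$ comes from. Contractivity of the pinching in the outcome basis then gives $\sum_{o}\|(\id_S\otimes\mathcal{T}_{o})R^{(a_{l+1})}(X)\|_1\le\|R^{(a_{l+1})}(X)\|_1\le\kappa_{\mathrm{uc}}\|X\|_1$.

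Your middle-factor conversion involves a single operator, so a per-factor bound is fine there. It belongs on the in-sample side, where it turns the trajectory guarantee for past policies into the per-step bound $\zeta_k=\mathcal{O}(S^2\kappa_{\mathrm{uc}}\sqrt{\beta k})$ fed to the eluder lemma. With the cancellation lemma in place, the rest of your plan goes through as in the paper.
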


The formal statement is Theorem~\ref{thm:discrete_regret}. The proof of this result necessitates a major technical innovation: controlling the forward propagation of the estimated OOM error. By telescoping the sequential OOM products, bounding the cumulative regret fundamentally reduces to controlling terms of the form:
\begin{equation}
    \big\| \mathbf{A}_{L-1:l+1} \big(\mathbf{A}^k_{l}-\mathbf{A}_{l}\big) \mathbf{a}_l\big\|_1.
\end{equation}
Here, $\big(\mathbf{A}^k_{l}-\mathbf{A}_{l}\big)$ represents the local OOM estimation error at step $l$, and $\mathbf{A}_{L-1:l+1}$ represents the sequence of future operators acting upon this error. Because each OOM operator internally applies a recovery map $R$, naively bounding the trace norm of this future sequence would compound the robustness penalty at every step, leading to an exponential blow-up of $\mathcal{O}(\kappa_{\mathrm{uc}}^{L-l-1})$.

 In classical POMDPs~\cite{liu2022partially}[Lemma 27 and 31], this forward propagation is bounded by unpacking the sequential contractions of the OOM linear maps in a way that reveals a sequence of substochastic maps followed by a pseudo-inverse of the stochastic emission matrix. Therefore, the exponential blow-up of the error is avoided thanks to the contractivity of sub-stochastic maps. This decomposition does not apply in our QHMM framework, the operators are constructed from non-commutative superoperators (CPTP channels and quantum instruments), but we can arrive at an analogous result.

We overcome this by explicitly unpacking the composition of the future OOM sequence $\mathbf{A}_{L-1:l+1}$. By applying our undercompleteness assumption ($R^{(a)} \circ \Tr_S \circ \mathcal{P}^{(a)} = \mathcal{P}^{(a)}$), we demonstrate that the recovery maps perfectly annihilate the subsequent partial traces and instruments across the entire sequence. This structural cancellation isolates the very first recovery map as the \emph{only} potentially non-contractive component. As formally established in Lemma~\ref{lem:one_norm_op}, the remainder of the sequence elegantly collapses into a purely physical succession of true quantum channels and instruments. Because these quantum operations are trace-non-increasing, they act as a uniform contraction on the state space. Thus, it is the robustness constant $\kappa_{\mathrm{uc}}$ that stops to propagate; the operator norm of the entire future sequence is strictly bounded by a single factor of $\kappa_{\mathrm{uc}}$, completely bypassing any exponential dependence on the length of the sequence $L$.

\paragraph{Continuous Actions (POVMs) and OOM Error Decomposition.}
Restricting our agents to finite discrete actions is physically artificial, as quantum instruments naturally possess a much richer, continuous structure. In particular, it is natural to consider continuous action sets generated by Positive Operator-Valued Measures (POVMs). However, because the discrete regret bound scales polynomially with the number of actions $A$, naively applying the discrete approach to the continuous regime will inevitably diverge.

To ensure statistical tractability for infinite action sets, we define the \emph{spanning dimension} $d_{\mathcal{A}}$ of the action class (see Section~\ref{sec:embedding-dim}), which counts the maximal number of linearly independent centered POVM effects. While this establishes a finite embedding space, the core technical challenge lies in bounding the trajectory estimation error without paying a statistical penalty for the continuous action selection at each step.

We solve this by rigorously decoupling the environment's estimation error from the agent's continuous action choices. The critical insight is that the quantum instruments (the agent's actions) are \emph{known}; the only source of OOM estimation error $\Delta \mathbf{A}_l^k=\mathbf{A}_l^k-\mathbf{A}_l$ stems from the unknown internal memory transition channels $\Delta\mathbb{E}_l^k = \mathbb{E}_l^k - \mathbb{E}_l$.

To exploit this, in Lemmas~\ref{lem:linear-first-action} and~\ref{lem:quantum_forward_factorization} we introduce an exact multi-linear decomposition of the OOM prediction error. By expanding the operators in a fixed, finite-dimensional Hilbert-Schmidt basis, we factorize the prediction error for any outcome into two distinct components: an \emph{action-independent} environment error tensor (capturing the channel mismatch $\Delta\mathbb{E}_l^k$) and \emph{action-dependent} data vectors (capturing the known POVM effects and filtered states). 

This multi-linear factorization elegantly projects the infinite-dimensional action space into a finite $S^4$-dimensional tensor space. Consequently, the statistical complexity of learning the continuous quantum process is bottlenecked entirely by the finite dimension of the latent quantum memory $S$, rather than the cardinality of the action set. Building upon this structural decomposition, we adapt the OMLE algorithm to the continuous setting and prove the following bound:

\begin{theorem}[Informal: Regret for Continuous Actions]
\label{thm:informal_continuous}
Let the action space $\mathcal{A}$ be a continuous set of POVMs with spanning dimension $d_{\mathcal{A}} \le (O-1)S^2$. The OMLE algorithm achieves a cumulative regret of $\widetilde{\mathcal{O}}(\poly(d_\mathcal{A},S,L)\times\kappa_{\mathrm{uc}}^2 \sqrt{K})$. This scaling is completely independent of the cardinality of the action set.
\end{theorem}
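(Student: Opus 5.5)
The plan follows the OMLE template of~\cite{liu2022partially}, but replaces every step that counts actions with a quantity that depends only on the spanning dimension $d_{\mathcal{A}}$ and on $S$. It has three stages: a confidence-set guarantee, an optimism reduction to trajectory estimation errors, and an eluder-type argument in the finite tensor space. The first two are essentially unchanged from the discrete case; all the new work is in the third.

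\textbf{Stage 1: confidence set.} First I show that with probability $1-\delta$ the true environment $\omega$ lies in every $\mathcal{C}_k$. I also show that every $\omega'\in\mathcal{C}_k$ satisfies the Hellinger/TV guarantee
\[
\sum_{t<k}\big\|\Pr^{\pi^t}_{\omega'}-\Pr^{\pi^t}_{\omega}\big\|_1^2\lesssim\beta_k .
\]
Since the action set is infinite, I cannot union-bound over policies. I will instead use the standard martingale/likelihood-ratio argument over an $\epsilon$-bracketing of the \emph{model} class. The model class is parametrized by $\rho_1$, the channels $\mathbb{E}_l$, and (implicitly) the recovery maps. Because the instruments are known, trajectory probabilities are polynomial in these finitely many parameters, uniformly over POVMs in $\mathcal{A}$. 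This gives $\log N_\epsilon=\widetilde{\mathcal{O}}(\poly(S,L))$, and hence $\beta_k=\widetilde{\mathcal{O}}(\poly(S,L)\log(K/\delta))$.

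\textbf{Stage 2: optimism and telescoping.} By optimism, $V^*-V^{\pi^k}\le V^{\pi^k}_{\omega_k}-V^{\pi^k}_{\omega}\le L\,\|\Pr^{\pi^k}_{\omega_k}-\Pr^{\pi^k}_{\omega}\|_1$, taking rewards bounded by $1$ per step. I then telescope the OOM products. This bounds the TV distance by a sum over $l$ of terms of the form $\|\mathbf{A}_{L-1:l+1}(\mathbf{A}^k_l-\mathbf{A}_l)\mathbf{a}_l\|_1$, integrated over the continuous actions and outcomes drawn by $\pi^k$. Lemma~\ref{lem:one_norm_op} removes the future product at the cost of a single factor $\kappa_{\mathrm{uc}}$. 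I expect the second $\kappa_{\mathrm{uc}}$ to come from the recovery map inside $\mathbf{A}^k_l$ itself when the error is related back to observable probabilities.

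\textbf{Stage 3: linearization and eluder argument.} Next I invoke Lemmas~\ref{lem:linear-first-action} and~\ref{lem:quantum_forward_factorization}. These write each local error as an inner product $\langle \Theta^k_l, x_l(\tau_{l-1},a_l,o_l)\rangle$ in a fixed Hilbert–Schmidt basis. Here $\Theta^k_l$ is the action-independent tensor built from $\Delta\mathbb{E}_l^k$, and $x_l$ is the known action-dependent data vector (POVM effects tensored with the filtered state). The data vectors live in a space of dimension at most $d_{\mathcal{A}}\cdot S^2\le\poly(d_{\mathcal{A}},S)$, after projecting onto centered effects. I would also control their norms.

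This reduces Stage 1's in-sample guarantee $\sum_{t<k}\Exp_{\pi^t}|\langle\Theta^k_l,x_l\rangle|^2\lesssim\kappa_{\mathrm{uc}}^2\beta_k$ and the out-of-sample regret term $\Exp_{\pi^k}|\langle\Theta^k_l,x_l\rangle|$ to a linear-bandit style elliptical potential problem. I apply Cauchy–Schwarz with the regularized design matrix $\Lambda_k=\lambda I+\sum_{t<k}\Exp_{\pi^t}[x x^\top]$. The elliptical potential lemma then gives $\sum_k\min(1,\|x\|_{\Lambda_k^{-1}})\le\sqrt{K\,d\log K}$. Combining this with $\beta_K$ and summing over $l$ yields $\widetilde{\mathcal{O}}(\poly(d_{\mathcal{A}},S,L)\kappa_{\mathrm{uc}}^2\sqrt K)$, with no dependence on $|\mathcal{A}|$.

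\textbf{Main obstacle.} I expect the hard step to be the $\ell_1$-versus-$\ell_2$ mismatch in Stage 3. Stage 1 controls squared TV distances of whole trajectory distributions. The elliptical argument, by contrast, needs squared errors for the individual linear features, summed over outcomes and integrated against the continuous action distribution.

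My first attempt is to mirror the per-step Lemma 27/31 argument of~\cite{liu2022partially}. I would use the undercomplete recovery map to upper bound $\sum_{o}|\langle\Theta^k_l,x_l\rangle|$ by $\kappa_{\mathrm{uc}}$ times the TV distance of prefix distributions up to step $l+1$. Then I would apply a vector-valued (``$\ell_1$-eluder'') version of the elliptical lemma, splitting the feature into its $O$ outcome components. If this fails, the fallback is to pay an extra $\sqrt{O}$ or $\sqrt{d_{\mathcal{A}}}$ factor through norm equivalence; this is still polynomial and suffices for the stated bound.
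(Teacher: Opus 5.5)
Your Stages 1--2 and the linearization via Lemma~\ref{lem:quantum_forward_factorization} match the paper's skeleton. Your accounting of the two factors of $\kappa_{\mathrm{uc}}$ is also right: one comes from Lemma~\ref{lem:one_norm_op} on the future product in the regret decomposition, and the other from applying it to $\mathbf{A}^k_l(\mathbf{a}^k_l-\mathbf{a}_l)$ when the step-$l$ error is extracted from the trajectory bound.

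The gap is the elliptical-potential argument in Stage~3. The in-sample guarantee you assert, $\sum_{t<k}\Exp_{\pi^t}\langle\Theta^k_l,x_l\rangle^2\lesssim\kappa_{\mathrm{uc}}^2\beta$, does not follow from Stage~1. What Stage~1 yields, after the step-$l$ extraction, is $\sum_{t<k}e_t^2\lesssim S^4\kappa_{\mathrm{uc}}^2\beta$, where $e_t=\int\pi^t(\tau_{l+1})\|(\mathbf{A}^k_l-\mathbf{A}_l)\mathbf{a}_l\|_1\,\dd\tilde\nu_{l+1}$. This is an $L^1$ average of $|\langle\Theta^k_l,x\rangle|$ under the episode-$t$ trajectory law, whereas the elliptical lemma needs the $L^2$ average, which Jensen makes larger. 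The only generic conversion is $\Exp f^2\le\|f\|_\infty\Exp|f|$ combined with Cauchy--Schwarz. It gives $\|\Theta^k_l\|_{\Lambda_k}\lesssim(\beta k)^{1/4}$, and hence only $K^{3/4}$ regret.

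Your fallback does not repair this. Norm equivalence in $\mathbb{R}^O$ or $\mathbb{R}^d$ handles the sum over $o_{l+1}$. The mismatch, however, is between $L^1$ and $L^2$ norms of an integral over the continuous action/prefix distribution. There no dimension-only constant exists: a feature direction reached with probability $\epsilon$ has $L^2/L^1$ ratio $\epsilon^{-1/2}$, already for $d=2$.

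The missing ingredient is the one you list as your ``first attempt'', and it is exactly the paper's route. Proposition~\ref{prop:eluder_general}, the $\ell_1$ eluder lemma for linear functions over measure spaces \cite[Cor.~G.2]{liu2023optimistic}, takes the \emph{unsquared} in-sample bound $\sum_{t<k}\int|\langle f^k,g^t\rangle|\le\zeta_k$ and returns an out-of-sample bound that is \emph{linear} in $\max_k\zeta_k$. The paper gets $\zeta_k=\mathcal{O}(S^2\kappa_{\mathrm{uc}}\sqrt{\beta k})$ from Cauchy--Schwarz on the TV bound plus Lemma~\ref{lem:one_norm_op}. This then turns directly into $\widetilde{\mathcal{O}}(\sqrt K)$ without ever passing through squared errors.

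Concretely, the paper sets $\mathcal{X}$ to be a singleton and takes
\[
f^k(*)_{(\mu,\lambda)}=\Tr\big(P_\mu\Delta\mathbb{E}^k_l(P_\lambda)\big),\qquad g^t(\tau_{l+1})_{(\mu,\lambda)}=\pi^t(\tau_{l+1})\,\phi_\mu(a_{l+1},o_{l+1})\,c_\lambda(\tau_l),
\]
on $\mathcal{Y}=(\mathcal{A}\times[O])^{l+1}$, so the sum over $o_{l+1}$ is part of the integral. The constants are $d=S^4$, $R_x=2$ and $R_y=S^3$, the last via Lemma~\ref{lem:basis-coefficient-bound}. The first action is handled the same way with $d=S^2$ and $R_y=S^2$. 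Using the full $S^4$ Hilbert--Schmidt basis also avoids the affine-offset bookkeeping your $d_{\mathcal{A}}S^2$ refinement would require, at only polynomial cost in $S$.
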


The formal statement is Theorem~\ref{thm:continuous_regret}. This result establishes that learning unknown quantum dynamics over continuous measurement spaces is highly efficient, provided the spanning dimension is bounded.

We note that while recent classical literature has explored regret bounds for POMDPs in continuous settings~\cite{liu2023optimistic}, such works exclusively focus on continuous \emph{observation} spaces while restricting the agent to a finite, discrete set of actions. Consequently, our mathematical machinery for handling continuous action spaces---governed physically by quantum instruments and POVMs---provides novel analytical tools that may prove broadly applicable within the classical reinforcement learning community. Furthermore, in the simpler regime of bandits, our continuous POVM formulation naturally parallels the classical linear bandit framework~\cite{Dani2008stochastic, rusmevichientong2010linearly, abbasiyadkori20111improved}. Just as linear bandits transcend discrete action sets by embedding actions into a finite-dimensional real vector space and exploiting the linearity of the expected reward, our approach embeds continuous quantum measurements into a finite-dimensional operator spanning space, fully exploiting the linear dependence of the outcome probabilities on the latent quantum state.

\paragraph{Information-Theoretic Lower Bounds.}
To establish that our $\widetilde{\mathcal{O}}(\sqrt{K})$ scaling is statistically optimal, we derive information-theoretic lower bounds. A trivial $\Omega(\sqrt{K})$ lower bound could be obtained by simply restricting the latent memory to diagonal states and embedding a classical Multi-Armed Bandit. However, such a reduction merely proves that learning classical probability distributions is hard, bypassing the structural requirements of our fully quantum model class.

Specifically, for a fully quantum latent memory ($S \ge 2$), satisfying the undercomplete assumption requires the measurement instruments to be informationally complete to guarantee the existence of a valid recovery map. To demonstrate that the sublinear regret scaling remains strictly optimal even under these stringent quantum observability requirements, we introduce a highly non-trivial reduction to the Multi-Armed Quantum Bandit (MAQB) problem~\cite{lumbreras2022multi}. In Proposition~\ref{prop:sic_povm_bandit}, we construct a hard environment utilizing symmetric informationally complete measurements (SIC-POVMs) and completely depolarizing channels. The SIC-POVMs guarantee that a valid recovery map exists with optimal robustness ($\kappa_{\mathrm{uc}}=1$). By doing so, we explicitly demonstrate that the $\Omega(\sqrt{K})$ hardness is intrinsic to the geometry of quantum state discrimination, rather than an artifact of classical probability limits.

Furthermore, while the MAQB reduction establishes the asymptotic temporal scaling, it relies on perfectly informative measurements. To show that the polynomial dependence on the robustness constant $\kappa_{\mathrm{uc}}$ is also strictly necessary, we demonstrate that our input-output QHMM framework fully subsumes classical POMDPs, allowing us to embed ``combinatorial lock'' hard instances~\cite{liu2022partially} directly into the quantum setting. Together, these two reductions yield the following unified lower bound.

\begin{theorem}[Informal: Information-Theoretic Lower Bounds]
\label{thm:informal_lower_bounds}
For any horizon $L \ge 1$, there exists a valid QHMM environment satisfying the undercomplete assumption with optimal robustness ($\kappa_{\mathrm{uc}}=1$) for which any learning algorithm incurs an expected cumulative regret of $\Omega(\sqrt{K L})$. Furthermore, the polynomial dependence on the robustness constant is unavoidable: for any $L,A \ge 2$, there exists an undercomplete QHMM environment such that any algorithm suffers a linear expected regret of $\Omega(K)$ during the initial phase of $K \le \mathcal{O}(\kappa_{\mathrm{uc}} / L)$ episodes.
\end{theorem}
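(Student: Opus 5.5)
The statement has two parts, and I would prove each by its own reduction.

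\textbf{Part 1: the $\Omega(\sqrt{KL})$ bound with $\kappa_{\mathrm{uc}}=1$.} I would embed a multi-armed quantum bandit into an $L$-step QHMM so that the $L$ rounds of one episode act as $L$ independent bandit pulls.

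Take $S=2$ and $O=4$. Let the instruments be measure-and-prepare maps built from a qubit SIC-POVM $\{E_o\}_{o=1}^4$:
\begin{equation*}
\Phi^{(a)}_o(\rho)=\Tr(M^{(a)}_o\rho)\,\sigma_o .
\end{equation*}
Here $\{M^{(a)}_o\}$ is an action-dependent rotation of the SIC (so each action is informationally complete), and the $\sigma_o$ are fixed states. Choose the channels $\mathbb{E}_l$ to be replacement channels $\rho\mapsto\rho_\theta$, which reset the memory to an unknown state $\rho_\theta$ at every step. The replacement does not depend on the previous memory, so all correlations are erased. Round $l$ is then exactly one pull of an MAQB with unknown state $\rho_\theta$, and the reward $r(a,o)$ is chosen so that its expectation is $\Tr(O_a\rho_\theta)$.

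Undercompleteness follows because $\Tr_S\circ\mathcal{P}^{(a)}$ outputs $\diag(\Tr(M^{(a)}_o\rho))$. This determines the full post-instrument state $\sum_o \Tr(M_o\rho)\sigma_o\otimes\proj{o}$, so I take $R^{(a)}(X)=\sum_o \langle o|X|o\rangle\,\sigma_o\otimes\proj{o}$. This map is a composition of pinching and preparation, which is CPTP, so its induced $1$-norm is $1$ and $\kappa_{\mathrm{uc}}=1$. I would check that this agrees with Proposition~\ref{prop:sic_povm_bandit}.

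The regret then equals the MAQB regret over $KL$ rounds under a constraint: actions within an episode are chosen adaptively, but the policy is fixed per episode. An episodic learner is therefore a special case of a fully adaptive bandit learner over $T=KL$ rounds. The MAQB minimax lower bound of~\cite{lumbreras2022multi} (or a two-point Le Cam / KL argument with $\rho_{\pm}=(\mathbb{1}\pm\epsilon\,\sigma_z)/2$) gives $\Omega(\sqrt{T})=\Omega(\sqrt{KL})$.

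\textbf{Main obstacle in Part 1.} The real work is making sure the hard instance stays inside the model class. The reward must depend only on $(a,o)$. The KL divergence per round between $\rho_\pm$ under the fixed SIC-type measurements must be $O(\epsilon^2)$ uniformly over actions. And there must still be a $\Theta(\epsilon)$ gap between the actions optimal for $\rho_+$ and for $\rho_-$. I would take two actions whose rewards are $\pm$ the outcome-weighted estimate of $\langle\sigma_z\rangle$ obtainable from a SIC, which is linear in $\rho$. Then I would set $\epsilon\sim 1/\sqrt{KL}$ and apply the standard Bretagnolle–Huber inequality.

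\textbf{Part 2: the $\Omega(K)$ regret while $K\le \mathcal{O}(\kappa_{\mathrm{uc}}/L)$.} I would first show that classical POMDPs embed into QHMMs. Take diagonal memory, let the channels be classical stochastic matrices composed with full dephasing, and use instruments of the form
\begin{equation*}
\Phi^{(a)}_o(\rho)=\sum_s \mathbb{O}^{(a)}(o|s)\,\langle s|\rho|s\rangle\,\proj{s}.
\end{equation*}
The recovery map is then the left pseudo-inverse of the emission matrix, composed with the diagonal embedding. Its induced $1$-norm equals the classical $\|\mathbb{O}^{+}\|_{1\to1}$, so $\kappa_{\mathrm{uc}}$ matches the classical undercompleteness constant.

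With this embedding in place, I would import the combinatorial-lock construction of~\cite{liu2022partially}. In that instance the emission matrices are nearly uninformative, with parameter tuned so that $\|\mathbb{O}^+\|\sim\kappa_{\mathrm{uc}}$. Until roughly $\kappa_{\mathrm{uc}}/L$ episodes have elapsed, the learner cannot identify the correct action sequence, so each episode incurs constant regret. The one point to verify is that the dephasing step makes the OOM construction and the norm identity exact. This holds because every map involved preserves diagonal operators.
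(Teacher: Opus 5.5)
\textbf{Part 1 is fine.} It is essentially the paper's proof of Theorem~\ref{thm:qhmm_lower_bound}. You use replacement channels $X\mapsto\Tr(X)\rho^*$ (take $\rho_1=\rho^*$ as well) and measure-and-prepare SIC instruments. Their recovery map is itself a CPTP measure-and-prepare map, so $\kappa_{\mathrm{uc}}=1$. You then reduce to an MAQB over $N=KL$ rounds. Your two-point instance $\rho_\pm=(\mathbb{I}\pm\epsilon\sigma_z)/2$, with rewards $\pm$ the linear SIC estimator of $\langle\sigma_z\rangle$, can replace the paper's four permuted-SIC arms and least-pulled-arm alternative. The per-round KL is $O(\epsilon^2)$ for every action and the gap is $2\epsilon$, which is all Bretagnolle--Huber needs.

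\textbf{Part 2 has a genuine gap.} In your embedding the action enters only through the emission $\mathbb{O}^{(a)}$. The latent transitions are carried by the channels $\mathbb{E}_l$, which in Definition~\ref{def:qrl_model} are fixed and do not depend on the action. Your branch maps only reweight the diagonal, so $\sum_o\Phi^{(a)}_o$ is full dephasing and never moves probability between latent states. The latent chain therefore has the same law under every policy and never records past actions, so no reward can depend on the agent having played a particular action sequence.

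The combinatorial lock of~\cite{liu2022partially} gets its hardness exactly from action-controlled transitions $T(s'|s,a)$. The secret action keeps the chain in the good state, and any other action sends it to the absorbing bad state; this is where the $A^{L-1}$ term comes from. So the lock cannot be realized in your class, and the dephasing check you single out is not the point that needs verifying.

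The missing idea, which is what the paper does in Corollary~\ref{cor:qhmm_lock}, is to move the transition into the instrument's post-measurement state:
\begin{equation*}
\Phi^{(a)}_o(X)=\sum_{s}\mathbb{O}(o|s)\langle s|X|s\rangle\sum_{s'}T(s'|s,a)\proj{s'},\qquad \mathbb{E}_l=\mathrm{id}.
\end{equation*}
This is still a valid measure-and-prepare instrument that depends on $X$ only through its diagonal. The map $R^{(a)}(D)=\mathcal{P}^{(a)}\big(\sum_s(\mathbb{O}^{+}d)_s\proj{s}\big)$ satisfies Assumption~\ref{ass:undercomplete}. Trace-norm contractivity of $\mathcal{P}^{(a)}$ then gives $\|R^{(a)}\|_{1\to1,\diag}\le\|\mathbb{O}^{+}\|_{1\to1}$. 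This is an inequality, not the equality you claim, because the transition can produce cancellations.

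You should also note a timing difference. In the QHMM the agent chooses $a_l$ before seeing $o_l$, whereas in the POMDP $o_h$ is revealed before $a_h$ is chosen. This only removes information from the learner, and the lock's optimal policy is open-loop. So the PAC lower bound still transfers, and so does its conversion to $\Omega(K)$ regret.
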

The above result is properly formalized in Theorem~\ref{thm:qhmm_lower_bound} and Corollary~\ref{cor:qhmm_lock}.

\subsubsection{Application: State-Agnostic Work Extraction}
\label{sec:intro_work_extraction}

Beyond abstract learning, our framework has direct implications for quantum thermodynamics. We consider the problem of state-agnostic work extraction, where an agent seeks to harvest non-equilibrium free energy from a sequence of non-i.i.d.\ quantum states and store it in a battery. If the agent does not know the underlying emission source, it cannot perfectly align its extraction protocol with the incoming states. By the laws of thermodynamics, this informational deficit inevitably results in the irreversible dissipation of energy~\cite{kolchinsky2017dependence,riechers2021initial}.

\begin{figure}[h]
    \centering
    \includegraphics[width=0.95\linewidth]{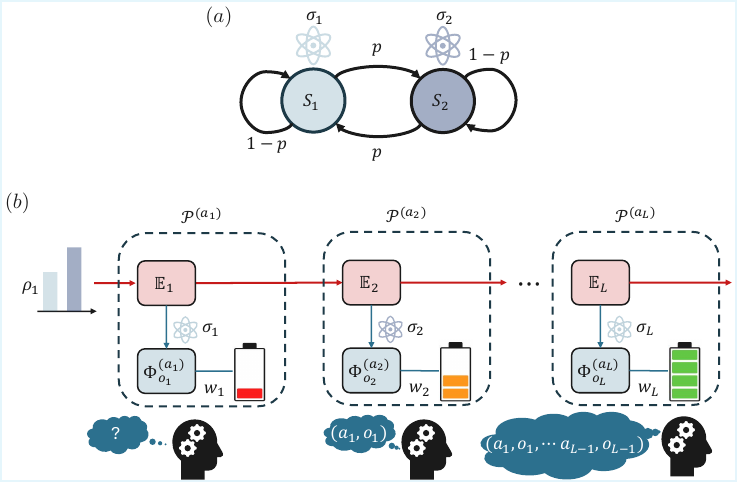}
    \caption{Schematic of the work extraction. Panel (a) illustrates an example of classical Hidden Markov Model which dictates the dynamics of the process. $\{S_i\}_{i=1}^2$ are the latent classical states, $\{\sigma_i\}_{i=1}^2$ are emitted based on which transition happens at each time step, dictated by the probability $p$. These can be encoded into transition matrix $\mathbb{E}$. Panel (b) illustrates how an agent interact sequentially with the emitted quantum state via CPTNI maps $\{\Phi_{o_i}^{(a_i)}\}_{i=1}^L$. The memory is assumed to be a classical distribution over latent state shown in panel (a) and it evolves according to the dynamic transcribed in $\{\mathbb{E}_l\}_{l=1}^L$. After each interaction, the agent receives outcome $o_l$ in the form of the work value $w_l$ that is stored in the battery. This sequential emission model can be described by input-output QHMM framework, this is shown in Sec.~\ref{sec:sequential_HMM}.}
    \label{fig:extraction_sketch}
\end{figure}

We can formalize this thermodynamic process as sequential-emission HMM, an instance of our input-output QHMM. As illustrated in Figure~\ref{fig:extraction_sketch}, we model a scenario where a classical hidden memory governs the sequential emission of quantum states. We emphasize that this restriction to a classical memory is not a limitation of our general framework, but a consequence of the physical work extraction protocol itself. In the physical models we use for work extraction, measuring a qubit yields only two discrete energy outcomes. Consequently, our undercomplete assumption dictates that we can only robustly reconstruct a classical (diagonal) memory from the observable marginals, rather than a full quantum memory. At each time step, the agent probes the emitted system to extract work. Building upon established work extraction models~\cite{skrzypczyk2014work, huang2023engines}, the agent's action $a_l \in \mathcal{A}$ configures the physical extraction protocol by choosing a measurement basis and a target purity parameter (see Section~\ref{sec:work_extraction} for full physical details). Mathematically, we model this entire physical interaction as a quantum instrument. 

The protocol gives a discrete classical outcome $o_l$. Crucially, the probability distribution of this outcome follows Born's rule strictly according to the chosen measurement basis. The actual energy exchanged with the battery then depends on both this observed discrete outcome and the continuous target purity chosen by the agent. Because this physical interaction acts simultaneously as the data collection mechanism and the process for energy extraction, we formally define our RL reward function directly as the extracted work, $r(a_l, o_l) = w_l$, aiming to maximize the harvested free energy.

By setting the reward to be the extracted work, the optimal value function $V^*(\omega)$ achieved by the optimal policy represents the maximum extractable free energy a sequential agent is able to extract from the temporal correlations of the source. Thus, the suboptimality of any learning policy at episode $k$---the gap between the theoretical maximum extractable work and the actually achieved work---corresponds to the unrecoverable thermodynamic heat dissipated during that episode. This establishes a direct equivalence between the cumulative mathematical regret and the \emph{cumulative dissipation} as
\begin{equation}
    \regret(K) = W_{\mathrm{diss}}(K)~,
\end{equation}
where $W_{\mathrm{diss}}(K)$ is the \emph{total dissipation across all episodes}. This dissipation is a fundamental cost of learning; it cannot be avoided even if the physical work-extraction operations are executed perfectly~\cite{proesmans2020finite,riechers2021initial,van2022finite}. Ultimately, it quantifies the minimum unavoidable waste of free energy resulting from the agent's incomplete knowledge of the underlying process.

Designing a regret-minimizing RL algorithm is then physically equivalent to engineering an adaptive Maxwell's demon that learns to minimize energy waste on the fly. By using the OMLE strategy, the agent uses past battery energy outcomes to learn the latent memory dynamics, continuously updating its extraction basis to achieve optimal performance.

We demonstrate this through numerical simulations. As shown in Figure~\ref{fig:extraction_sketch} (see Section~\ref{sec:numerics} for implementation details), we apply the algorithm to a work extraction protocol over different horizons $L$. This demonstrates that our theoretical algorithm can be practically implemented for this specific classical-memory model. The plot displays a strict sublinear accumulation of work dissipation, confirming that the agent effectively uses past battery energy outcomes to learn the latent dynamics on the fly.

From our theoretical bounds we have that our $\widetilde{\mathcal{O}}(\sqrt{K})$ regret scaling establishes rigorous worst-case guarantees. However the numerical behavior reflects the expected instance-dependent sublinear scaling which is much better than this worst-case bound. It is worth noting that the theoretical $\tilde{\Theta}(\sqrt{K})$ bound (Theorem~\ref{thm:informal_discrete}) applies to the worst-case environment for a fixed, finite number of episodes $K$. In contrast, the $\polylog(K)$ behavior observed numerically in Figure~\ref{fig:intro_cum_dissipation_compare} is typical of instance-dependent bounds for a specific environment as the number of episodes asymptotically increases. For a more detailed discussion on the distinction between these worst-case and instance-dependent regimes, see~\cite[Chapters 15 and 16]{lattimore2020bandit}. Crucially, in both scenarios, the regret scales sublinearly, which ensures that the \emph{rate} of average cumulative dissipation or average cumulative dissipation per episode vanishes asymptotically. More formally, we have that
\begin{equation}
    \lim_{K\to\infty} \frac{1}{K} W_{\mathrm{diss}}(K) = 
    \lim_{K\to\infty}\frac{\widetilde{\mathcal{O}}(\sqrt{K})}{K} = \lim_{K\to\infty} \widetilde{\mathcal{O}}\left(\frac{1}{\sqrt{K}}\right)=0~.
\end{equation}
This shows that the thermodynamic cost of learning such a process, over many periods, can become negligible. The same argument applies to the instance-dependent $\polylog(K)$ behavior of cumulative dissipation of Figure~\ref{fig:intro_cum_dissipation_compare}.

\begin{figure}
    \centering
    \includegraphics[width=0.90\linewidth]{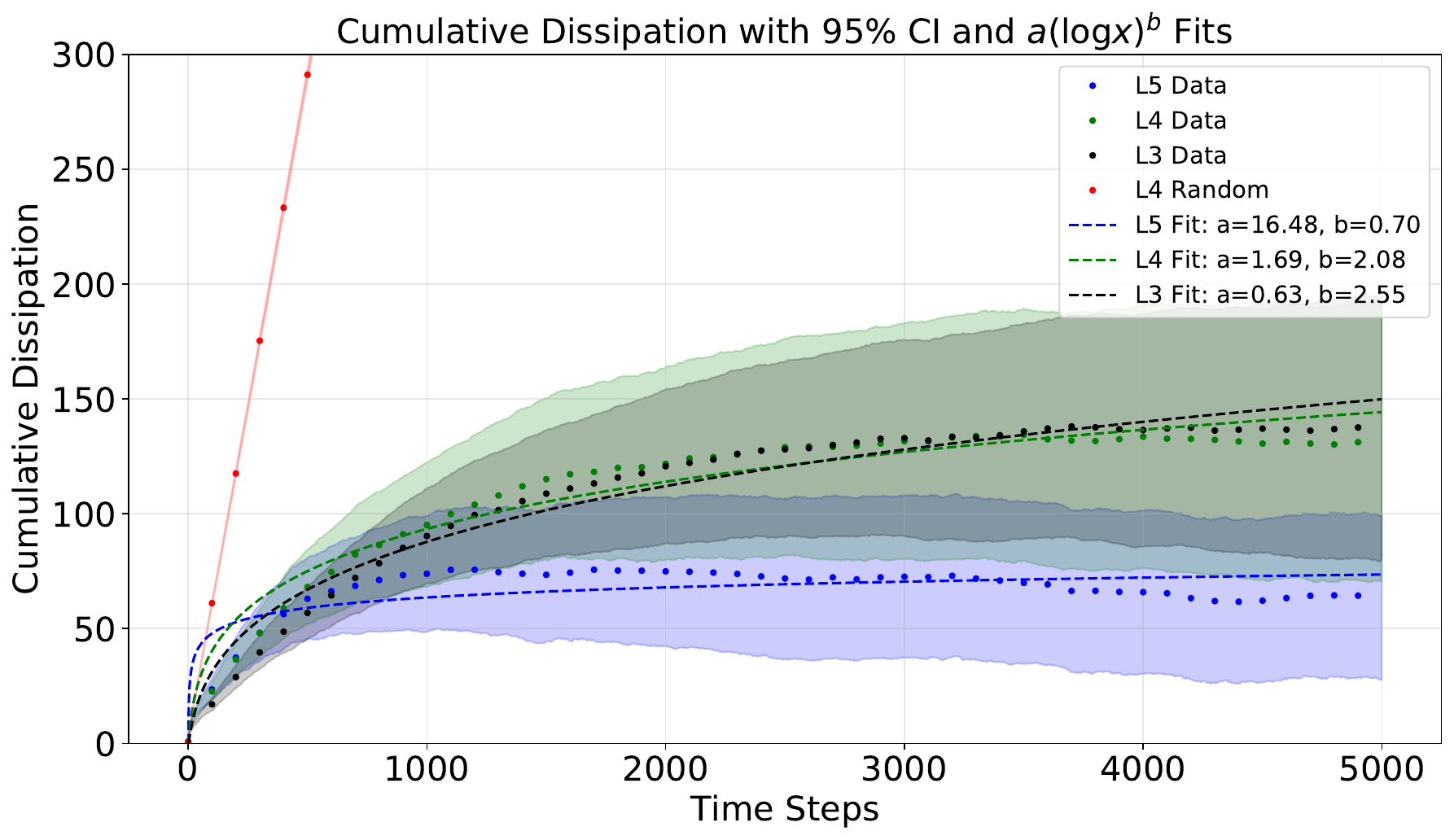}
    \caption{Plot showing cumulative work dissipation against the number of episodes. The black line represents $L=3$, green line being $L=4$ and blue being $L=5$. Their 95\% confidence interval is filled with a translucent area. The red line represent a comparison if the agent were to use random policy, i.e. at each time step the agent choose randomly from given actions. }
    \label{fig:intro_cum_dissipation_compare}
\end{figure}

\subsection{Discussion}
\label{sec:discussion}

In this work, we have bridged partially observable reinforcement learning and quantum information theory by introducing a comprehensive framework for sequential decision-making over unknown quantum processes with memory.

\subsubsection{Summary of Theoretical Contributions}
Rather than relying on classical emission probabilities, our input-output QHMM environment models an agent interacting with a latent quantum memory evolving via unknown CPTP channels (Definition~\ref{def:qrl_model}). By adapting the OMLE algorithm (Algorithm~\ref{alg:omle-qhmm}), we established that an agent can learn a QHMM environment with a cumulative regret $\widetilde{\mathcal{O}}(\sqrt{K})$ over $K$ episodes (Theorem~\ref{thm:discrete_regret}).

Crucially, we bypassed the standard pseudo-inverse bounds used in classical POMDPs by isolating the recovery map and leveraging the uniform contraction of subsequent quantum channels (Lemma~\ref{lem:one_norm_op}). This ensures our error propagation remains strictly bounded by the robustness constant $\kappa_{\mathrm{uc}}$, avoiding exponential blow-up with the horizon $L$ (Lemma~\ref{lem:regret_step1}). By utilizing the $\ell_1$-eluder dimension, we establish a rigorous connection between historical estimation error and future performance (Lemma~\ref{lem:oom_first_action_dicrete} and~\ref{lem:oom_error_discrete}), which then ensures that the cumulative regret for the OMLE algorithm is strictly bounded. Furthermore, by utilizing a finite-dimensional spanning embedding (Section~\ref{sec:embedding-dim}), we extended these $\widetilde{\mathcal{O}}(\sqrt{K})$ guarantees to continuous action spaces without dependence on the action set cardinality (Theorem~\ref{thm:continuous_regret}). Via a reduction to the MAQB using SIC-POVMs, we proved this scaling is statistically optimal (Section~\ref{sec:lower_bounds}). 

\subsubsection{Physical Insights and Thermodynamic Application}
Beyond the theoretical regret bounds, our results offer deeper insights into the physical nature of learning quantum dynamics. The undercomplete assumption, which guarantees a linear recovery map, serves as a quantum analogue to the classical full-rank emission matrix. Physically, it implies that the chosen quantum instruments extract sufficient information to infer the post-measurement quantum state strictly from the classical marginal. Our framework demonstrates that as long as this inversion is robust ($\kappa_{\mathrm{uc}} < \infty$), the latent quantum dimension does not inherently bottleneck the macroscopic learning process. 

Furthermore, we demonstrated the physical significance of our framework by applying it to state-agnostic quantum work extraction (Section~\ref{sec:work_extraction}). By defining the RL reward strictly as the extracted physical work, we established a fundamental link between information-theoretic regret and thermodynamic dissipation. Our numerics in Figure~\ref{fig:intro_cum_dissipation_compare} confirm that an adaptive Maxwell's demon can learn the underlying hidden Markovian behavior, achieving sublinear cumulative dissipation on the fly.

\subsubsection{Open Problems and Future Directions}
Our framework opens several promising directions for future exploration at the intersection of quantum information theory, reinforcement learning and quantum resources extraction. 

\paragraph{Going beyond the undercomplete assumption.} The most direct extension from our work is to move beyond the undercomplete case where the agent is able to extract sufficient information to perfectly recover the post-measurement state strictly from the classical data of a single action. In the more complex \emph{overcomplete} regime, the observable space is fundamentally smaller than the dimension of the latent quantum memory. Consequently, the statistics of observables from any single quantum instrument are insufficient to recover the post-measurement state. Instead, the agent is only able to extract sufficient information to reconstruct the delayed post-measurement state by aggregating data across multiple sequential actions ~\cite{jin2020sample, liu2022partially}. It is an important next step to adapt our OOM learning framework for this deficiency.

\paragraph{Agnostic Extraction of General Quantum Resources.} 
While Section~\ref{sec:work_extraction} successfully applies our framework to the extraction of thermodynamic work, free energy is merely one specific operational resource within the broader framework of quantum resource theories \cite{chitambar2019quantum}. A highly compelling open direction is the application of sequential learning to the agnostic extraction and distillation of other fundamental quantum resources, such as quantum coherence \cite{streltsov2017colloquium} or entanglement, from uncharacterized, memoryful sources. In realistic physical scenarios, a sequence of emitted states carrying these resources will naturally exhibit temporal correlations. Any informational deficit regarding these latent dynamics will inevitably lead to a suboptimal extraction. By reformulating the RL reward function to quantify the yield of distilled coherence or entanglement, an agent could autonomously adapt its operations to maximize resource harvesting on the fly. Extending our OMLE algorithm to this generalized setting requires rigorously mapping specific resource distillation protocols into valid quantum instruments, and determining whether our $\tilde{O}(\sqrt{K})$ regret bounds successfully capture the fundamental, operational limits of general quantum resource theories.

\paragraph{Beyond Markovian Latent Dynamics: Full Quantum Combs and Process Tensors.} 
While our input-output QHMM faithfully captures temporal correlations through a finite-dimensional quantum memory, its internal evolution remains fundamentally Markovian in the latent space; the subsequent memory state depends strictly on the current state and the agent's immediate instrument. A natural and highly motivated extension is to generalize this learning framework to fully non-Markovian quantum stochastic processes, where the environment may retain arbitrarily complex memory and entanglement across multiple time steps. Such multi-time quantum processes are rigorously formalized by the framework of quantum combs \cite{chiribella2008quantum, chiribella2009theoretical} or, equivalently, process tensors \cite{pollock2018non}. Transitioning from a QHMM to a general quantum comb presents a significant theoretical challenge: the sufficient statistic for future observations is no longer a simple finite-dimensional filtered state. Consequently, our undercomplete assumption and OOM construction would need to be significantly generalized to handle non-Markovian temporal correlation structures. Establishing regret-minimizing algorithms for learning optimal control over process tensors would provide a universal framework for reinforcement learning in arbitrarily structured open quantum systems.

\paragraph{Learning Beyond Fixed Causal Structures.} Our current QHMM environment assumes a strict, sequential application of CPTP channels and instruments. A fascinating open problem is whether these regret bounds can be generalized to learn quantum processes with indefinite causal order (ICO)~\cite{oreshkov2012quantum, chiribella2013quantum}. In frameworks exhibiting ICO, quantum operations are not applied in a fixed sequence but can be routed in a superposition of causal orders, such as via a quantum switch. While recent advances have shown advantages of indefinite causality in quantum metrology~\cite{zhao2020quantum,kong2026noncommutativity}, integrating these indefinite causal structures into a sequential, reward-maximizing reinforcement-learning framework is completely unexplored. It is unclear how the OOM must be adapted and whether the uniform contraction properties preventing error propagation would still hold. 

\paragraph{Adversarial Settings and Online Learning.} We assumed that the quantum memory evolves under unknown CPTP channels fixed by a stationary environment. If the dynamics are instead chosen by an adversary, the regret minimization problem shifts toward adversarial online learning~\cite{auer2002nonstochastic,lattimore2020bandit}. In the field of quantum learning theory, adversarial online learning has also been prominently explored in the context of predicting measurement outcomes on unknown quantum states that are adaptively presented by an adversary~\cite{aaronson2019online, chen2024adaptiveonline, meyer2025online}. Extending our sequential QHMM environment to this regime would generalize these online learning protocols from states to quantum processes with memory, yielding robust learning algorithms with minimal regret even in highly uncharacterized environments.

\paragraph{Continuous Time and Markovian Master Equations.} Finally, our results are strictly defined for discrete-time steps $l \in [L]$. Extending this framework to continuous-time open quantum systems, where the latent state evolves via a Lindbladian master equation and the agent applies continuous measurement and feedback, remains a significant open challenge~\cite{barchielli1991measurements,wiseman2009quantum}. In continuous quantum control theory, the discrete CPTP maps and instruments are replaced by stochastic master equations driven by weak continuous measurements~\cite{sivak2022modelfree, song2025fast,ma2025machine}. Determining the continuous analogue of the undercomplete assumption and the QHMM learning problem would bridge our discrete reinforcement learning bounds with the rich, highly active field of adaptive quantum control.

\subsection{Related Works}
\label{sec:related_work}

\paragraph{Quantum POMDPs.} The most immediate related work to ours is the ``quantum POMDPs'' (QOMDPs) introduced more than a decade ago in~~\cite{barry2014quantum}. Their model is almost the same as ours with the difference that they assume that the underlying channels i.e. the dynamics are known and their focus was not regret minimization, it was on finding policies that can reach a certain value and their complexity in comparison with the classical model. Our input--output QHMM instead treats the latent dynamics as unknown and learned from interaction, in an episodic fixed-horizon setting. In this sense, our model is closer to a true POMDP-style generalization—partial observability with unknown transition dynamics—tailored to regret-minimizing reinforcement learning rather than known-model control.

\paragraph{Classical POMDPs.} We acknowledge that our framework is deeply inspired by recent, elegant breakthroughs in learning for classical POMDPs~\cite{jin2020sample,liu2022partially,liu2023optimistic}. We highly recommend these insightful papers to any reader seeking a comprehensive foundation in the mechanics of regret minimization under partial observability. While we adapt the high-level optimistic principles introduced in these works, the transition to the quantum regime required the development of substantially new mathematical machinery. As detailed in our earlier results our new ideas are, controlling error propagation through non-commutative quantum channels, embedding continuous measurement spaces, and mapping regret to thermodynamic dissipation.

\paragraph{Multi-armed quantum bandits.} 
The multi-armed quantum bandit framework introduced in~\cite{lumbreras2022multi,lumbreras2025banditsroaminghilbertspace} is essentially the little brother of our current model. It corresponds exactly to the special case where the horizon is $L=1$. This setting was later extended to the tomography problem of learning pure states without disturbance~\cite{pmlr-v247-lumbreras24a,lumbreras24pure}, as well as state-agnostic work extraction from i.i.d.\ unknown states~\cite{lumbreras2025quantum}. The framework has also been adapted into contextual bandits for quantum recommender systems~\cite{brahmachari2024quantum}.

\paragraph{Quantum hidden Markov models.}
Quantum hidden Markov models can be viewed as our model without an action set. They operate strictly without input. These models were introduced as a natural generalization of classical HMMs~\cite{Monras2010}. Such models required less historical data than optimal classical counterparts to generate equally accurate predictions~\cite{gu2012quantum}. Since then, the representational power of QHMMs, or both sequence generation and prediction, has been extensively studied~\cite{monras2016quantum,binder2018practical,elliott2020extreme,sundar2026quantum}, demonstrating that quantum models can replicate certain stochastic behavior with various memory constraints. Moreover, this advantage can scale without bound. More recently, their fundamental relationship to generalized probabilistic theories (GPTs) was analyzed in~\cite{fanizza2024quantum}.

\paragraph{Quantum Agents.}
Our work here considered classical agents operating in a quantum environment. This is mirrored by a body of work on advanced quantum agents operating in classical environments. Such agents generalize classical information transducers and exhibit target input-output behavior with less memory than their classical counterparts~\cite{thompson2017using}. Moreover, this advantage scales without bound~\cite{elliott2022quantum} and can directly translate into energetic advantages for online decision-making~\cite{thompson2025energetic}.

\section{Quantum Stochastic Process Framework}
\label{sec:model}

In this section, we introduce an input-output reinforcement-learning framework for quantum stochastic processes, modeled as a quantum hidden Markov model (QHMM). The environment carries a hidden finite-dimensional quantum memory that evolves in time, while the agent can only probe it indirectly: at each step the agent selects an action in the form of a quantum instrument applied to the memory, producing a classical outcome and a corresponding (generally disturbed) post-instrument memory state. By choosing instruments adaptively as a function of the past action-outcome history, the agent controls both information acquisition and performance, with the goal of maximizing a cumulative reward assigned to the observed outcomes over a finite interaction horizon.

\subsection{Notation}

For an integer $n\in\mathbb{N}$ we write $[n]\coloneqq\{1,2,\dots,n\}$. We denote by $\mathcal{H}_S\simeq\mathbb{C}^S$ the (finite-dimensional) Hilbert space of the hidden memory, and by $\mathcal{H}_O\simeq\mathbb{C}^O$ the classical outcome register, identified with the span of the computational basis $\{\ket{o}\}_{o\in[O]}$ (so operators on the classical register are diagonal in this basis). We write $\mathbb{M}_d$ for the space of complex $d\times d$ matrices, and use $\mathsf{D}(\mathcal{H})$ for the set of density operators on a Hilbert space $\mathcal{H}$. Norms are $\|\cdot\|_1$ (trace norm), $\|\cdot\|_\infty$ (operator norm), and $\|\cdot\|_2$ (Hilbert-Schmidt norm). We write $X\succeq 0$ for positive semidefinite operators. We use $\mathbb{I}$ for the identity operator and $\mathrm{id}$ for the identity map (with subsystem clear from context). Linear maps between operator spaces are denoted $\mathsf{L}(\mathbb{M}_d,\mathbb{M}_{d'})$, with $\mathrm{CP}(\mathrm{TP})$ standing for completely positive (trace-preserving) maps.

A POVM on $\mathcal{H}_S$ with $O$ outcomes is a collection $\mathsf{M}=\{M_x\}_{x\in[O]}\subset\mathbb{M}_S$ such that $M_x\succeq 0$ for all $x\in[O]$ and $\sum_{x\in[O]} M_x=\mathbb{I}_S$. For any given $\rho\in\mathsf{D}(\mathcal{H}_S)$, the probability of observing outcome $x$ follows the Born rule $\Pr(x)=\Tr(M_x\rho)$. While a POVM fully characterizes the statistics of the measurement outcome, it does not dictate the post-measurement state. We use the instrument to capture this full physical interaction between the measurement device and the quantum system. Formally, the instrument $\mathcal{P}:\mathbb{M}_{S}\to \mathbb{M}_{S}\otimes \mathbb{M}_{O}$ associated with the POVM $\lbrace M_x\rbrace_{x\in[O]}$ is defined as 
\begin{align}
    \mathcal{P}(X)\coloneqq \sum_{x\in[O]} \Phi_x (X) \otimes \proj{x},
\end{align}
where $\{\ket{x}\}$ forms the computational basis of a classical register recording the outcome and the set of maps $\lbrace \Phi_x: \mathbb{M}_{S}\to \mathbb{M}_{S}\rbrace_{x\in[O]}$ are completely positive trace non-increasing (CPTNI) as well as 
\begin{align}
    \Tr[\Phi_x(X)]=\Tr[M_x X] \qquad \forall X\succeq 0. 
\end{align}
A particular choice of the instrument is the \emph{L\"uders} instrument
\begin{align}
    \label{eq}
    \Phi_x(X) \coloneqq \sqrt{M_x}X\sqrt{M_x}. 
\end{align}

For multipartite systems we use $\mathrm{Tr}$ for the full trace and $\mathrm{Tr}_S,\mathrm{Tr}_O$ for partial traces over $\mathcal{H}_S$ and $\mathcal{H}_O$, respectively. Furthermore, to simplify the notation when tracing out specific classical outcomes, we define the superoperator:
\begin{align}
    \mathcal{T}_o(X) := \Tr_{O}(\proj{o} X ). 
\end{align}

\subsection{Input-output QHMM environment}
\label{sec:QHMM}
To formalize the \emph{environment}, let the initial hidden quantum memory state be $\rho_1\in \mathsf{D}(\mathcal{H}_S)$, where $S\in\mathbb{N}$ denotes the dimension of the quantum memory. A finite-dimensional hidden quantum memory state is a direct analogue to the hidden state chosen from a finite set in hidden Markov models. We fix the length of the interaction to be $L\in\mathbb{N}$. Between rounds $l\in[L]$, the memory evolves via CPTP channels $\mathbb{E}_l:\mathbb{M}_{S}\to \mathbb{M}_{S}$ for all $l\in[L]$. 
These channels govern the internal dynamics of the memory which, together with the initial state $\rho_1$, remain strictly \emph{unknown} to the agent.

We define the \emph{action set} $\mathcal{A}$ as the collection of all instruments $\mathcal{A} = \lbrace  \mathcal{P}^{(a)}:\mathbb{M}_S\to\mathbb{M}_{S\times O}\rbrace_a$ that the agent can apply to the hidden memory, where $a$ is the classical control label specifying the quantum instrument. With abuse of notation, when it is clear from the context, we will also denote the action set $\mathcal{A}$ as the set of indices $a$ labeling all instruments. Also for finite discrete sets we will use $A := |\mathcal{A} |$ to denote the number of actions in the set. The interaction of the agent is sequential which means that at each round $l \in [L]$, the agent selects an action $a_l \in \mathcal{A}$ and measures the classical register to obtain a classical outcome $o_l \in [O]$. The sequential interaction naturally defines a trajectory of length $l$, comprising the history of chosen actions and observed outcomes which is defined as $\tau_l \coloneqq (a_1 , o_1,...,a_l,o_l)$. 

We evaluate an important quantity, the probability of observing outcome $o_l$ at round $l$ given the chosen action $a_l$ and the past trajectory $\tau_{l-1}=(a_1,o_1,\dots,a_{l-1},o_{l-1})$ as follows. We propagate the (subnormalized) prior hidden memory state at round $l$ conditioned on the past trajectory $\tau_{l-1}$, which captures the cumulative effect of instruments and memory channels associated with $\tau_l$: 
\begin{equation}\label{eq:unnormalized_filter}
    \tilde{\rho}_{1} := \rho_{1}, \qquad \tilde{\rho}_{l}(\tau_{l-1}) := \big(\mathbb{E}_{l-1}\circ\Phi_{o_{l-1}}^{(a_{l-1})}\circ\dots\circ\mathbb{E}_{1}\circ\Phi_{o_{1}}^{(a_{1})}\big)(\rho_{1}), \qquad l \geq 2.
\end{equation}

Given this subnormalized prior state, the conditional probability of observing outcome $o_l$ upon selecting action $a_l$ follows directly from Born's rule:
\begin{equation}
    \mathrm{Pr}(o_l|\tau_{l-1},a_l) = \frac{\Tr \!\big[\Phi^{(a_l)}_{o_l}\!\left(\tilde{\rho}_{l}(\tau_{l-1})\right)\big]}{\Tr(\tilde{\rho}_{l}(\tau_{l-1}))}.
    \label{eq:conditional_prob_compact}
\end{equation}

Upon choosing $a_l$ and observing $o_l$, the subnormalized post-measurement state is $\Phi_{o_l}^{(a_l)}(\tilde{\rho}_l(\tau_{l-1}))$, determined by the update rule for quantum instruments. This post-measurement state then evolves via the memory channel to the prior state for the subsequent round $(\mathbb{E}_l\circ\Phi_{o_l}^{(a_l)})( \tilde{\rho}_l(\tau_{l-1}))$, justifying our definition for $\tilde{\rho}_l(\tau_{l-1})$ for all $l\in[L]$. 

We use the following compact notation to collect all parameters that define the QHMM introduced above (initial memory state, inter-round memory evolution, and the action-dependent instruments)
\begin{align}
    \omega \coloneqq \Big(\rho_{1}, \mathbb{E}_{1},\ldots,\mathbb{E}_{L-1}, \big\lbrace\Phi^{(a)}_{o}\big\rbrace_{a\in\mathcal{A},\,o\in[O]} \Big),
\end{align}
where $\rho_{1}\in\mathbb{M}_{S}$ is a density operator, each $\mathbb{E}_{l}:\mathbb{M}_{S}\to \mathbb{M}_{S}$ is a CPTP map, and for every action $a\in\mathcal{A}$ the collection $\big\lbrace\Phi^{(a)}_{o}\big\rbrace_{o\in[O]}$ consists of CPTNI maps satisfying the instrument constraint that $\sum_{o\in[O]}\Phi^{(a)}_{o}$ is trace preserving. We let $\Omega$ denote the set of all such admissible parameter ensembles, i.e.,
$\omega\in\Omega$ if and only if the above CPTNI and instrument constraints hold. 

After defining all the objects, we provide a compact definition of our model.

\begin{definition}[Input-output QHMM environment]
\label{def:qrl_model}
Fix integers $S,O,L\in\mathbb{N}$ and an action set $\mathcal{A}$. An \emph{$L$-step QHMM environment} $\omega$ is a parameter ensemble
\begin{align}
    \omega \coloneqq \Big( \rho_{1}, \mathbb{E}_{1},\ldots,\mathbb{E}_{L-1}, \big\lbrace\Phi^{(a)}_{o}\big\rbrace_{a\in\mathcal{A},\,o\in[O]} \Big)\in\Omega,
    \label{eq:omega_def}
\end{align}
where $\rho_{1}\in\mathsf{D}(\mathcal{H}_S)$ is the initial memory state, each $\mathbb{E}_{l}:\mathbb{M}_{S}\to \mathbb{M}_{S}$ is a CPTP map, and for every $a\in\mathcal{A}$ the collection $\big\lbrace\Phi^{(a)}_{o}\big\rbrace_{o\in[O]}$ consists of CPTNI maps satisfying the instrument condition that $\sum_{o\in[O]}\Phi^{(a)}_{o}$ is trace preserving. The agent interacts with $\omega_L$ sequentially over $L$ rounds: at each round $l\in[L]$ it selects an action $a_l\in\mathcal{A}$ (possibly adaptively as a function of the past trajectory $\tau_{l-1}$) and observes an outcome $o_l\in[O]$. Given any past trajectory $\tau_{l-1}$, the conditional outcome distribution is
\begin{align}
    \Pr(o_l|\tau_{l-1},a_l)= \frac{\Tr\!\Big[\Phi^{(a_l)}_{o_l} (\tilde{\rho}_l(\tau_{l-1}))\Big]}{\Tr(\tilde{\rho}_l(\tau_{l-1}))},
\end{align}
where the (subnormalized) prior hidden memory state $\tilde{\rho}_l(\tau_{l-1})$ is described in~\eqref{eq:unnormalized_filter}. The subnormalized post-measurement state is $\Phi^{(a_l)}_{o_l}(\tilde{\rho}_l(\tau_{l-1}))$ and the subnormalized prior hidden state in the subsequent round is $(\mathbb{E}_l\circ \Phi_{o_l}^{(a_l)})(\tilde{\rho}_l(\tau_{l-1}))$.
\end{definition}

\subsection{Sequential-emission HMM environment}
\label{sec:sequential_HMM}
While our framework assumes direct interaction with the memory, it also captures the essence of another practical setup, the sequential-emission hidden Markov model (HMM) environment, where the hidden memory evolves untouched and at each step it emits auxiliary quantum probe system from which we can gain information about the unknown process. In this model, at each round $l$, the hidden memory $S$ is mapped by a CPTP map
$\mathbb{E}_l:\mathbb{M}_{S}\to \mathbb{M}_{S}\otimes \mathbb{M}_{S_O}$, producing an auxiliary output system $S_O$, and the action $a_l$ measures the POVM $\{M_{o_l}^{(a_l)}\}_{o_l\in[O]}$ on $S_O$. This model reduces to the instrument-on-memory formulation by pushing the measurement back onto $S$: define, for each outcome $o_l\in[O]$,
\begin{align}
    \Phi^{(a_l)}_{o_l}(X) \coloneqq \Tr_{S_O}\!\big[(\mathbb{I}_S \otimes M_{o_l}^{(a_l)})\,\mathbb{E}_l(X)\big].
\end{align}
Each map $\Phi^{(a_l)}_{o_l}$ is CPTNI, and $\sum_{o_l\in[O]}\Phi^{(a_l)}_{o_l}$ is CPTP. Hence $\big\lbrace\Phi^{(a_l)}_{o_l}\big\rbrace_{o_l\in[O]}$ defines a valid instrument on the memory, as in our main model.

\subsection{Policies}

The agent's strategy is governed by a policy $\pi = \{\pi_l\}_{l=1}^L$. Each $\pi_l(\cdot|\tau_{l-1})$ is a conditional probability distribution over the action set $\mathcal{A}$, strictly dependent on the past trajectory $\tau_{l-1}$.
We incorporate an additional reward layer: at each step $l$, upon selecting action $a_l \in \mathcal{A}$ and observing outcome $o_l \in [O]$, the agent receives a scalar reward given by a known, bounded reward function $r_l: \mathcal{A} \times [O] \to \mathbb{R}$ for all $l\in[L]$.

\begin{remark}
    While standard POMDP literature often assumes $r_l$ depends only on $o_l$, our framework and regret bounds hold identically for any known action-dependent reward function as long as it is uniformly bounded by some constant $R$, i.e., $|r_l(a, o)| \le R$. We exploit this flexibility in our thermodynamic application in Section~\ref{sec:work_extraction}.
\end{remark}

The interaction between the policy $\pi$ and the QHMM environment $\omega$ induces a joint probability distribution over the past trajectory $\tau_l$. Fixing the policy $\pi$ and a QHMM environment $\omega$, we denote the probability of a specific trajectory as
\begin{align}\label{eq:full_probabilities}
    \mathbb{P}^\pi_\omega(\tau_l) = \Pr(a_1, o_1, \dots, a_l, o_l).
\end{align}
By the chain rule of probability, this joint distribution naturally factorizes into two distinct components: the action sequence governed by the agent, and the observation sequence governed by the environment. Concretely, we can write:
\begin{align}
    \mathbb{P}^\pi_\omega(\tau_l) = \pi(\tau_l) A(\tau_l),
\end{align}
where we have introduced the compact shorthands for the \emph{cumulative policy distribution product} and the \emph{cumulative observation likelihood product}, respectively:
\begin{align}\label{eq:policy_shorthand}
   \pi(\tau_l) \coloneqq \prod_{l'=1}^l \pi(a_{l'}|\tau_{l'-1}), \qquad  
   A(\tau_l) \coloneqq \prod_{l'=1}^l \Pr(o_{l'} |\tau_{l'-1}, a_{l'}).
\end{align}

In our QHMM model, the conditional probabilities $\Pr(o_{l'}| \tau_{l'-1},a_{l'})$ are given by the Born-rule expression induced by the CPTNI maps $\big\lbrace\Phi^{(a_{l'})}_{o_{l'}}\big\rbrace$ and the prior hidden memory state $\tilde{\rho}_l(\tau_{l-1})$ via~\eqref{eq:conditional_prob_compact}. Indeed, the factorization \eqref{eq:full_probabilities} follows from the chain rule:
\begin{align}
    \mathbb{P}^\pi_\omega(\tau_l) = \Pr(a_1,\ldots,a_l,o_1,\ldots,o_l)
    = \prod_{l'=1}^l \Pr(a_{l'},o_{l'}|\tau_{l'-1}) = \prod_{l'=1}^l \pi(a_{l'} | \tau_{l'-1})\Pr(o_{l'} | \tau_{l'-1},a_{l'}).
\end{align}
The reward $V^\pi(\omega)$ of a policy $\pi$ for a QHMM $\omega$ (when interacting for $L$ steps) is now defined in terms of rewards as
\begin{align}
\label{eq:value_def}
    V^\pi(\omega) = 
    \Exp_{\pi,\omega} \!\left[\sum_{l=1}^L r_l(a_l,o_l)\right] = \sum_{\tau_L} \mathbb{P}^\pi_\omega(\tau_L) \sum_{l=1}^L r_l(a_l,o_l) ,
\end{align}
where the expectation is taken over the randomness of the policy and the observation process.

\subsection{Objectives} 

We formalize the learning procedure as an episodic interaction. The agent is granted access to $K$ independent episodes of an identical, unknown QHMM environment $\omega$, where each episode consists of a fixed horizon of $L$ sequential rounds. At the start of any episode $k \in [K]$, the agent sticks to a policy $\pi^k$. Throughout the episode, for each step $l = 1, \dots, L$, the agent samples an action $a_l$ according to $\pi^k_l(\cdot|\tau_{l-1})$ and applies the corresponding quantum instrument $\mathcal{P}^{(a_l)}$ to the hidden memory. This physical interaction yields a classical outcome $o_l \in [O]$ and an associated scalar reward $r_l(a_l,o_l)$. Upon concluding the $L$-step interaction, the agent collects the total empirical reward $\sum_{l=1}^L r_l(a_l,o_l)$. Finally, the agent utilizes the accumulated trajectory data to update its policy to $\pi^{k+1}$. The QHMM environment is then discarded to conclude the current episode and initialized to start the next episode.

The ultimate goal of the agent is to maximize the expected cumulative reward interacting with $K$ independent episodes of the same underlying QHMM environment $\omega$. Equivalently, this corresponds to minimizing the cumulative regret defined as
\begin{align}\label{eq:regret}
    \regret\Big(K,\omega,\lbrace \pi^k \rbrace_{k=1}^K\Big)
    =
    \sum_{k=1}^K \Big( V^*(\omega) - V^{\pi^k}(\omega)\Big),
\end{align}
where $V^*(\omega)$ is the reward with the optimal policy $\pi$ for the environment i.e
\begin{align}
    V^*(\omega) \coloneqq \sup_{\pi'} V^{\pi'}(\omega).
\end{align}
To simplify notations, we will use $\regret(K)$ when the environment $\omega$ and the policies $\lbrace \pi^k \rbrace_{k=1}^K$ are clear from the context.

\section{Observable Operator Model}\label{sec:oom}

\subsection{Undercomplete assumption}
In order to exclude pathological instances and ensure the identifiability of the latent memory from observable data, we must impose an undercompleteness assumption on the quantum instruments. We begin by recalling the classical precedents. In the seminal work on learning HMM environments via spectral algorithms~\cite{hsu2012spectral}, identifiability relies heavily on a rank condition: the observation matrix must possess full column rank. This requirement effectively rules out problematic cases where the output distribution of one state is merely a convex combination of others. This spectral separation concept was subsequently extended to reinforcement learning in classical POMDPs through the $\alpha$-weakly revealing condition~\cite{jin2020sample, liu2022partially, liu2023optimistic}. By robustly lower-bounding the $S$-th singular value of the emission matrices ($\min_h \sigma_S(\mathbb{O}_h) \ge \alpha > 0$), the condition guarantees that the observable data contains sufficient information to distinguish between different mixtures of latent states.

An analogous phenomenon arises in the quantum setting. In~\cite{fanizza2023learning}, the authors study the spectral reconstruction of \emph{finitely correlated states}, which can be viewed as one-dimensional matrix product
density operator (equivalently, sequential quantum channel) representations with a finite memory dimension. Their reconstruction and its stability analysis also rely on an \emph{invertibility} condition ensuring that suitable observable marginals determine the underlying hidden memory state dynamics. Similarly, to characterize Virtual Quantum Markov Chains, Wang et al.~\cite{wang_quantummarkovchain} introduce a  \emph{virtual recovery map} that perfectly reconstructs a global multipartite state directly from its local subsystem marginal. This condition guarantees that all global measurement statistics can be exactly recovered via local operations and classical post-processing.

Motivated by these classical and quantum foundations, we generalize this identifiability requirement to our input-output QHMM environment. The direct quantum analogue of a full-rank classical emission matrix is the existence of a valid linear recovery map $R^{(a)}$ that perfectly reconstructs the post-measurement state from the classical marginals. We formalize this in the following assumption.

\begin{assumption}[Undercompleteness]\label{ass:undercomplete}
For every $a\in\mathcal{A}$, there exists a linear map
$R^{(a)}:\mathbb{M}_{O}\to \mathbb{M}_{S}\otimes \mathbb{M}_{O}$
such that for all $X\in\mathbb{M}_{S}$,
\begin{equation}
    \big(R^{(a)} \circ \Tr_{S}\circ \mathcal{P}^{(a)}\big)\!(X)=\mathcal{P}^{(a)}(X).
\end{equation}
\end{assumption}

However, the mere existence of a recovery map is insufficient for sample-efficient learning; to ensure statistical tractability, we must bound the sensitivity of this inversion. Thus, the classical $\frac{1}{\alpha}$ singular value bound is naturally replaced in our framework by an upper bound on the induced trace norm of the recovery map, which we denote as $\kappa_{\mathrm{uc}}$. In the observable quasi-realization model of~\cite{fanizza2023learning}, the analogue of this condition was expressed in terms of the minimum singular value of a certain linear map constructed from the state marginals. That formulation was conceived to be fully-general and allow learning of linear models beyond those described by quantum memory. In this paper we focus the attention on purely quantum models, and this also allows us to be more explicit about $\kappa_{\mathrm{uc}}$: it only depends on the (possibly known) instruments, therefore one can choose the instruments in advance to have a controlled $\kappa_{\mathrm{uc}}$. Imposing this robustness constraint leads to the formal definition of our $\kappa_{\mathrm{uc}}$-robust QHHM environment class.

\begin{definition}[$\kappa_{\mathrm{uc}}$-robust QHHM environment class]\label{def:Omega_uc}
Fix a constant $\kappa_{\mathrm{uc}}\in(0,\infty)$. Let $\Omega_{\mathrm{uc}}\subseteq\Omega$ be the set of
$L$-step QHMM environments $\omega$ such that for every $a\in\mathcal A$ there exists a linear map
$R^{(a)}:\mathbb M_{O}\to \mathbb M_{S}\otimes \mathbb M_{O}$ satisfying
\begin{equation}
    R^{(a)}\circ \Tr_S\circ \mathcal P^{(a)}=\mathcal P^{(a)},
    \qquad
    \|R^{(a)}\|_{1\to 1,\diag}\le \kappa_{\mathrm{uc}},
\end{equation}
where
\begin{equation}
    \|R^{(a)}\|_{1\to 1,\diag}\coloneqq\sup\{\|R^{(a)}(X)\|_1:X=\diag(X), \|X\|_1=1\}.
\end{equation}
\end{definition}

\subsection{Observable operators}\label{sec:oom_uc}

In standard formulations, computing the probability of an observation sequence requires explicitly tracking the evolution of the hidden memory state. An elegant alternative is to bypass the latent space entirely by expressing the probabilities of future observations as sequential products of linear operators acting strictly on the classical register. 

While this formulation has become a standard tool for sample-efficient learning in classical HMMs and POMDPs under the ``Observable Operator Model'' (OOM) moniker introduced by Jaeger~\cite{jaeger2000observable}, the underlying mathematical framework was extensively developed in the control theory literature as the \emph{quasi-realization} problem~\cite{vidyasagar2005realization, vidyasagar2011complete}. We adopt the term OOM throughout this work to maintain seamless alignment with the modern reinforcement learning literature~\cite{hsu2012spectral, jin2020sample, liu2022partially}, while explicitly acknowledging its rich control-theoretic origins.

Crucially, the undercompleteness assumption and the $\kappa_{\mathrm{uc}}$-robust QHMM environment class established above guarantee that we can construct a faithful OOM (or quasi-realization) representation for our quantum setting. We formalize these superoperators as follows.

\begin{definition}[OOM superoperators on the classical register]\label{def:oom_qhmm}
Fix an instance $\omega\in\Omega_{\mathrm{uc}}$.
For each step $l\in[L-1]$, outcome $o\in[O]$, and actions $a,a'\in\mathcal{A}$, define the \emph{OOM operator} $\mathbf{A}_l(o,a,a'):\mathbb{M}_{O}\to \mathbb{M}_{O}$ as the linear \emph{superoperator}
\begin{align}
    \mathbf{A}_l(o,a,a') \coloneqq \Tr_{S}\circ \mathcal{P}^{(a')}\circ\big(\mathbb{E}_l \otimes \mathcal{T}_o\big)\circ R^{(a)}.
\end{align}
where $\mathcal{T}_{o}(X)=\Tr_O(\proj{o} X)$. Moreover, define the initial observable vector $\mathbf{a}:\mathcal{A}\to \mathbb{M}_{O},$ as
\begin{align}\label{eq:initial_observable}
    \mathbf{a}(a)\coloneqq \big(\Tr_{S}\circ \mathcal{P}^{(a)}\big)\!(\rho_1).
\end{align}
As $\mathcal{P}^{(a')}$ and $\mathcal{P}^{(a)}$ are instruments, both $\mathbf{A}_l(o,a,a')$ and $\mathbf{a}(a)$ lie in the diagonal subspace of $\mathbb{M}_{O}$: for every $X\in\mathbb{M}_{O}$, $\mathbf{A}_l(o,a,a')[X]$ is diagonal in the computational basis; $\mathbf{a}(a)$ is also diagonal.
\end{definition}

\subsection{Trajectory probabilities}\label{sec:trajectory_probab}

The primary utility of the OOM framework is that it allows us to compute the probability of any sequence of observations without referencing the latent quantum state. In particular, for an instance $\omega\in\Omega$, the prefix-likelihood $A_\omega(\tau_l) = \prod_{l'=1}^l \Pr(o_{l'}|\tau_{l'-1}, a_{l'})$ can be expressed purely in terms of the operators $\mathbf{A}_l$ and the initial vector $\mathbf{a}$, as formalized in the following lemma.

\begin{lemma}\label{lem:probabilities_operator_multiquantum}
For any $l\in[L]$ and trajectory $\tau_l=(a_1,o_1,\dots,a_l,o_l)$,
\begin{equation}
\label{eq:oom_probabilitiesquantum_clean}
    A_\omega(\tau_l) = \mathcal{T}_{o_l} \mathbf{A}_{l-1}(o_{l-1},a_{l-1},a_l)\cdots\mathbf{A}_{1}(o_{1},a_{1},a_{2})\,\mathbf{a}(a_1).
\end{equation}
In particular, for $l=1$,
\begin{equation}
    \Pr(o_1|a_1) = \mathcal{T}_{o_1}\,\mathbf{a}(a_1).
\end{equation}
\end{lemma}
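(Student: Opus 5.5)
The plan is to prove a stronger, state-level identity by induction. This identity says that the vector produced by the OOM product is exactly the classical marginal of the instrument applied to the unnormalized filtered state $\tilde{\rho}_l(\tau_{l-1})$ from \eqref{eq:unnormalized_filter}. Tracing out the outcome $o_l$ then finishes the argument.

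\textbf{Step 1: reduce the likelihood to a single trace.} Define $\mathbf{v}_1 \coloneqq \mathbf{a}(a_1)$ and $\mathbf{v}_{l} \coloneqq \mathbf{A}_{l-1}(o_{l-1},a_{l-1},a_l)\cdots\mathbf{A}_1(o_1,a_1,a_2)\,\mathbf{a}(a_1)$. First I rewrite $A_\omega(\tau_l)$ as a telescoping product. Because each $\mathbb{E}_{l'}$ is trace preserving,
\begin{equation*}
\Tr\tilde{\rho}_{l'+1}(\tau_{l'}) = \Tr\big[\Phi^{(a_{l'})}_{o_{l'}}(\tilde{\rho}_{l'}(\tau_{l'-1}))\big].
\end{equation*}
Hence in $\prod_{l'=1}^l \Pr(o_{l'}|\tau_{l'-1},a_{l'})$, using \eqref{eq:conditional_prob_compact}, each numerator cancels the next denominator. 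Since $\Tr\rho_1=1$, this gives
\begin{equation*}
A_\omega(\tau_l)=\Tr\big[\Phi^{(a_l)}_{o_l}(\tilde{\rho}_l(\tau_{l-1}))\big].
\end{equation*}
If some intermediate trace vanishes, both sides are $0$ under the natural convention that the product is zero once a zero-probability prefix occurs, so this case is harmless.

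\textbf{Step 2: the inductive claim.} I will show that for every $l$,
\begin{equation*}
\mathbf{v}_l=(\Tr_S\circ\mathcal{P}^{(a_l)})(\tilde{\rho}_l(\tau_{l-1})).
\end{equation*}
The base case $l=1$ is exactly \eqref{eq:initial_observable}. For the inductive step,
\begin{equation*}
\mathbf{v}_{l+1}=\Tr_S\circ\mathcal{P}^{(a_{l+1})}\circ(\mathbb{E}_l\otimes\mathcal{T}_{o_l})\circ R^{(a_l)}\circ\Tr_S\circ\mathcal{P}^{(a_l)}(\tilde{\rho}_l).
\end{equation*}
Assumption~\ref{ass:undercomplete} collapses $R^{(a_l)}\circ\Tr_S\circ\mathcal{P}^{(a_l)}$ to $\mathcal{P}^{(a_l)}$. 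This cancellation is the only place the recovery map enters, and it is the key step. Next, from $\mathcal{P}^{(a)}(X)=\sum_x\Phi^{(a)}_x(X)\otimes\proj{x}$ we get $(\mathrm{id}_S\otimes\mathcal{T}_{o})\mathcal{P}^{(a)}(X)=\Phi^{(a)}_{o}(X)$. Therefore
\begin{equation*}
(\mathbb{E}_l\otimes\mathcal{T}_{o_l})\,\mathcal{P}^{(a_l)}(\tilde\rho_l)=\mathbb{E}_l\big(\Phi^{(a_l)}_{o_l}(\tilde\rho_l)\big)=\tilde\rho_{l+1}(\tau_l),
\end{equation*}
which closes the induction.

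\textbf{Step 3: conclude.} Apply $\mathcal{T}_{o_l}$ to $\mathbf{v}_l$. Again using the block form of the instrument,
\begin{equation*}
\mathcal{T}_{o_l}\Tr_S\mathcal{P}^{(a_l)}(Y)=\Tr\big[\Phi^{(a_l)}_{o_l}(Y)\big].
\end{equation*}
Taking $Y=\tilde{\rho}_l(\tau_{l-1})$ and combining with Step 1 gives \eqref{eq:oom_probabilitiesquantum_clean}. The $l=1$ statement is the special case with the empty product.

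\textbf{Main obstacle.} The only delicate point is bookkeeping. One must read $\mathbb{E}_l\otimes\mathcal{T}_o$ as a map $\mathbb{M}_S\otimes\mathbb{M}_O\to\mathbb{M}_S$. One must also note that the recovery identity is applied only to inputs in the range of $\Tr_S\circ\mathcal{P}^{(a_l)}$, where it holds exactly. In particular, the bound $\kappa_{\mathrm{uc}}$ plays no role in this lemma.
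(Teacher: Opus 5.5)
Your proof is correct and follows essentially the same route as the paper, which also proves the recursion $\mathbf{x}_{l+1}=\mathbf{A}_l\mathbf{x}_l$ for $\mathbf{x}_l=\Tr_S[\mathcal{P}^{(a_l)}(\tilde\rho_l(\tau_{l-1}))]$ via the cancellation $R^{(a_l)}\circ\Tr_S\circ\mathcal{P}^{(a_l)}=\mathcal{P}^{(a_l)}$ and then telescopes the conditional probabilities. The only cosmetic difference is that the paper telescopes using $\Tr[\mathbf{x}_{l+1}]=\mathcal{T}_{o_l}(\mathbf{x}_l)$ on the observable side, whereas you telescope the traces of $\tilde\rho_l$ directly; these are the same numbers.
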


\begin{proof}
Given a trajectory prefix $\tau_{l-1}$ and a candidate next action $a_l$, define the (subnormalized) \emph{observable state}
\begin{equation}
\label{eq:observable_state_def}
    \mathbf{x}_l(\tau_{l-1},a_l) \coloneqq \Tr_S \big[ \mathcal{P}^{(a_l)}(\tilde{\rho}_l(\tau_{l-1}))\big] \in \mathbb{M}_{O},
\end{equation}
where $\tilde{\rho}_l(\tau_{l-1})$ is the subnormalized prior hidden memory state from~\eqref{eq:unnormalized_filter}.
By construction,
\begin{equation}
    \label{eq:born_from_x}
    \Pr(o_l|\tau_{l-1},a_l) = \frac{\mathcal{T}_{o_l}\!\big[\mathbf{x}_l(\tau_{l-1},a_l)\big]}
    {\Tr\!\big[\tilde{\rho}_l(\tau_{l-1})\big]}.
\end{equation}
We start by showing the one-step recursion. Noting $\tau_l=(\tau_{l-1},a_l,o_l)$, for all $l\in[L-1]$: 
\begin{equation}
    \label{eq:x_recursion}
    \mathbf{x}_{l+1}(\tau_l,a_{l+1}) = \mathbf{A}_l(o_l,a_l,a_{l+1})\,\mathbf{x}_l(\tau_{l-1},a_l).
\end{equation}
Fix $l$ and abbreviate $\mathbf{x}_l\coloneqq \mathbf{x}_l(\tau_{l-1},a_l)$ and $\tilde{\rho}_l \coloneqq \tilde{\rho}_l(\tau_{l-1})$.
By~\eqref{eq:observable_state_def}, $\mathbf{x}_l=\Tr_S\big[\mathcal{P}^{(a_l)}(\tilde{\rho}_l)\big]$. 
By Assumption~\ref{ass:undercomplete},
\begin{equation}
    \label{eq:recover_joint}
    R^{(a_l)}(\mathbf{x}_l)\;=\;\mathcal{P}^{(a_l)}(\tilde{\rho}_l) \in\mathbb{M}_{S}\otimes\mathbb{M}_{O}.
\end{equation}
Conditioning on outcome $o_l$ and propagating the memory via $\mathbb{E}_l$, we get
\begin{align}
    \tilde{\rho}_{l+1}(\tau_l)
    =\big(\mathbb{E}_l\circ \Phi^{(a_l)}_{o_l}\big)\!(\tilde{\rho}_l)
    =
    \mathbb{E}_l \!\Big((\id_S\otimes \mathcal{T}_{o_l})\!\big(\mathcal{P}^{(a_l)}(\tilde{\rho}_l)\big)\Big) =\mathbb{E}_l\!\Big((\id_S\otimes \mathcal{T}_{o_l})\!\big(R^{(a_l)}(\mathbf{x}_l)\big)\Big),
\end{align}
where we used the identity $ \Phi^{(a_l)}_{o_l} (X) = (\id_S\otimes \mathcal{T}_{o_l})\,\mathcal{P}^{(a_l)} (X) $ in the first step and \eqref{eq:recover_joint} in the last step.
Now apply the next action $a_{l+1}$ and trace out $S$ to get
\begin{align}
    \mathbf{x}_{l+1}  =\Tr_S\!\Big[\mathcal{P}^{(a_{l+1})}\big(\tilde{\rho}_{l+1}(\tau_l)\big)\Big] =\Tr_S\!\Big[\mathcal{P}^{(a_{l+1})}\Big(\mathbb{E}_l\big((\id_S\otimes\mathcal{T}_{o_l})R^{(a_l)}(\mathbf{x}_l)\big)\Big)\Big].
\end{align}
Recognizing the composition in Definition~\ref{def:oom_qhmm} yields the recursion in~\eqref{eq:x_recursion}.

To establish the full probability sequence, we first note that the conditional probability of the outcome is given by:
\begin{equation}
    \mathrm{Pr}(o_l |\tau_{l-1}, a_l) = \frac{\mathcal{T}_{o_l}(\mathbf{x}_l)}{\Tr[\mathbf{x}_l]}.
\end{equation}
Crucially, we must relate the trace of the observable state at step $l+1$ to the state at step $l$. The update of the subnormalized hidden memory state implies:
\begin{equation}
    \Tr[\tilde{\rho}_{l+1}(\tau_l)] = \Tr\Big[\big(\mathbb{E}_l \circ \Phi^{(a_l)}_{o_l}\big)\!(\tilde{\rho}_l(\tau_{l-1}))\Big] = \Tr\big[\Phi^{(a_l)}_{o_l}(\tilde{\rho}_l(\tau_{l-1}))\big] = \mathcal{T}_{o_l}\big(\mathbf{x}_l\big),
\end{equation}
where the second equality utilizes the trace-preserving property of the channel $\mathbb{E}_l$. Because the instrument $\mathcal{P}^{(a_{l+1})}$ is completely positive and trace-preserving, we have:
\begin{equation}
    \Tr [\mathbf{x}_{l+1}] = \Tr\big[\mathcal{P}^{(a_{l+1})}(\tilde{\rho}_{l+1}(\tau_l))\big] = \Tr[\tilde{\rho}_{l+1}(\tau_l)] = \mathcal{T}_{o_l}\big(\mathbf{x}_l\big).
\end{equation}
This exact relation allows us to evaluate the prefix-likelihood as a telescoping product:
\begin{equation}
    A_\omega(\tau_l) = \prod_{l'=1}^l \Pr(o_{l'} | \tau_{l'-1}, a_{l'}) = \prod_{l'=1}^l \frac{\mathcal{T}_{o_{l'}}(\mathbf{x}_{l'})}{\Tr[\mathbf{x}_{l'}]} = \frac{\mathcal{T}_{o_l}(\mathbf{x}_l)}{\Tr[\mathbf{x}_1]}.
\end{equation}
Finally, since $\Tr[\mathbf{x}_1] = \Tr[\tilde{\rho}_1] = \Tr[\rho_1] = 1$, and by unfolding the recursion $\mathbf{x}_l = \mathbf{A}_{l-1} \dots \mathbf{A}_1 \mathbf{a}(a_1)$, we obtain the desired result
\begin{equation}
    A_\omega(\tau_l) = \mathcal{T}_{o_l}(\mathbf{x}_l) = \mathcal{T}_{o_l} \mathbf{A}_{l-1}(o_{l-1}, a_{l-1}, a_l) \dots \mathbf{A}_1(o_1, a_1, a_2) \mathbf{a}(a_1).
\end{equation}
The base case for $l=1$ follows immediately from $\mathbf{a}(a_1) = \Tr_S[\mathcal{P}^{(a_1)}(\rho_1)]$. 
\end{proof}

\section{Optimistic Maximum Likelihood Estimation for QHMMs}
\label{sec:omle_discrete}

We follow closely the optimistic learning framework for partially observable models developed in~\cite{liu2022partially}. The overall structure is that of an upper-confidence-bound-type algorithm: at each episode, the agent maintains a confidence region over the model class, selects an optimistic model within this region, and computes a policy that is optimal for this optimistic instance. The executed policy generates new trajectories, which are then used to refine the of the underlying parameters and to update the confidence set.

A distinctive feature of the approach in~\cite{liu2022partially} is that the confidence region is constructed around a Maximum Likelihood Estimator (MLE) of the model, rather than around moment-based or least-squares estimates. We adopt this MLE-based confidence construction in our setting and adapt it to the input-output QHMM model described above. The precise procedure is detailed in Algorithm~\ref{alg:omle-qhmm}. In the following, we sketch the main components of this strategy.

The Optimistic Maximum Likelihood Estimator (OMLE) maintains a confidence set $\mathcal{C}_k\subseteq\Omega$ of plausible QHMM models.
Let $D_{k}=\{(\pi^i,\tau^i)\}_{i=1}^{k}$ denote the dataset of trajectories and policies collected at the episode $k$. The main parts of the algorithm are the following.

\paragraph{Exploration-Exploitation.}(Optimistic planning): At the beginning of each episode $k\in[K]$, we apply the optimistic principle to select a model and policy pair that maximizes the expected reward within our confidence region
\begin{align}
\label{eq:argmax}
    (\omega_k,\pi^k)\in \arg\max_{\omega'\in \mathcal{C}_{k},\ \pi'} V^{\pi'}(\omega').
\end{align}

\paragraph{Data collection and update.}
Execute $\pi^k$ in the true environment $\omega$ for $L$ steps, observe $\tau^k$, add $(\pi^k,\tau^k)$ to the dataset,
and update $D_{k}$. 

\paragraph{Estimation.}(Confidence set construction): For a candidate model $\omega' \in \Omega$, we define the log-likelihood of the data set 
\begin{align}
    \mathcal{L}_k(\omega')\coloneqq\sum_{i=1}^{k}\log \mathbb{P}_{\omega'}^{\pi^i}(\tau^i) = \sum_{i=1}^{k}\log \pi^i(\tau^i) + \sum_{i=1}^{k}\log A_{\omega'}(\tau^i).
\end{align}
Using a radius $\beta>0$, we construct a confidence region that restricts the parameters to the undercomplete class $\Omega_{\mathrm{uc}}$ defined as:
\begin{align}\label{eq:confidence_mle}
\mathcal{C}_1\coloneqq \Omega_{\mathrm{uc}}, \qquad \mathcal{C}_{k+1}
\coloneqq  \Big\{\omega'\in\Omega: \mathcal{L}_k(\omega')\geq \max_{\omega''\in\Omega}\mathcal{L}_k(\omega'')-\beta \Big\} \cap \mathcal{C}_1.
\end{align}

\begin{remark}
Not that since $\pi^i(\tau^i)$ does not depend on $\omega'$, maximizing $\mathcal{L}_k(\omega')$ is equivalent to maximizing
$\sum_{i=1}^{k}\log A_{\omega'}(\tau^i)$. Also note that when $\beta=0$, this reduces to the  MLE solution set.
\end{remark}

\begin{remark}
The reason why we enforce the intersection with $\mathcal{C}_1 = \Omega_{\mathrm{uc}}$ is that the model parameters we optimize must satisfy the undercomplete assumption (Assumption~\ref{ass:undercomplete}). We will use the notation $\mathbf{A}^k_l(o,a,a')$ and $\mathbf{a}^k(a)$ for the OOM of $\omega_k$ given by Definition~\ref{def:oom_qhmm}.
\end{remark}

\begin{algorithm}[t]
    \caption{OMLE for the input-output QHMM model}
    \label{alg:omle-qhmm}
    \DontPrintSemicolon
    \SetKwInOut{KwRequire}{Require}
    \KwRequire {Horizon $L$, action set $\mathcal A$, outcomes $[O]$, rewards $\{r_\ell\}_{\ell=1}^L$, model class $\Omega$, radius $\beta > 0 $.\;}
    \BlankLine

    Initialize dataset $D_0\gets\emptyset$ and confidence set $\mathcal{C}_1\gets \Omega_{\mathrm{uc}}$.\;

    \For{$k=1,2,\dots,K$}{

    \tcp{\textbf{1. Exploration-Exploitation (Optimistic Planning)}}
    Choose $(\omega_k,\pi^k)\in\arg\max\limits_{\omega'\in \mathcal{C}_k ,\ \pi'} V^{\pi'}(\omega')$.\;

    \tcp{\textbf{2. Rollout on the QHMM}}
    Execute $\pi^k$ for $L$ rounds and observe $\tau^k=(a_1,o_1,\dots,a_L,o_L)$:\;
    \For{$\ell=1,2,\dots,L$}{
        Sample $a_\ell\sim \pi^k_\ell(\cdot|\tau_{\ell-1})$.\;
        Apply the instrument $\mathcal P^{(a_\ell)}=\{\Phi^{(a_\ell)}_{o}\otimes \proj{o}\}_{o\in[O]}$ to the hidden memory.\;
        Observe $o_\ell\in[O]$ and update the memory state accordingly.\;
        Receive reward $r_\ell(a_\ell,o_\ell)$.\;
    }

    \tcp{\textbf{3. Update dataset}}
    $D_{k}\gets D_{k-1}\cup\{(\pi^k,\tau^k)\}$.\;

    \tcp{\textbf{4. Estimation (MLE / Confidence set update)}}
    $\hat\omega_k \in \arg\max\limits_{\omega'\in \Omega}\ \sum_{(\pi^i,\tau^i)\in D_k}\log A_{\omega'}(\tau^i)$.\;
    $\mathcal{C}_{k+1} \gets \Bigl\{\omega'\in\mathcal{C}_k:\ \sum_{(\pi^i,\tau^i)\in D_k}\log A_{\omega'}(\tau^i) \ge \sum_{(\pi^i,\tau^i)\in D_k}\log A_{\hat\omega_k}(\tau^i)-\beta\Bigr\} \cap \Omega_{\mathrm{uc}}$.\;
    }
\end{algorithm}

\subsection{Embedding dimension of the QHMM model class (discrete actions)}
\label{subsec:embedding_dimension_qhmm}

To analyze the regret of Algorithm~\ref{alg:omle-qhmm}, we require MLE concentration bounds of the type established in~\cite{liu2022partially}. These bounds depend on the model class $\Omega$ only through the dimension of a finite-dimensional real embedding. We therefore upper bound the number of free real parameters needed to specify any $\omega\in\Omega$ by vectorizing the Choi representations of its constituent channels/instruments.

\begin{itemize}
    \item \emph{Initial state.} $\rho_1$ is Hermitian with $\Tr(\rho_1)=1$, hence it has $S^2-1$ free real parameters.
    \item \emph{Dynamics.} Each CPTP map $\mathbb{E}_l:\mathbb M_{S}\to\mathbb M_{S}$ is represented by its Choi matrix\footnote{For a linear map $\Phi: \mathbb{M}_S \to \mathbb{M}_S$, its Choi matrix is defined as $J(\Phi) \coloneqq (\Phi \otimes \text{id}_S)(\proj{\Omega})$, where $|\Omega\rangle = \sum_{i=1}^S |i\rangle \otimes |i\rangle$ is the unnormalized maximally entangled state.} $J(E_l)\in\mathbb{M}_{S^2}$, which is Hermitian PSD and satisfies $\Tr_{\mathrm{out}}(J(\mathbb{E}_l))=\mathbb{I}_S$.
    For an explicit upper bound we count at most $S^4$ real parameters per $E_l$.
    \item \emph{Instruments.} For each action $a\in\mathcal{A}$ and outcome $o\in[O]$, the CP map $\Phi^{(a)}_o$ has a Choi matrix
    $J\big(\Phi^{(a)}_o\big)\in\mathbb{M}_{S^2}$; again we count at most $S^4$ real parameters per $(a,o)$.
\end{itemize}

Altogether, this yields the explicit embedding-dimension bound
\begin{align}
    d_{\Omega}\leq  (L + AO)S^4+S^2,
\end{align}
where $A = |\mathcal{A}|$ is the cardinality of the action set.

The original proofs in~\cite[Propositions 13, 14]{liu2022partially} establish concentration bounds by upper-bounding the $\epsilon$-covering number by $\mathcal{O}((1/\epsilon)^{d})$, where $d$ is the number of free parameters in the classical tabular model. Because our model class $\Omega$ can be embedded into a bounded subset of $\mathbb{R}^{d_{\Omega}}$ via the Choi matrices above, our $\epsilon$-covering number is similarly upper-bounded by $\mathcal{O}((1/\epsilon)^{d_\Omega})$. Consequently, their likelihood-class arguments apply identically to our QHMM model, requiring only a single modification: replacing the classical parameter dimension $d$ with our quantum embedding dimension $d_\Omega$. Thus, the concentration bounds below retain the exact form as in~\cite{liu2022partially}, with the effective dimension safely bounded by $d_\Omega$. In case the reader is interested in these specific proofs and constructions we adapted the original proofs to handle the continuous action spaces that we will study in the next section. The arguments follow the original ones that can be found in~\cite[Appendix B]{liu2022partially}. The continuous adaptations can be found in our Appendix~\ref{sec:appendix_mle_continuous}, and these proofs are also almost identical to the discrete case (just by replacing the corresponding continuous measures to discrete ones).

\begin{proposition}[Likelihood ratio concentration~{\cite[Proposition 13]{liu2022partially}}]
\label{prop:concentration_mle}
There exists an absolute constant $c>0$ such that for any $\delta\in(0,1]$, with probability at least $1-\delta$, for all $k\in[K]$ and all $\omega'\in\Omega$,
\begin{equation}
    \sum_{i=1}^k
    \log\!\left(
        \frac{\mathbb{P}_{\omega'}^{\pi^i}(\tau^i)}
             {\mathbb{P}_{\omega}^{\pi^i}(\tau^i)}
    \right)
    \leq 
    c\left((L + A O) S^4
        \log(KSAOL)
        +
        \log(K/\delta)
    \right).
\end{equation}
\end{proposition}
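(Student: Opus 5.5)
The plan is the standard covering-number plus martingale argument of~\cite[Proposition 13]{liu2022partially}, adapted to the Choi-matrix embedding of $\Omega$ described above.

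\textbf{Step 1 (fixed model, fixed $k$).} Fix $\omega'\in\Omega$ and let $\mathcal{F}_{i}$ be the $\sigma$-algebra generated by the first $i$ episodes. Since $\pi^i$ is $\mathcal{F}_{i-1}$-measurable and $\tau^i\sim\mathbb{P}^{\pi^i}_\omega$, we have
\begin{equation}
\mathbb{E}\!\left[\exp\!\Big(\log \tfrac{\mathbb{P}^{\pi^i}_{\omega'}(\tau^i)}{\mathbb{P}^{\pi^i}_{\omega}(\tau^i)}\Big)\,\Big|\,\mathcal{F}_{i-1}\right]=\sum_{\tau}\mathbb{P}^{\pi^i}_{\omega'}(\tau)\le 1 .
\end{equation}
Hence $\exp$ of the partial sums is a nonnegative supermartingale. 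Markov's (or Ville's) inequality then gives, with probability $1-\delta$, that the sum is at most $\log(1/\delta)$. A union bound over $k\in[K]$ adds a $\log K$ term.

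\textbf{Step 2 (discretization).} Embed $\Omega$ into a bounded subset of $\mathbb{R}^{d_\Omega}$ with $d_\Omega\le (L+AO)S^4+S^2$ via Choi matrices. Build an $\epsilon$-net $\bar\Omega$ with $\log|\bar\Omega|=\mathcal{O}(d_\Omega\log(1/\epsilon))$. We need a net for which each $\omega'$ has a representative $\bar\omega$ satisfying the \emph{upper} bound $\mathbb{P}^{\pi}_{\omega'}(\tau)\le \mathbb{P}^{\pi}_{\bar\omega}(\tau)+\epsilon'$ for all $\pi,\tau$. Following Liu et al., it is convenient to use optimistic (unnormalized) covers $\bar{\mathbb{P}}$ with $\sum_\tau \bar{\mathbb{P}}(\tau)\le 1+\epsilon'$; Step 1 still goes through, since the supermartingale bound becomes $(1+\epsilon')^k$, which is harmless for $\epsilon'\sim 1/K$.

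The upper bound follows from Lipschitz continuity of $A_{\omega'}(\tau_L)$ in the parameters. By~\eqref{eq:unnormalized_filter}, $A_{\omega'}(\tau)$ is a composition of $L$ channels and $L$ CPTNI maps applied to $\rho_1$. Telescoping and using that these maps are trace-norm contractive (their $1\to1$ norms are bounded by $\mathrm{poly}(S)$ times the Choi distance), a perturbation of size $\epsilon$ in the Choi entries changes $A_{\omega'}(\tau)$ by at most $\mathrm{poly}(S)\,L\,\epsilon$. Summing over the $(AO)^L$ trajectories, weighted by policy probabilities, costs at most a factor $O^L$. So we choose $\epsilon=\epsilon'/(\mathrm{poly}(S)\,L\,O^L)$, giving $\log(1/\epsilon)=\mathcal{O}(L\log(SOLK))$.

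\textbf{Step 3 (union bound and assembly).} Apply Step 1 to every element of the net with $\delta/|\bar\Omega|$. For an arbitrary $\omega'$, compare to its cover element, which costs $k\log(1+\epsilon')=\mathcal{O}(1)$. Collecting terms gives $c\big(d_\Omega\log(KSAOL)+\log(K/\delta)\big)$, and $d_\Omega\lesssim (L+AO)S^4$.

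The main obstacle is Step 2. The covering must be uniform over \emph{all} adaptive policies and must control likelihood ratios one-sidedly, which rules out a naive sup-norm net. The fix is the optimistic cover: bound the total mass of $\bar{\mathbb{P}}$ on trajectories policy-uniformly, using that $\sum_{o}\Phi^{(a)}_o$ is trace preserving. This makes the summation over outcomes collapse step by step, so the $O^L$ factor is only logarithmic. A second caveat is the $L$ factor inside $\log(1/\epsilon)$: it must be absorbed into the stated $\log(KSAOL)$ term, possibly with $c$ enlarged or with $L$ moved inside the parameter count, which the stated form allows up to constants.
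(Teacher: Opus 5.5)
Your architecture (exponential supermartingale for a fixed pseudo-model, an optimistic cover of total mass at most $1+1/K$, union bound) is the right one. Step~2 as you carry it out is the same Lipschitz/additive bracket the paper uses in Appendix~\ref{sec:appendix_mle_continuous}. The gap is that this construction does not give the stated bound, and your Step~3 drops the $L$ that your own Step~2 produces.

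Your slack is added to the trajectory likelihood, $\bar{\mathbb{P}}^{\pi}(\tau)=\pi(\tau)\bigl(A_{\bar\omega}(\tau)+\eta\bigr)$ with $\eta=\mathrm{poly}(S)\,L\,\epsilon$, and $\sum_{\tau}\pi(\tau)=O^L$. Keeping the mass below $1+1/K$ therefore forces $\epsilon\lesssim 1/(KO^L L\,\mathrm{poly}(S))$. Then $\log|\bar\Omega|$ is of order $d_\Omega(\log K+L\log O+\log(SL))$, and the argument only yields $c\bigl(d_\Omega L\log(KSOL)+\log(K/\delta)\bigr)$. A multiplicative $L$ cannot be absorbed into a universal constant. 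Moving it into the parameter count gives $L(L+AO)S^4$, which is not $\mathcal{O}((L+AO)S^4)$. This is exactly why the paper's continuous analogue, Proposition~\ref{prop:lr-cont}, carries $d_\omega L\log(KO)$. The $L$-free discrete statement instead rests on the cover of~\cite{liu2022partially}, where each model parameter is inflated separately. Your closing remark names the right mechanism (outcome sums collapsing step by step via trace preservation), but in your construction that collapse acts only on $A_{\bar\omega}$, never on the slack.

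The missing idea is to inflate each map in the completely positive order rather than inflating the final probability. Take the net inside $\Omega$, with $\rho_1$ and every Choi matrix within $\epsilon$ in operator norm of those of $\omega'$. Choi matrices have trace at most $S$, so $\log|\bar\Omega|=\mathcal{O}(d_\Omega\log(S/\epsilon))$. Set $\bar\rho_1=\rho_{c,1}+\epsilon\mathbb{I}_S$, $\bar{\mathbb{E}}_l=\mathbb{E}_{c,l}+\epsilon\Psi$ and $\bar\Phi^{(a)}_o=\Phi^{(a)}_{c,o}+\epsilon\Psi$, where $\Psi(X)=\Tr(X)\,\mathbb{I}_S$ has Choi matrix $\mathbb{I}_{S^2}$.

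Then $\bar\rho_1-\rho_1'\succeq0$ and every $\bar\Phi-\Phi'$ is CP. Since $\bar\Phi_2\bar\Phi_1-\Phi_2'\Phi_1'=\bar\Phi_2(\bar\Phi_1-\Phi_1')+(\bar\Phi_2-\Phi_2')\Phi_1'$, composition of CP maps is monotone. Hence $\bar A(\tau)\ge A_{\omega'}(\tau)$ for every $\tau$, uniformly in $\pi$.

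For the mass, sum $o_L,a_L,o_{L-1},a_{L-1},\dots$ backwards, using that $\sum_o\Phi^{(a)}_{c,o}$ and $\mathbb{E}_{c,l}$ are trace preserving on the positive intermediate operators. This gives $\sum_\tau\pi(\tau)\bar A(\tau)\le(1+\epsilon S)^{L}(1+\epsilon OS)^{L}$. So $\epsilon=1/(4KLOS)$ already gives mass at most $1+1/K$, with $\log(1/\epsilon)=\mathcal{O}(\log(KSOL))$. Consequently $\log|\bar\Omega|=\mathcal{O}(d_\Omega\log(KSOL))$ with no $L$ prefactor, and your Steps~1 and~3 then give the stated inequality. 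No Lipschitz constant is needed.
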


\begin{proposition}[Trajectory $\ell_1$ concentration~{\cite[Proposition 14]{liu2022partially}}]
\label{prop:trajectory_mle}
There exists a universal constant $c>0$ such that for any $\delta\in(0,1]$, with probability at least $1-\delta$, for all $k \in[K]$ and all $\omega'\in\Omega$,
\begin{equation}
    \sum_{i=1}^k 
    \left(
    \sum_{\tau_L}
        \left|
            \mathbb{P}_{\omega'}^{\pi^i}(\tau_L)
            - 
            \mathbb{P}_{\omega}^{\pi^i}(\tau_L)
        \right|
    \right)^2
    \leq 
    c\left(
        \sum_{i=1}^k
        \log\!\left(
            \frac{\mathbb{P}_{\omega}^{\pi^i}(\tau^i)}
                 {\mathbb{P}_{\omega'}^{\pi^i}(\tau^i)}
        \right)
        + (L + A O) S^4
        \log(KSAOL)
        +
        \log(K/\delta)
    \right).
\end{equation}
\end{proposition}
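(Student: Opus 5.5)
The plan is to follow the argument of \cite[Proposition 14]{liu2022partially}. The only model-specific input is a covering of the QHMM class $\Omega$ in the embedding dimension $d_\Omega\le (L+AO)S^4+S^2$. The core tool is the standard relation between the total-variation distance and the Hellinger affinity (Bhattacharyya coefficient), combined with a martingale exponential-moment bound on half log-likelihood ratios.

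First, for a fixed $\omega'$, consider the random variables
\[
X_i=\tfrac12\log\big(\mathbb{P}^{\pi^i}_{\omega'}(\tau^i)/\mathbb{P}^{\pi^i}_{\omega}(\tau^i)\big).
\]
Conditionally on the past, $\pi^i$ is fixed and $\tau^i\sim\mathbb{P}^{\pi^i}_\omega$, so
\[
\mathbb{E}[e^{X_i}\mid\mathcal F_{i-1}]=\sum_{\tau_L}\sqrt{\mathbb{P}^{\pi^i}_{\omega'}(\tau_L)\mathbb{P}^{\pi^i}_{\omega}(\tau_L)}=1-H^2(\mathbb{P}^{\pi^i}_{\omega'},\mathbb{P}^{\pi^i}_{\omega}).
\]
The process $\exp\big(\sum_i X_i-\sum_i\log(1-H_i^2)\big)$ is then a nonnegative supermartingale with mean at most one. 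Applying Markov's inequality gives, with probability $1-\delta$,
\[
\sum_i -\log(1-H_i^2)\le -\sum_i X_i+\log(1/\delta).
\]
Using $-\log(1-x)\ge x$ and $\|P-Q\|_1^2\le 8H^2(P,Q)$ yields $\sum_i\|\mathbb{P}^{\pi^i}_{\omega'}-\mathbb{P}^{\pi^i}_\omega\|_1^2\lesssim \sum_i\log(\mathbb{P}_\omega/\mathbb{P}_{\omega'})+\log(1/\delta)$ for that single $\omega'$. A union bound over $k\in[K]$ contributes the $\log(K/\delta)$ term.

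Second, the statement must be made uniform over the continuum $\Omega$. I would build an $\epsilon$-bracketing/optimistic cover: a finite set $\bar\Omega$ of \emph{unnormalized} trajectory-weight functions $\bar A_{\bar\omega}(\tau_L)\ge A_{\omega'}(\tau_L)$ (for every $\omega'$ there is a close element of the cover) with $\sum_{\tau_L}\pi(\tau_L)\,|\bar A_{\bar\omega}-A_{\omega'}|\le\epsilon$ for every policy $\pi$. The supermartingale argument goes through with $\bar A$ in place of $A_{\omega'}$, because the affinity is now at most $1+\epsilon$ rather than $1$, costing an additive $K\epsilon$. Replacing $\omega'$ by its cover element changes the left side by $O(K\epsilon)$, and the log-likelihood only decreases in the favorable direction since $\bar A\ge A_{\omega'}$.

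The cover is obtained by discretizing the Choi matrices of $\rho_1$, the $\mathbb{E}_l$ and the $\Phi^{(a)}_o$ in an $\epsilon'$-net of their compact parameter set in $\mathbb{R}^{d_\Omega}$. This requires showing that trajectory probabilities are Lipschitz in these parameters. By the telescoping of the product in~\eqref{eq:unnormalized_filter}, and since each factor is CPTNI and hence trace-norm contractive on the subnormalized state, one has
\[
\sum_{o_{1:L}}|A_{\omega'}-A_{\omega''}|\le \sum_l\big(\|\Delta\mathbb{E}_l\|_\diamond+\textstyle\sum_o\|\Delta\Phi_o^{(a)}\|_\diamond\big)+\|\Delta\rho_1\|_1.
\]
This bound holds for every fixed action sequence, hence after averaging over any $\pi$. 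Choi-norm-to-diamond-norm conversion costs at most a $\poly(S)$ factor, and the sum over $(a,o)$ costs $\poly(AOL)$. Taking $\epsilon'=\epsilon/\poly(SAOL)$ and $\epsilon=1/K$ gives $\log|\bar\Omega|\lesssim d_\Omega\log(KSAOL)$. This yields the $(L+AO)S^4\log(KSAOL)$ term, and adding the $\log(K/\delta)$ term completes the bound.

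The main obstacle is this Lipschitz/optimistic-cover step. The cover must dominate pointwise (e.g.\ adding $\epsilon/\text{(number of outcome sequences)}$ or working with upper brackets), and the trace-norm perturbation must avoid exponential growth in $L$. Contractivity of the physical CPTNI maps handles the latter, in contrast to the OOM parametrization, which would incur powers of $\kappa_{\mathrm{uc}}$.
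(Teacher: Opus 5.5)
Your skeleton is the paper's. It proves this statement by deferring to \cite[Prop.~14]{liu2022partially} with $d$ replaced by $d_\Omega$, and its appendix runs the same optimistic-bracketing, Hellinger-affinity and exponential-martingale argument you describe. Your first half is fine: keeping an unnormalized $\bar A$ of mass at most $1+\epsilon$, instead of renormalizing, costs only an additive $O(K\epsilon)$.

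The gap is in the step you flagged yourself, and as written your numbers do not close. Your choice $\epsilon'=\epsilon/\poly(SAOL)$ comes from the \emph{summed} Lipschitz bound. That bound only controls $\sum_{\tau_L}\pi(\tau_L)|A_{\omega_c}-A_{\omega'}|$, so a single trajectory may still be off by order $\epsilon$, and adding $\epsilon/O^L$ per trajectory does not dominate $A_{\omega'}$. If you instead require pointwise accuracy $\epsilon/O^L$, you need $\epsilon'\approx\epsilon/(O^L\poly(SL))$. Then $\log|\bar\Omega|\approx d_\Omega\,(L\log O+\log(KSL))$, a full extra factor of $L$ over the claimed $(L+AO)S^4\log(KSAOL)$. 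This is exactly the $d_\omega L\log(KO)$ the paper's own appendix obtains with this additive bracket. Saying ``upper brackets'' leaves open precisely the part that needs an idea.

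The missing idea is monotone rounding in the completely positive order, the quantum analogue of rounding tabular probabilities upward in \cite{liu2022partially}.
\begin{itemize}
\item For each Choi matrix take a net point with $\|J-J_c\|_\infty\le\epsilon'$ and set $\bar J\coloneqq J_c+\epsilon'\mathbb{I}$; likewise set $\bar\rho_1\coloneqq\rho_{1,c}+\epsilon'\mathbb{I}$. Then every $\bar\Phi-\Phi$ is CP.
\item $A_{\omega'}(\tau_L)=\Tr[\Phi^{(a_L)}_{o_L}\circ\mathbb{E}_{L-1}\circ\cdots\circ\Phi^{(a_1)}_{o_1}(\rho_1)]$ is the trace of a composition of CP maps applied to a PSD operator. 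Since $\bar\Psi_2\bar\Psi_1-\Psi_2\Psi_1=\bar\Psi_2(\bar\Psi_1-\Psi_1)+(\bar\Psi_2-\Psi_2)\Psi_1$ is CP, you get $\bar A\ge A_{\omega'}$ pointwise for free.
\item For PSD $Y$ one has $\sum_o\Tr\bar\Phi^{(a)}_o(Y)\le(1+2SO\epsilon')\Tr Y$ and $\Tr\bar{\mathbb{E}}_l(Y)\le(1+2S\epsilon')\Tr Y$. Collapsing backward under any policy therefore gives $\sum_{\tau_L}\pi(\tau_L)\bigl(\bar A-A_{\omega'}\bigr)\le O(LSO\epsilon')$.
\end{itemize}
Hence $\epsilon'=1/(K\poly(LSO))$ suffices and $\log|\bar\Omega|=O(d_\Omega\log(KSOL))$, with no $O^L$ and no Lipschitz estimate needed at all.

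A smaller point: your Lipschitz step says a bound for every fixed action sequence transfers to ``averaging over any $\pi$.'' For history-dependent policies this is not valid, because $\pi(\tau_L)$ depends on $o_{1:L}$, so the weighted sum is not an average of fixed-sequence sums. The bound itself still holds, but only via the backward trace-collapse with policy weights, as in Step~3 of Lemma~\ref{lem:one_norm_op}.
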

Thus, from the above propositions, we will choose the radius of our confidence region as
\begin{align}\label{eq:beta_def}
    \beta \coloneqq  c\left((L + |\mathcal A|\,O)\,S^4
        \log(KSAOL)
        +
        \log(K/\delta)
    \right).
\end{align}

\section{Regret analysis for discrete action spaces}\label{sec:regret_discrete}

\subsection{Controlling error propagation via quantum recovery maps}

Before establishing the full regret bound, a fundamental question arises: \emph{how does a local estimation error at step $l$ propagate to affect the predicted probability of an entire trajectory?} In the classical POMDP literature, bounding the suboptimality of a policy relies on controlling the error of the estimated OOM. By telescoping the product of estimated and true OOM operators over the episode length, one isolates the estimation error at each step $l$. However, because the trajectory probability is a sequential product, this local error is subsequently multiplied by the remaining true OOM operators, raising the critical risk that the error exponentially propagates forward to the end of the episode.

Classically, this forward propagation is bounded by unpacking the sequence of OOM operators into a product of sub-stochastic emission and transition matrices, followed by the pseudo-inverse of the emission matrix~\cite[Lemmas 27 and 31]{liu2022partially}. The exponential blow-up of the local error is avoided strictly because sub-stochastic maps are contractive under the $\ell_1$ norm, leaving only a single penalty bounded by the $\alpha$-weakly revealing condition.

In our quantum input-output framework, the memory is a density matrix evolving via CPTP maps, and the observations are generated by quantum instruments. Because of this underlying non-commutative operator structure, the classical matrix decomposition into sub-stochastic maps does not apply. We must independently ensure that the composition of future quantum instruments and channels does not arbitrarily amplify the local error at step $l$.

To overcome this, we utilize the undercomplete assumption (Assumption~\ref{ass:undercomplete}) to establish a fully quantum analogue to the classical contraction. Intuitively, we isolate the only potentially non-contractive part of the forward propagation, i.e., the quantum recovery map $R^{(a)}$ which serves the equivalent role of the classical pseudo-inverse. We then demonstrate that the subsequent sequence consists purely of CPTNI maps, which act as a uniform contraction on the quantum state space.

We define the concatenation of OOM operators. For a trajectory $\tau_l$, we define the internal vectors
\begin{align}
    \mathbf{a}(\tau_{l-1},a_l) \coloneqq \left(\Pi_{l'=1}^{l-1} \mathbf{A}(o_{l'},a_{l'},a_{l'+1}  ) \right)\mathbf{a}(a_1 ), \qquad
    \mathbf{a}^k(\tau_{l-1},a_l) \coloneqq \left(\Pi_{l'=1}^{l-1} \mathbf{A}^k(o_{l'},a_{l'},a_{l'+1}  ) \right)\mathbf{a}^k(a_1).
\end{align}
Also, for a trajectory $\tau_{L-1}$ and action $a_L$ we adopt the following shorthand notation 
\begin{align}
    &\mathbf{a}_{l} \coloneqq \mathbf{a}(\tau_{l-1},a_{l}), \qquad  \mathbf{A}_{l}\coloneqq \mathbf{A}(o_{l},a_{l},a_{l+1} ), \qquad \mathbf{A}_{L-1:l}\coloneqq \mathbf{A}_{L-1}\times \cdots \times \mathbf{A}_l,  \\
    &\mathbf{a}^k_{l} \coloneqq \mathbf{a}^k(\tau_{l-1},a_{l}),\qquad  \mathbf{A}_{l}^k\coloneqq \mathbf{A}^k(o_{l},a_{l},a_{l+1} ), \qquad \mathbf{A}^k_{L-1:l} \coloneqq \mathbf{A}^k_{L-1}\times \cdots \times \mathbf{A}^k_l,
\end{align}
where we will understand that there is an underlying $\tau_L$ that will be clear from the context.

The following crucial lemma formally bounds this forward propagation. It demonstrates that the accumulated trace norm is controlled entirely by the robustness of our recovery map, $\kappa_{\mathrm{uc}}$, and the dimension of the hidden quantum memory, bypassing the need for classical emission matrix arguments.

\begin{lemma}
\label{lem:one_norm_op}
    For any operator $\tilde{\mathbf{A}}\in\lbrace \mathbf{A}^k , \mathbf{A} \rbrace$ and a diagonal matrix $X\in\mathbb{M}_{O}$
    \begin{align}
        \sum_{ a_L , \tau_{L-1:l+2},o_{l+1} }  \|\tilde{\mathbf{A}}_{L-1:l+1} X \|_1 \times  \prod_{l'=l+1}^{L-1}\pi(a_{l'+1} | \tau_{l'})  \leq  S^2 \kappa_{\mathrm{uc}} \| X \|_1 .
    \end{align}
\end{lemma}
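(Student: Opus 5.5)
The plan is to unfold the product $\tilde{\mathbf{A}}_{L-1:l+1}X$ and use the undercomplete identity $R^{(a)}\circ\Tr_S\circ\mathcal{P}^{(a)}=\mathcal{P}^{(a)}$ (valid for the environment defining $\tilde{\mathbf{A}}$, since $\omega_k\in\mathcal{C}_k\subseteq\Omega_{\mathrm{uc}}$) to cancel every recovery map except the first. Writing $\mathbf{A}_{l'}=\Tr_S\circ\mathcal{P}^{(a_{l'+1})}\circ(\mathbb{E}_{l'}\otimes\mathcal{T}_{o_{l'}})\circ R^{(a_{l'})}$, the block $R^{(a_{l'+1})}\circ\Tr_S\circ\mathcal{P}^{(a_{l'+1})}$ appearing at the junction of $\mathbf{A}_{l'+1}\mathbf{A}_{l'}$ collapses to $\mathcal{P}^{(a_{l'+1})}$. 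Since $(\id_S\otimes\mathcal{T}_o)\circ\mathcal{P}^{(a)}=\Phi^{(a)}_o$, we obtain
\begin{align}
\tilde{\mathbf{A}}_{L-1:l+1}X=\Tr_S\circ\mathcal{P}^{(a_L)}\circ\mathbb{E}_{L-1}\circ\Phi^{(a_{L-1})}_{o_{L-1}}\circ\cdots\circ\mathbb{E}_{l+2}\circ\Phi^{(a_{l+2})}_{o_{l+2}}\circ Y_{o_{l+1}},
\end{align}
where $Y_{o}\coloneqq \mathbb{E}_{l+1}\big((\id_S\otimes\mathcal{T}_{o})R^{(a_{l+1})}(X)\big)$. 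This expresses the whole forward sequence as one application of the (possibly non-contractive) map $R^{(a_{l+1})}$ followed by physical CPTNI/CPTP maps.

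Next I control $Y_o$. The maps are positive, but $R^{(a_{l+1})}(X)$ is a general Hermitian (diagonal $X$ real gives Hermitian output if $R$ is Hermiticity-preserving; otherwise split into real and imaginary parts, costing a constant). Decompose $Z\coloneqq R^{(a_{l+1})}(X)=\sum_o Z_o\otimes\proj{o}$ plus off-diagonal blocks, which $\mathcal{T}_o$ discards. Each Hermitian $Z_o\in\mathbb{M}_S$ is written as $Z_o=Z_o^+-Z_o^-$ with $\|Z_o^+\|_1+\|Z_o^-\|_1=\|Z_o\|_1$, and pinching gives $\sum_o\|Z_o\|_1\le\|Z\|_1\le\kappa_{\mathrm{uc}}\|X\|_1$, by Definition~\ref{def:Omega_uc}. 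If the Jordan decomposition is unavailable because $Z_o$ is not Hermitian, I will expand $Z_o$ in the $S^2$ matrix units $\ketbra{i}{j}$, each a combination of at most four PSD rank-one operators; this is where I expect the $S^2$ factor in the statement to originate, via $\sum_{ij}|(Z_o)_{ij}|\le S\|Z_o\|_1$ or a similar crude bound.

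For a PSD input $P$, the sum over future trajectories weighted by the policy is exactly a total probability. Define $F(P)\coloneqq\sum_{a_L,\tau_{L-1:l+2}}\prod\pi(\cdot)\,\|\Tr_S\mathcal{P}^{(a_L)}\circ\cdots(P)\|_1$; each term is PSD diagonal with trace equal to the unnormalized probability. Summing over $o_{L},a_L,\dots$ against the policy, and using that $\sum_o\Phi^{(a)}_o$ and $\mathbb{E}$ are trace preserving, gives $F(P)=\Tr P$ by backward induction. This is the same telescoping as in Lemma~\ref{lem:probabilities_operator_multiquantum}, with policy weights making each step a convex combination. By the triangle inequality and linearity, with at most $4S^2$ PSD pieces per $o_{l+1}$, each of trace bounded by the corresponding entry modulus, and $\mathbb{E}_{l+1}$ trace preserving on PSD pieces, summing over $o_{l+1}$ yields $\lesssim S^2\sum_o\|Z_o\|_1\le S^2\kappa_{\mathrm{uc}}\|X\|_1$, up to constants I will tighten by grouping real and imaginary parts.

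The main obstacle is the cancellation step. The recovery identity only holds on the range of $\Tr_S\circ\mathcal{P}^{(a)}$, and I must check that the intermediate object fed to $R^{(a_{l'+1})}$ is genuinely of the form $\Tr_S\mathcal{P}^{(a_{l'+1})}(W)$ for some $W\in\mathbb{M}_S$. It is, since $W=\mathbb{E}_{l'}\circ(\id\otimes\mathcal{T}_{o_{l'}})(\cdot)$ applied to something. I also need to confirm that the same $R$ is used consistently within $\tilde{\mathbf{A}}$, which holds for either the true or the estimated model. The second subtlety is handling non-Hermitian or non-positive $Z_o$ without losing more than the $S^2$ factor.
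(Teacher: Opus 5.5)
\textbf{Verdict.} Your argument is correct and has the same skeleton as the paper's proof; the one part that still needs work is the constant in the matrix-unit route.

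\textbf{Shared steps.} You collapse each block $R^{(a_{l'+1})}\circ\Tr_S\circ\mathcal{P}^{(a_{l'+1})}$ to $\mathcal{P}^{(a_{l'+1})}$, so only $R^{(a_{l+1})}$ survives. Your range check is automatic, since this is an identity of maps on $\mathbb{M}_S$. You bound $\sum_{o}\|(\id_S\otimes\mathcal{T}_o)R^{(a_{l+1})}(X)\|_1\le\kappa_{\mathrm{uc}}\|X\|_1$ by pinching. You show that the policy-weighted future sequence maps a PSD (or Hermitian) input to total trace (or trace norm) at most its own, by backward telescoping of traces.

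\textbf{Where you differ.} The only real difference is how a possibly non-Hermitian $R^{(a_{l+1})}(X)$ is handled. The paper expands $\mathbb{E}_{l+1}(Y_{o_{l+1}})$ in an $S^2$-element Hermitian Hilbert--Schmidt orthonormal basis $\{P_\mu\}$. It applies the Jordan-decomposition contraction to each $P_\mu$ and controls the coefficients by H\"older, so $S^2$ is simply the number of basis elements. Your Hermitian/anti-Hermitian split is cheaper. It costs only a factor $2$, giving $2\kappa_{\mathrm{uc}}\le S^2\kappa_{\mathrm{uc}}$ for $S\ge 2$; the case $S=1$ is immediate because everything is scalar.

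\textbf{Constant in the matrix-unit route.} As written, your accounting ($4S^2$ pieces, each of trace at most $|(Z_o)_{mn}|$) gives $4\sum_{m,n}|(Z_o)_{mn}|\le 4S\|Z_o\|_1$. This exceeds $S^2\|Z_o\|_1$ for $S\in\{2,3\}$, so ``up to constants'' is not yet the stated lemma. The fix is to use the exact polarization $\ketbra{m}{n}=\frac14\sum_{k=0}^{3}\mathrm{i}^k\proj{u_k}$ with $\ket{u_k}=\ket{m}+\mathrm{i}^k\ket{n}$. The four pieces then have total trace $2$ per off-diagonal entry, which yields $2S\le S^2$.

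\textbf{Simplification.} The non-Hermiticity issue you flag as a main obstacle disappears entirely. Fix $o_{l+1}$ and consider the map
$Z\mapsto\bigoplus_{a_L,\tau_{L-1:l+2}}\big(\prod_{l'=l+1}^{L-1}\pi(a_{l'+1}|\tau_{l'})\big)\,\mathcal{M}_{l+2\to L}(a_L,\tau_{L-1:l+2})[Z]$, where $\mathcal{M}_{l+2\to L}$ is your composite $\Tr_S\circ\mathcal{P}^{(a_L)}\circ\cdots\circ\mathbb{E}_{l+2}\circ\Phi^{(a_{l+2})}_{o_{l+2}}$. This map is completely positive, and by your own telescoping it is trace preserving. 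It therefore contracts the trace norm of every operator, Hermitian or not. Composing with the CPTP channel $\mathbb{E}_{l+1}$ and then pinching gives the lemma with $\kappa_{\mathrm{uc}}$ in place of $S^2\kappa_{\mathrm{uc}}$. Neither the paper's basis expansion nor yours is actually needed.
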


\begin{proof}
Fix $l\in\{1,\dots,L-1\}$, fix $\tilde{\mathbf{A}}\in\lbrace\mathbf{A}^k,\mathbf{A}\rbrace$, and fix a diagonal $X\in\mathbb{M}_{O}$.
Let $a_{l+1}$ denote the action at step $l+1$ in the summation.

\paragraph{Step 1: Isolate the non-contractive recovery map.}
Define the (unnormalized) recovered memory operator
\begin{align}\label{eq:recovered_memory_single_obs}
    Y_{o_{l+1}} \coloneqq  \big( \mathrm{id}_S\otimes \mathcal{T}_{o_{l+1}}\big)\!\big( R^{(a_{l+1})}(X) \big) \in \mathbb{M}_{S},
\end{align}
Next, expand the product $\tilde{\mathbf{A}}_{L-1:l+1}$. As defined in Definition~\ref{def:oom_qhmm}, each factor $\tilde{\mathbf{A}}_t(o_t,a_t,a_{t+1})$ is
\begin{align}
    \tilde{\mathbf{A}}_t =\Tr_S\circ \mathcal{P}^{(a_{t+1})}\circ (\mathbb{E}_t\otimes \mathcal{T}_{o_t})\circ R^{(a_t)}.
\end{align}
In the composition $\tilde{\mathbf{A}}_{t+1}\circ \tilde{\mathbf{A}}_t$, the consecutive fragment
$R^{(a_{t+1})}\circ \Tr_S\circ \mathcal{P}^{(a_{t+1})}$ appears. By Assumption~\ref{ass:undercomplete} this equals $\mathcal{P}^{(a_{t+1})}$.
Iterating this cancellation over $t=l+1,\dots,L-2$, we can rewrite
$\tilde{\mathbf{A}}_{L-1:l+1}$ as a composition where \emph{all} recovery maps cancel \emph{except} the first one,
so that $\tilde{\mathbf{A}}_{L-1:l+1} X$ depends on $X$ only through one non-contractive map $R^{(a_{l+1})}(X)$ (or in particular through $Y_{o_{l+1}}$). To be precise, 
\begin{equation}
\begin{aligned}
    \tilde{\mathbf{A}}_{L-1:l+1} X =\Tr_S\circ \mathcal{P}^{(a_L)}\circ (\mathbb{E}_{L-1}\otimes \mathcal{T}_{o_{L-1}})\circ \cdots \circ \mathcal{P}^{(a_{l+2})}\circ (\mathbb{E}_{l+1}\otimes \mathcal{T}_{o_{l+1}})\big(R^{(a_{l+1})}(X)\big),
\end{aligned}
\end{equation}
and by substituting~\eqref{eq:recovered_memory_single_obs}, we have
\begin{align}
    \tilde{\mathbf{A}}_{L-1:l+1} X =\Tr_S\circ \mathcal{P}^{(a_L)}\circ (\mathbb{E}_{L-1}\otimes \mathcal{T}_{o_{L-1}})\circ  \cdots\circ \mathcal{P}^{(a_{l+2})}\circ \mathbb{E}_{l+1} ( Y_{o_{l+1}}),
\end{align}
\paragraph{Step 2: A Hermitian resolution of identity on $S$.}
Let $\{P_\mu\}_{\mu=1}^{S^2}\subset \mathbb{M}_{S}$ be any \emph{Hermitian} Hilbert-Schmidt orthonormal basis, i.e.,
\begin{align}
    P_\mu^\dagger=P_\mu,
    \qquad
    \Tr(P_\mu P_\nu)=\delta_{\mu\nu}
    \qquad \forall \mu,\nu\in[S^2].
\end{align}
We will take the normalized generalized Pauli basis\footnote{For qubits one may take the normalized Pauli basis; for general $S$ one may take the normalized generalized Pauli basis, the generalized Gell-Mann basis or the standard Hermitian basis. However, choosing different basis may lead to different scalings on $S$, as is shown in Step 4.}. 
Define the coordinates
\begin{align}
    T_\mu(Z) \coloneqq \Tr(P_\mu Z).
\end{align}
Then we can use the above to get the decomposition
\begin{equation}
    \mathbb{E}_{l+1}( Y_{o_{l+1}}) = \sum_{\mu=1}^{S^2} T_\mu(\mathbb{E}_{l+1}( Y_{o_{l+1}})) P_\mu.
\label{eq:Y-decomp-hermitian}
\end{equation}
In order to make the notation more compact, fix the variables $(a_L,\tau_{L-1:l+2})$ and define the map $    \mathcal{M}_{l+2\rightarrow L}(a_L,\tau_{L-1:l+2}): \mathbb{M}_{S} \rightarrow \mathbb{M}_{O}$ as
\begin{align}
    \mathcal{M}_{l+2\rightarrow L}(a_L,\tau_{L-1:l+2}) \coloneqq \Tr_S\circ \mathcal{P}^{(a_L)}\circ (\mathbb{E}_{L-1}\otimes \mathcal{T}_{o_{L-1}})\circ \mathcal{P}^{(a_{L-1})} \circ \cdots \circ (\mathbb{E}_{l+2}\otimes \mathcal{T}_{o_{l+2}}) \circ \mathcal{P}^{(a_{l+2})}. 
\end{align}
Using $(\mathrm{id}_S\otimes \mathcal{T}_{o})\circ \mathcal{P}^{(a)} = \Phi_o^{(a)}$, we can express it as
\begin{align}
\label{eq:M_notation}
    \mathcal{M}_{l+2\rightarrow L}(a_L,\tau_{L-1:l+2})  = \Tr_S\circ \mathcal{P}^{(a_L)} \circ \mathbb{E}_{L-1}\circ \Phi^{(a_{L-1})}_{o_{L-1}} \circ \cdots \circ \mathbb{E}_{l+2}\circ \Phi^{(a_{l+2})}_{o_{l+2}}
\end{align}
By linearity of $\tilde{\mathbf{A}}_{L-1:l+1}$ and the triangle inequality, for every fixed choice of future variables $(a_L,\tau_{L-1:l+2},o_{l+1})$ we have
\begin{equation}
\|\tilde{\mathbf{A}}_{L-1:l+1}X\|_1 \leq \sum_{\mu=1}^{S^2} | T_\mu(\mathbb{E}_{l+1}( Y_{o_{l+1}}))|\;\big\|\mathcal{M}_{l+2\rightarrow L}(a_L,\tau_{L-1:l+2}) [P_\mu]\big\|_1.
\label{eq:triangle-hermitian}
\end{equation}

\paragraph{Step 3: The forward propagation is uniformly contractive.}
Fix a component $\mu\in[S^2]$. Note then that the map $\mathcal{M}_{l+2\rightarrow L}(a_L,\tau_{L-1:l+2}) $ is a concatenation of:
\begin{enumerate}
    \item  CPTP channels $\mathbb{E}_l$ on $S$.
    \item  CPTNI branches coming from the instruments with classical outcomes
    \item The final quantum-to-classical map $\Tr_S\circ \mathcal{P}^{(a_L)}$.
\end{enumerate}
Hence, each map $\mathcal{M}_{l+2\rightarrow L}(a_L,\tau_{L-1:l+2})$ is CPTNI, and therefore its induced trace norm satisfies $\|\cdot\|_{1\to 1}\le 1$. Moreover, summing over all classical outcomes and averaging over future actions with the policy probabilities (which are convex coefficients) preserves its CPTNI property at each stage.
Let $Z\in\mathbb{M}_{S}$ be Hermitian. Then using the properties of an instrument when we sum over observations as well as the fact that the map $\mathcal{M}_{l+2\rightarrow L}(a_L,\tau_{L-1:l+2})$ outputs diagonal matrices in $\mathbb{M}_{O}$ by construction, one can prove that
\begin{align}
    \sum_{a_L,\tau_{L-1:l+2}} \|\mathcal{M}_{l+2\rightarrow L}(a_L,\tau_{L-1:l+2}) [Z]\|_1 \prod_{l'=l+1}^{L-1}\pi(a_{l'+1}| \tau_{l'})
    \leq \| Z\|_1.
\end{align}
To show the above equation, we use the Jordan decomposition $Z=Z_+-Z_-$ with $Z_\pm\succeq 0$ and defined
\begin{align}
    |Z|=Z_+ + Z_- .
\end{align}
Recall that for any fixed choice of future variables $(a_L,\tau_{L-1:l+2})$, $\mathcal{M}\coloneqq \mathcal{M}_{l+2\to L}(a_L,\tau_{L-1:l+2})$ is CPTNI.
By linearity and the triangle inequality,
\begin{equation}
\begin{aligned}
    \|\mathcal{M}(Z)\|_1 &=\|\mathcal{M}(Z_+)-\mathcal{M}(Z_-)\|_1 \le \|\mathcal{M}(Z_+)\|_1+\|\mathcal{M}(Z_-)\|_1  \\
    &= \Tr(\mathcal{M}(Z_+))+\Tr(\mathcal{M}(Z_-)) = \Tr(\mathcal{M}(|Z|)) ,
    \label{eq:branch-norm-to-trace}
\end{aligned}
\end{equation}
where we used that $\mathcal{M}(Z_\pm)\succeq 0$ as $\mathcal{M}$ is completely positive so $\|\mathcal{M}(Z_\pm)\|_1=\Tr(\mathcal{M}(Z_\pm))$. Multiplying~\eqref{eq:branch-norm-to-trace} by the policy weight and summing over $\{a_L,\tau_{L-1:l+2}\}$ gives
\begin{equation}
\label{eq:reduce-to-sigma}
\begin{aligned}
    & \sum_{a_L,\tau_{L-1:l+2}} \prod_{l'=l+1}^{L-1}\pi(a_{l'+1}|\tau_{l'}) \|\mathcal{M}_{l+2\to L}(a_L,\tau_{L-1:l+2})[Z]\|_1 \\
    & \leq  \sum_{a_L,\tau_{L-1:l+2}} \prod_{l'=l+1}^{L-1}\pi(a_{l'+1}|\tau_{l'}) \Tr\!\Big(\mathcal{M}_{l+2\to L}(a_L,\tau_{L-1:l+2})[X]\Big). 
\end{aligned}
\end{equation}
We can evaluate this sum on the right-hand side by from the final step $L$ backward to $l+2$. Recall~\eqref{eq:M_notation}, at the final step $L$, as $\mathcal{P}^{a_L}$ is trace-preserving, tracing over the system and summing over the policy action gives
\begin{align}
    \sum_{a_L} \pi(a_L|\tau_{L-1}) \Tr\circ \Tr_S \circ \mathcal{P}^{(a_L)} = \sum_{a_L} \pi(a_L|\tau_{L-1}) \Tr = \Tr. 
\end{align}
For any intermediate step $l'$,  both the quantum instrument sum $ \sum_{o_{l'+1}} \Phi_{o_{l'}}^{(a_{l'+1})} $ and the channel $\mathbb{E}_{l'+1}$ are trace-preserving. Summing over the classical outcome $o_{l'+1}$ and action $a_{l'+1}$ yields:
\begin{align}
    \sum_{a_{l'+1}} \pi(a_{l'+1}|\tau_{l'}) \Tr \circ  \mathbb{E}_{l'} \circ \sum_{o_{l'+1}}  \Phi_{o_{l'+1}}^{(a_{l'+1})} = \sum_{a_{l'+1}} \pi(a_{l'+1}|\tau_{l'}) \Tr  = \Tr. 
\end{align}
Telescoping this property backward from $L$ to $l+2$ collapses the entire sum perfectly to the initial trace $\Tr(|Z|) = \|Z\|_1$. Thus:
\begin{align}
    \sum_{a_L,\tau_{L-1:l+2} } \|\mathcal{M}_{l+2\rightarrow L} (Z)\|_1 \prod_{l'=l+1}^{L-1}\pi(a_{l'+1}| \tau_{l'})
    \leq \| Z\|_1.
\end{align}

\paragraph{Step 4: Conclude and bound the remaining coefficient sum.} From the previous steps, using that the basis $\lbrace P_\mu \rbrace$ is Hermitian, we have
\begin{equation}
\begin{aligned}
  & \sum_{a_L , \tau_{L-1:l+2},o_{l+1}}    \big\|\tilde{\mathbf{A}}_{L-1:l+1}X\big\|_1 \times  \prod_{l'=l+1}^{L-1}\pi(a_{l'+1}|\tau_{l'}) \\
  & \leq \sum_{ a_L , \tau_{L-1:l+2},o_{l+1} }   \sum_{\mu=1}^{S^2} | T_\mu(\mathbb{E}_{l+1}(Y_{o_{l+1}} ))|\|\mathcal{M}_{l+2\rightarrow L}(a_L,\tau_{L-1:l+2}) [P_\mu]\|_1 \times  \prod_{l'=l+1}^{L-1}\pi(a_{l'+1}|\tau_{l'})\\
  & \leq \sum_{o_{l+1}} \sum_{\mu=1}^{S^2} | T_\mu(\mathbb{E}_{l+1}( Y_{o_{l+1}} ))|\|P_\mu\|_1 
   =\sum_{o_{l+1}}\sum_{\mu=1}^{S^2}|\Tr\big(P_\mu \mathbb{E}_{l+1}( Y_{o_{l+1}}) \big) | \|P_\mu\|_1 
\end{aligned}
\end{equation}
Finally by applying H\"older's inequality $|\Tr(AB)| \le \|A\|_\infty \|B\|_1$ with $\|P_\mu\|_\infty = \frac{1}{\sqrt{S}}$ for the normalized generalized Pauli basis\footnote{For a general normalized basis for Hermitian matrices, $\frac{1}{\sqrt{S}}\leq \| P_\mu \|_\infty \leq 1$, where $\frac{1}{\sqrt{S}}$ is achievable for the normalized generalized Pauli basis and $1$ is achievable for the normalized standard Hermitian basis.} and $\| P_\mu \|_1 \leq \sqrt{S} \| P_\mu \|_2 = \sqrt{S}$, and the fact that $\mathbb{E}_l$ is a CPTP channel
\begin{equation}
\begin{aligned}
    &\sum_{ a_L , \tau_{L-1:l+2},o_{l+1} }    \big\|\tilde{\mathbf{A}}_{L-1:l+1}X\big\|_1 \times  \prod_{l'=l+1}^{L-1}\pi(a_{l'+1}|\tau_{l'})\\
    & \leq \sum_{o_{l+1}}\sum_{\mu=1}^{S^2} \| \mathbb{E}_{l+1}( Y_{o_{l+1}} )  \|_1 \|P_\mu\|_\infty \|P_\mu \|_1\leq  \sum_{o_{l+1}}\sum_{\mu=1}^{S^2} \| \mathbb{E}_{l+1}( Y_{o_{l+1}} ) \|_1 \\
    &\leq   \sum_{o_{l+1}}\sum_{\mu=1}^{S^2} \| \mathbb{E}_{l+1} \|_{1\rightarrow1} \|Y_{o_{l+1}}  \|_1  =    \sum_{\mu=1}^{S^2}  \sum_{o_{l+1}} \|Y_{o_{l+1}}  \|_1 \leq S^2  \sum_{o_{l+1}} \|Y_{o_{l+1}}  \|_1
\end{aligned}
\end{equation}
In order to bound the last term, we introduce the block-diagonal operator
\begin{align}
    \tilde Y \coloneqq  \sum_{o\in[O]} Y_o\otimes \proj{o} = (\mathrm{id}_S\otimes \Delta_O)\!\big(R^{(a_{l+1})}(X)\big),
    \qquad
    \Delta_O(W)\coloneqq \sum_{o\in[O]} \proj{o}W\proj{o}.
\end{align}
Since $\tilde{Y}$ is block-diagonal in the $\{|o\rangle\}$ basis,
\begin{align}
    \sum_{o\in[O]}\|Y_o\|_1 = \|\tilde{Y}\|_1.
\end{align}
Moreover, $\Delta_O$ is a pinching map, hence using that the pinching map is contractive under the trace norm
\begin{align}
    \|\tilde Y\|_1 = \big\|(\mathrm{id}_S\otimes \Delta_O)\!\big(R^{(a_{l+1})}(X)\big)\big\|_1 \leq  \big\|R^{(a_{l+1})}(X)\big\|_1 \leq  \big\|R^{(a_{l+1})}\big\|_{1\to 1,\diag} \|X\|_1,
\end{align}
Therefore,
\begin{align}
    \sum_{o\in[O]}\|Y_o\|_1 \leq \big\|R^{(a_{l+1})}\big\|_{1\to 1,\diag} \|X\|_1,
\end{align}
and the result follows by including $\kappa_{\mathrm{uc}}$ in Definition~\ref{def:Omega_uc} ($\kappa_{\mathrm{uc}}$-robust QHMM environment class)
\begin{align}
    \sum_{ a_L , \tau_{L-1:l+2},o_{l+1} }    \big\|\tilde{\mathbf{A}}_{L-1:l+1}X\big\|_1 \times  \prod_{l'=l+1}^{L-1}\pi(a_{l'+1}|\tau_{l'}) \leq S^2\kappa_{\mathrm{uc}}\|X\|_1
\end{align}
\end{proof}

By establishing this bound, we have effectively bridged the gap between the quantum stochastic process and the classical analytical framework. Because Lemma \ref{lem:one_norm_op} guarantees that the quantum OOM operators safely contain error amplification within the factor $S^2 \kappa_{\mathrm{uc}}$, we can now proceed to bound the cumulative regret using the $\ell_1$-eluder dimension pigeonhole principle, seamlessly mirroring the high-level classical POMDP analysis.

\subsection{Decomposing regret via OOM estimation errors}

\begin{lemma}\label{lem:regret_step1}
Fix $\delta\in(0,1)$ and assume the rewards are uniformly bounded $|r_l(a_l, o_l)| \le R$ for all $l \in [L], a_l \in \mathcal{A}$, and $o_l \in [O]$. Let $\omega\in\Omega_{\mathrm{uc}}$ be the true model and run Algorithm~\ref{alg:omle-qhmm}
with confidence radius $\{\beta_k\}_{k\ge1}$ chosen as in~\eqref{eq:beta_def}.
Then, with probability at least $1-\delta$,
\begin{align}
  \regret (K) \leq   L R\sum_{k=1}^K \sum_{\tau_L} \big| \mathbb{P}^{\pi^k}_{\omega^k_L}(\tau_L ) - \mathbb{P}^{\pi^k}_{\omega_L}(\tau_L )  \big|  
\end{align}
In particular, we can also bound the regret in terms of the error of the OOM between the true models and estimated models $\lbrace \omega_k \rbrace_{k\in[K]}$ as
\begin{equation}
\begin{aligned}
\label{eq:regret_step1}
 \regret(K)&\leq
    S^2 \kappa_{\mathrm{uc}} LR \sum_{l=1}^{L-1}\sum_{k=1}^K     \sum_{ a_{l+1} , \tau_{l} } \pi^k(\tau_{l+1})   \times  \| (\mathbf{A}^k_l-\mathbf{A}_l) \mathbf{a}_l\|_1 \\
    & \quad + S^2 \kappa_{\mathrm{uc}} LR  \sum_{k=1}^K   \sum_{a_1 }\pi^k(a_1) \times \|\mathbf{a}^k(a_1)-\mathbf{a}(a_1) \|_1 .
\end{aligned}
\end{equation}
\end{lemma}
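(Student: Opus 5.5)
The proof has two parts: optimism, and then a telescoping expansion of the trajectory probabilities through the OOM representation.

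\textbf{Step 1: optimism.} Work on the event of probability at least $1-\delta$ on which Proposition~\ref{prop:concentration_mle} holds. With the choice of $\beta$ in~\eqref{eq:beta_def}, the true model satisfies $\mathcal{L}_{k-1}(\omega)\ge \max_{\omega''}\mathcal{L}_{k-1}(\omega'')-\beta$ for every $k$. Since also $\omega\in\Omega_{\mathrm{uc}}$, we get $\omega\in\mathcal{C}_k$ for all $k$. The optimistic choice~\eqref{eq:argmax} then gives $V^*(\omega)\le V^{\pi^k}(\omega_k)$, so
\begin{equation}
\regret(K)\le\sum_k \big(V^{\pi^k}(\omega_k)-V^{\pi^k}(\omega)\big).
\end{equation}
By~\eqref{eq:value_def}, each summand equals $\sum_{\tau_L}\big(\mathbb{P}^{\pi^k}_{\omega_k}-\mathbb{P}^{\pi^k}_{\omega}\big)(\tau_L)\sum_l r_l$. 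Using $|\sum_l r_l|\le LR$ yields the first inequality of the lemma.

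\textbf{Step 2: telescoping.} Write $\mathbb{P}^{\pi}_{\omega'}(\tau_L)=\pi(\tau_L)\,\mathcal{T}_{o_L}\mathbf{A}'_{L-1:1}\mathbf{a}'(a_1)$ using Lemma~\ref{lem:probabilities_operator_multiquantum}. Summing over $o_L$ turns $\mathcal{T}_{o_L}$ into a full trace, and $\sum_{o_L}|\mathcal{T}_{o_L}Z|=\|Z\|_1$ for diagonal $Z$. Then telescope the difference of products:
\begin{equation}
\mathbf{A}^k_{L-1:1}\mathbf{a}^k_1-\mathbf{A}_{L-1:1}\mathbf{a}_1=\sum_{l=1}^{L-1}\mathbf{A}^k_{L-1:l+1}(\mathbf{A}^k_l-\mathbf{A}_l)\mathbf{a}_l+\mathbf{A}^k_{L-1:1}(\mathbf{a}^k_1-\mathbf{a}_1).
\end{equation}
Here the future operators carry the estimated model $\mathbf{A}^k$, and the past vector $\mathbf{a}_l$ is the true one. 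Apply the triangle inequality term by term. In the $l$-th term, split the policy weight $\pi^k(\tau_L)$ into the prefix factor $\pi^k(\tau_{l+1})$ (i.e.\ up to $a_{l+1}$) and the future factor $\prod_{l'=l+1}^{L-1}\pi^k(a_{l'+1}|\tau_{l'})$. Then sum over the future variables $(a_L,\tau_{L-1:l+2},o_{l+1})$.

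\textbf{Step 3: invoking Lemma~\ref{lem:one_norm_op}.} The vector $X=(\mathbf{A}^k_l-\mathbf{A}_l)\mathbf{a}_l$ is diagonal by Definition~\ref{def:oom_qhmm}. Lemma~\ref{lem:one_norm_op} with $\tilde{\mathbf{A}}=\mathbf{A}^k$ is valid here because $\omega_k\in\Omega_{\mathrm{uc}}$, so its recovery maps exist with norm at most $\kappa_{\mathrm{uc}}$. It bounds the future sum by $S^2\kappa_{\mathrm{uc}}\|X\|_1$. What remains is a sum over $(a_{l+1},\tau_l)$ weighted by $\pi^k(\tau_{l+1})$, which is the first term of~\eqref{eq:regret_step1}. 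The initial-vector term is handled the same way with $X=\mathbf{a}^k(a_1)-\mathbf{a}(a_1)$ and weight $\pi^k(a_1)$, which gives the second term.

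\textbf{Main obstacle.} The difficulty is in the bookkeeping rather than any new estimate. The index ranges and policy factors must line up exactly with the hypotheses of Lemma~\ref{lem:one_norm_op}. In particular, the outcome $o_{l+1}$ is absorbed inside the future sum, while $a_{l+1}$ stays in the prefix. The edge case $l=L-1$, where $\mathbf{A}^k_{L-1:L}$ is the identity, must be checked separately; there the bound holds trivially since $S^2\kappa_{\mathrm{uc}}\ge1$, which I would verify from the undercomplete identity. It also matters that $o_l$ sits inside $\mathbf{A}_l$ and not in the future sum.
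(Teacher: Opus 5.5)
Your proposal is correct and follows the paper's proof step for step. You use optimism to reduce the regret to the trajectory $\ell_1$ gap, then the OOM representation with the sum over $o_L$ producing a trace norm, then the same telescoping identity (estimated operators in the future, true ones in the past), and finally Lemma~\ref{lem:one_norm_op} with $\tilde{\mathbf{A}}=\mathbf{A}^k$, which is justified by $\omega_k\in\Omega_{\mathrm{uc}}$. Two of your steps fill in details the paper leaves implicit: you derive $\omega\in\mathcal{C}_k$ explicitly from Proposition~\ref{prop:concentration_mle}, and you treat the $l=L-1$ boundary term separately via $\kappa_{\mathrm{uc}}\ge 1$, which does follow by applying the recovery identity to any density matrix.
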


\begin{proof}
We start under the assumption that the confidence region defined in~\eqref{eq:confidence_mle} is well defined, i.e, it contains the true environment $(\omega \in \mathcal{C}_k)$ with high probability, for all $k\in[K]$. Then from the optimistic principle, we can upper bound the optimal value by the one of the optimistic policy as 
\begin{align}
    V^{\pi^k} (\omega) \leq V^{\pi} (\omega) \leq V^{\pi^k} (\omega_k), 
\end{align}
where $\omega$ is the true environment, $\pi$ is the optimal strategy for true state, $\omega_k$ is the $k$-th estimation, $\pi^k$ is the optimal strategy for the $k$-th estimation. Thus, the regret in~\eqref{eq:regret} can be bounded as
\begin{equation}
\begin{aligned}
    \regret (K) &\leq \sum_{k=1}^K V^{\pi^k} (\omega_k)  - V^{\pi^k} (\omega ) = \sum_{k=1}^K\Bigg(\Exp_{\pi^k,\omega_k} \Big[\sum_{l=1}^L r_l(a_l,o_l)\Big] - \Exp_{\pi^k,\omega} \Big[\sum_{l=1}^L r_l(a_l,o_l) \Big] \Bigg) \\
    &= \sum_{k=1}^K \sum_{\tau_L}\Big(\sum_{l=1}^L r_l(a_l,o_l) \Big)\Big(\mathbb{P}^{\pi^k}_{\omega_k}(\tau_L ) - \mathbb{P}^{\pi^k}_{\omega}(\tau_L) \Big) \leq L R\sum_{k=1}^K \sum_{\tau_L} \Big| \mathbb{P}^{\pi^k}_{\omega_k}(\tau_L ) - \mathbb{P}^{\pi^k}_{\omega}(\tau_L )  \Big|  ,
\end{aligned}
\end{equation}
where in the last inequality we used that the cumulative reward for each sample is bounded by $LR$. This proves the first statement of the theorem.

For the second statement, we start using that $\omega,\omega_k\in \Omega_{\mathrm{uc}}$ and we can compute their trajectory probabilities using their OOM representation in Definition~\ref{def:oom_qhmm} and Lemma~\ref{lem:probabilities_operator_multiquantum}. Then the above regret bound can be rewritten as
\begin{align}
    \regret (K) \leq L R\sum_{k=1}^K \sum_{\tau_L} \pi^k (\tau_L)\times \big|\mathcal{T}_{o_L} \big( \mathbf{A}^k_{L-1:1}\mathbf{a}^k(a_1 ) - \mathbf{A}_{L-1:1} \mathbf{a}(a_1) \big)\big|.
\end{align}
Using Definition~\ref{def:oom_qhmm} we have that $\mathbf{a}(a_1 ),\mathbf{a}^k(a_1 )$ are diagonal matrices and when we apply $\mathbf{A},\mathbf{A}^k$, they send diagonal to diagonal matrices. Thus by summing over $o_L$ we can introduce the trace norm as
\begin{align}
      \regret (K) \leq LR \sum_{k=1}^K \sum_{ a_L , \tau_{L-1}} \pi^k(\tau_L) \times \big\|\mathbf{A}^k_{L-1:1}\mathbf{a}^k(a_1 ) - \mathbf{A}_{L-1:1} \mathbf{a}(a_1 )  \big\|_1 .
\end{align}
By the triangle inequality and the telescoping expansion of 
$\mathbf{A}^k_{L-1:1}\mathbf{a}^k(a_1) - \mathbf{A}_{L-1:1}\mathbf{a}(a_1)$, we get
\begin{equation}
\label{eq:telescoping}
\begin{aligned}
    \regret(K) &\leq LR \sum_{k=1}^K  \sum_{ a_L , \tau_{L-1}} \pi^k(\tau_L) \times
    \Big(
        \big\|\mathbf{A}^k_{L-1:1}\big(\mathbf{a}^k(a_1)-\mathbf{a}(a_1)\big)\big\|_1
        +
        \big\|\big(\mathbf{A}^k_{L-1:1}-\mathbf{A}_{L-1:1}\big)\mathbf{a}(a_1)\big\|_1
    \Big) \\
    & \leq
    LR \sum_{l=1}^{L-1} \sum_{k=1}^K   \sum_{ a_L , \tau_{L-1} }   \pi^k(\tau_L)  \times 
        \big\|
            \mathbf{A}^k_{L-1:l+1}
            \big(\mathbf{A}^k_{l}-\mathbf{A}_{l}\big)
            \mathbf{a}_{l}
        \big\|_1  \\
    & \quad+ LR \sum_{k=1}^K   \sum_{a_L , \tau_{L-1} }\pi^k(\tau_L) \times \big\|\mathbf{A}^k_{L-1:1} \big(\mathbf{a}^k(a_1)-\mathbf{a}(a_1)\big)\big\|_1.
\end{aligned}
\end{equation}
where we have defined $\mathbf{a}_l = \mathbf{A}_{l-1:1}\mathbf{a}(a_1)$ and used the telescoping identity
\begin{align}
    \mathbf{A}^k_{L-1:1} - \mathbf{A}_{L-1:1} = \sum_{l=1}^{L-1} \mathbf{A}^k_{L-1:l+1} \big(\mathbf{A}^k_{l}-\mathbf{A}_{l}\big) \mathbf{A}_{l-1:1},
\end{align}
By applying Lemma~\ref{lem:one_norm_op}, i.e. $\mathbf{A}_{L-1:l+1}^{k}$ is contractive up to a constant 
\begin{align}
    \sum_{a_L,\tau_{L-1:l+2},o_{l+1}} \big\|\mathbf{A}^k_{L-1:l+1}\big(\mathbf{A}^k_{l}-\mathbf{A}_{l}\big)\mathbf{a}_{l}\big\|_1 \times \prod_{l'=l+1}^{L-1} \pi(a_{l'+1}|\tau_{l'}) & \leq S^2\kappa_{\mathrm{uc}} \big\|\big(\mathbf{A}^k_{l}-\mathbf{A}_{l}\big)\mathbf{a}_{l}\big\|_1, \\
    \sum_{a_L,\tau_{L-1:2},o_{1}} \big\|\mathbf{A}^k_{L-1:1} \big(\mathbf{a}^k(a_1)-\mathbf{a}(a_1)\big)\big\|_1  \times \prod_{l'=1}^{L-1} \pi(a_{l'+1}|\tau_{l'}) & \leq S^2\kappa_{\mathrm{uc}} \big\|\mathbf{a}^k(a_1)-\mathbf{a}(a_1)\big\|_1,
\end{align}
we obtain the result 
\begin{equation}
\begin{aligned}
    \regret(K) & \leq
    LR  S^2\kappa_{\mathrm{uc}} \sum_{l=1}^{L-1}  \sum_{k=1}^K  \sum_{ a_{l+1} , \tau_{l} }  \pi^k(\tau_{l+1}) \big\|\big(\mathbf{A}^k_{l}-\mathbf{A}_{l}\big)\mathbf{a}_{l}\big\|_1 \\
    & \quad+ LR  S^2\kappa_{\mathrm{uc}}  \sum_{k=1}^K   \sum_{a_1} \pi^k(a_1) \times \big\| \big(\mathbf{a}^k(a_1)-\mathbf{a}(a_1)\big)\big\|_1.
\end{aligned}
\end{equation}
This concludes our proof. 
\end{proof}

\subsection{The eluder dimension for linear functions}

Using Proposition~\ref{prop:trajectory_mle} (recall we pick from the previous confidence region, so $\omega_k\in \mathcal{C}^t$ for any $t\leq k$) we have
\begin{align}
    \sum_{t=1}^{k-1}\Big( \sum_{\tau_L} \big|\mathbb{P}^{\pi^t}_{\omega_k}(\tau_L ) - \mathbb{P}^{\pi^t}_{\omega}(\tau_L ) \big|\Big)^2 = \mathcal{O}(\beta),
\end{align}
and by Cauchy-Schwarz inequality,
\begin{align}
     \sum_{t=1}^{k-1}\sum_{\tau_L} \big| \mathbb{P}^{\pi^{t}}_{\omega_k}(\tau_L ) - \mathbb{P}^{\pi^{t}}_{\omega}(\tau_L ) \big| = \mathcal{O}(\sqrt{\beta k} ).  
\end{align}
Then similarly as in Lemma~\ref{lem:regret_step1} where we sum over $o_l$, we introduce OOMs (Definition~\ref{def:oom_qhmm}) to get
\begin{equation}
\label{eq:guarantee}
\begin{aligned}
    \sum_{t=1}^{k-1}\sum_{a_l ,\tau_{l-1}} \pi^{t} (\tau_l ) &\big\| \mathbf{A}^k_{l-1}\cdots \mathbf{A}^k_1\mathbf{a}^k(a_1 ) -   \mathbf{A}_{l-1}\cdots \mathbf{A}_1\mathbf{a}(a_1 ) \big\|_1 = \mathcal{O}(\sqrt{\beta k} ),
\end{aligned}
\end{equation}
which, in particular, yields for the first action:
\begin{align}
    \label{eq:initial_estimate_error}
    \sum_{t=1}^{k-1} \sum_{a_1} \pi^{t} (a_1 ) \|\mathbf{a}^k (a_1) -  \mathbf{a} (a_1)\|_1 = \mathcal{O} (\sqrt{\beta k}),
\end{align}
where $\pi^{t} (a_1)$ is the first action chosen by the policy without prior information.

A natural question now arises: how do we relate the high-probability guarantees of the MLE to the cumulative regret established in Lemma~\ref{lem:regret_step1}? In classical analyses of tabular POMDPs \cite{liu2022partially}, it is standard to use a discrete $\ell_1$ eluder lemma. However, applying it directly introduces a logarithmic dependence on the number of trajectory prefixes $n=|\mathcal{A}\times[O]|^{l-1}$ at step $l$. This injects an artificial $\mathcal{O}(l^2)$ penalty into the step-wise error, which, when accumulated over $l \in [L-1]$, yields a suboptimal $\mathcal{O}(L^4)$ overall regret. 

To overcome this sequence-length penalty and achieve the optimal $L^2$ scaling, we unify our approach with the continuous action setting that we will study later. We rely on the general eluder lemma for linear functions over measure spaces~\cite[Corollary G.2]{liu2023optimistic}, treating our discrete trajectory sets as measure spaces equipped with the counting measure.

\begin{proposition}[General Eluder lemma for linear functions~{\cite[Corollary G.2]{liu2023optimistic}}]
    \label{prop:eluder_general}
    Let $(\mathcal{X},\mu_{\mathcal{X}})$ and $(\mathcal{Y},\mu_{\mathcal{Y}})$ be measure spaces and let $\{f^{k}\}_{k\in[K]}$ and $\{g^{k}\}_{k\in[K]}$ be functions $f^{k}:\mathcal{X}\rightarrow\mathbb{R}^{d}$ and $g^{k}:\mathcal{Y}\rightarrow\mathbb{R}^{d}$. Suppose that for all $k\in[K]$,
    \begin{equation}
        \sum_{t=1}^{k-1}\int_{\mathcal{X}}\int_{\mathcal{Y}}|\langle f^{k}(x),g^{t}(y)\rangle|\mu_{\mathcal{X}}(dx)\mu_{\mathcal{Y}}(dy)\le \zeta_{k}.
    \end{equation}
    Assume 
    \begin{align}
        & \sup_{x\in\mathcal{X}}\|f^{k}(x)\|_{\infty} \le R_{x}, \\
        & \int_{\mathcal{Y}}\|g^{t}(y)\|_{1}\mu_{\mathcal{Y}}(dy) \le R_{y}. 
    \end{align}
    Then for all $k\in[K]$,
    \begin{equation}
        \sum_{t=1}^{k}\int_{\mathcal{X}}\int_{\mathcal{Y}}|\langle f^{t}(x),g^{t}(y)\rangle|\mu_{\mathcal{X}}(dx)\mu_{\mathcal{Y}}(dy)=\mathcal{O}\Big(d\log^{2}(K)(R_{x}R_{y}+\max_{t\le k}\zeta_{t})\Big).
    \end{equation}
\end{proposition}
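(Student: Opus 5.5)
The plan is to prove the statement as an $\ell_1$-type eluder argument: a weighted elliptical-potential bound, run separately on dyadic magnitude scales. Each scale will contribute $\mathcal{O}(d\log K)$ "surprising" events, and there are $\mathcal{O}(\log K)$ relevant scales, which is where the $\log^2 K$ comes from.

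\textbf{Normalization.} Factor the data vectors as $g^t(y)=w_t(y)\,u_t(y)$ with $w_t(y)=\|g^t(y)\|_1$ and $\|u_t(y)\|_1=1$. Then $\int w_t\,d\mu_{\mathcal Y}\le R_y$ and $|\langle f^k(x),u_t(y)\rangle|\le R_x$ by H\"older. The hypothesis becomes
\begin{equation*}
\sum_{s<k}\iint w_s(y)\,|\langle f^k(x),u_s(y)\rangle|\,\mu_{\mathcal X}(dx)\,\mu_{\mathcal Y}(dy)\le\zeta_k .
\end{equation*}
The goal is to bound $\sum_t\iint w_t(y)|\langle f^t(x),u_t(y)\rangle|$.

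\textbf{Discarding small terms.} Terms with $|\langle f^t(x),u_t(y)\rangle|\le R_x/K$ contribute at most $R_xR_y$ in total, so we drop them. The remaining magnitudes lie in $(R_x/K,R_x]$. Split this range into $\mathcal{O}(\log K)$ dyadic shells $I_j=(2^{-j-1}R_x,2^{-j}R_x]$.

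\textbf{One shell.} Fix a shell $j$. Define the weighted design matrix
\begin{equation*}
\Lambda^{(j)}_t=\lambda I+\sum_{s<t}\iint_{\{|\langle f^s,u_s\rangle|\in I_j\}} w_s\,u_su_s^{\top},
\end{equation*}
with $\lambda$ chosen at scale $R_y/K$. On the shell, $\langle f^k,u_s\rangle^2\le 2^{-j}R_x\,|\langle f^k,u_s\rangle|$, so the $\ell_1$ hypothesis converts into the $\ell_2$ control
\begin{equation*}
\|f^t(x)\|^2_{\Lambda_t^{(j)}}\lesssim \lambda\|f^t\|_2^2+2^{-j}R_x\,\zeta_t .
\end{equation*}
(Here we integrate in $x$ and handle $\mathcal X$ by averaging; bounding $\|f\|_2\le\sqrt d R_x$ only costs logarithms.) By Cauchy--Schwarz, each shell term is at most $\|f^t\|_{\Lambda_t}\|u_t\|_{\Lambda_t^{-1}}$. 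Since the term itself is at least $2^{-j-1}R_x$, a lower bound follows on the potential $\|u_t(y)\|^2_{\Lambda_t^{-1}}$ wherever the term is large. The weighted elliptical-potential lemma, $\sum_t\int w_t\min(1,\|u_t\|^2_{\Lambda_t^{-1}})\lesssim d\log(1+KR_y/(d\lambda))=\mathcal{O}(d\log K)$, then bounds the total weight of shell-$j$ events to $\mathcal{O}(d\log K)\cdot(\text{ratio of }\max_t\zeta_t\text{ to }2^{-j}R_x)$. Multiplying by the shell magnitude $2^{-j}R_x$ shows the contribution of shell $j$ is $\mathcal{O}\big(d\log K\,(R_xR_y+\max_t\zeta_t)\big)$. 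Summing over the $\mathcal{O}(\log K)$ shells gives the claim.

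\textbf{Main obstacle.} The difficulty is getting a bound linear in $\zeta$ rather than the $\sqrt{\zeta}$ a naive Cauchy--Schwarz/elliptical argument produces. The dyadic layering is designed precisely so that, within a shell, the square $\langle f,u\rangle^2$ and the magnitude $2^{-j}R_x$ are comparable, so the $\sqrt{\cdot}$ cancels. I expect the fiddliest part to be getting the per-shell counting inequality right with the continuous weights $w_t$ and the double integral over $x$. If the direct approach stalls, the fallback is to discretize each measure space into atoms of mass at most $R_y/K$ and invoke the discrete $\ell_1$ eluder lemma of \cite{liu2022partially} atom-wise; the atom count enters only logarithmically through $K$.
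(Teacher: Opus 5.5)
The paper gives no proof of this proposition; it imports it from \cite[Corollary G.2]{liu2023optimistic}. Your argument therefore has to stand on its own, and its central step fails.

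In the shell step you use ``on the shell, $\langle f^k,u_s\rangle^2\le 2^{-j}R_x|\langle f^k,u_s\rangle|$''. However, membership of a data point $(s,y)$ in shell $j$ is decided by the \emph{diagonal} magnitude $|\langle f^s(x),u_s(y)\rangle|$. Both $\|f^t\|_{\Lambda^{(j)}_t}$ and the hypothesis involve the \emph{cross} terms $|\langle f^t(x),u_s(y)\rangle|$, and these are bounded only by $R_x$. For example, take $d=2$, shell data $u_s=e_1$, current data $u_t=e_2$, and $f^t=(R_x,2^{-j}R_x)$. Then round $t$ is a shell-$j$ event, yet its cross term with every shell-$j$ data point equals $R_x$. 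You cannot sort by cross terms instead, because $\Lambda^{(j)}_t$ must not depend on the current $f^t$ for the elliptical-potential lemma to apply.

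What your argument actually yields is $\|f^t\|^2_{\Lambda^{(j)}_t}\lesssim\lambda\|f^t\|_2^2+R_x\zeta_t$. The shell-$j$ contribution is then of order $d\log K\,(2^{-j}R_xR_y+2^{j}\max_t\zeta_t)$, which is worse than trivial for the finest shells ($2^j\sim K$). Balancing the truncation level instead gives only a $\sqrt{dK\zeta R_xR_y\log K}$-type bound. With $\zeta_k\sim\sqrt{\beta k}$, that would degrade the paper's regret to $K^{3/4}$.

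This is structural, not bookkeeping. An $\ell_1$ budget $\zeta$ can be spent on a few data points whose cross terms have size $R_x$, and these carry $\ell_2$ mass $R_x\zeta$. Routing the hypothesis through a data-built quadratic form $\|f\|^2_\Lambda$ therefore pays $R_x\zeta$ rather than $\epsilon\zeta$, which is exactly the factor your dyadic layering was meant to save.

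The missing idea is the eluder-dimension counting step, which is where linearity in $\zeta$ comes from. Inside a shell at level $\epsilon$, split the earlier shell data into disjoint blocks. A large event can be $\epsilon$-dependent (in the $\ell_1$ sense) on at most $\zeta/\epsilon$ of these blocks, because each such block forces its witness to spend at least $\epsilon$ of the additive $\ell_1$ budget; no squaring occurs. Hence the number of shell events is at most $(\zeta/\epsilon+1)$ times the maximal length of an $\epsilon$-independent sequence, with the usual care for fractional weights and batched data. Only that length should be controlled by a determinant or volume argument. It is $\mathcal{O}(d\log(R_xR_y/\epsilon))$, because an $\ell_1$ witness is also an $\ell_2$ witness ($\sqrt{\sum_s a_s^2}\le\sum_s|a_s|$).

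The integrals must also be absorbed with no cardinality dependence. For fixed $z$, $\int_{\mathcal X}|\langle f(x),z\rangle|\,\mu_{\mathcal X}(dx)=\langle\int_{\mathcal X}\sigma f\,d\mu_{\mathcal X},z\rangle$ for a sign function $\sigma$. This is a single linear witness, its $\ell_\infty$ norm is at most $\int_{\mathcal X}\|f\|_\infty\,d\mu_{\mathcal X}$, and its pairing with any past datum is dominated by the corresponding integrated term.

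This is also why your fallback needs a discrete lemma that is completely independent of the number of vectors, not one where the atom count ``enters logarithmically through $K$''. Small atom mass is not what makes a discretization faithful. Simple-function approximations of arbitrary $f^k,g^t$ need atom counts that $K$ does not control.

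Finally, your discarding step (``at most $R_xR_y$'') silently assumes $\mu_{\mathcal X}(\mathcal X)\le 1$. As literally stated, with $\sup_x\|f^k(x)\|_\infty\le R_x$, the proposition is false when $\mu_{\mathcal X}(\mathcal X)$ is large. Take $d=1$, $\mathcal Y$ a point, $g^t\equiv 1$, $f^1\equiv 1$ and $f^k\equiv 0$ for $k\ge 2$. Then every $\zeta_k=0$, but the left side equals $\mu_{\mathcal X}(\mathcal X)$. You should work with the hypothesis $\int_{\mathcal X}\|f^k\|_\infty\,d\mu_{\mathcal X}\le R_x$ instead.
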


\subsubsection{First action}

One key difference from the classical POMDP setting of~\cite{liu2022partially,liu2023optimistic} is how the interaction starts. In the classical model, the agent receives an initial observation ``for free'', so the first decision is effectively conditioned on that observation and the initial information acquisition does not incur regret. In our quantum setting, there is no free initial observation: obtaining any observation (including at the first step) requires applying an action/measurement. Therefore, the first action is chosen without conditioning on past observations and, crucially, it directly affects the obtained reward and hence contributes non-trivially to the regret. As a consequence, the first action cannot be treated as an arbitrary or random initialization, and we isolate its contribution in a separate lemma.

\begin{lemma}
\label{lem:oom_first_action_dicrete}
Under the same assumptions of Lemma~\ref{lem:regret_step1} we have that the contribution to the regret of the first action is bounded as
\begin{align}
    \sum_{k=1}^K \sum_{a_1} \pi^k(a_1)\|\mathbf{a}^k(a_1)-\mathbf{a}(a_1)\|_1 = \widetilde{\mathcal{O}}\!\Big( A\sqrt{\beta K}\Big).
    \label{eq:boundary_diagonal}
\end{align}
\end{lemma}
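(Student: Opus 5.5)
The plan is to cast the first-action error as an instance of the general eluder lemma (Proposition~\ref{prop:eluder_general}), using the historical guarantee~\eqref{eq:initial_estimate_error} as its hypothesis. Write $\mathbf{a}^k(a_1)-\mathbf{a}(a_1)=\sum_{o}\delta^k(a_1,o)\proj{o}$. This is diagonal by Definition~\ref{def:oom_qhmm}, so its trace norm is $\sum_o|\delta^k(a_1,o)|$.

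For the measure spaces, take $\mathcal{X}=[O]$ with the counting measure, and $\mathcal{Y}=\{\ast\}$ a single point of mass one. The coordinate dimension is $d=A$, indexed by actions. Set
\[
f^k(o)\coloneqq\big(\delta^k(a,o)\big)_{a\in\mathcal{A}}\in\mathbb{R}^{A},
\qquad
g^t(\ast)\coloneqq\big(\pi^t(a)\big)_{a\in\mathcal{A}} .
\]
This choice is not quite right, because $|\langle f,g\rangle|=|\sum_a\pi(a)\delta(a,o)|$ is not $\sum_a\pi(a)|\delta(a,o)|$. I therefore enlarge $\mathcal{X}$ to $\mathcal{A}\times[O]$ with the counting measure and let $g^t$ act through $e_a\pi^t(a)$. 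Concretely:
\begin{itemize}
\item take $f^k(a,o)\coloneqq\delta^k(a,o)\,e_a$;
\item take $\mathcal{Y}=\mathcal{A}$ with the counting measure and $g^t(a')\coloneqq\pi^t(a')e_{a'}$.
\end{itemize}
Then
\[
\langle f^k(a,o),g^t(a')\rangle=\mathbb{1}[a=a']\,\pi^t(a)\,\delta^k(a,o),
\]
and integrating gives exactly $\sum_{a_1}\pi^t(a_1)\|\mathbf{a}^k(a_1)-\mathbf{a}(a_1)\|_1$.

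I then check the remaining hypotheses. First, $\zeta_k=\mathcal{O}(\sqrt{\beta k})\le\mathcal{O}(\sqrt{\beta K})$ by~\eqref{eq:initial_estimate_error}. This step requires the event of Proposition~\ref{prop:trajectory_mle}, and that $\omega_k\in\mathcal{C}_t$ for $t\le k$, so it holds with probability $1-\delta$. Second, $R_x\le 2$, since $|\delta^k(a,o)|\le\Pr_{\omega_k}(o|a)+\Pr_\omega(o|a)\le 2$. Third, $R_y=\sum_{a'}\pi^t(a')=1$.

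With these in place, Proposition~\ref{prop:eluder_general} yields
\[
\sum_{k\le K}\sum_{a_1}\pi^k(a_1)\|\mathbf{a}^k(a_1)-\mathbf{a}(a_1)\|_1=\mathcal{O}\big(A\log^2K\,(2+\sqrt{\beta K})\big)=\widetilde{\mathcal{O}}(A\sqrt{\beta K}).
\]

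The main obstacle is obtaining the absolute value inside the sum over actions, i.e.\ making the inner product separable so that the hypothesis and conclusion match the $\ell_1$ form. The enlarged-index (one-hot) embedding above handles this at the cost of $d=A$, which explains the factor $A$ in the statement.

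A remaining point to verify is that the historical guarantee is indeed summed over $t\le k-1$ with $\pi^t$ rather than $\pi^k$. Because the first-step marginal of the trajectory-$\ell_1$ distance upper bounds this quantity (marginalize over $\tau_{2:L}$ and use diagonality), that should be immediate.
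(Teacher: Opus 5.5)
Your argument is correct and is essentially the paper's proof: the same one-hot embedding $g^t(a')=\pi^t(a')e_{a'}$ in dimension $d=A$, with $R_y=1$, $R_x=\mathcal{O}(1)$ and $\zeta_k=\mathcal{O}(\sqrt{\beta k})$ from \eqref{eq:initial_estimate_error}, fed into Proposition~\ref{prop:eluder_general}. The only difference is that the paper takes $\mathcal{X}=\{*\}$ and puts the whole error $\|\mathbf{a}^k(a)-\mathbf{a}(a)\|_1$ into the coordinate $f^k(*)_a$. This avoids your detour through signed per-outcome errors, and it keeps $R_x\le 2$ even if the envelope on $f^k$ is read as an integral over $\mathcal{X}$ rather than a supremum (with your $\mathcal{X}=\mathcal{A}\times[O]$ that reading gives $R_x\le 2A$, costing only an additive $\mathcal{O}(A^2\log^2 K)$ that is lower order in $K$).
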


\begin{proof}
We apply Proposition~\ref{prop:eluder_general}. Let $\mathcal{X}\coloneqq\{*\}$ be a singleton and $\mathcal{Y}\coloneqq\mathcal{A}$, both equipped with the counting measure. We set the target dimension $d=A$. Define $f^{k}(*) \in \mathbb{R}^{A}$ and $g^{t}(a_{1}) \in \mathbb{R}^{A}$ as:
\begin{align}
    f^{k}(*)_a &:= \|\mathbf{a}^k(a)-\mathbf{a}(a)\|_1, \\
    g^{t}(a_{1})_a &:= \pi^{t}(a_{1})\mathbbm{1}(a_{1}=a).
\end{align}
The inner product yields
\begin{align}
    \langle f^{k}(*),g^{t}(a_{1})\rangle = \pi^{t}(a_{1})\|\mathbf{a}^k(a_{1})-\mathbf{a}(a_{1})\|_1. 
\end{align}
Integrating over $\mathcal{Y}$ reproduces the empirical error sum exactly. We verify the norm bounds: for $f^k$, since $\mathbf{a}^k(a)$ and $\mathbf{a}(a)$ are probability vectors over observations,
\begin{align}
    \|\mathbf{a}^k(a)-\mathbf{a}(a)\|_1 \le 2 \implies R_{x}=2. 
\end{align}
For $g^{t}$, 
\begin{align}
    \sum_{a_{1}}\|g^{t}(a_{1})\|_1 = \sum_{a_{1}}\pi^{t}(a_{1}) = 1 \implies R_{y}=1. 
\end{align}
From~\eqref{eq:initial_estimate_error}, 
\begin{align}
    \sum_{t=1}^{k-1} \sum_{a_1} \pi^{t} (a_1 ) \|\mathbf{a}^k (a_1) -  \mathbf{a} (a_1)\|_1 = \mathcal{O} (\sqrt{\beta k}) \implies \zeta_{k}=\mathcal{O}(\sqrt{\beta k}). 
\end{align}
Applying Proposition~\ref{prop:eluder_general} directly gives the desired bound
\begin{align}
    \sum_{k=1}^K \sum_{a_1} \pi^k(a_1)\|\mathbf{a}^k(a_1)-\mathbf{a}(a_1)\|_1\leq \mathcal{O}\Big(A \log^{2}(K)(2 + \sqrt{\beta K})\Big) = \widetilde{\mathcal{O}}\!\Big( A\sqrt{\beta K}\Big). 
\end{align}
\end{proof}

\subsubsection{OOM estimation error}

We now apply the same measure-theoretic eluder decomposition to the sequence of discrete OOM operators to effectively decouple the covering dimension from the sequence depth.

\begin{lemma}[OOM estimation error contribution]
\label{lem:oom_error_discrete}
Under the same assumptions of Lemma~\ref{lem:regret_step1} we have that the contribution to the regret of the error of the OOM operators is bounded uniformly for each $l \in [L-1]$ as
\begin{equation}
    \sum_{k=1}^K 
    \sum_{a_{l+1}, \tau_{l} } \pi^k(\tau_{l+1})    
    \times 
    \big\|
        \big(\mathbf{A}^k_{l}-\mathbf{A}_{l}\big)
        \mathbf{a}_l 
    \big\|_1 
    = 
    \widetilde{\mathcal{O}}\!\big(A^2 O^2 S^2 \kappa_{\mathrm{uc}}\sqrt{K\beta} \big)
\end{equation}
\end{lemma}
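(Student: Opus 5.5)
The plan is to reduce the step-$l$ OOM error to the form required by the general eluder lemma, Proposition~\ref{prop:eluder_general}. The reduction rests on factorizing $(\mathbf{A}^k_l-\mathbf{A}_l)\mathbf{a}_l$ into a part that depends only on the estimated model $\omega_k$ and a part that depends only on the history distribution under $\pi^t$.

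\textbf{Step 1: factorize through the recovery map.} Fix $(o_l,a_l,a_{l+1})$. By Definition~\ref{def:oom_qhmm},
\begin{align}
(\mathbf{A}^k_l-\mathbf{A}_l)\mathbf{a}_l = \mathbf{A}^k_l\mathbf{a}_l - \mathbf{A}_l\mathbf{a}_l .
\end{align}
The vector $\mathbf{a}_l$ is diagonal in $\mathbb{M}_O$, so we can write $\mathbf{a}_l=\sum_{o}[\mathbf{a}_l]_o\proj{o}$. This gives
\begin{align}
(\mathbf{A}^k_l-\mathbf{A}_l)\mathbf{a}_l=\sum_{o\in[O]}[\mathbf{a}_l]_o\,(\mathbf{A}^k_l-\mathbf{A}_l)\proj{o}.
\end{align}
Its trace norm is then a sum over $o'\in[O]$ of absolute values of a linear functional of $\mathbf{a}_l$.

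We define $f^k(o_l,a_l,a_{l+1},o')\in\mathbb{R}^{O}$ with entries
\begin{align}
f^k(o_l,a_l,a_{l+1},o')_o \coloneqq \langle o'|(\mathbf{A}^k_l(o_l,a_l,a_{l+1})-\mathbf{A}_l(o_l,a_l,a_{l+1}))\proj{o}|o'\rangle .
\end{align}
We define $g^t(\tau_{l-1},a_l)\coloneqq \pi^t(\tau_l)\,\mathbf{a}_l$, viewed as a vector in $\mathbb{R}^O$. The $\pi(a_{l+1}|\tau_l)$ factor is absorbed by taking the $\mathcal{X}$-space to be $\mathcal{A}\times[O]\times\mathcal{A}\times[O]$ with counting measure, weighted appropriately. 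The simplest route is to push the policy weights into $g^t$ and then sum over $a_{l+1}$ with $d$ enlarged to index $(o_l,a_l,a_{l+1},o)$. That gives $d=A^2O^2$, which matches the $A^2O^2$ in the statement.

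With this choice, $\langle f^k,g^t\rangle$ integrated over $\mathcal{X}\times\mathcal{Y}$ reproduces $\sum\pi^t(\tau_{l+1})\|(\mathbf{A}^k_l-\mathbf{A}_l)\mathbf{a}_l\|_1$. The key observation is that the histories $\tau_{l-1}$ enter only through $g^t$, so the number of trajectory prefixes never appears in $d$.

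\textbf{Step 2: norm bounds.} For $R_x$, each entry of $f^k$ is bounded by $\|(\mathbf{A}^k_l-\mathbf{A}_l)\proj{o}\|_1$. Each OOM operator is $\Tr_S\circ\mathcal{P}\circ(\mathbb{E}\otimes\mathcal{T}_o)\circ R$, with CPTNI pieces after $R$. Hence $\|\mathbf{A}_l\proj{o}\|_1\le\kappa_{\mathrm{uc}}$, and the same holds for $\mathbf{A}^k_l$ since $\omega_k\in\Omega_{\mathrm{uc}}$. This gives $R_x=2\kappa_{\mathrm{uc}}$. For $R_y$, summing $\pi^t(\tau_l)\|\mathbf{a}_l\|_1$ over histories gives total probability $1$, since $\mathbf{a}_l$ is the unnormalized observable state with trace $A_\omega(\tau_{l-1})$.

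\textbf{Step 3: the in-sample condition $\zeta_k$.} This is the main obstacle. From \eqref{eq:guarantee} we only control the prefix errors
\begin{align}
\mathbf{A}^k_{l:1}\mathbf{a}^k-\mathbf{A}_{l:1}\mathbf{a},
\end{align}
not the one-step error applied to the \emph{true} $\mathbf{a}_l$. I will telescope:
\begin{align}
(\mathbf{A}^k_l-\mathbf{A}_l)\mathbf{a}_l=(\mathbf{A}^k_{l:1}\mathbf{a}^k-\mathbf{A}_{l:1}\mathbf{a})-\mathbf{A}^k_l(\mathbf{a}^k_l-\mathbf{a}_l).
\end{align}
The first term is $\mathcal{O}(\sqrt{\beta k})$ under $\pi^t$ by \eqref{eq:guarantee} at step $l+1$. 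For the second term, I will use the single-step contraction $\sum_{o_l}\|\mathbf{A}^k_l X\|_1\le S^2\kappa_{\mathrm{uc}}\|X\|_1$, which is Lemma~\ref{lem:one_norm_op} with one factor. It is bounded by $S^2\kappa_{\mathrm{uc}}$ times the step-$l$ prefix error, again $\mathcal{O}(\sqrt{\beta k})$. Thus $\zeta_k=\mathcal{O}(S^2\kappa_{\mathrm{uc}}\sqrt{\beta k})$.

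\textbf{Step 4: conclude.} Plugging $d=A^2O^2$, $R_xR_y=2\kappa_{\mathrm{uc}}$ and the above $\zeta_k$ into Proposition~\ref{prop:eluder_general} gives $\widetilde{\mathcal{O}}(A^2O^2S^2\kappa_{\mathrm{uc}}\sqrt{K\beta})$, uniformly in $l$.
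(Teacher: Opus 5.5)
Your proposal is correct and follows the paper's proof. You use the same telescoping $(\mathbf{A}^k_l-\mathbf{A}_l)\mathbf{a}_l=(\mathbf{A}^k_l\mathbf{a}^k_l-\mathbf{A}_l\mathbf{a}_l)-\mathbf{A}^k_l(\mathbf{a}^k_l-\mathbf{a}_l)$ with \eqref{eq:guarantee} and Lemma~\ref{lem:one_norm_op} to get $\zeta_k=\mathcal{O}(S^2\kappa_{\mathrm{uc}}\sqrt{\beta k})$. You then pair rows of $\mathbf{A}^k_l-\mathbf{A}_l$ against $\pi^t(\tau_{l+1})\mathbf{a}_l$ with $R_x=2\kappa_{\mathrm{uc}}$ in Proposition~\ref{prop:eluder_general}. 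The only difference is bookkeeping: the paper applies the eluder lemma once per tuple $(o_l,a_l,a_{l+1})$ with $d=O$ and sums over the $A^2O$ tuples, whereas you stack the tuples into a single application with $d=A^2O^2$. This gives the same bound, except that in your stacked version $R_y=O$ rather than $1$, since $\|\mathbf{a}_l\|_1$ does not depend on $o_l$. That only affects the lower-order $R_xR_y$ term.
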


\begin{proof}
From the guarantee~\eqref{eq:guarantee} together with Lemma~\ref{lem:one_norm_op}, applying triangle inequalities and the telescoping expansion similar to~\eqref{eq:telescoping} yields for any step $l \in [L-1]$:
\begin{equation}
\begin{aligned}
\label{eq:promise}
     & \sum_{t=1}^{k-1}\sum_{a_{l+1} , \tau_{l}} \pi^{t} (\tau_{l+1} )\times \big\|(\mathbf{A}^k_l - \mathbf{A}_l)  \mathbf{a}_l \big\|_1 \\
     & \leq \sum_{t=1}^{k-1}\sum_{a_{l+1} , \tau_{l}} \pi^{t} (\tau_{l+1} )\times \Big(\big\|\mathbf{A}^k_l(\mathbf{a}^{k}_l- \mathbf{a}_l)\big\|_1  + \big\|\mathbf{A}^k_l  \mathbf{a}^k_l - \mathbf{A}_l  \mathbf{a}_l \big\|_1  \Big) \\
     & = \mathcal{O}\big(  S^2 \kappa_{\mathrm{uc}}\sqrt{k\beta } \big)+ \mathcal{O}(\sqrt{k\beta}) = \mathcal{O}\big(  S^2 \kappa_{\mathrm{uc}}\sqrt{k\beta } \big),
\end{aligned}
\end{equation}
where we applied Lemma~\ref{lem:one_norm_op} in the last line. So we set 
\begin{align}
    \zeta_k = \mathcal{O}(  S^2 \kappa_{\mathrm{uc}}\sqrt{k\beta } ). 
\end{align}

Notice that $\mathbf{a}_l$ and $\mathbf{A}_l \mathbf{a}_l$ are both  PSD diagonal matrices; we can view them as probability vectors by choosing the basis $\lbrace\proj{o}\rbrace$. The operators $\mathbf{A}_l$ map PSD diagonal matrices to PSD diagonal matrices, meaning we can view them as transmission matrices. The $1$-norm, $2$-norm and $\infty$-norm of diagonal matrices are exactly the same as those of vectors. More specifically, we write the probability vector and the transmission matrices as
\begin{align}
    \mathbf{a}_l = \sum_{o'} [\mathbf{a}_l]_{o'} e_{o'}, \qquad 
    \mathbf{A}_l = \sum_{j,o'} [\mathbf{A}_l]_{j,o'} e_{j,o'},
\end{align}
where $e_{o'}= \proj{o'}$ and $e_{j,o'}(\proj{o'}) = \proj{j}$.

To invoke Proposition~\ref{prop:eluder_general} without paying an $\mathcal{O}(l^2)$ penalty from trajectory branching, we define a continuous-style eluder space for each fixed action-outcome tuple $(o_l, a_l, a_{l+1})$ with the counting measure. Let $\mathcal{X} \coloneqq [O]$ index the rows of the transmission matrix, and let $\mathcal{Y}$ be the set of trajectory prefixes $\tau_{l-1}$, both equipped with the counting measure. We set the target dimension $d=O$. Define the parameter error function $f^{k}: \mathcal{X} \rightarrow \mathbb{R}^O$ and the data feature function $g^{t}: \mathcal{Y} \rightarrow \mathbb{R}^O$ as:
\begin{align}
    f^{k}(j) &\coloneqq \left( [ \mathbf{A}^k(o_{l},a_{l},a_{l+1} ) - \mathbf{A}(o_{l},a_{l},a_{l+1} ) ]_{j, \cdot} \right)^\top \in \mathbb{R}^O, \\
    g^{t}(\tau_{l-1}) &\coloneqq \pi^{t}(\tau_{l+1}) \mathbf{a}(\tau_{l-1},a_{l}) \in \mathbb{R}^O.
\end{align}
The inner product isolates the prediction error for the specific outcome $j$:
\begin{equation}
    \langle f^{k}(j), g^{t}(\tau_{l-1}) \rangle = \pi^{t}(\tau_{l+1}) \big[ (\mathbf{A}^k(o_{l},a_{l},a_{l+1} ) - \mathbf{A}(o_{l},a_{l},a_{l+1} )) \mathbf{a}(\tau_{l-1},a_{l}) \big]_j.
\end{equation}
Integrating the absolute value over $\mathcal{X}$ (summing over $j$) exactly reproduces the $\ell_1$ norm of the prediction error vector. Integrating further over $\mathcal{Y}$ (summing over $\tau_{l-1}$) yields:
\begin{equation}
    \int_{\mathcal{X}} \int_{\mathcal{Y}} |\langle f^{k}(j), g^{t}(\tau_{l-1}) \rangle| = \sum_{\tau_{l-1}} \pi^{t}(\tau_{l+1}) \| (\mathbf{A}^k(o_{l},a_{l},a_{l+1} ) - \mathbf{A}(o_{l},a_{l},a_{l+1} )) \mathbf{a}(\tau_{l-1},a_{l}) \|_1.
\end{equation}
Summing this over all $t$ and all fixed tuples $(o_l, a_l, a_{l+1})$ is bounded by $\zeta_k$, so the precondition holds for each tuple independently.

We now verify the norm bounds for Proposition~\ref{prop:eluder_general}:

\textbf{Step 1: Norm condition for $f^{k}$ ($R_{x}$)}.
The $l_\infty$ norm of $f^k(j)$ is the maximum absolute entry in the $j$-th row of the operator error. The sum of absolute entries in a row is bounded by the induced $1\to 1$ norm of the matrix. Because $\mathbf{A}_{l}$ and $\mathbf{A}_{l}^{k}$ send positive operators to positive operators, we have:
\begin{align}
    \big\| \tilde{\mathbf{A}}_l\big\|_{1\to 1} &= \big\|\Tr_{S}\circ \mathcal{P}^{(a_{l+1})}\circ\big(\mathbb{E}_l \otimes \mathcal{T}_{o_l}\big)\circ R^{(a_l)}\big\|_{1\to 1} \\
    &\leq \|\Tr_{S}\|_{1\to1} \| \mathcal{P}^{(a_{l+1})}\|_{1\to 1} \|\big(\mathbb{E}_l\otimes \mathcal{T}_{o_l}\big)\|_{1\to 1}  \| R^{(a_l)}\|_{1\to 1} \\
    &\leq \max_{x:\|x\|_1\leq 1}\| R^{(a_l)} x\|_{1} = \| R^{(a_l)}\|_{1\to 1, \diag} \le \kappa_{\mathrm{uc}}.
\end{align}
By the triangle inequality
\begin{align}
    \|\mathbf{A}^k_l - \mathbf{A}_l\|_{1 \to 1} \leq \|\mathbf{A}^k_l\|_{1\to 1} - \|\mathbf{A}_l\|_{1 \to 1} \leq 2\kappa_{\mathrm{uc}}
\end{align}
meaning any individual entry is strictly bounded by $2\kappa_{\mathrm{uc}}$. Thus,
\begin{align}
    R_{x} = 2\kappa_{\mathrm{uc}}. 
\end{align}

\textbf{Step 2: Norm condition for $g^{t}$ ($R_{y}$)}.
We sum the $l_{1}$ norm of $g^{t}$ over $\mathcal{Y}$:
\begin{equation}
    \int_{\mathcal{Y}} \|g^{t}(\tau_{l-1})\|_{1} d\mu_{\mathcal{Y}} = \sum_{\tau_{l-1}} \pi^{t}(\tau_{l+1}) \|\mathbf{a}(\tau_{l-1},a_{l})\|_1.
\end{equation}
Since $\mathbf{a}(\tau_{l-1},a_l)$ is a classical observable state (a PSD diagonal matrix), its $1$-norm equals its trace. Because the initial memory state and channels are trace-preserving, the trace of the filtered unnormalized state is exactly the probability of observing the prefix sequence: 
\begin{align}
    \|\mathbf{a}(\tau_{l-1},a_{l})\|_1 = \Tr(\tilde{\rho}_{l}(\tau_{l-1})) = A_{\omega}(\tau_{l-1}). 
\end{align}
Substituting this back:
\begin{equation}
    \int_{\mathcal{Y}} \|g^{t}(\tau_{l-1})\|_{1} d\mu_{\mathcal{Y}} \le \sum_{\tau_{l-1}} \pi^{t}(\tau_{l+1}) A_{\omega}(\tau_{l-1}) \le \sum_{\tau_{l-1}} \pi^{t}(\tau_{l-1}) A_{\omega}(\tau_{l-1}) = \sum_{\tau_{l-1}} \mathbb{P}^{\pi^t}_{\omega}(\tau_{l-1}) = 1.
\end{equation}
Thus, 
\begin{align}
    R_y = 1. 
\end{align}

\textbf{Step 3: Conclusion}.
For each fixed tuple $(o_l, a_l, a_{l+1})$, we invoke Proposition~\ref{prop:eluder_general} with dimension $d=O$, $R_{x}=2\kappa_{\mathrm{uc}}$, $R_{y}=1$, and the per-tuple error share of $\zeta_{K}=\mathcal{O}\big(S^2\kappa_{\mathrm{uc}}\sqrt{\beta K}\big)$. This yields a bound of $\mathcal{O}\big(O \log^2(K) (2\kappa_{\mathrm{uc}} + \zeta_K)\big)$. Summing this independent bound over all $A^2 O$ possible tuples of $(o_l, a_l, a_{l+1})$ gives:
\begin{equation}
\begin{aligned}
    \sum_{k=1}^{K} \sum_{a_{l+1},\tau_l} \pi^k(\tau_{l+1}) \|(\mathbf{A}_l^k- \mathbf{A}_l )\mathbf{a}_l\|_1 &= \mathcal{O}\big(A^2 O \cdot O\log^{2}(K)(2\kappa_{\mathrm{uc}} + \mathcal{O}(S^2\kappa_{\mathrm{uc}}\sqrt{\beta K}))\big) \\
    &= \widetilde{\mathcal{O}}\big(A^{2}O^2S^2\kappa_{\mathrm{uc}}\sqrt{\beta K}\big).
\end{aligned}
\end{equation}
This bounds the error completely independently of the step index $l$, strictly eliminating the sequence-length penalty.
\end{proof}

\subsection{Cumulative regret bound for discrete actions}

By integrating this sequence-independent local error back into our global accumulation, we eliminate the compounding $l^2$ trajectory branching penalty to achieve an explicit $L^2$ scaling rate. 

\begin{theorem}[Regret bound for discrete actions]\label{thm:discrete_regret}
Fix $\delta\in(0,1)$ and assume the rewards are uniformly bounded $|r_l(a_l, o_l)| \le R$ for all $l \in [L], a_l \in \mathcal{A}$, and $o_l \in [O]$. Let the true environment satisfy $\omega\in\Omega_{\mathrm{uc}}$ (Definition~\ref{def:Omega_uc}) with robustness constant $\kappa_{\mathrm{uc}}$, and let $A\coloneqq|\mathcal{A}|$. Run Algorithm~\ref{alg:omle-qhmm} with confidence radii $\beta$ chosen as:
\begin{equation}
    \label{eq:beta_def_thm}
    \beta \coloneqq c\Big((L + A O)S^4 \log(KSAOL) + \log(K/\delta)\Big),
\end{equation}
where $c>0$ is the universal constant from the MLE bounds (Proposition~\ref{prop:concentration_mle} and~\ref{prop:trajectory_mle}). Then, with probability at least $1-\delta$,
\begin{equation}
    \label{eq:regret_final_beta}
    \regret(K) \le \widetilde{\mathcal{O}}\Big( R S^4 \kappa_{\mathrm{uc}}^{2} A^{2} O^{2} L^{2} \sqrt{K\beta} \Big).
\end{equation}
Substituting~\eqref{eq:beta_def_thm} into~\eqref{eq:regret_final_beta} and absorbing polylogarithmic factors into $\widetilde{\mathcal{O}}(\cdot)$ yields:
\begin{equation}
    \label{eq:regret_final_simplified}
    \regret(K) \le \widetilde{\mathcal{O}}\Big( R S^6 \kappa_{\mathrm{uc}}^{2} A^{2} O^{2} L^{2} \sqrt{K(L + A O)} \Big).
\end{equation}
\end{theorem}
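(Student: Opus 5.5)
The proof combines the three preceding lemmas on a single high-probability event. First, take a union bound over the events of Propositions~\ref{prop:concentration_mle} and~\ref{prop:trajectory_mle}, each at confidence $\delta/2$; this only changes constants inside $\beta$. On the first event, the true model satisfies $\mathcal{L}_k(\omega)\ge \max_{\omega''}\mathcal{L}_k(\omega'')-\beta$ for every $k$. Since $\omega\in\Omega_{\mathrm{uc}}$ by assumption, $\omega\in\mathcal{C}_k$ for all $k\in[K]$, which is exactly what the optimism step in Lemma~\ref{lem:regret_step1} needs. On the second event, every optimistic $\omega_k\in\mathcal{C}_k\subseteq\mathcal{C}_t$ ($t\le k$) satisfies the trajectory $\ell_1$ guarantee~\eqref{eq:guarantee}. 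This is the input (the $\zeta_k$ sequence) for Lemmas~\ref{lem:oom_first_action_dicrete} and~\ref{lem:oom_error_discrete}.

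Next, start from the decomposition~\eqref{eq:regret_step1} of Lemma~\ref{lem:regret_step1}:
\begin{align}
\regret(K)\le S^2\kappa_{\mathrm{uc}}LR\Big(\sum_{l=1}^{L-1}\Sigma_l+\Sigma_0\Big),
\end{align}
where $\Sigma_l$ denotes the accumulated OOM error at step $l$ and $\Sigma_0$ the first-action term.
\begin{itemize}
\item Lemma~\ref{lem:oom_first_action_dicrete} gives $\Sigma_0=\widetilde{\mathcal{O}}(A\sqrt{\beta K})$.
\item Lemma~\ref{lem:oom_error_discrete} gives, uniformly in $l$, $\Sigma_l=\widetilde{\mathcal{O}}(A^2O^2S^2\kappa_{\mathrm{uc}}\sqrt{\beta K})$.
\end{itemize}
Summing over the $L-1$ steps yields a factor $L$, and the $\Sigma_0$ term is dominated. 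Multiplying by the prefactor $S^2\kappa_{\mathrm{uc}}LR$ gives $\widetilde{\mathcal{O}}(RS^4\kappa_{\mathrm{uc}}^2A^2O^2L^2\sqrt{K\beta})$, which is~\eqref{eq:regret_final_beta}.

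Finally, substitute~\eqref{eq:beta_def_thm}. We have $\sqrt{\beta}=\widetilde{\mathcal{O}}(S^2\sqrt{L+AO})$, because the factors $\log(KSAOL)$ and $\log(K/\delta)$ are polylogarithmic and go into $\widetilde{\mathcal{O}}$. This produces the $S^6$ and $\sqrt{K(L+AO)}$ of~\eqref{eq:regret_final_simplified}.

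The main point to check is the uniformity of the eluder step. Lemmas~\ref{lem:oom_first_action_dicrete} and~\ref{lem:oom_error_discrete} need $\zeta_k$ to be nondecreasing and bounded by $\zeta_K=\mathcal{O}(S^2\kappa_{\mathrm{uc}}\sqrt{\beta K})$ for all $k$ at once. They also need the bound from Lemma~\ref{lem:one_norm_op} to hold for the optimistic operators $\mathbf{A}^k$. That in turn requires every $\omega_k$ to lie in $\Omega_{\mathrm{uc}}$ with the same $\kappa_{\mathrm{uc}}$, which the intersection with $\mathcal{C}_1$ in~\eqref{eq:confidence_mle} guarantees. With these checks done, one union bound over $k\le K$, which is already built into the propositions, covers all episodes. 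The remainder is bookkeeping of polynomial factors.
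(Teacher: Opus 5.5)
Your proposal is correct and follows the paper's proof essentially step for step. You work on the joint concentration event so that $\omega\in\mathcal{C}_k$ and optimism applies, plug the first-action bound $\widetilde{\mathcal{O}}(A\sqrt{\beta K})$ and the uniform-in-$l$ OOM-error bound $\widetilde{\mathcal{O}}(A^2O^2S^2\kappa_{\mathrm{uc}}\sqrt{\beta K})$ into the decomposition of Lemma~\ref{lem:regret_step1}, sum over $l$, multiply by $S^2\kappa_{\mathrm{uc}}LR$, and substitute $\sqrt{\beta}=\widetilde{\mathcal{O}}(S^2\sqrt{L+AO})$; your extra checks (the $\delta/2$ union bound, and $\omega_k\in\Omega_{\mathrm{uc}}$ with the same $\kappa_{\mathrm{uc}}$ so that Lemma~\ref{lem:one_norm_op} applies to $\mathbf{A}^k$) are left implicit in the paper, and you correctly use the prefactor $S^2$ where the paper's proof has a stray $S^{5/2}$.
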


\begin{proof}
Let $\mathcal{E}$ be the (high-probability) event on which the likelihood-ratio and trajectory $\ell_1$ concentration bounds of Proposition~\ref{prop:concentration_mle} and~\ref{prop:trajectory_mle} hold simultaneously for all $t\in[K]$ with radius $\beta$ in~\eqref{eq:beta_def_thm}. On $\mathcal{E}$, we have $\omega\in\mathcal{C}_k$ for all $k\in[K]$, so the optimism argument applies.

By Lemma~\ref{lem:regret_step1}, on $\mathcal{E}$:
\begin{align}
    \regret(K) &\le S^2\kappa_{\mathrm{uc}}LR  \sum_{l=1}^{L-1}\sum_{k=1}^K\sum_{a_{l+1},\tau_l}\pi^k(\tau_{l+1}) \big\|\big(\mathbf{A}^k_l-\mathbf{A}_l\big) \mathbf{a}_l\big\|_1\\
    &\quad+ S^2\kappa_{\mathrm{uc}}LR \sum_{k=1}^K\sum_{a_1}\pi^k(a_1)\|\mathbf{a}^k(a_1)-\mathbf{a}(a_1)\|_1 . \label{eq:regret_decomp_thm}
\end{align}
The second term in~\eqref{eq:regret_decomp_thm} is bounded by the first-action estimate (Equation~\eqref{eq:boundary_diagonal}):
\begin{equation}
    \sum_{k=1}^K\sum_{a_1}\pi^k(a_1)\|\mathbf{a}^k(a_1)-\mathbf{a}(a_1)\|_1 = \widetilde{\mathcal{O}}\!\left(A\bigl(2+\sqrt{\beta K}\bigr)\right) \le \widetilde{\mathcal{O}}\!\left(A\sqrt{\beta K}\right).
\end{equation}
For the first term in~\eqref{eq:regret_decomp_thm}, the OOM estimation-error bound proved in Lemma~\ref{lem:oom_error_discrete} gives, uniformly over $l\in[L-1]$ and after summing over the relevant $(o_l,a_l,a_{l+1})$ indices:
\begin{equation}
    \sum_{k=1}^K \sum_{a_{l+1},\tau_l}\pi^k(\tau_{l+1}) \big\|\big(\mathbf{A}^k_l-\mathbf{A}_l\big) \mathbf{a}_l\big\|_1 = \widetilde{\mathcal{O}}\big(A^2O^2 S^2\kappa_{\mathrm{uc}}\sqrt{K\beta}\big).
\end{equation}
Summing this constant uniform bound over the $L-1$ steps yields a strict scalar factor of $\mathcal{O}(L)$. Multiplying by the prefactor $S^{\frac{5}{2}}\kappa_{\mathrm{uc}}LR$ in~\eqref{eq:regret_decomp_thm} yields:
\begin{equation}
    \regret(K) \le \widetilde{\mathcal{O}}\!\Big( R S^4 \kappa_{\mathrm{uc}}^{2} A^{2} O^{2} L^{2} \sqrt{K\beta} \Big),
\end{equation}
which is exactly~\eqref{eq:regret_final_beta}. Substituting $\beta$ from~\eqref{eq:beta_def_thm} and absorbing the logarithmic factors gives~\eqref{eq:regret_final_simplified}.
\end{proof}

\section{Extension to continuous action spaces}
\label{sec:cont-actions}

In the preceding sections, we restricted our analysis to finite action sets. We now extend the framework to accommodate continuous (or arbitrarily large) action spaces. To ensure that the regret remains tractable, we will formulate a finite-dimensional embedding condition that allows us to seamlessly apply the previously developed proof strategies to the continuous domain.

\subsection{Continuous action setting}
\label{sec:continous_model}

Throughout this section, we restrict to L\"uders instruments induced by $O$-outcome POVMs $\big\lbrace M_o^{(a)} \big\rbrace_{x\in[O]}$ for all $a\in\mathcal{A}$ on the hidden memory.

\paragraph{Measurable action space and stochastic-kernel policies.}
Let $(\mathcal A,\mathcal{B}(\mathcal A))$ be a measurable action space and fix a $\sigma$-finite reference measure
$\mu$ on $\mathcal A$ (typically a probability measure). A policy $\pi=\{\pi_l\}_{l=1}^L$ is a family of stochastic kernels:
for each history $\tau_{l-1}=(a_1,o_1,\ldots,a_{l-1},o_{l-1})\in (\mathcal{A}\times[O])^{l-1}$, $\pi_l(\cdot | \tau_{l-1})$ is a probability measure on
$(\mathcal A,\mathcal{B}(\mathcal A))$ w.r.t. $\mu$ on the action coordinates. When convenient, we assume $\pi_l(\cdot| \tau_{l-1})\ll \mu$ and denote its Radon-Nikodym derivative by the same symbol:
\begin{equation}
    \pi_l(\dd a | \tau_{l-1}) = \pi_l(a | \tau_{l-1}) \mu(\dd a), \qquad  \int_{\mathcal A} \pi_l(a | \tau_{l-1})\mu(\dd a)=1 .
\label{eq:policy-density}
\end{equation}
In direct analogy with the discrete shorthand~\eqref{eq:policy_shorthand}, we define the \emph{cumulative policy density product}
\begin{equation}
    \pi(\tau_l) \coloneqq \prod_{l'=1}^{l}\pi_{l'}(a_{l'} | \tau_{l'-1}),
\label{eq:pi-tau-cont}
\end{equation}
which is the density (w.r.t.\ $\mu^{\otimes l}$ on the action coordinates) associated with sampling $a_1,\ldots,a_l$ sequentially from the kernels $\{\pi_{l'}(\cdot| \tau_{l'-1})\}_{l'\in[l]}$.

For a fixed model $\omega$ (analogous to Definition~\eqref{def:qrl_model}), define the discrete conditional probabilities $\Pr(o_l|\tau_{l-1}, a_l)$ similar to~\eqref{eq:conditional_prob_compact}, and the \emph{cumulative observation likelihood product}
\begin{align}
    \label{eq:A-tau-cont}
    A_\omega(\tau_l)\coloneqq
    \prod_{l'=1}^{l}\Pr(o_{l'} | \tau_{l'-1},a_{l'}). 
\end{align}

Let $\nu_l\coloneqq \mu^{\otimes l}\otimes \mathrm{count}^{\otimes l}$ be the product reference measure on $(\mathcal{A}\times[O])^l$ (counting measure on outcomes). The induced trajectory law under $(\pi,\omega)$ is the
probability measure $\mathbb{P}_{\omega}^{\pi}$ on $(\mathcal{A}\times[O])^l$ defined by the chain rule. Under~\eqref{eq:policy-density}, it admits a density $p_{\omega}^{\pi}$ w.r.t.\ $\nu_l$ as
\begin{equation}
    p_{\omega}^{\pi}(\tau_l)\coloneqq  \frac{\dd\mathbb{P}_{\omega}^{\pi}}{\dd\nu_l}(\tau_l) =  \pi(\tau_l) A_\omega(\tau_l).
    \label{eq:traj-density}
\end{equation}
Equivalently, for any measurable $B\subseteq (\mathcal{A}\times[O])^l$,
\begin{equation}
    \mathbb{P}_{\omega}^{\pi}(B) =  \int_B p_{\omega}^{\pi}(\tau_l)\nu_l(\dd\tau_l) = \sum_{o_{1:l}}\int_{\mathcal A^l} \mathbbm{1}\{(a_{1:l},o_{1:l})\in B\} \prod_{l'=1}^{l}\Big(\pi(a_{l'}|\tau_{l'-1})\Pr(o_{l'}|\tau_{l'-1},a_{l'})\Big) \mu^{\otimes l}(\dd a_{1:l}).
\label{eq:traj-law}
\end{equation}

Finally, whenever $P$ and $Q$ are probability measures on $(\mathcal{A}\times[O])^L$ that are absolutely continuous w.r.t.\ $\nu_L$, we write 
\begin{equation}
    \|P-Q\|_1 \coloneqq  \int_{(\mathcal{A}\times[O])^L}\left|\frac{\dd P}{\dd\nu_L}(\tau_L)-\frac{\dd Q}{\dd\nu_L}(\tau_L)\right|\nu_L(\dd\tau_L),
\label{eq:l1-tv-cont}
\end{equation}
i.e. twice the total-variation distance.

\subsection{Embedding (spanning) dimension for continuous actions}
\label{sec:embedding-dim}
Let $\mathrm{Herm}(\mathcal H_S)$ be the real vector space of Hermitian operators on $\mathcal H_S$, which has real dimension $S^2$. The ambient space of $O$-tuples $(\mathrm{Herm}(\mathcal H_S))^{O}$ has real dimension $OS^2$. However, any valid action $a \in \mathcal{A}$ is a POVM, meaning its effects must satisfy the completeness constraint $\sum_{o\in[O]}M_o^{(a)}=\mathbb{I}_S$. This imposes $S^2$ independent real linear constraints. Consequently, the set of all possible $O$-outcome POVMs forms an \emph{affine} subspace of dimension
\begin{equation}
    d_{\mathrm{POVM}} = (O-1)S^2.
    \label{eq:d-povm}
\end{equation}
We fix a Hilbert-Schmidt orthonormal basis $\{P_\mu\}_{\mu=1}^{S^2}\subset \mathbb{M}_S$ satisfying
\begin{align}\label{eq:hs_basis}
     P_\mu^\dagger = P_\mu, \qquad  \Tr(P_\mu P_\nu)=\delta_{\mu\nu}, \qquad \forall \mu,\nu\in [S^2].
\end{align}
An example is the generalized Pauli basis. For each $a\in\mathcal{A}$ and
$o\in[O]$, define the coefficients
\begin{align}
    \phi_\mu(a,o) \coloneqq \Tr\!\big[P_\mu M_o^{(a)}\big],\qquad \forall\mu\in[S^2].
\end{align}
Then
\begin{align}
    M_o^{(a)} = \sum_{\mu=1}^{S^2}\phi_\mu(a,o) P_\mu, \qquad \forall a\in\mathcal{A}, o\in[O].
\end{align}
The completeness constraint $\sum_{o}M_a^{(o)}=\mathbb{I}_S$ is equivalent to the $S^2$ linear identities
\begin{align}
    \sum_{o\in[O]}\phi_\mu(a,o) = \Tr(P_\mu), \qquad \forall \mu\in[S^2], 
\end{align}
which is another way to see~\eqref{eq:d-povm}.

Because the actions lie in an affine space rather than a vector subspace passing through the origin, taking the direct linear span of the POVM elements introduces artificial dimension artifacts. We avoid this by defining the spanning dimension directly as the dimension of the action set's \emph{affine hull}. Mathematically, this is equivalent to the linear dimension of the direction space formed by the differences between actions.

\begin{definition}[Spanning dimension]
\label{def:embedding-dim}
Given an action set $\mathcal A$ consisting of $O$-outcome POVMs, define its direction space as the linear span of the differences between its elements:
\begin{align}
    \mathsf{V}_{\mathcal{A}} \coloneqq  \mathrm{span}_{\mathbb{R}}\Big\{\big(M_o^{(a)} - M_o^{(a')}\big)_{o\in[O]} \Big| a, a'\in\mathcal{A}\Big\}.
\end{align}
The \emph{spanning dimension} of the action set is the real linear dimension of this direction space:
\begin{align}
    d_{\mathcal A} \coloneqq \dim_{\mathbb R}\big(\mathsf{V}_{\mathcal{A}}\big).
\end{align}
We say that $\mathcal{A}$ is \emph{fully POVM-spanning} if $d_{\mathcal A}=(O-1)S^2$.
\end{definition}

Notice that for any difference tuple in $\mathsf{V}_{\mathcal{A}}$, the sum over outcomes strictly vanishes
\begin{align}
    \sum_o \big(M_o^{(a)} - M_o^{(a')}\big)= 0. 
\end{align}
This naturally forms a true vector subspace that is completely independent of the identity offset $\mathbb{I}_S$. By construction, $d_{\mathcal A}\le (O-1)S^2$, with equality precisely when $\mathcal{A}$ explores the maximum allowable affine degrees of freedom. When $\mathcal{A}$ is continuous, $d_{\mathcal A}$ is the effective complexity parameter that will replace explicit dependence on $|\mathcal{A}|$ in the regret bounds.

\begin{remark}
Although we introduced $d_{\mathcal A}$ with continuous (measurable) action classes in mind, the same notion applies verbatim to \emph{finite} action sets. In particular, if $\mathcal{A}$ is a large discrete collection of POVMs (or
instruments) whose effects are fully POVM-spanning, then $d_{\mathcal{A}}=(O-1)S^2$ is independent of $|\mathcal{A}|$. Consequently, the subsequent sections about concentration and regret bounds can be stated in terms of the embedding
dimension $d_{\mathcal A}$ rather than the action-set cardinality, which is most beneficial in regimes where $d_{\mathcal A}\ll|\mathcal A|$.
\end{remark}

\begin{remark}[Qubit and Bloch-sphere example]
Consider a two-dimensional hidden memory ($S=2$) and binary outcomes ($O=2$). Let $\bm{\sigma}=(\sigma_x, \sigma_y, \sigma_z)$ denote the vector of Pauli matrices. In the unnormalized Pauli basis $\{\mathbb{I}, \sigma_x, \sigma_y, \sigma_z\}$, the effects of any POVM can be parameterized as
\begin{equation*}
    M_1^{(\alpha,r)} = \frac{1}{2}(\alpha \mathbb{I} + \mathbf{r }\cdot \bm{\sigma}), \qquad M_2^{(\alpha,r)} = \mathbb{I}  - M_0^{(a)},
\end{equation*}
where $\alpha \in \mathbb{R}$ is a scalar proportional to the trace and $\mathbf{r} \in \mathbb{R}^3$ is the generalized Bloch vector. The difference between any two actions $(M_1^{(\alpha,\mathbf{r})},M_2^{(\alpha,\mathbf{r})})$ and $(M_1^{(\alpha',\mathbf{r}')},M_2^{(\alpha',\mathbf{r}')})$ is therefore entirely characterized by the first effect
\begin{align}
    (\Delta M_1,\Delta M_2 )= \frac{1}{2}(\Delta \alpha \mathbb{I} + \Delta \mathbf{r} \cdot \bm{\sigma}, -\Delta \alpha \mathbb{I}-  \Delta \mathbf{r} \cdot \bm{\sigma} )
\end{align}
where $(\Delta M_1,\Delta M_2 ) = (M_1^{(\alpha,r)}-M_1^{(\alpha',r')},M_2^{(\alpha,r)}-M_2^{(\alpha',r')} )$, $\Delta \alpha = \alpha-\alpha'$ and $\Delta \mathbf{r} = \mathbf{r} - \mathbf{r}'$.

\begin{itemize}
    \item \textbf{Case (i): Projective measurements.} For projective measurements along a direction $\mathbf{n} \in \mathbb{S}^2$, the effects take the form
    \begin{align}
        M_1^{(\mathbf{n})} = \frac{1}{2}(\mathbb{I} + \mathbf{n}\cdot \bm{\sigma}). 
    \end{align}
    In this case, $\alpha=1$ is strictly fixed (since $\Tr(M_1^{(\mathbf{n})})=1$). Therefore, the difference between any two projective measurements takes the form
    \begin{align}
        (\Delta M_1,\Delta M_2 )= \frac{1}{2}(\Delta \mathbf{n} \cdot \bm{\sigma}, - \Delta \mathbf{n} \cdot \bm{\sigma}). 
    \end{align}
    Because $\Delta \alpha = 0$, the direction space is restricted strictly to the Pauli directions:
    \begin{equation}
        \mathsf{V}_{\mathcal{A}} \cong \text{span}_{\mathbb{R}}\{\sigma_x, \sigma_y, \sigma_z\},
    \end{equation}
    which results in a spanning dimension of exactly $d_{\mathcal{A}}=3$.
    
    \item \textbf{Case (ii): Fully POVM-spanning family.} From our previous discussion, the maximal POVM spanning dimension for $S=2$ and $O=2$ is $d_{\mathrm{POVM}} = (2-1)2^2 = 4$. To achieve $d_{\mathcal{A}}=4$, the action set must unlock the $\Delta \alpha$ degree of freedom by breaking the trace-$1$ condition of the effects. It suffices to add a single biased, identity-proportional POVM to the action set, for example:
    \begin{equation}
        M_1^{\text{bias}} = \frac{1+b}{2}\mathbb{I},
    \end{equation}
    for some bias parameter $b \in (0, 1)$. Taking the difference between this biased POVM and a projective measurement yields
    \begin{align}
        (\Delta M_1, \Delta M_2) = \frac{1}{2} ( b \mathbb{I} + \Delta \mathbf{n} \cdot \bm{\sigma}, -b\mathbb{I} - \Delta \mathbf{n} \cdot \bm{\sigma})
    \end{align}
    Together with the projective differences, this completes the span:
    \begin{equation}
        \mathsf{V}_{\mathcal{A}} \cong \text{span}_{\mathbb{R}}\{\mathbb{I}, \sigma_x, \sigma_y, \sigma_z\},
    \end{equation}
    hence making the action class fully POVM-spanning with $d_{\mathcal{A}}=4$.
\end{itemize}
\end{remark}

\subsection{Undercomplete assumption and recovery map for continuous actions}
\label{sec:uc-cont}

To apply our learning algorithms to this continuous measure space, we must parameterize the statistical degrees of freedom of the action set $\mathcal{A}$ independently of its cardinality. 
Recall Assumption~\ref{ass:undercomplete} and Definition~\ref{def:Omega_uc}: for each action $a\in\mathcal A$, we require a linear map $R^{(a)}:\mathbb M_{O}\to \mathbb M_{S}\otimes \mathbb M_{O}$ such that it holds uniformly 
\begin{equation}
    R^{(a)}\circ \Tr_S\circ \mathcal P^{(a)}=\mathcal P^{(a)},  \qquad \|R^{(a)}\|_{1\to 1,\diag}\le \kappa_{\mathrm{uc}}.
    \label{eq:uc-cont-recall}
\end{equation}

Importantly, the fact that $\mathcal{A}$ is \emph{continuous} does not by itself make this assumption harder to satisfy: in many natural action classes, one can construct the full family $\big\{\Phi_o^{(a)}\big\}_{o\in[O]}$ and $\{R^{(a)}\}_{a\in\mathcal{A}}$ by \emph{transport} from a single instrument and recovery map, with the same norm bound.

Suppose the available POVMs are generated from a fixed fiducial POVM $M^{(0)}=\{M_o^{(0)}\}_{o\in[O]}$ by unitary conjugation,
\begin{equation}
    M_o^{(a)}=U_a M_o^{(0)}\,U_a^\dagger, \qquad a\in\mathcal{A}. 
\label{eq:unitary-orbit-povm}
\end{equation}
The induced L\"uders instruments satisfy the covariance relation
\begin{equation}
    \mathcal P^{(a)}(X)\coloneqq \big(U_a\otimes \mathbb{I}_O\big) \mathcal P^{(0)}\!\big(U_a^\dagger X U_a\big)  \big(U_a^\dagger\otimes \mathbb{I}_O\big), \qquad \forall X\in\mathbb M_{S}.
\label{eq:instrument-covariance}
\end{equation}
If a recovery map $R^{(0)}$ exists for the fiducial action, then a valid recovery map for action $a$ is obtained by conjugating the output on the system register:
\begin{equation}
    R^{(a)}(X) \coloneqq  \big(U_a\otimes \mathbb{I}_O\big) R^{(0)}(X) \big(U_a^\dagger\otimes \mathbb{I}_O\big), \qquad \forall X\in\mathbb M_{O}.
\label{eq:transport-recovery}
\end{equation}
Indeed, since $\Tr_S$ is invariant under unitary conjugation on $S$, we have
\begin{align}
    R^{(a)}\circ \Tr_S \circ \mathcal{P}^{(a)}=\mathcal P^{(a)}. 
\end{align}
Moreover, unitary conjugation preserves the trace norm, hence
\begin{align}
    \|R^{(a)}\|_{1\to 1,\diag}=\|R^{(0)}\|_{1\to 1,\diag},
\end{align} 
for all $a\in \mathcal{A}$ (and analogously for a diamond-norm bound, if that were the chosen metric).

\begin{remark}[SIC-POVM example]
\label{remark:sic_povm_example}
A SIC-POVM in an $S$-dimensional space consists of $S^2$ rank-1 projectors $\{E_x\}_{x=1}^{S^2}$ satisfying $\Tr(E_i E_j) = \frac{S\delta_{ij}+1}{S+1}$. The valid POVM elements are $M_x = \frac{1}{S} E_x$, such that $\sum_{x=1}^{S^2} M_x = \mathbb{I}_S$. A SIC-POVM is particularly powerful because its elements span the entire $(S^2-1)$-dimensional space of traceless Hermitian operators. This informational completeness ensures that the mapping from the latent quantum state to the classical outcome distribution is injective, providing the exact quantum analogue to the full-rank emission matrix required classically.

Here, to illustrate how unitary covariance guarantees a uniform recovery map, we consider an action set generated by all unitary rotations of a fiducial symmetric informationally complete (SIC) POVM for $S=2$. The fiducial qubit SIC-POVM corresponds to the tetrahedral measurement on the Bloch sphere. The POVM elements are given by:
\begin{equation}
    M_o^{(0)} = \frac{1}{2} E_o = \frac{1}{4}(\mathbb{I} + \mathbf{n}_o \cdot \bm{\sigma}), \quad o \in [4],
\end{equation}
where $\{E_o\}_{o=1}^4$ are rank-$1$ projectors and $\bm{\sigma} = (\sigma_x, \sigma_y, \sigma_z)$ is the vector of Pauli matrices. The Bloch vectors $\mathbf{n}_o$ form a regular tetrahedron, conventionally parameterized as:
\begin{equation}
    \mathbf{n}_1 = \begin{pmatrix} 0 \\ 0 \\ 1 \end{pmatrix}, \quad 
    \mathbf{n}_2 = \frac{1}{3}\begin{pmatrix} 2\sqrt{2} \\ 0 \\ -1 \end{pmatrix}, \quad 
    \mathbf{n}_3 = \frac{1}{3} \begin{pmatrix} -\sqrt{2}  \\ \sqrt{6} \\ -1 \end{pmatrix}, \quad 
    \mathbf{n}_4 = \frac{1}{3} \begin{pmatrix} -\sqrt{2} \\ -\sqrt{6} \\ -1 \end{pmatrix}.
\end{equation}
Because the vectors form a regular tetrahedron, their pairwise dot products satisfy $\mathbf{n}_i \cdot \mathbf{n}_j = -\frac{1}{3}$ for $i \neq j$. By the algebra of Pauli matrices, this directly satisfies the general SIC-POVM inner product condition $\Tr(E_i E_j) = \frac{2\delta_{ij}+1}{3}$. Furthermore, following Definition~\ref{def:embedding-dim}, while the maximum possible spanning dimension for $O=4$ is $d_{\mathrm{POVM}}=12$, the action class $\mathcal{A}$ formed by the $SU(2)$ orbit of this tetrahedron is parameterized by $3\times 3$ spatial rotation matrices. The linear span of the differences of these POVM elements covers the full $9$-dimensional space of $3\times3$ real matrices, yielding a finite, well-defined spanning dimension of $d_{\mathcal{A}} = 9$.

Let the action set consist of its unitary rotations,
\begin{equation}
    M_o^{(U)} \coloneqq  U M_o^{(0)}U^\dagger, \qquad U \in SU(2).
\end{equation}
Since each effect is rank-$1$, the corresponding L\"uders instrument branch for outcome $o$ naturally reduces to a measure-and-prepare form: 
\begin{equation}
    \Phi_o^{(U)}(X) \coloneqq  \sqrt{M_o^{(U)}}X\sqrt{M_o^{(U)}} = \Tr[M_o^{(U)}X]\rho_U^{(o)}, \qquad \rho_o^{(U)} \coloneqq \frac{M_o^{(U)}}{\Tr[M_o^{(U)}]}.
\end{equation}

Let $\{|o\rangle\}_{o=1}^4$ denote the canonical basis of the classical outcome register. Then the full instrument can be written as
\begin{align}
    \mathcal{P}^{(U)}(X) = \sum_{o=1}^4 \Phi_o^{(U)}(X)\otimes \proj{o} = \sum_{o=1}^4 \Tr\!\big[M_o^{(U)}X\big] \rho_o^{(U)}\otimes \proj{o}.
\label{eq:sic-full-instrument}
\end{align}
Tracing out the system register gives the associated observable operator
\begin{equation}
    \big(\Tr_S \circ \mathcal P^{(U)}\big)\!(X) = \sum_{o=1}^4 \Tr\big[M_o^{(U)}X\big] \proj{o}.
    \label{eq:sic-observable-operator}
\end{equation}
Crucially, the informational completeness condition ensures that the coefficients $\Tr\!\big[M_o^{(U)}X\big]$ uniquely determine $X$. Consequently for any diagonal classical operator $D=\sum_{o=1}^4 d_o \proj{o}$, we can define the corresponding recovery map as:
\begin{equation}
    R^{(U)}(D) \coloneqq \sum_{o=1}^4 d_o \rho_o^{(U)}\otimes \proj{o}. 
\label{eq:recovery-sic}
\end{equation}
 Substituting~\eqref{eq:sic-observable-operator} into the recovery map yields exactly:
\begin{equation}
    \big(R^{(U)}\circ \Tr_S\circ\mathcal{P}^{(U)}\big)\!(X) = \sum_{o=1}^4 \Tr[M_o^{(U)}X]\rho_o^{(U)} \otimes \proj{o} = \mathcal{P}^{(U)}(X),
\end{equation}
so 
\begin{align}
    R^{(U)} \circ Tr_S \circ \mathcal{P}^{(U)} = \mathcal{P}^{(U)}, \qquad  \forall U\in SU(2). 
\end{align}

It remains to bound the induced trace norm. Because the summands in the recovery map have orthogonal supports on the classical register and $\big\|\rho_o^{(U)}\big\|_1 = 1$, we obtain for every diagonal $D$:
\begin{equation}
    \|R^{(U)}(D)\|_1 = \sum_{o=1}^4 |d_o| \|\rho_U^{(o)}\|_1 = \sum_{o=1}^4 |d_o| = \|D\|_1.
\end{equation}
Hence, $\|R^{(U)}\|_{1\to1, \text{diag}} = 1$ uniformly in $U$. Thus, we conclude that one may safely take $\kappa_{\mathrm{uc}}=1$ for this continuous action class. This provides a clear example in which the recovery-map assumption holds uniformly over a continuous action set.
\end{remark}

\begin{remark}
The same argument applies to any rank-one POVM family, and more generally to any measure-and-prepare instrument
family: whenever
\begin{align}
    \mathcal P^{(a)}(X) = \sum_o \Tr(M_o^{(a)}X) \sigma_o^{(a)}\otimes \proj{o},
\end{align}
the recovery map
\begin{align}
    R^{(a)}\!\Big(\sum_o d_o \proj{o}\Big) = \sum_o d_o \sigma_o^{(a)}\otimes \proj{o}, 
\end{align}
satisfies
\begin{align}
    R^{(a)}\circ \Tr_S\circ \mathcal P^{(a)}=\mathcal P^{(a)}, \qquad \|R^{(a)}\|_{1\to 1,\diag}=1. 
\end{align}
In contrast, for higher-rank effects $M_o^{(a)}$, the branch $\Phi_o^{(a)}(X)$ generally depends on more than the single scalar $\Tr\big[M_o^{(a)}X\big]$, so such a recovery map need not exist.
\end{remark}

\begin{remark}[Dimensionality of outcomes and classical memory]
While the preceding example utilizes a SIC-POVM, 
the POVM and recovery map can be significantly simplified if the hidden memory is classical, i.e., restricted to diagonal density matrices within a fixed basis. In this case, the recovery map $R^{(a)}$ only needs to invert classical probability distributions rather than full quantum states. Consequently, fewer measurement outcomes are required to ensure the existence of a valid recovery map. As we will see in Section~\ref{sec:work_extraction}, this relaxation naturally arises in physical applications. Specifically, we will use this exact setup for state-agnostic work extraction, where the hidden memory is classical and the agent interacts with the system using only simple, two-outcome projective measurements.
\end{remark}

\subsection{Concentration bounds for continuous actions}
\label{sec:cont-conc}

In the continuous-action setting, each length-$L$ trajectory $\tau_L=(a_1,o_1,\ldots,a_L,o_L)\in(\mathcal{A}\times[O])^L$ is distributed according to the probability measure $\mathbb {P}_\omega^\pi$ induced by the true model $\omega$ and policy $\pi$ (a stochastic kernel). Recall the density $p_{\omega}^{\pi}(\tau_L) \coloneqq \pi(\tau_L)\,A_\omega(\tau_L)$ as defined in~\eqref{eq:traj-density}, where $\pi(\tau_L)$ is the cumulative policy density product in~\eqref{eq:pi-tau-cont} and $A_\omega(\tau_L)$ is the cumulative observation likelihood product in~\eqref{eq:A-tau-cont}.
Consequently, the log-likelihood objective for the MLE mirrors the discrete case. Because the policy density $\pi(\tau_L)$ is independent of the model parameters $\omega$, it strictly factors out of the likelihood ratio. Thus, maximizing the sum of $\log p_{\omega}^{\pi^i}(\tau^i)$ is equivalent to maximizing the sum of the observation log-likelihoods $\log A_{\omega}(\tau^i)$.

\subsubsection{Effective parameter dimension}
As in the discrete-action case (Section~\ref{subsec:embedding_dimension_qhmm}), the MLE concentration
bounds depend on the model class only through the dimension of a finite-dimensional real embedding.
For the continuous-action POVM model class, we denote by $d_\omega$ the corresponding effective
dimension (i.e.\ the number of real degrees of freedom needed to specify the trajectory likelihoods), derived via an analogous parameter-counting argument to that of the discrete setting. Specifically, we substitute the explicit dependence on the action-set cardinality with the spanning dimension $d_{\mathcal{A}}$ introduced in Definition~\ref{def:embedding-dim}. Concretely, we obtain:
\begin{equation}
d_\omega \coloneqq L S^{4} + d_{\mathcal A} S^{2}  + S^{2},
\label{eq:domega-cont}
\end{equation}
where when the action class is fully POVM-spanning, $d_{\mathcal A}=(O-1)S^2$.

\subsubsection{MLE lemmas}

The following two propositions establish the continuous-action counterparts of the likelihood-ratio and trajectory $\ell_1$ concentration bounds. They follow directly from~\cite[Propositions~13--14]{liu2022partially} and~\cite[Propositions~B.1--B.2]{liu2023optimistic} upon upper bounding the associated bracketing entropy via our finite-dimensional embedding of size $d_{\omega}$. To rigorously interface with these results, we evaluate the trajectory laws $\mathbb{P}_{\omega}^{\pi}$ on the space $(\mathcal{A}\times[O])^L$ dominated by the reference measure $\nu_L \coloneqq  \mu^{\otimes L}\otimes \text{count}^{\otimes L}$, identifying the likelihood with the Radon-Nikodym density $p_{\omega}^{\pi} \coloneqq \dd \mathbb{P}_{\omega}^{\pi}/\dd\nu_L$. We provide the adaptation of the original proofs found in~\cite[Appendix B.1 and B.2]{liu2022partially} in our Appendix~\ref{sec:appendix_mle_continuous}.

\begin{proposition}
\label{prop:lr-cont}
There exists a universal constant $c>0$ such that for any $\delta\in(0,1]$, with probability at least $1-\delta$,
for all $t\in[K]$ and all $\omega'\in\Omega$,
\begin{equation}
    \sum_{i=1}^{t} \log\!\left( \frac{p_{\omega'}^{\pi^i}(\tau^i)}{p_{\omega}^{\pi^i}(\tau^i)} \right) \leq c\Big( d_\omega L \log\!\big(K O\big) + \log(K/\delta) \Big),
    \label{eq:lr-cont}
\end{equation}
where $\omega$ is the true model and $\tau^i\sim \mathbb{P}_{\omega}^{\pi^i}$ is the trajectory observed in episode $i$.
\end{proposition}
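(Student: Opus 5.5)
The plan follows the standard MLE argument of~\cite[Proposition 13]{liu2022partially} / \cite[Proposition B.1]{liu2023optimistic}: discretize the model class with an optimistic bracketing cover, prove a martingale exponential inequality at each cover point, and take a union bound over the cover. First, the policy density $\pi^i(\tau^i)$ cancels in the ratio, so $p_{\omega'}^{\pi^i}(\tau^i)/p_{\omega}^{\pi^i}(\tau^i)=A_{\omega'}(\tau^i)/A_\omega(\tau^i)$. This removes all dependence on the reference measure $\mu$ on $\mathcal{A}$ from the model-dependent part.

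\textbf{Step 1: optimistic cover.} I will build a finite family $\bar\Omega_\epsilon$ of nonnegative functions $\bar A(\tau_L)$, one bracket per $\omega'\in\Omega$, satisfying $\bar A\ge A_{\omega'}$ pointwise. Each bracket should also satisfy, for every policy $\pi$,
\begin{align}
\int \pi(\tau_L)\big(\bar A(\tau_L)-A_{\omega'}(\tau_L)\big)\,\nu_L(\dd\tau_L)\le \epsilon .
\end{align}
To construct it, I parametrize $\omega'$ by Choi matrices of $\mathbb{E}_l$, $\rho_1$, and the coefficients of the effects in the $d_{\mathcal A}$-dimensional direction space $\mathsf V_{\mathcal A}$; this gives $d_\omega$ real parameters in a bounded set. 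I then take an $\epsilon'$-net in a suitable norm, with $\epsilon'=\epsilon/\poly(S,O,L)$.

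The key fact needed is Lipschitz continuity of $A_{\omega'}(\tau_L)$ in the parameters. It must hold uniformly in $a_{1:L}$, and it must be summable over $o_{1:L}$ with policy weights. I will get it from the same telescoping used in Lemma~\ref{lem:regret_step1}, written directly with the channels rather than the OOMs. Each term then has the form ``CPTNI maps $\circ$ perturbation $\circ$ CPTNI maps''. Summing over outcomes and integrating against the policy kernels collapses the trace, exactly as in Step~3 of Lemma~\ref{lem:one_norm_op}. This gives total $\ell_1$ error at most $L\cdot \poly(S)\,\epsilon'$, up to a factor $O^{L}$ at worst if a pointwise bound is used.

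The optimistic bracket is $\bar A:=A_{\bar\omega}+\epsilon'' $ (suitably normalized per outcome string), where $\bar\omega$ is the net point. With this choice, $\log|\bar\Omega_\epsilon|\lesssim d_\omega\log(\poly(S,O,L)/\epsilon)$, and I can take $\epsilon=1/K$.

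\textbf{Step 2: martingale inequality.} For a fixed $\bar A\in\bar\Omega_\epsilon$, define $X_t=\sum_{i\le t}\log(\bar A(\tau^i)/A_\omega(\tau^i))$. Conditionally on the past, $\pi^i$ is fixed, so
\begin{align}
\mathbb E\big[\bar A(\tau^i)/A_\omega(\tau^i)\,\big|\,\mathcal F_{i-1}\big]=\int \pi^i\bar A\,\dd\nu_L\le 1+\epsilon .
\end{align}
Hence $\exp(X_t)/(1+\epsilon)^t$ is a supermartingale. Applying Ville's inequality (or Markov plus a union bound over $t$) gives, with probability at least $1-\delta/|\bar\Omega_\epsilon|$, that $X_t\le \log(K/\delta)+\log|\bar\Omega_\epsilon|+K\epsilon$ for all $t\in[K]$. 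A union bound over the cover follows. Finally, for arbitrary $\omega'$, $A_{\omega'}\le \bar A$ gives the claimed bound with $K\epsilon=1$.

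\textbf{Main obstacle.} The delicate step is the cover size, i.e.\ obtaining the factor $d_\omega L\log(KO)$ in the statement. The bracket errors must be integrable against an \emph{arbitrary} policy kernel on a continuous action space, so a naive pointwise net over actions is not enough; the uniform contractivity argument is what handles this. I expect the additive optimism slack $\epsilon''$, spread over $O^L$ outcome strings, to force $\log(1/\epsilon'')\sim L\log O+\log K$. That is precisely the source of the extra $L\log(KO)$ factor multiplying $d_\omega$, and I will accept it rather than try to remove it.
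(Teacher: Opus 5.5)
Your proposal is correct and follows the paper's proof essentially step for step. Both cancel the policy density, grid the $d_\omega$-dimensional parameter set, and use the optimistic bracket $\pi(\tau_L)\bigl(A_{\omega_c}(\tau_L)+\text{slack}\bigr)$. Its mass is at most $1+1/K$ because the uniform slack is paid on each of the $O^L$ outcome strings, which is exactly where the $d_\omega L\log(KO)$ term comes from. Both then finish with $(1+1/K)^K\le e$, Markov's inequality, and a union bound over $t$ and the cover.

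The differences are cosmetic. You derive the action-uniform Lipschitz estimate by telescoping through CPTNI maps, whereas the paper simply invokes that $A_\omega(\tau_L)$ is a degree-$L$ polynomial in bounded parameters. Also, since the bracket needs pointwise domination, your Step~1 scale $\epsilon'=\epsilon/\poly(S,O,L)$ is too large. It should be the $O^{-L}$-scaled one you correctly settle on in your final paragraph.
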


\begin{proposition}
\label{prop:l1-cont}
There exists a universal constant $c>0$ such that for any $\delta\in(0,1]$, with probability at least $1-\delta$,
for all $t\in[K]$ and all $\omega'\in\Omega$,
\begin{equation}
    \sum_{i=1}^{t} \Big\|\mathbb{P}_{\omega'}^{\pi^i}-\mathbb{P}_{\omega}^{\pi^i}\Big\|_1^{\,2} \leq c\left( \sum_{i=1}^{t} \log\!\left( \frac{p_{\omega}^{\pi^i}(\tau^i)}{p_{\omega'}^{\pi^i}(\tau^i)} \right) + d_\omega L \,\log\!\big(K O\big) + \log(K/\delta) \right),
\label{eq:l1-cont}
\end{equation}
where $\|\cdot\|_1$ is the $\ell_1$ distance between measures defined in~\eqref{eq:l1-tv-cont} (i.e. twice total variation; integrals over actions are w.r.t. $\mu$ and sums over outcomes are w.r.t. counting measure).
\end{proposition}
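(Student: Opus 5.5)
The statement to prove is Proposition~\ref{prop:l1-cont}: a squared-$\ell_1$ trajectory concentration for continuous actions. I would follow the classical MLE argument of~\cite[Proposition 14]{liu2022partially}, adapted to densities with respect to $\nu_L$. The engine is a change-of-measure identity for $\sqrt{p_{\omega'}/p_\omega}$ combined with a covering (bracketing) argument over $\Omega$.

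\textbf{Step 1: bound for a fixed model.} Fix a single $\omega'$ and set $X_i\coloneqq\tfrac12\log\bigl(p^{\pi^i}_{\omega'}(\tau^i)/p^{\pi^i}_{\omega}(\tau^i)\bigr)$. Since $\pi^i$ is determined by the past, the filtration argument gives
\[
\mathbb{E}\bigl[e^{X_i}\mid\mathcal{F}_{i-1}\bigr]=\int \sqrt{p^{\pi^i}_{\omega'}p^{\pi^i}_{\omega}}\,\dd\nu_L = 1-H^2\bigl(\mathbb{P}^{\pi^i}_{\omega'},\mathbb{P}^{\pi^i}_{\omega}\bigr).
\]
The policy density factors out inside the square root, which is where the assumption $\pi_l\ll\mu$ is used. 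The supermartingale $\exp\bigl(\sum_i X_i-\sum_i\log(1-H_i^2)\bigr)$ together with Ville's inequality then gives, with probability $1-\delta$,
\[
\sum_i H_i^2\le \tfrac12\sum_i\log\frac{p_\omega}{p_{\omega'}}+\log(1/\delta).
\]
Combining this with $\|P-Q\|_1^2\le 8H^2(P,Q)$ yields the fixed-$\omega'$ version of~\eqref{eq:l1-cont}.

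\textbf{Step 2: uniformity via an optimistic cover.} I would build an $\epsilon$-bracketing of $\Omega$ in which each bracket carries an upper function $\bar p$ with $p_{\omega'}\le\bar p$ and $\int|\bar p-p_{\omega'}|\,\dd\nu_L\le\epsilon$. This is the main obstacle. Because the actions are continuous, the bracket cannot be built by enumerating actions. Instead it must come from parametrizing the likelihood $A_\omega(\tau_L)$ by at most $d_\omega$ real coordinates and proving a Lipschitz bound. Those coordinates are the Choi entries of $\rho_1$ and of the $\mathbb{E}_l$, together with the coefficients of the known Lüders effects in the basis~\eqref{eq:hs_basis}, restricted to the $d_{\mathcal A}$-dimensional direction space.

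The Lipschitz bound needs care. A telescoping argument gives $|A_\omega(\tau_L)-A_{\omega''}(\tau_L)|\le\poly(L,S)\,\|\omega-\omega''\|$ pointwise. Integrating this against $\pi(\tau_L)$ and summing over the $O^L$ outcome strings should then cost only a factor $O^L$, provided the trace-nonincreasing property from Step 3 of Lemma~\ref{lem:one_norm_op} makes the sum over outcomes contractive. Under that bound, a grid of mesh $\epsilon/\poly(L,S,O^L)$ has log-cardinality $\mathcal{O}(d_\omega L\log(KO))$ when $\epsilon=1/K$. I would define $\bar p$ as $p_{\omega_j}$ plus the grid-cell slack, so that it is a valid upper bound at every policy.

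\textbf{Step 3: conclude.} I would apply Step 1 to each bracket upper function and take a union bound, using $\log(\text{cover}/\delta)$ together with a union bound over $t\in[K]$. The slack costs at most $\mathcal{O}(t\epsilon)$ in the $\ell_1$ term and $\mathcal{O}(1)$ in the log-ratio, since $\log\bar p\ge\log p_{\omega'}$. With $\epsilon=1/K$ these absorb into the constant. The same cover also gives Proposition~\ref{prop:lr-cont} through a Chernoff bound on $\sum_i\log(\bar p/p_\omega)$.
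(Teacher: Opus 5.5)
Your proposal is correct and follows essentially the same route as the paper. The paper's tangent-sequence identity with $\ell=\tfrac12\log(\tilde p/p_\omega)$ is exactly your Hellinger exponential martingale: you use Ville's inequality where the paper uses Markov's inequality plus a union bound over $t$. Likewise, your bracket $\pi(\tau_L)\bigl(A_{\omega_j}(\tau_L)+\text{slack}\bigr)$, the $O^L$ mass inflation, and the transfer via $\bar p\ge p_{\omega'}$ match the paper's construction with mesh $\epsilon=1/(2KO^L L_{\mathrm{lip}})$.

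Two small points. First, you leave implicit the normalization $\tilde p=\bar p/\|\bar p\|_{\ell_1(\nu_L)}$, which is needed before applying $\|P-Q\|_1^2\le 8H^2$ (the paper does it explicitly). Second, the $O^L$ factor comes directly from integrating a constant pointwise slack against $\pi$ over actions, so no appeal to the trace-nonincreasing structure is needed there.
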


\subsubsection{Choice of confidence radius}

For the continuous-action variant of Algorithm~\ref{alg:omle-qhmm}, we fix the confidence radius $\beta$ across all episodes. Analogous to the discrete setting, this radius is determined by the complexity term bounding the right-hand side of Propositions~\ref{prop:lr-cont} and~\ref{prop:l1-cont}: 
\begin{equation}
    \beta \coloneqq  c\Big( d_\omega  L\log\!\big(K O\big) + \log(K/\delta) \Big),
    \label{eq:beta-cont}
\end{equation}
where $c>0$ is the universal constant from Propositions~\ref{prop:lr-cont} and~\ref{prop:l1-cont}.
This is the radius we plug into the MLE confidence sets in the continuous-action analysis.

\subsection{Decomposing regret via OOM estimation errors for continuous actions}

The definition of regret in the continuous-action setting is identical to the discrete case, with the corresponding expectations taken under the trajectory measures introduced in Section~\ref{sec:continous_model}. We run the same algorithm as in Algorithm~\ref{alg:omle-qhmm}, and in the regret analysis we use the undercomplete OOM decomposition from Section~\ref{sec:oom_uc}.

To avoid repetition, we only highlight the steps that differ from the discrete-action proof in Section~\ref{sec:regret_discrete}. In particular, the decomposition of the regret into estimation errors of the OOM operators relative to their empirical estimates proceeds verbatim: Lemma~\ref{lem:one_norm_op} extends directly to the continuous setting (replacing sums over actions by integrals with respect to $\mu$). Concretely, we will rely on the following continuous-action analogue.

\begin{corollary}[Lemma~5.2 for continuous actions]
\label{cor:lem52-cont}
Fix $\delta\in(0,1)$ and assume the rewards are uniformly bounded $|r_l(a_l, o_l)| \le R$ for all $l \in [L], a_l \in \mathcal{A}$, and $o_l \in [O]$.
Let $\omega\in\Omega_{\mathrm{uc}}$ be the true model and run Algorithm~\ref{alg:omle-qhmm} with confidence radius
$\beta_k\equiv\beta$ chosen as in~\eqref{eq:beta-cont}. Then, with probability at least $1-\delta$,
\begin{equation}
    \regret(K) \leq  LR\sum_{k=1}^K \Big\|\mathbb{P}^{\pi^k}_{\omega_k}-\mathbb{P}^{\pi^k}_{\omega}\Big\|_1.
    \label{eq:regret-tv-cont}
\end{equation}
Moreover, we can bound the regret in terms of OOM estimation error as
\begin{equation}
\label{eq:regret-oom-cont}
\begin{aligned}
    \regret(K) & \leq S^2\kappa_{\mathrm{uc}} LR  \sum_{k=1}^K\sum_{l=1}^{L-1} \int_{(\mathcal{A}\times[O])^{l+1}} \pi^k(\tau_{l+1}) \big\| \big(\mathbf{A}_l^k-\mathbf{A}_l\big) \mathbf{a}_l \big\|_1  \nu_{l+1}(\dd \tau_{l+1}) \\
    & \quad +  S^2\kappa_{\mathrm{uc}} LR \sum_{k=1}^K \int_{\mathcal{A}}  \pi_1^k(a_1) \big\|\mathbf{a}^k(a_1)-\mathbf{a}(a_1)\big\|_1 \mu(\dd a_1),
\end{aligned}
\end{equation}
where $\nu_{l+1}:=\mu^{\otimes(l+1)}\otimes\mathrm{count}^{\otimes(l+1)}$ is the product reference measure and
$\pi^k(\tau_{l+1})$ denotes the corresponding trajectory density induced by $(\pi^k,\omega)$.
\end{corollary}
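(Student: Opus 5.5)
The plan is to repeat the proof of Lemma~\ref{lem:regret_step1}, replacing every sum over actions by an integral against $\mu$ and every sum over full trajectories by integration against $\nu_L$. We work on the event of probability at least $1-\delta$ on which Propositions~\ref{prop:lr-cont} and~\ref{prop:l1-cont} hold with the radius $\beta$ of~\eqref{eq:beta-cont}. On this event Proposition~\ref{prop:lr-cont} gives $\omega\in\mathcal{C}_k$ for all $k$, so optimism yields $V^*(\omega)\le V^{\pi^k}(\omega_k)$. Writing both values as $\int p^{\pi^k}_{\cdot}(\tau_L)\sum_l r_l(a_l,o_l)\,\nu_L(\dd\tau_L)$ via~\eqref{eq:traj-density}, and bounding $|\sum_l r_l|\le LR$, gives~\eqref{eq:regret-tv-cont} directly with the $\ell_1$ distance~\eqref{eq:l1-tv-cont}.

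For~\eqref{eq:regret-oom-cont}, factor $p^{\pi^k}_{\omega'}(\tau_L)=\pi^k(\tau_L)A_{\omega'}(\tau_L)$. Lemma~\ref{lem:probabilities_operator_multiquantum} is purely pointwise in the action sequence, so $A_{\omega'}(\tau_L)=\mathcal{T}_{o_L}\mathbf{A}'_{L-1:1}\mathbf{a}'(a_1)$ for both $\omega'=\omega$ and $\omega'=\omega_k$, since both lie in $\Omega_{\mathrm{uc}}$. The integrand is diagonal, so summing over $o_L$ turns the pointwise difference into the trace norm $\|\mathbf{A}^k_{L-1:1}\mathbf{a}^k(a_1)-\mathbf{A}_{L-1:1}\mathbf{a}(a_1)\|_1$, integrated against $\pi^k(\tau_L)$ over actions with respect to $\mu^{\otimes L}$ and summed over $o_{1:L-1}$. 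Next we apply the same telescoping identity as in~\eqref{eq:telescoping} and the triangle inequality. This produces $L-1$ step-error terms $\mathbf{A}^k_{L-1:l+1}(\mathbf{A}^k_l-\mathbf{A}_l)\mathbf{a}_l$ and one initial-error term $\mathbf{A}^k_{L-1:1}(\mathbf{a}^k(a_1)-\mathbf{a}(a_1))$. All integrands are measurable, being polynomial in the instrument parameters, which are measurable in $a$, so Tonelli allows the interchanges.

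The key step, and the only place real care is needed, is the continuous analogue of Lemma~\ref{lem:one_norm_op}. Fix $\tau_{l+1}$ in the outer integral. Then $\pi^k(\tau_L)=\pi^k(\tau_{l+1})\prod_{l'=l+1}^{L-1}\pi^k(a_{l'+1}|\tau_{l'})$, and we must bound
\begin{equation*}
\sum_{o_{l+1:L-1}}\int \|\mathbf{A}^k_{L-1:l+1}X\|_1 \prod_{l'=l+1}^{L-1}\pi^k(a_{l'+1}|\tau_{l'})\,\mu^{\otimes(L-l-1)}(\dd a_{l+2:L}) \le S^2\kappa_{\mathrm{uc}}\|X\|_1 .
\end{equation*}
Steps 1, 2 and 4 of that proof are pointwise in $a_{l+1}$: the recovery-map cancellation, the Hermitian basis expansion, and the pinching bound with $\|R^{(a_{l+1})}\|_{1\to1,\diag}\le\kappa_{\mathrm{uc}}$. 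None of them touch the later actions. Step~3 is the backward trace-telescoping. It only uses that $\int\pi^k(a|\tau)\,\mu(\dd a)=1$, together with trace preservation of $\sum_o\Phi^{(a)}_o$ for each $a$ and of $\mathbb{E}_l$. So replacing each $\sum_{a}\pi(a|\cdot)$ by $\int\pi(a|\cdot)\mu(\dd a)$ leaves the argument verbatim. The uniform bound holds for all $a$ because $\kappa_{\mathrm{uc}}$ is a uniform constant in~\eqref{eq:uc-cont-recall}.

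Applying this bound with $X=(\mathbf{A}^k_l-\mathbf{A}_l)\mathbf{a}_l$ integrates out the future variables. What remains is $S^2\kappa_{\mathrm{uc}}LR\int\pi^k(\tau_{l+1})\|(\mathbf{A}^k_l-\mathbf{A}_l)\mathbf{a}_l\|_1\,\nu_{l+1}(\dd\tau_{l+1})$. Note that $\mathbf{A}_l$ depends on $a_{l+1}$ and $o_l$, so it sits inside the $\nu_{l+1}$ integral; the summation over $o_{l+1}$ included in the lemma is harmless. For the initial-error term we use $X=\mathbf{a}^k(a_1)-\mathbf{a}(a_1)$, which gives the $\mu(\dd a_1)$ integral. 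Summing over $l$ and $k$ yields~\eqref{eq:regret-oom-cont}. I do not expect a genuine obstacle here. The only subtlety is bookkeeping: confirming that the discrete sums in Lemma~\ref{lem:one_norm_op} are used solely through normalization of the policy kernels.
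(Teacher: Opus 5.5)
Your proposal is correct and follows the same route as the paper's proof. On the likelihood-ratio event, optimism gives the total-variation bound; you then use the OOM representation, the telescoping identity, and the continuous analogue of Lemma~\ref{lem:one_norm_op}, replacing sums over actions by $\mu$-integrals. Your justification of that analogue is the one the paper leaves implicit: the recovery-map steps are pointwise in $a_{l+1}$ and use the uniform $\kappa_{\mathrm{uc}}$, while the backward trace argument uses the policy only through normalization of the kernels $\pi^k_{l'}(\cdot\mid\tau_{l'-1})$.
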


\begin{proof}
Let $\mathcal{E}$ be the high-probability event upon which the continuous likelihood-ratio trajectory $\ell_1$ concentration bounds of Propositions~\ref{prop:lr-cont} and~\ref{prop:l1-cont} hold simultaneously for all $t\le K$ with radius $\beta$.
Conditioned on $\mathcal{E}$, the true model $\omega$ remains in the confidence set, and the optimism argument follows identically to Lemma~\ref{lem:regret_step1}, yielding~\eqref{eq:regret-tv-cont}. Note that discrete sums over actions are naturally replaced by total variation integrals on $(\mathcal{A}\times[O])^L$.

For the bound in~\eqref{eq:regret-oom-cont}, we perform the identical telescoping expansion of the OOM product as in the discrete case. Applying the continuous-action analogue of Lemma~\ref{lem:one_norm_op} to control the trace norms yields the result, where sums over actions are replaced by integrals with respect to $\mu$.
\end{proof}

\subsection{The eluder dimension for linear functions (continuous actions)}
\label{sec:eluder_cont}

Similar to the discrete case in~\eqref{eq:guarantee} and~\eqref{eq:initial_estimate_error}, we get from the MLE bound that 
\begin{equation}
\label{eq:guarantee_cont}
\begin{aligned}
    \sum_{t=1}^{k-1} \sum_{o_{1:l-1}} \int_{\mathcal{A}^l} \pi^t(\tau_l) \big\| \mathbf{A}^k_{l-1}\cdots \mathbf{A}^k_1\mathbf{a}^k(a_1 ) -   \mathbf{A}_{l-1}\cdots \mathbf{A}_1\mathbf{a}(a_1 ) \big\|_1 \mu^{\otimes l} (\dd a_{1:l})= \mathcal{O}(\sqrt{\beta k} ),
\end{aligned}
\end{equation}
which, in particular, yields for the first action:
\begin{align}
    \label{eq:initial_estimate_error_cont}
    \sum_{t=1}^{k-1} \int_{\mathcal{A}} \pi^{t} (a_1 ) \|\mathbf{a}^k (a_1) -  \mathbf{a} (a_1)\|_1 \mu(\dd a_1) = \mathcal{O} (\sqrt{\beta k}). 
\end{align}
With these bounds in hand, we are now ready to bound the subsequent regret with the historical OOM estimation error. We apply the same general eluder lemma for linear functions (Proposition~\ref{prop:eluder_general}) as in the discrete case, with the choice of functions modified. We again bound the first action term and the remaining terms separately.

To streamline the subsequent norm bounds, we first establish a universal bound on the coefficients of any quantum operation when expanded in the Hilbert-Schmidt basis. 

\begin{lemma}[Universal Basis Coefficient Bound]
\label{lem:basis-coefficient-bound}
Let $M \succeq 0$ be a positive semi-definite operator acting on $\mathbb{M}_S$. Expanding $M$ in the fixed Hilbert-Schmidt orthonormal Hermitian basis $\{P_\mu\}_{\mu=1}^{S^2}$, the absolute sum of its coefficients is bounded as:
\begin{align}
    \sum_{\mu=1}^{S^2} |\Tr(P_\mu M)| \le S \Tr(M).
\end{align}
\end{lemma}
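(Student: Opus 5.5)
The plan is a term-by-term Hölder bound followed by the fact that the basis elements have small operator norm. For each $\mu$ we have $|\Tr(P_\mu M)| \le \|P_\mu\|_\infty \|M\|_1$. Since $M\succeq 0$, its trace norm equals its trace, $\|M\|_1 = \Tr(M)$. Summing over the $S^2$ basis elements therefore gives
\begin{equation}
    \sum_{\mu=1}^{S^2}|\Tr(P_\mu M)| \le \Big(\max_\mu \|P_\mu\|_\infty\Big)\, S^2\, \Tr(M).
\end{equation}
The claimed bound $S\Tr(M)$ then follows once $\|P_\mu\|_\infty \le 1/\sqrt{S}$.

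This is exactly the property of the normalized generalized Pauli basis already used in Step 4 of the proof of Lemma~\ref{lem:one_norm_op}. The elements are $P_\mu = W_\mu/\sqrt{S}$ with $W_\mu$ unitary, so every singular value equals $1/\sqrt{S}$.

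The main obstacle is Hermiticity. The Weyl--Heisenberg operators $W_\mu$ are not Hermitian, while the lemma requires a Hermitian orthonormal basis as in~\eqref{eq:hs_basis}. The fix is to form the combinations $(W+W^\dagger)/\sqrt{2S}$ and $i(W-W^\dagger)/\sqrt{2S}$ from the pairs $\{W,W^\dagger\}$ with $W\neq W^\dagger$, and to keep the self-adjoint ones (such as the identity) normalized by $1/\sqrt{S}$. These combinations have operator norm at most $\sqrt{2}/\sqrt{S}$, which costs a constant factor of $\sqrt2$. I would first look for a Hermitian unitary-proportional basis: for $S$ a power of $2$, tensor products of Pauli matrices work directly and give $\|P_\mu\|_\infty = 1/\sqrt S$ exactly. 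For general $S$ I would state the bound with the resulting constant absorbed, which does not affect the later $\widetilde{\mathcal O}$ statements.

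If this per-term route is too lossy, an alternative is Cauchy--Schwarz with orthonormality:
\begin{equation}
    \sum_\mu|\Tr(P_\mu M)| \le S\Big(\sum_\mu \Tr(P_\mu M)^2\Big)^{1/2} = S\|M\|_2 \le S\|M\|_1 = S\Tr(M).
\end{equation}
This holds for \emph{any} Hermitian orthonormal basis with constant exactly $1$. It uses only Parseval's identity in the Hilbert--Schmidt inner product and $\|M\|_2\le\|M\|_1$.

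I would in fact present this Cauchy--Schwarz argument as the main proof. It is basis-independent and sidesteps the Hermiticity issue of the Pauli-type route entirely.
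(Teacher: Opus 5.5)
Your main argument (Cauchy--Schwarz over the $S^2$ coefficients, Parseval's identity giving $S\|M\|_2$, and then $\|M\|_2\le\|M\|_1=\Tr(M)$ for $M\succeq 0$) is exactly the paper's proof, and it is correct. Because it works for any Hermitian Hilbert--Schmidt orthonormal basis with constant exactly $1$, you can drop the H\"older/operator-norm route and its Hermiticity concerns altogether.
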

\begin{proof}
Applying the Cauchy-Schwarz inequality over the $S^2$ basis elements, and noting that the sum of squared coefficients is exactly the squared Hilbert-Schmidt norm of $M$, we have:
\begin{equation}
    \sum_{\mu=1}^{S^2} |\Tr(P_\mu M)| \le \sqrt{S^2} \sqrt{\sum_{\mu=1}^{S^2} \Tr(P_\mu M)^2} = S\|M\|_2.
\end{equation}
Because $M \succeq 0$, its eigenvalues are non-negative. Since the 2-norm of the eigenvalues is strictly bounded by their $1$-norm (the trace)
\begin{align}
    \|M\|_2 \le \|M\|_1 = \Tr(M), 
\end{align}
which completes the proof.
\end{proof}

\subsubsection{First action}

We begin by establishing a linear decomposition of the POVM elements associated with each action, which we will subsequently apply to the initial observable operators. 

\begin{lemma}[Linearity of the initial observable operator in the first action]
    \label{lem:linear-first-action}
    Recall the initial observable vector $\mathbf{a}(a_1):=\Tr_S[\mathcal P^{(a_1)}(\rho_1)]$ (and similarly $\mathbf{a}^k(a_1):=\Tr_S[\mathcal P^{(a_1)}(\rho_1^k)]$) from Definition~\ref{def:oom_qhmm}, where $\mathcal P^{(a_1)}$ is the (first-step) instrument associated to action $a_1$ with POVM elements $\{M_{o_1}^{(a_1)}\}_{o_1\in[O]}$. 
    Let $\Delta\rho_1^k\coloneqq\rho_1^k-\rho_1$. Moreover, fixing the Hilbert-Schmidt orthonormal Hermitian basis $\{P_\mu\}_{\mu=1}^{S^2}$ used throughout we have
    \begin{align}
        \|\mathbf{a}^k(a_1)-\mathbf{a}(a_1)\|_1=\sum_{o_1\in[O]}\Big|\sum_{\mu=1}^{S^2}\phi_\mu(a_1,o_1) u_\mu^k\Big| ,
    \end{align}
    where $\phi_\mu(a_1,o_1):=\Tr(P_\mu M_{o_1}^{(a_1)})$ and $u_\mu^k:=\Tr(P_\mu\Delta\rho_1^k)$.
\end{lemma}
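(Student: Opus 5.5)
The identity follows from the definitions by linearity, so the proof is a direct computation. First I make the initial observable vector explicit. In the continuous setting the instruments are L\"uders instruments, so $\Tr[\Phi^{(a_1)}_{o_1}(X)]=\Tr[M^{(a_1)}_{o_1}X]$. This gives
\begin{align}
    \mathbf{a}(a_1)=\Tr_S\big[\mathcal{P}^{(a_1)}(\rho_1)\big]=\sum_{o_1\in[O]}\Tr\big[M^{(a_1)}_{o_1}\rho_1\big]\proj{o_1},
\end{align}
and the analogous formula holds for $\mathbf{a}^k(a_1)$ with $\rho_1^k$ in place of $\rho_1$. The instrument $\mathcal{P}^{(a_1)}$ is fixed and known, and it is the same for the true and the estimated model, so both vectors use the same effects. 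Taking the difference and using linearity of the trace gives the diagonal operator
\begin{align}
    \mathbf{a}^k(a_1)-\mathbf{a}(a_1)=\sum_{o_1}\Tr\big[M^{(a_1)}_{o_1}\Delta\rho_1^k\big]\proj{o_1}.
\end{align}

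Second, I take the trace norm. The operator is diagonal in the computational basis, so its trace norm is the sum of the absolute values of its diagonal entries, $\sum_{o_1}|\Tr[M^{(a_1)}_{o_1}\Delta\rho_1^k]|$.

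Third, I expand in the Hermitian Hilbert–Schmidt orthonormal basis $\{P_\mu\}$ from~\eqref{eq:hs_basis}. Completeness gives $M^{(a_1)}_{o_1}=\sum_\mu \phi_\mu(a_1,o_1)P_\mu$. Substituting this, $\Tr[M^{(a_1)}_{o_1}\Delta\rho_1^k]=\sum_\mu\phi_\mu(a_1,o_1)\Tr[P_\mu\Delta\rho_1^k]=\sum_\mu\phi_\mu(a_1,o_1)u^k_\mu$, which yields the claimed formula.

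There is no real obstacle. The only points to check are these:
\begin{itemize}
    \item Both $\phi_\mu$ and $u^k_\mu$ are real. This holds because $P_\mu$, $M^{(a_1)}_{o_1}$ and $\Delta\rho_1^k$ are all Hermitian, so the expansion is a genuine real inner product. That is the form later needed by Proposition~\ref{prop:eluder_general}.
    \item The estimated model $\omega_k$ uses the same known instrument $\mathcal{P}^{(a_1)}$. This is what lets the action-dependent factor $\phi$ separate cleanly from the environment-dependent factor $u^k$.
\end{itemize}
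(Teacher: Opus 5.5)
Your proposal is correct and follows essentially the same route as the paper: write both initial observable vectors as the diagonal operators $\sum_{o_1}\Tr[M^{(a_1)}_{o_1}\rho]\proj{o_1}$, subtract, read off the trace norm as the $\ell_1$ norm of the diagonal entries, and expand $M^{(a_1)}_{o_1}$ in the Hermitian Hilbert--Schmidt basis. Your side remarks are accurate and are exactly what the later eluder argument uses: $\phi_\mu$ and $u^k_\mu$ are real, and both models share the same known instrument $\mathcal{P}^{(a_1)}$.
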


\begin{proof}
Fix $a_1\in\mathcal A$. By Definition~\ref{def:oom_qhmm}, the reduced observable operator is classical in the outcome register, i.e.
\begin{equation}
    \Tr_S[\mathcal P^{(a_1)}(\rho)] =\sum_{o_1\in[O]}\Tr(M_{o_1}^{(a_1)}\rho)\proj{o_1}, \qquad \forall \rho\in \mathsf{D}(\mathcal{H}_S).
    \label{eq:first-action-reduction}
\end{equation}
Applying~\eqref{eq:first-action-reduction} to $\rho_1^k$ and $\rho_1$ and subtracting gives
\begin{equation}
    \mathbf{a}^k(a_1)-\mathbf{a}(a_1) = \sum_{o_1\in[O]} \Tr(M_{o_1}^{(a_1)} \Delta\rho_1^k) \proj{o_1}.
\label{eq:first-action-diff-diag-proof}
\end{equation}
Since~\eqref{eq:first-action-diff-diag-proof} is diagonal in the $\{|o_1\rangle\}$ basis, its trace norm equals the $\ell_1$-norm of the diagonal entries, hence
\begin{equation}
    \|\mathbf{a}^k(a_1)-\mathbf{a}(a_1)\|_1 =\sum_{o_1\in[O]}\Big|\Tr(M_{o_1}^{(a_1)}\Delta\rho_1^k)\Big|.
\label{eq:first-action-L1-proof}
\end{equation}
Now expand each POVM element in the fixed Hilbert-Schmidt orthonormal Hermitian basis as
\begin{align}
    M_{o_1}^{(a_1)}=\sum_{\mu=1}^{S^2}\phi_\mu(a_1,o_1) P_\mu, 
\end{align}
where $\phi_\mu(a_1,o_1) =\Tr(P_\mu M_{o_1}^{(a_1)})$. 
By linearity of the trace,
\begin{equation}
    \Tr(M_{o_1}^{(a_1)}\Delta\rho_1^k) =\sum_{\mu=1}^{S^2}\phi_\mu(a_1,o_1) \Tr(P_\mu\Delta\rho_1^k) =\sum_{\mu=1}^{S^2}\phi_\mu(a_1,o_1) u_\mu^k.
    \label{eq:first-action-linear-proof}
\end{equation}
Substituting~\eqref{eq:first-action-linear-proof} into~\eqref{eq:first-action-L1-proof} yields
\begin{align}
    \|\mathbf{a}^k(a_1)-\mathbf{a}(a_1)\|_1 =\sum_{o_1\in[O]}\Big|\sum_{\mu=1}^{S^2}\phi_\mu(a_1,o_1) u_\mu^k\Big|,
\end{align}
which is exactly the claimed identity.
\end{proof}

We can now generalize the bound on the historical OOM estimation error  to a bound on the subsequent regret: 

\begin{lemma}[First-action contribution for continuous actions]
\label{lem:first-action-cont}
    Work on the same high-probability event and under the same assumptions as in Lemma~\ref{lem:regret_step1}. Assume moreover that the continuous bound from the MLE holds i.e. 
    \begin{equation}
        \sum_{t=1}^{k-1}\int_{\mathcal A}\pi_1^t(a_1) \|\mathbf{a}^k(a_1)-\mathbf{a}(a_1)\|_1 \mu(\dd a_1)\leq  \zeta_k , \qquad \forall k\in[K],
    \label{eq:assum_first_action}
\end{equation}
with $\zeta_k=O(\sqrt{\beta k})$. Then
\begin{equation}
    \sum_{k=1}^{K}\int_{\mathcal{A}}\pi_1^k(a_1) \|\mathbf{a}^k(a_1)-\mathbf{a}(a_1)\|_1 \mu(\dd a_1) =\widetilde{\mathcal O}\!\Big( S^2\log^2(K) \big(S^2+\sqrt{\beta K}\big) \Big).
\label{eq:first-action-bound-cont}
\end{equation}
\end{lemma}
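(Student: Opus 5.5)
We apply Proposition~\ref{prop:eluder_general} as in the discrete Lemma~\ref{lem:oom_first_action_dicrete}. The roles change: the action-independent quantity is the unknown state error $\Delta\rho_1^k$, and the action-dependent quantity is the known POVM coefficient vector $\phi(a_1,o_1)$. Lemma~\ref{lem:linear-first-action} supplies exactly this factorization, since it writes $\|\mathbf{a}^k(a_1)-\mathbf{a}(a_1)\|_1$ as $\sum_{o_1}|\langle u^k,\phi(a_1,o_1)\rangle|$ with $u^k\in\mathbb{R}^{S^2}$. We therefore take target dimension $d=S^2$. The first space is a singleton $\mathcal{X}=\{*\}$ with $f^k(*)\coloneqq u^k=(\Tr(P_\mu\Delta\rho_1^k))_{\mu}$. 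The second is $\mathcal{Y}=\mathcal{A}\times[O]$ with measure $\mu\otimes\mathrm{count}$ and $g^t(a_1,o_1)\coloneqq \pi_1^t(a_1)\,\phi(a_1,o_1)$. The double integral $\int\!\int|\langle f^k,g^t\rangle|$ then reproduces $\int_{\mathcal{A}}\pi_1^t(a_1)\|\mathbf{a}^k(a_1)-\mathbf{a}(a_1)\|_1\mu(\dd a_1)$ exactly. The hypothesis~\eqref{eq:assum_first_action} gives the precondition with $\zeta_k=\mathcal{O}(\sqrt{\beta k})$.

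Next we verify the two norm conditions. For $R_x$, we have $\|u^k\|_\infty\le\max_\mu\|P_\mu\|_\infty\|\Delta\rho_1^k\|_1\le 2$ by Hölder's inequality, since $\Delta\rho_1^k$ is a difference of two density matrices. Using the crude bound $\|P_\mu\|_\infty\le 1$, this gives $R_x=\mathcal{O}(1)$. For $R_y$, apply Lemma~\ref{lem:basis-coefficient-bound} to each $M^{(a_1)}_{o_1}\succeq 0$:
\begin{equation*}
\sum_{o_1}\|\phi(a_1,o_1)\|_1\le S\sum_{o_1}\Tr(M^{(a_1)}_{o_1})=S\Tr(\mathbb{I}_S)=S^2 .
\end{equation*}
Integrating against the probability density $\pi_1^t$ then gives $R_y=S^2$. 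Proposition~\ref{prop:eluder_general} yields
\begin{equation*}
\mathcal{O}\big(S^2\log^2(K)(R_xR_y+\max_t\zeta_t)\big)=\mathcal{O}\big(S^2\log^2(K)(S^2+\sqrt{\beta K})\big),
\end{equation*}
which is~\eqref{eq:first-action-bound-cont}.

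The main point needing care is the setup of the measure spaces. The eluder lemma applies only if the data functions $g^t$ are allowed to depend on the continuous action. The sum over outcomes $o_1$ must sit inside $\mathcal{Y}$, so that the absolute value is taken per outcome and matches Lemma~\ref{lem:linear-first-action}. We also need $g^t$ to be measurable and integrable, which follows from $\pi_1^t\in L^1(\mu)$ and the boundedness of $\phi$. A second subtle point is that $f^k$ must not depend on $a_1$, so that the effective dimension is $S^2$ and not the cardinality of $\mathcal{A}$. This holds because the instruments are known and identical across models, so only $\rho_1$ differs between $\omega_k$ and $\omega$.
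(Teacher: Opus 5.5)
Your proposal is correct and matches the paper's proof: the same singleton $\mathcal{X}$, the same $\mathcal{Y}=\mathcal{A}\times[O]$ with measure $\mu\otimes\mathrm{count}$, the same choices $f^k(*)=u^k$ and $g^t=\pi_1^t\,\phi$, and the same constants $d=S^2$, $R_x=2$, with $R_y=S^2$ obtained from Lemma~\ref{lem:basis-coefficient-bound}. The only difference is cosmetic: you bound $\|u^k\|_\infty$ by H\"older's inequality with $\|P_\mu\|_\infty\le 1$, while the paper uses $\|u^k\|_\infty\le\|u^k\|_2=\|\Delta\rho_1^k\|_2\le\|\Delta\rho_1^k\|_1\le 2$, and both give the same constant.
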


\begin{proof}

Let $ \mathcal{X} \coloneqq \{\star\}$ (a singleton) equipped with counting measure $\mathrm{count}_{\mathcal{X}}$, and $\mathcal{Y} \coloneqq \mathcal{A}\times[O] $ equipped with  $\nu\coloneqq\mu\otimes \mathrm{count}$. For each $k\in[K]$ define the function $f^k:\mathcal X\to\mathbb R^{S^2}$ by
\begin{align}
    f^{k,\mu}(\star) \coloneqq  u_\mu^k=\Tr(P_\mu\Delta\rho_1^k), \qquad \mu\in[S^2].
\end{align}
For each $t\in[K]$, define $g^t:\mathcal{Y} \to \mathbb{R}^{S^2}$ by
\begin{align}
    g^{t,\mu}(a_1,o_1) \coloneqq  \pi_1^t(a_1)\,\phi_\mu(a_1,o_1), \qquad \mu\in[S^2].
\end{align}
Then, for every $(a_1,o_1)\in\mathcal Y$,
\begin{align}
    \langle f^k(\star),g^t(a_1,o_1)\rangle =\pi_1^t(a_1)\sum_{\mu=1}^{S^2}\phi_\mu(a_1,o_1) u_\mu^k.
\end{align}
Hence, using the previous Lemma~\ref{lem:linear-first-action} we have
\begin{equation}
\begin{aligned}
    & \int_{\mathcal A}\pi_1^t(a_1) \|\mathbf{a}^k(a_1)-\mathbf{a}(a_1)\|_1 \mu(\dd a_1)
    =\int_{\mathcal A}\sum_{o_1\in[O]} \Big|\sum_{\mu=1}^{S^2}\phi_\mu(a_1,o_1) u_\mu^k\Big|\mu(\dd a_1) \\
    &=\int_{\mathcal{A}}\sum_{o_1\in[O]} \big|\langle f^k(\star),g^t(a_1,o_1)\rangle\big| \mu(\dd a_1) =\int_{\mathcal Y}\big|\langle f^k(\star),g^t(y)\rangle\big| \nu(\dd y) .
\label{eq:eluder-form-first-action}
\end{aligned}
\end{equation}
Therefore, Assumption \eqref{eq:assum_first_action} can be rewritten as
\begin{align}
    \sum_{t=1}^{k-1}\int_{\mathcal Y}\big|\langle f^k(\star),g^t(y)\rangle\big| \nu(dy) \leq \zeta_k,
\end{align}
which is exactly the ``mixed term'' condition in the general eluder lemma (Proposition~\ref{prop:eluder_general}).

It remains to bound the corresponding norms of both $f^k,g^t$.
First, since $\rho_1^k$ and $\rho_1$ are density operators, $\Delta\rho_1^k$ is Hermitian and
$\|\Delta\rho_1^k\|_1\le 2$. For a Hilbert-Schmidt orthonormal basis,
$\|u^k\|_2=\|\Delta\rho_1^k\|_2\le \|\Delta\rho_1^k\|_1\le 2$, hence $\|u^k\|_\infty\le 2$, so
\begin{align}
    \sup_{x\in\mathcal X}\|f^k(x)\|_\infty = \|f^k(\star)\|_\infty \leq 2.
\end{align}
Leveraging Lemma~\ref{lem:basis-coefficient-bound} on the POVM elements $M_{o_1}^{(a_1)}$, we bound the integral of $g^t$:
\begin{align}
    \int_{\mathcal Y}\|g^t(y)\|_1 \nu(\dd y) &=\int_{\mathcal{A}} \mu(\dd a_1) \pi_1^t(a_1) \sum_{o_1\in[O]} \sum_{\mu=1}^{S^2} \big|\phi_\mu(a_1,o_1)\big| \\
    &\leq \int_{\mathcal{A}} \mu(\dd a_1) \pi_1^t(a_1)\sum_{o_1\in[O]} S \Tr(M_{o_1}^{(a_1)}) \\
    &= \int_{\mathcal{A}} \mu(\dd a_1) \pi_1^t(a_1) S \Tr(\mathbb{I}_S)  = S^2  . 
\end{align}
Finally we invoke the general eluder lemma (Proposition~\ref{prop:eluder_general}) with dimension $d=S^2$,
envelope parameters $R_x=2$ and $R_y=S^2$, and parameter
$\max_{k\le K}\zeta_k=O(\sqrt{\beta K})$, and this yields the result.
\end{proof}

\subsubsection{Remaining terms}

Analogous to Lemma~\ref{lem:linear-first-action}, we decompose the second contribution to the regret into an action-dependent component and an action-independent component. The key insight is that, because the action set (the POVMs or instruments) is known and identical for both the estimated and true models, the trajectory error can be decoupled. This allows us to isolate the error originating purely from the mismatch in the estimated quantum channels.

\begin{lemma}[OOM Estimation Error Decomposition]\label{lem:quantum_forward_factorization}
Let $\tilde{\rho}_l(\tau_{l-1})$ be the unnormalized filtered memory state defined in~\eqref{eq:unnormalized_filter}, and fix $\{F_\mu\}_{\mu=1}^{S^2}$ as the orthonormal Hilbert-Schmidt basis as in~\eqref{eq:hs_basis}. Because the instruments are known, we can define the following known, action-dependent coefficients:
\begin{align}\label{eq:action_dependent_terms}
    \phi_\mu(a_{l+1}, o_{l+1}) &:= \Tr\big(P_\mu M_{o_{l+1}}^{(a_{l+1})}\big), \\
    c_\lambda(\tau_l) &:= \Tr\big(P_\lambda \Phi_{o_l}^{(a_l)}(\tilde{\rho}_l(\tau_{l-1}))\big).
\end{align}
Then, the scalar physical prediction error for outcome $o_{l+1}$ admits the exact multi-linear decomposition:
\begin{equation} \label{eq:quantum_factorized}
    [\Delta \mathbf{A}_l^k \mathbf{a}_l]_{o_{l+1}} = \sum_{\mu=1}^{S^2} \sum_{\lambda=1}^{S^2} \underbrace{\Tr\big( P_\mu \Delta \mathbb{E}_l^k (P_\lambda) \big)}_{\text{Action-Independent Error}} \cdot \underbrace{\phi_\mu(a_{l+1}, o_{l+1}) c_\lambda(\tau_l)}_{\text{Action-Dependent Data}},
\end{equation}
where $\Delta \mathbf{A}_l^k \coloneqq  \mathbf{A}_l^k - \mathbf{A}_l$ is the observable operator error, and $\Delta\mathbb{E}_l^k \coloneqq \mathbb{E}_l^k - \mathbb{E}_l$ is the underlying channel error, which is completely independent of the chosen actions.
\end{lemma}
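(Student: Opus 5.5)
We prove the identity by unwinding the OOM operators and exploiting the cancellation between the recovery map and the instrument, so that the only remaining difference between the true and estimated models is the channel mismatch $\Delta\mathbb{E}_l^k$. The statement relies on a structural point: in the OMLE model class the instruments $\mathcal{P}^{(a)}$ are known, hence shared by $\omega$ and $\omega_k$. We may therefore take the same recovery maps $R^{(a)}$ in both $\mathbf{A}_l$ and $\mathbf{A}_l^k$.

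\emph{Step 1 (reduce to a channel difference).} By the recursion \eqref{eq:x_recursion} in the proof of Lemma~\ref{lem:probabilities_operator_multiquantum}, the vector $\mathbf{a}_l=\mathbf{a}(\tau_{l-1},a_l)$ of the true model equals $\Tr_S[\mathcal{P}^{(a_l)}(\tilde\rho_l(\tau_{l-1}))]$. Assumption~\ref{ass:undercomplete} then gives $R^{(a_l)}(\mathbf{a}_l)=\mathcal{P}^{(a_l)}(\tilde\rho_l)$. Applying $\mathbb{E}\otimes\mathcal{T}_{o_l}$ and using $(\mathrm{id}_S\otimes\mathcal{T}_{o_l})\circ\mathcal{P}^{(a_l)}=\Phi^{(a_l)}_{o_l}$, we get
\[
\mathbf{A}^k_l\mathbf{a}_l-\mathbf{A}_l\mathbf{a}_l=\Tr_S\circ\mathcal{P}^{(a_{l+1})}\big(\Delta\mathbb{E}_l^k(\Phi^{(a_l)}_{o_l}(\tilde\rho_l(\tau_{l-1})))\big).
\]
Here we use only that $\mathbf{A}_l$ and $\mathbf{A}_l^k$ share the same $R^{(a_l)}$ and $\mathcal{P}^{(a_{l+1})}$, together with linearity in the channel slot. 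Importantly, the filtered state $\tilde\rho_l$ is the \emph{true} one, because $\mathbf{a}_l$ is built from the true OOM. No recovery identity for $\omega_k$ is required.

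\emph{Step 2 (extract the $o_{l+1}$ entry).} For any $Z$, the $o_{l+1}$ diagonal entry of $\Tr_S\mathcal{P}^{(a)}(Z)$ is $\Tr(\Phi^{(a)}_{o_{l+1}}(Z))=\Tr(M^{(a)}_{o_{l+1}}Z)$. Since both sides are linear in $Z$, this identity extends from PSD inputs to all Hermitian $Z$. We obtain $[\Delta\mathbf{A}_l^k\mathbf{a}_l]_{o_{l+1}}=\Tr\big(M^{(a_{l+1})}_{o_{l+1}}\,\Delta\mathbb{E}_l^k(W)\big)$ with $W:=\Phi^{(a_l)}_{o_l}(\tilde\rho_l(\tau_{l-1}))$.

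\emph{Step 3 (expand in the basis).} We insert the Hermitian orthonormal resolution $\{P_\mu\}$ twice. Writing $W=\sum_\lambda \Tr(P_\lambda W)P_\lambda=\sum_\lambda c_\lambda(\tau_l)P_\lambda$ and using linearity of $\Delta\mathbb{E}_l^k$ gives $\Delta\mathbb{E}_l^k(W)=\sum_\lambda c_\lambda\,\Delta\mathbb{E}_l^k(P_\lambda)$. Writing $M^{(a_{l+1})}_{o_{l+1}}=\sum_\mu\phi_\mu(a_{l+1},o_{l+1})P_\mu$, which has real coefficients because both operators are Hermitian, and using bilinearity of $(X,Y)\mapsto\Tr(XY)$ then yields exactly \eqref{eq:quantum_factorized}. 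The factor $\Tr(P_\mu\Delta\mathbb{E}_l^k(P_\lambda))$ involves no action labels, so it is the claimed action-independent error.

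\emph{Main obstacle.} The delicate point is Step 1, which requires the recovery maps (and instruments) used in $\mathbf{A}_l^k$ to coincide with those of the true model. If $\omega_k$ were allowed different instruments, an extra term $\Delta\mathcal{P}$ would appear. We resolve this by appealing to the setup of Section~\ref{sec:continous_model}: the Lüders instruments are fixed by the known POVMs, so the only free parameters are $\rho_1$ and the $\mathbb{E}_l$, and we may choose $R^{(a)}$ identically for both models.
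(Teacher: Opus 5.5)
Your proposal is correct and follows essentially the same route as the paper's proof. You lift the true $\mathbf{a}_l$ via $R^{(a_l)}$ to $\mathcal{P}^{(a_l)}(\tilde\rho_l(\tau_{l-1}))$, use that the known instruments and recovery maps are shared so only $\Delta\mathbb{E}_l^k$ survives, read off the $o_{l+1}$ entry as $\Tr\big(M^{(a_{l+1})}_{o_{l+1}}\Delta\mathbb{E}_l^k(\Phi^{(a_l)}_{o_l}(\tilde\rho_l(\tau_{l-1})))\big)$, and expand both operators in the Hermitian Hilbert--Schmidt basis; your extra care about extending the Born-rule identity to Hermitian inputs and about using the true filtered state only makes explicit what the paper leaves implicit.
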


\begin{proof}
Recall that the OOM operator is given by
\begin{align}
    & \mathbf{A}_l  =\mathbf{A}(o_l, a_l, a_{l+1}) = \Tr_S \circ \mathcal{P}^{(a_{l+1})} \circ (\mathbb{E}_l \otimes \mathcal{T}_{o_l}) \circ R^{(a_l)}, \\ 
    &  \mathbf{a}_l  = \mathbf{a}(\tau_{l-1},a_l)  = \mathbf{A}_{l-1:1} \mathbf{a}(a_1) =  \mathbf{A}_l\cdots \mathbf{A}_1 (\Tr_S\circ\mathcal{P}^{(a_1)})\!(\rho_1). 
\end{align}
Because the instruments are known, the estimated operator $\mathbf{A}_l^k$ differs from the true operator only in the channel $\mathbb{E}_l^k$. 
Using the undercomplete assumption (Assumption~\ref{ass:undercomplete}), the true classical outcome state at step $l$ is:
\begin{equation}
\begin{aligned}
    \mathbf{a}_l = \big(\Tr_S \circ \mathcal{P}^{(a_l)} \circ \mathbb{E}_{l-1}\circ \Phi^{(a_{l-1})}_{o_{l-1}} \circ \dots \circ \mathbb{E}_{1}\circ\Phi^{(a_{1})}_{o_{1}}\big)\!(\rho_1) = \Tr_S \big[\mathcal{P}^{(a_l)}(\tilde{\rho}_l(\tau_{l-1}))\big] . 
\end{aligned}
\end{equation}
By the undercomplete assumption (Assumption~\ref{ass:undercomplete}), applying the recovery map perfectly lifts this observable state back to the quantum memory as
\begin{equation}
    R^{(a_l)}\big(\mathbf{a}_l\big) = \mathcal{P}^{(a_l)}(\tilde{\rho}_l(\tau_{l-1})).
\end{equation}
Applying the superoperator $(\mathbb{E}_l^k \otimes \mathcal{T}_{o_l})$ to this state extracts the specific $o_l$-th component and applies the estimated channel, yielding exactly $\mathbb{E}_l^k \big( \Phi_{o_l}^{(a_l)}(\tilde{\rho}_l(\tau_{l-1})) \big)$. Finally, applying the subsequent instrument $\Tr_S \circ \mathcal{P}^{(a_{l+1})}$ maps this back to the classical probability distribution over $o_{l+1}$. 

Subtracting the true operator $\mathbf{A}_l$, and leveraging the fact that the instrument $\mathcal{P}^{(a_{l+1})}$ and the recovery map $R^{(a_l)}$ are identical for both the estimated and true models, we obtain the scalar error:
\begin{equation}
    [\Delta \mathbf{A}_l^k \mathbf{a}_l]_{o_{l+1}} = \Tr\left( M_{(o_{l+1})}^{a_{l+1}} (\mathbb{E}_l^k - \mathbb{E}_l)\!\Big( \Phi_{o_l}^{(a_l)}(\tilde{\rho}_l(\tau_{l-1})) \Big) \right).
\end{equation}
Expanding the operators $M_{(o_{l+1})}^{a_{l+1}}$ and $\Phi_{o_l}^{(a_l)}(\tilde{\rho}_l(\tau_{l-1}))$ in the orthonormal Hilbert-Schmidt basis $P_\mu$ and factoring out the trace directly yields the multi-linear sum in~\eqref{eq:quantum_factorized}. 
\end{proof}

Following the same logic used for the first action, we now bound the contribution of the OOM operator errors for all subsequent steps $l \in [L-1]$. We first utilize the multi-linear structure of the OOM error derived in Lemma~\ref{lem:quantum_forward_factorization}.

\begin{lemma}[OOM estimation error contribution: continuous actions]\label{lem:oom_error_continuous}
Fix $l \in [L-1]$. Let $\dd\tilde{\nu}_{l+1} = \mu^{\otimes (l+1)}(\dd a_{1:l+1}) \times \text{count}^{\otimes l}(\dd o_{1:l})$ be the reference measure over trajectories up to the $(l+1)$-th action. Assume that for all $k \in [K]$, the cumulative local observable error at step $l$ satisfies:
\begin{equation} \label{eq:local_observable_error_assumption}
    \sum_{t=1}^{k-1} \int_{(\mathcal{A} \times [O])^l \times \mathcal{A}} \pi^t(\tau_{l+1}) \big\| \big( \mathbf{A}_l^k - \mathbf{A}_l \big) \mathbf{a}_l \big\|_1 \dd \tilde{\nu}_{l+1} \le \zeta_k,
\end{equation}
with $\zeta_k = \mathcal{O}(S^2 \kappa_{\mathrm{uc}} \sqrt{\beta k})$. Then we can bound the cumulative contribution to the regret as:
\begin{equation}
    \sum_{k=1}^{K} \int_{(\mathcal{A} \times [O])^l \times \mathcal{A}} \pi^k(\tau_{l+1}) \big\| \big( \mathbf{A}_l^k - \mathbf{A}_l \big) \mathbf{a}_l\big\|_1 \dd\tilde{\nu}_{l+1} = \widetilde{\mathcal{O}}(S^6 \kappa_{\mathrm{uc}} \sqrt{\beta K}).
\end{equation}
\end{lemma}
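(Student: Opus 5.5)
The argument mirrors Lemma~\ref{lem:first-action-cont}, using the multi-linear factorization of Lemma~\ref{lem:quantum_forward_factorization} to cast the step-$l$ error in the inner-product form required by Proposition~\ref{prop:eluder_general}. The ambient dimension will be $d=S^4$, indexed by pairs $(\mu,\lambda)\in[S^2]^2$.

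\textbf{Eluder instance.} Take $\mathcal{X}\coloneqq\{\star\}$ with counting measure. Take $\mathcal{Y}\coloneqq(\mathcal{A}\times[O])^{l+1}$, i.e.\ trajectories $\tau_{l+1}=(\tau_l,a_{l+1},o_{l+1})$, with measure $\nu_{l+1}=\tilde\nu_{l+1}\otimes\mathrm{count}$, so that the extra outcome $o_{l+1}$ is summed. Define
\begin{align}
    f^{k,(\mu,\lambda)}(\star)&\coloneqq \Tr\big(P_\mu\,\Delta\mathbb{E}_l^k(P_\lambda)\big), \\
    g^{t,(\mu,\lambda)}(\tau_{l+1})&\coloneqq \pi^t(\tau_{l+1})\,\phi_\mu(a_{l+1},o_{l+1})\,c_\lambda(\tau_l).
\end{align}
Since $\Delta\mathbf{A}_l^k\mathbf{a}_l$ is diagonal, its trace norm equals $\sum_{o_{l+1}}\big|[\Delta\mathbf{A}_l^k\mathbf{a}_l]_{o_{l+1}}\big|$. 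By \eqref{eq:quantum_factorized}, each summand is $\big|\langle f^k(\star),g^t(\tau_{l+1})\rangle\big|$ up to the factor $\pi^t$. Integrating over $\mathcal{Y}$ therefore reproduces exactly the quantity in \eqref{eq:local_observable_error_assumption}. The mixed-term hypothesis of Proposition~\ref{prop:eluder_general} is then the assumption itself, with $\zeta_k=\mathcal{O}(S^2\kappa_{\mathrm{uc}}\sqrt{\beta k})$.

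\textbf{Envelope for $f$.} The bound is
\begin{align}
    \big|\Tr\big(P_\mu\Delta\mathbb{E}_l^k(P_\lambda)\big)\big|\le \|P_\mu\|_\infty\,\|\Delta\mathbb{E}_l^k(P_\lambda)\|_1\le \|P_\mu\|_\infty\cdot 2\|P_\lambda\|_1 .
\end{align}
The last step uses that a difference of two CPTP maps has $1\to1$ norm at most $2$ on Hermitian inputs. With the generalized Pauli basis, $\|P_\mu\|_\infty=1/\sqrt{S}$ and $\|P_\lambda\|_1\le\sqrt S$, which gives $R_x=2$. If one prefers a basis-agnostic crude bound, $\|P\|_\infty,\|P\|_1\le\sqrt S$ gives $R_x=2S$.

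\textbf{Envelope for $g$ (main obstacle).} The coefficients factorize, so
\begin{align}
    \sum_{\mu,\lambda}|g^{t,(\mu,\lambda)}|=\pi^t(\tau_{l+1})\Big(\sum_\mu|\phi_\mu(a_{l+1},o_{l+1})|\Big)\Big(\sum_\lambda|c_\lambda(\tau_l)|\Big).
\end{align}
Apply Lemma~\ref{lem:basis-coefficient-bound} to the PSD operators $M^{(a_{l+1})}_{o_{l+1}}$ and $\Phi^{(a_l)}_{o_l}(\tilde\rho_l(\tau_{l-1}))$. Summing over $o_{l+1}$ gives $\sum_{o_{l+1}}S\Tr M_{o_{l+1}}^{(a_{l+1})}=S^2$. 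The second factor is at most $S\,\Tr\Phi^{(a_l)}_{o_l}(\tilde\rho_l)=S\,A_\omega(\tau_l)$. The remaining integral $\int\pi^t(\tau_{l+1})A_\omega(\tau_l)\,\dd\tilde\nu_{l+1}$ should be handled as in Step~2 of Lemma~\ref{lem:oom_error_discrete}. Integrating out $a_{l+1}$ against the policy kernel gives $1$, leaving $\int\mathbb{P}^{\pi^t}_\omega(\tau_l)=1$. This yields $R_y=S^3$. The care needed here is to ensure the policy densities are integrated in the right order so that no unbounded density factor survives.

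\textbf{Conclusion.} Invoke Proposition~\ref{prop:eluder_general} with $d=S^4$, $R_xR_y=\mathcal{O}(S^3)$ (or $S^4$) and $\max_k\zeta_k=\mathcal{O}(S^2\kappa_{\mathrm{uc}}\sqrt{\beta K})$. The result is
\begin{align}
    \mathcal{O}\big(S^4\log^2K\,(S^4+S^2\kappa_{\mathrm{uc}}\sqrt{\beta K})\big)=\widetilde{\mathcal{O}}\big(S^6\kappa_{\mathrm{uc}}\sqrt{\beta K}\big),
\end{align}
where the term independent of $K$ is absorbed. Because the $(\mu,\lambda)$ structure already integrates over all actions, there is no per-tuple splitting as in the discrete case, and hence no $A$ or $O$ dependence.
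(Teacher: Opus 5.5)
Your proposal is correct and is essentially the paper's own proof. It uses the same singleton $\mathcal{X}$, the same enlarged trajectory space $\mathcal{Y}=(\mathcal{A}\times[O])^{l+1}$ that absorbs the sum over $o_{l+1}$, the same $S^4$-dimensional pair $f^k,g^t$ from Lemma~\ref{lem:quantum_forward_factorization}, and the same envelopes $R_x=2$, $R_y=S^3$ fed into Proposition~\ref{prop:eluder_general}. The differences are cosmetic: you integrate $\pi^t(\tau_{l+1})A_\omega(\tau_l)$ directly instead of first summing over $o_l$ via trace preservation, and your basis-agnostic fallback $R_x=2S$ only inflates the $K$-independent term, which is still absorbed.
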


\begin{proof}
We apply the general eluder lemma (Proposition~\ref{prop:eluder_general}) to the multi-linear factorization derived in Lemma~\ref{lem:quantum_forward_factorization}. Let the discrete index space be a singleton $\mathcal{X} := \{*\}$ equipped with the counting measure. 
To bound the integral of the $\ell_1$ norm, we absorb the norm's internal sum over $o_{l+1}$ directly into the eluder integration domain. Thus, we define the expanded trajectory space $\mathcal{Y} \coloneqq (\mathcal{A} \times [O])^{l+1}$ equipped with the measure $\dd\nu_{\mathcal{Y}} = \dd\tilde{\nu}_{l+1} \times \text{count}(\dd o_{l+1})$. We define the target dimension $d = S^4$, mapped to the index pair $(\mu, \lambda) \in [S^2] \times [S^2]$. 
Using~\eqref{eq:quantum_factorized}, we define the functions $f^k: \mathcal{X} \to \mathbb{R}^{S^4}$ and $g^t: \mathcal{Y} \to \mathbb{R}^{S^4}$ as:
\begin{align}
    f^k(*)_{(\mu, \lambda)} & \coloneqq  \Tr\big(P_\mu \Delta\mathbb{E}_l^k(P_\lambda)\big), \\
    g^t(\tau_{l+1})_{(\mu, \lambda)} &:= \pi^t(\tau_{l+1}) \phi_\mu(a_{l+1}, o_{l+1}) c_\lambda(\tau_l).
\end{align}
Their inner product perfectly reconstructs the scalar prediction error for a specific outcome $o_{l+1}$:
\begin{equation}
    \langle f^k(*), g^t(\tau_{l+1}) \rangle = \pi^t(\tau_{l+1}) [\Delta \mathbf{A}_l^k \mathbf{a}_l]_{o_{l+1}}.
\end{equation}
Consequently, integrating the absolute value of this inner product over $\mathcal{Y}$ exactly reproduces the integral of the $\ell_1$ norm over the restricted space $(\mathcal{A} \times [O])^l \times \mathcal{A}$. We now derive the required norm bounds for the eluder lemma.

\textbf{Step 1: Norm condition for $f^k$ ($R_x$).} 
By H\"older inequality, $|\Tr(AB)| \le \|A\|_\infty \|B\|_1$. For the basis elements, we have $\|P_\mu\|_\infty =\frac{1}{\sqrt{S}}$ for the generalized Pauli basis. The channel error norm is bounded as $\|\Delta\mathbb{E}_l^k\|_{1 \to 1} \le 2$. Applying Cauchy-Schwarz inequality to the $S \times S$ matrices yields $\|P_\lambda\|_1 \le \sqrt{S}\|P_\lambda\|_2 = \sqrt{S}$. Because $\mathcal{X}$ is a singleton, this directly bounds the $\ell_\infty$ norm, yielding the uniform bound
\begin{equation}
    \int_{\mathcal{X}} \|f^k(x)\|_\infty =|f^k(*)_{(\mu, \lambda)}| \le \|P_\mu\|_\infty \|\Delta\mathbb{E}_l^k(P_\lambda)\|_1 \le \frac{1}{\sqrt{S}} \cdot 2 \cdot \sqrt{S} = 2.
\end{equation}
Therefore, 
\begin{align}
    R_x  = 2. 
\end{align}

\textbf{Step 2: Norm condition for $g^t$ ($R_y$).} 
We integrate the $\ell_1$ norm of $g^t$ over the expanded space $\mathcal{Y}$. Applying Lemma~\ref{lem:basis-coefficient-bound} to the measurement features, we have $\sum_{\mu}|\phi_\mu(a_{l+1},o_{l+1})| \le S\Tr(M_{o_{l+1}}^{(a_{l+1})})$. Summing over the outcomes yields $S \Tr(\sum_{o_{l+1}} M_{o_{l+1}}^{(a_{l+1})}) = S\Tr(\mathbb{I}_S) = S^2$. 
Similarly, applying Lemma~\ref{lem:basis-coefficient-bound} to the action-dependent terms from~\eqref{eq:action_dependent_terms} gives:
\begin{equation}
    \sum_{\lambda=1}^{S^2} |c_\lambda(\tau_l)| \le S \big\| \Phi_{o_l}^{(a_l)}(\tilde{\rho}_l(\tau_{l-1})) \big\|_1 = S \Tr\big( \Phi_{o_l}^{(a_l)}(\tilde{\rho}_l(\tau_{l-1})) \big).
\end{equation}
By factoring out the uniform bound $S^2$ from the measurement features, we integrate $\|g^t(y)\|_1$ over $\dd\nu_{\mathcal{Y}}$ by unrolling the trajectory step-by-step:
\begin{equation}
\begin{aligned}
    \int_{\mathcal{Y}} \|g^t(y)\|_1 \dd\nu_{\mathcal{Y}} &\le \int \dd\tilde{\nu}_{l} \pi^t(\tau_{l})  \sum_{\lambda=1}^{S^2} |c_\lambda|   \int \mu(\dd a_{l+1}) \pi_{l+1}^t(a_{l+1}|\tau_l) \sum_{o_{l+1}}  \sum_{\mu=1}^{S^2} |\phi_\mu|  \\
    &\le S^2 \int \dd\tilde{\nu}_{l-1} \pi^t(\tau_{l-1}) \int \mu(\dd a_l) \pi_l^t(a_l|\tau_{l-1}) \sum_{o_l} S \Tr\big( \Phi_{o_l}^{(a_l)}(\tilde{\rho}_l(\tau_{l-1})) \big)  \\
    &= S^3 \int \dd\tilde{\nu}_{l-1} \pi^t(\tau_{l-1}) \Tr(\tilde{\rho}_l(\tau_{l-1})) = S^3.
\end{aligned}
\end{equation}
Here we utilized the trace-preserving property of the instrument (summing over $o_l$) and the fact that the policy density integrates to 1 over $a_l$. Finally, because $\Tr(\tilde{\rho}_l(\tau_{l-1}))$ is exactly the observation probability $A_\omega(\tau_{l-1})$, the integrand $\pi^t(\tau_{l-1})A_\omega(\tau_{l-1})$ is precisely the true trajectory density $p_\omega^{\pi^t}(\tau_{l-1})$, which integrates to 1. Thus,
\begin{align}
    R_y = S^3. 
\end{align}

\textbf{Step 3: Conclusion.} 
By invoking the general eluder lemma (Proposition~\ref{prop:eluder_general}) with dimension $d = S^4$ and the assumption in~\eqref{eq:local_observable_error_assumption}, we obtain:
\begin{align}
    \sum_{k=1}^K \int_{\mathcal{Y}} \big|\langle f^k(*), g^k(\tau_{l+1}) \rangle\big| d\nu_{\mathcal{Y}} &= \mathcal{O}\left( S^4 \log^2(K) (R_x R_y + \zeta_K) \right) = \widetilde{\mathcal{O}}(S^6 \kappa_{\mathrm{uc}} \sqrt{\beta K}).
\end{align}
\end{proof}

\subsection{Cumulative regret bound for continuous actions}

We now combine the continuous first-action bound and the OOM estimation error bounds to establish the final regret scaling for the continuous-action setting. We start by explicitly verifying that the cumulative prediction error assumption ($\zeta_k$) required by the Eluder lemmas holds with high probability.

\begin{theorem}[Regret bound for continuous actions]
\label{thm:continuous_regret}
Fix $\delta \in (0,1)$ and assume the rewards are uniformly bounded $|r_l(a_l, o_l)| \le R$ for all $l \in [L], a_l \in \mathcal{A}$, and $o_l \in [O]$. Let the true environment satisfy $\omega_L \in \Omega_{\mathrm{uc}}$ with robustness constant $\kappa_{\mathrm{uc}}$. Run Algorithm~\ref{alg:omle-qhmm} with continuous POVM action spaces and confidence radius $\beta$ chosen as in~\eqref{eq:beta-cont}:
\begin{equation}
    \beta \coloneqq c\Big( d_\omega L \log\!\big(K  O\big) + \log(K/\delta) \Big),
\end{equation}
where $d_\omega$ is the continuous effective embedding dimension. Then, with probability at least $1-\delta$,
\begin{equation}
    \regret(K) \le \widetilde{\mathcal{O}}\left( R S^8 \kappa_{\mathrm{uc}}^2 L^2 \sqrt{K \beta} \right).
\end{equation}
In particular, by substituting the explicit form of $\beta$ and absorbing polylogarithmic factors, we obtain
\begin{equation}
    \regret(K) \le \widetilde{\mathcal{O}}\Big( R S^{10} \kappa_{\mathrm{uc}}^{2} L^{2} \sqrt{K(L^2 +LO)} \Big).
\end{equation}
\end{theorem}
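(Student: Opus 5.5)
The plan mirrors the proof of Theorem~\ref{thm:discrete_regret}, with the continuous-action lemmas swapped in. I would first fix the high-probability event $\mathcal{E}$ on which Propositions~\ref{prop:lr-cont} and~\ref{prop:l1-cont} hold simultaneously for all $t\le K$, with radius $\beta$ as in~\eqref{eq:beta-cont}; a union bound splitting $\delta$ between the two costs only constants. On $\mathcal{E}$, Proposition~\ref{prop:lr-cont} gives $\omega\in\mathcal{C}_k$ for every $k$, so optimism applies. Corollary~\ref{cor:lem52-cont} then splits the regret into two pieces, each with prefactor $S^2\kappa_{\mathrm{uc}}LR$: a first-action term and a sum over $l\in[L-1]$ of OOM-error terms.

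The next step is to verify the $\zeta_k$ hypotheses of Lemmas~\ref{lem:first-action-cont} and~\ref{lem:oom_error_continuous}. Since $\omega_k\in\mathcal{C}_t$ for all $t\le k$, Proposition~\ref{prop:l1-cont} together with the definition of $\mathcal{C}_t$ bounds $\sum_{t<k}\|\mathbb{P}^{\pi^t}_{\omega_k}-\mathbb{P}^{\pi^t}_{\omega}\|_1^2=\mathcal{O}(\beta)$. Cauchy--Schwarz then gives $\mathcal{O}(\sqrt{\beta k})$. Marginalising onto length-$l$ prefixes and summing over $o_l$ to form trace norms, as in the discrete case, yields~\eqref{eq:guarantee_cont} and~\eqref{eq:initial_estimate_error_cont}. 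The latter is exactly the hypothesis of Lemma~\ref{lem:first-action-cont}.

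For the OOM term, I would use the triangle-inequality split from~\eqref{eq:promise}:
\begin{equation*}
\|(\mathbf{A}^k_l-\mathbf{A}_l)\mathbf{a}_l\|_1\le\|\mathbf{A}^k_l(\mathbf{a}^k_l-\mathbf{a}_l)\|_1+\|\mathbf{A}^k_l\mathbf{a}^k_l-\mathbf{A}_l\mathbf{a}_l\|_1 .
\end{equation*}
The second summand is controlled directly by~\eqref{eq:guarantee_cont} at step $l+1$. For the first, I would apply the continuous analogue of Lemma~\ref{lem:one_norm_op} with a single-step horizon, which integrates over $a_{l+1}$ and sums over $o_l$. That gives a factor $S^2\kappa_{\mathrm{uc}}$ times the step-$l$ quantity in~\eqref{eq:guarantee_cont}. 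Together these give $\zeta_k=\mathcal{O}(S^2\kappa_{\mathrm{uc}}\sqrt{\beta k})$, which is the hypothesis of Lemma~\ref{lem:oom_error_continuous}.

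Invoking the two lemmas, the first-action term is $\widetilde{\mathcal{O}}(S^2\sqrt{\beta K})$. Each OOM term is $\widetilde{\mathcal{O}}(S^6\kappa_{\mathrm{uc}}\sqrt{\beta K})$, uniformly in $l$, so summing over $L-1$ steps costs a factor $L$. Multiplying by the prefactor $S^2\kappa_{\mathrm{uc}}LR$ gives $\widetilde{\mathcal{O}}(RS^8\kappa_{\mathrm{uc}}^2L^2\sqrt{K\beta})$. The OOM term dominates because $S^6\ge S^2$ and $\kappa_{\mathrm{uc}}\ge1$ can be assumed without loss of generality.

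To get the explicit bound, I would substitute $\beta$ with $d_\omega=LS^4+d_{\mathcal A}S^2+S^2$ and $d_{\mathcal A}\le(O-1)S^2$. This gives $\beta=\widetilde{\mathcal{O}}(S^4(L^2+LO))$, so $\sqrt{\beta}=\widetilde{\mathcal{O}}(S^2\sqrt{L^2+LO})$, which produces the stated $S^{10}$ form.

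I expect the main obstacle to be the $\zeta_k$ verification for the OOM term. The continuous version of Lemma~\ref{lem:one_norm_op} has to hold at the intermediate step with the right measure: the $o_l$-sum and the $a_{l+1}$-integral must be absorbed correctly. This is also where the $S^2\kappa_{\mathrm{uc}}$ factor enters and later squares to $\kappa_{\mathrm{uc}}^2$. I would handle it by checking that the Jordan-decomposition and trace-telescoping argument of Step~3 goes through with $\pi^t_{l+1}(a_{l+1}|\tau_l)\mu(\dd a_{l+1})$ in place of discrete sums. The remaining steps are measurability and Fubini bookkeeping.
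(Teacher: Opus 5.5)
Your proposal is correct and follows essentially the same route as the paper: fix the event of Propositions~\ref{prop:lr-cont}--\ref{prop:l1-cont}, apply optimism and Corollary~\ref{cor:lem52-cont}, verify $\zeta_k=\mathcal{O}(S^2\kappa_{\mathrm{uc}}\sqrt{\beta k})$ via the triangle-inequality split of~\eqref{eq:promise} together with a one-step use of Lemma~\ref{lem:one_norm_op}, then invoke Lemmas~\ref{lem:first-action-cont} and~\ref{lem:oom_error_continuous}, sum over $l$, and substitute $\beta$. One small point: $\kappa_{\mathrm{uc}}\ge 1$ holds automatically rather than as a WLOG assumption, because $R^{(a)}$ maps the probability vector $\Tr_S\mathcal{P}^{(a)}(\rho)$ to the unit-trace-norm state $\mathcal{P}^{(a)}(\rho)$.
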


\begin{proof}
Let $\mathcal{E}$ be the high-probability event upon which the continuous likelihood-ratio and trajectory $\ell_1$ concentration bounds (Propositions~\ref{prop:lr-cont} and~\ref{prop:l1-cont}) hold simultaneously for all $t \in [K]$ with radius $\beta$. On this event, the true model $\omega$ remains in the confidence set, enabling the standard optimistic inequality $V^{\pi^k}(\omega) \le V^{\pi^k}(\omega_k)$. 

\paragraph{Step 1: Verifying the past error bound ($\zeta_k$).}
By Proposition~\ref{prop:l1-cont}, the empirical sum of the squared trajectory $\ell_1$ distances is bounded by $\mathcal{O}(\beta)$. Applying the Cauchy-Schwarz inequality over the $k-1$ steps yields the global trajectory bound:
\begin{equation} \label{eq:global_traj_bound}
    \sum_{t=1}^{k-1} \big\| \mathbb{P}^{\pi^t}_{\omega_k} - \mathbb{P}^{\pi^t}_\omega \big\|_1 \le \mathcal{O}\left( \sqrt{\beta k} \right).
\end{equation}
To extract the local error at step $l$, we follow the exact same reduction used in the discrete setting (specifically, the derivation leading to~\eqref{eq:promise}). By telescoping the OOM operator products and marginalizing over the future trajectory variables (the subsequent actions and outcomes), we isolate the step-$l$ error. Applying the continuous-action analogue of Lemma~\ref{lem:one_norm_op}, the operator norm of the remaining future sequence is uniformly bounded by $S^{\frac{5}{2}} \kappa_{\mathrm{uc}}$. Therefore, pushing the global $\ell_1$ trajectory bound through this future sequence yields:
\begin{equation}
    \sum_{t=1}^{k-1} \int_{(\mathcal{A} \times [O])^l \times \mathcal{A}} \pi^t(\tau_{l+1}) \big\| \big(\mathbf{A}_l^k - \mathbf{A}_l\big) \mathbf{a}_l\big\|_1 \dd \nu_{l+1} \le \mathcal{O}\left( S^2 \kappa_{\mathrm{uc}} \sqrt{\beta k} \right).
\end{equation}
This explicitly verifies the assumption in~\eqref{eq:local_observable_error_assumption} for Lemma~\ref{lem:oom_error_continuous} (and similarly for the first-action Lemma~\ref{lem:first-action-cont}) with $\zeta_k = \mathcal{O}(S^2 \kappa_{\mathrm{uc}} \sqrt{\beta k})$.

\paragraph{Step 2: Bounding the total regret.}
Using Corollary~\ref{cor:lem52-cont}, the cumulative regret is bounded by the sum of the OOM estimation errors:
\begin{equation}
\begin{aligned} \label{eq:regret_sum_cont}
    \regret(K) &\le S^2\kappa_{\mathrm{uc}} L R \sum_{k=1}^K \int_{\mathcal{A}} \pi_1^k(a_1) \big\| \mathbf{a}^k(a_1) - \mathbf{a}(a_1) \big\|_1 \mu(\dd a_1) \\
    &\quad + S^2\kappa_{\mathrm{uc}} L R  \sum_{l=1}^{L-1} \sum_{k=1}^K \int_{(\mathcal{A} \times [O])^l \times \mathcal{A}} \pi^k(\tau_{l+1}) \big\| \big(\mathbf{A}_l^k - \mathbf{A}_l\big) \mathbf{a}_l \big\|_1 \dd \nu_{l+1}.
\end{aligned}
\end{equation}
Now we apply Lemma~\ref{lem:first-action-cont} to bound the first-action contribution by $\widetilde{\mathcal{O}}(S^2 \sqrt{\beta K})$. 
Similarly, applying Lemma~\ref{lem:oom_error_continuous} to each of the $L-1$ remaining terms yields a bound of $\widetilde{\mathcal{O}}(S^6 \kappa_{\mathrm{uc}} \sqrt{\beta K})$ per step.

Substituting these bounds back into~\eqref{eq:regret_sum_cont}:
\begin{align}
    \regret(K) \leq S^2\kappa_{\mathrm{uc}} L R \left[ \widetilde{\mathcal{O}}\big(S^2 \sqrt{\beta K}\big) + (L-1) \widetilde{\mathcal{O}}\big(S^6 \kappa_{\mathrm{uc}} \sqrt{\beta K}\big) \right]. 
\end{align}
This completes the proof.
\end{proof}

\section{Information-Theoretic Lower Bounds}
\label{sec:lower_bounds}

In this section, we establish information-theoretic lower bounds for our input-output QHMM learning framework. We prove that the $\widetilde{\mathcal{O}}(\sqrt{K})$ scaling achieved by OMLE is statistically optimal in the number of episodes, and that the polynomial dependence on the robustness constant $\kappa_{\mathrm{uc}}$ is unavoidable. We achieve this by reducing the problem to the Multi-Armed Quantum Bandit (MAQB) framework~\cite{lumbreras2022multi} and by embedding classical POMDP hardness results.

While an $\Omega(\sqrt{K})$ regret lower bound could be trivially obtained by restricting the latent memory to diagonal states and embedding a classical Multi-Armed Bandit, such a reduction bypasses the quantum nature of our model class. Specifically, for a fully quantum latent memory ($S \ge 2$), the undercomplete assumption is satisfied when the measurement instruments are informationally complete to guarantee the existence of a valid recovery map. To demonstrate that the $\widetilde{\mathcal{O}}(\sqrt{K})$ regret scaling remains strictly optimal even under these quantum requirements, we instead introduce a reduction to the MAQB problem explicitly utilizing SIC-POVMs.

\subsection{The MAQB Framework and SIC-POVMs}

The MAQB problem is a sequential decision-making model where a learner interacts with an unknown, stationary quantum state $\rho$. At each round $t$, the learner selects an action corresponding to an observable $O_{A_t}$ from a predefined action set $\mathcal{A}$. The agent performs this measurement on the state $\rho$ and receives a classical reward $X_t$. The probability distribution of this reward is governed by Born's rule, meaning the expected reward for choosing an observable $O_i$ is simply its expectation value, $\Tr(\rho O_i)$. The same formulation also applies for POVMs instead of an action set, provided we assign a reward to each POVM element such that the expected reward is given by a linear combination of those elements.

The learner's objective is to maximize their cumulative reward over $N$ rounds. Performance is measured by the expected cumulative regret, which quantifies the difference between the reward obtained by always playing the optimal observable and the actual rewards gathered by the learner's policy $\pi$. Mathematically, this is defined as:
\begin{equation}
    \regret(N,\rho) = \sum_{t=1}^N \Big( \max_{O_i \in \mathcal{A}} \Tr(\rho O_i) - \mathbb{E}_{\rho, \pi}[\Tr(\rho O_{A_t})]\Big),
\end{equation}
where $\mathbb{E}_{\rho, \pi}$ is the expectation with respect to the probability distribution of actions and rewards.

To systematically analyze this regret, it is convenient to rewrite it in terms of sub-optimality gaps. We define the gap for any action $i$ as 
\begin{align}
    \Delta_i = \max_{O_j \in \mathcal{A}} \Tr(\rho O_j) - \Tr(\rho O_i),
\end{align}
which represents the expected reward lost by pulling arm $i$ instead of the optimal arm. Letting $T_i(N) = \sum_{t=1}^N \mathbbm{1}\{A_t = i\}$ denote the random variable representing the number of times action $i$ is selected over $N$ rounds, the expected regret can be equivalently expressed as:
\begin{equation}\label{eq:regret_maqb_actions}
    \regret(N,\rho) = \sum_{i \in \mathcal{A}} \Delta_i \mathbb{E}_{\rho, \pi}[T_i(N)].
\end{equation}

To seamlessly embed the MAQB framework into our QHMM setting, we must satisfy the undercomplete assumption (Assumption~\ref{ass:undercomplete}). Standard MAQB lower bounds --- which typically rely on 2-outcome POVMs~\cite{lumbreras2022multi} --- cannot be directly applied. Those standard measurements generally violate our assumption by irrecoverably tracing out latent degrees of freedom for systems where $S \ge 2$.

To construct a strictly valid undercomplete environment based on the MAQB setting, we design instruments based on SIC-POVMs. This guarantees the invertibility required for the recovery map and ensures a uniform robustness constant $\kappa_{\mathrm{uc}}$, allowing us to map the MAQB lower bound directly into our QHMM setting.

Before presenting the bounds, we recall a standard tool from information theory used to bound the probability of distinguishing two probability measures, alongside the divergence decomposition lemma~\cite[Chapter 15]{lattimore2020bandit}.

\begin{lemma}[Bretagnolle-Huber Inequality]
\label{lem:bretagnolle_huber}
Let $\mathbb{P}$ and $\mathbb{Q}$ be probability measures on the same measurable space $(\Omega, \mathcal{F})$, and let $E \in \mathcal{F}$ be an arbitrary event. Let $E^c = \Omega \setminus E$ be its complement. Then,
\begin{equation}
    \mathbb{P}(E) + \mathbb{Q}(E^c) \ge \frac{1}{2} \exp(-\D(\mathbb{P} \| \mathbb{Q})),
\end{equation}
where $\D(\mathbb{P} \| \mathbb{Q})$ is the Kullback-Leibler (KL) divergence. Furthermore, the KL divergence is upper bounded by the $\chi^2$-divergence: $\D(\mathbb{P} \| \mathbb{Q}) \le \chi^2(\mathbb{P}, \mathbb{Q}) = \sum_x \frac{(\mathbb{P}(x) - \mathbb{Q}(x))^2}{\mathbb{Q}(x)}$.
\end{lemma}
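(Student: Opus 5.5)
The plan is the standard argument from \cite[Chapter 14]{lattimore2020bandit}, carried out with densities. First I dispose of the degenerate case. If $\mathbb{P}\not\ll\mathbb{Q}$, then $\D(\mathbb{P}\|\mathbb{Q})=\infty$ and the right-hand side is $0$, so there is nothing to prove. Otherwise I fix the dominating measure $\lambda=\mathbb{P}+\mathbb{Q}$ and write $p=\dd\mathbb{P}/\dd\lambda$ and $q=\dd\mathbb{Q}/\dd\lambda$. For any event $E$,
\begin{equation}
    \mathbb{P}(E)+\mathbb{Q}(E^c)=\int_E p\,\dd\lambda+\int_{E^c} q\,\dd\lambda\ \ge\ \int \min(p,q)\,\dd\lambda ,
\end{equation}
so it suffices to lower bound the overlap $\int\min(p,q)\,\dd\lambda$.

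Next I bring in the Bhattacharyya coefficient. Since $\sqrt{pq}=\sqrt{\min(p,q)\max(p,q)}$, Cauchy--Schwarz gives
\begin{equation}
    \Big(\int\sqrt{pq}\,\dd\lambda\Big)^2\le\int\min(p,q)\,\dd\lambda\int\max(p,q)\,\dd\lambda .
\end{equation}
Because $\max(p,q)\le p+q$, we have $\int\max(p,q)\,\dd\lambda\le 2$, and hence $\int\min(p,q)\,\dd\lambda\ge\frac12\big(\int\sqrt{pq}\,\dd\lambda\big)^2$.

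Finally I relate this coefficient to the KL divergence via Jensen's inequality for the concave logarithm, taking the expectation under $\mathbb{P}$ over the set $\{p>0\}$:
\begin{equation}
    2\log\int\sqrt{pq}\,\dd\lambda\ \ge\ 2\int_{\{p>0\}} p\log\sqrt{q/p}\,\dd\lambda=-\D(\mathbb{P}\|\mathbb{Q}).
\end{equation}
Exponentiating yields $\big(\int\sqrt{pq}\,\dd\lambda\big)^2\ge\exp(-\D(\mathbb{P}\|\mathbb{Q}))$, and chaining the three displays proves the first claim.

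The main obstacle is the bookkeeping at the Jensen step. Restricting to $\{p>0\}$ only drops nonnegative mass from $\int\sqrt{pq}$, so the inequality still goes in the right direction. On the set where $q=0<p$ the logarithm equals $-\infty$, but absolute continuity makes that set $\mathbb{P}$-null, so it does no harm.

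For the $\chi^2$ comparison (in the discrete form stated), I use $\log x\le x-1$ with $x=p(x)/q(x)$. This gives
\begin{equation}
    \D(\mathbb{P}\|\mathbb{Q})=\sum_x p(x)\log\frac{p(x)}{q(x)}\le\sum_x p(x)\Big(\frac{p(x)}{q(x)}-1\Big)=\sum_x\frac{p(x)^2}{q(x)}-1 .
\end{equation}
Expanding $\sum_x(p-q)^2/q=\sum_x p^2/q-2+1$ shows that the right-hand side equals $\chi^2(\mathbb{P},\mathbb{Q})$. If some $x$ has $q(x)=0<p(x)$, both sides are infinite.
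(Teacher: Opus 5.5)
Your proof is correct and follows the standard textbook argument: bound $\mathbb{P}(E)+\mathbb{Q}(E^c)$ below by the overlap $\int\min(p,q)$, pass to the Bhattacharyya coefficient by Cauchy--Schwarz, then apply Jensen. The paper does not prove the lemma; it only cites Lattimore--Szepesv\'ari, whose argument is the one you give, and your $\log x\le x-1$ computation for $\D(\mathbb{P}\|\mathbb{Q})\le\chi^2(\mathbb{P},\mathbb{Q})$ is the routine verification the paper leaves implicit.
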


When a policy $\pi$ interacts with an environment over $N$ rounds, the divergence between the trajectory distributions induced by two environments $\rho_1$ and $\rho_l$ decomposes as:
\begin{equation}
\label{eq:div_decomp}
    \D(\mathbb{P}_{\rho_1}^\pi \| \mathbb{P}_{\rho_l}^\pi) = \sum_{a \in \mathcal{A}} \mathbb{E}_{\rho_1, \pi}[T_a(N)] \D(P_{\rho_1}^{(a)} \| P_{\rho_l}^{(a)}),
\end{equation}
where $T_a(N)$ is the number of times action $a$ is selected, and $P_\rho^{(a)}$ is the marginal outcome distribution for action $a$.

\subsection{Regret Lower Bound via MAQB with SIC-POVMs}

We first prove a lower bound for a multi-armed quantum bandit where the actions correspond to a SIC-POVM. Note that for $S=2$, this corresponds exactly to the special case we studied in Remark~\ref{remark:sic_povm_example}. The technique we use to derive the lower bounds is adapted from~\cite{lumbreras2022multi} to fit our specific setting.

\begin{proposition}[MAQB Lower Bound]
\label{prop:sic_povm_bandit}
Consider a MAQB problem over $N$ rounds where the learner interacts with an unknown state $\rho \in \mathsf{D}(\mathcal{H}_2)$. The action set is $\mathcal{A} = [4]$. For each action $a$, the learner performs a measurement corresponding to the full qubit SIC-POVM and receives a classical outcome $o \in [4]$ drawn from the distribution $\Pr(o|a) = \Tr(M_{\pi_a(o)} \rho)$, where $\pi_a$ is the transposition permutation swapping indices $1$ and $a$. While the measurement yields $4$ possible outcomes, the reward is evaluated as $1$ if $o=1$ and $0$ otherwise. For any adaptive policy $\pi$, there exists an environment state $\rho$ such that the expected cumulative regret over $N$ rounds is bounded as:
\begin{equation}
    \regret(N,\rho) = \Omega\big( \sqrt{N} \big).
\end{equation}
\end{proposition}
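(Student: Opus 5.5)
The plan is a standard two-point (or $4$-point) Bretagnolle--Huber argument, adapted to the SIC structure. First I make the reward structure explicit. Under action $a$, the reward is $\mathbbm{1}\{o=1\}$, and $o=1$ occurs exactly when the SIC element $M_{\pi_a(1)}=M_a$ clicks. The expected reward of arm $a$ is therefore
\[
\mu_a(\rho)=\Tr(M_a\rho)=\tfrac14\big(1+\mathbf{n}_a\cdot\mathbf{r}\big),
\]
where $\mathbf{r}$ is the Bloch vector of $\rho$. This is a four-armed bandit whose means are linear in $\mathbf{r}$. A key feature is that every action yields a sample from the \emph{same} full SIC distribution $p_\rho(x)=\Tr(M_x\rho)$, only relabeled. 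I will have to handle this carefully, since it means feedback is fully informative regardless of the arm. Because rewards depend on the relabeled outcome, the learner nonetheless pays for choosing the wrong arm.

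The key steps follow. I fix a small $\epsilon>0$ and, for each $i\in[4]$, take $\rho_i$ with Bloch vector $\mathbf{r}_i=\epsilon\,\mathbf{n}_i$. Using $\mathbf{n}_i\cdot\mathbf{n}_j=-1/3$ for $i\neq j$, arm $i$ is optimal under $\rho_i$ with gap $\Delta=\frac{\epsilon}{4}(1+\tfrac13)=\epsilon/3$ to every other arm. The regret under $\rho_i$ is then $\frac{\epsilon}{3}\,\mathbb{E}_{\rho_i}[N-T_i(N)]$ by \eqref{eq:regret_maqb_actions}.

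Next I bound the divergence. Each round's observation is distributed as $p_{\rho}$ independently of the action, so the divergence decomposition \eqref{eq:div_decomp} gives
\[
\D\big(\mathbb{P}^\pi_{\rho_i}\|\mathbb{P}^\pi_{\rho_j}\big)=N\,\D(p_{\rho_i}\|p_{\rho_j}).
\]
Since $p_{\rho}(x)=\frac14(1+\mathbf{n}_x\cdot\mathbf{r})\ge\frac14(1-\epsilon)$, the $\chi^2$ bound in Lemma~\ref{lem:bretagnolle_huber} yields $\D\le c\,\epsilon^2$, so the total is $\le cN\epsilon^2$.

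I then apply Bretagnolle--Huber to the event $E=\{T_i(N)\le N/2\}$ for the pair $(\rho_i,\rho_j)$, $j\neq i$. On $E$, regret under $\rho_i$ is at least $\frac{\epsilon N}{6}$. On $E^c$, arm $i$ is played more than $N/2$ times, which is suboptimal under $\rho_j$, so regret under $\rho_j$ is also at least $\frac{\epsilon N}{6}$. Hence
\[
\regret(N,\rho_i)+\regret(N,\rho_j)\ge\frac{\epsilon N}{6}\cdot\frac12 e^{-cN\epsilon^2}.
\]
Choosing $\epsilon=1/\sqrt{N}$ (valid once $N\ge1$, keeping $\rho$ a valid state) gives $\max(\regret(N,\rho_i),\regret(N,\rho_j))=\Omega(\sqrt N)$.

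The main obstacle is the observation in the first paragraph: the information is action-independent, so the usual "exploration cost" intuition does not apply, and a naive argument might suggest $O(\log N)$ or constant regret. The fix is that hardness comes purely from statistical indistinguishability of $\rho_i$ and $\rho_j$ with $N$ samples at separation $\epsilon\sim N^{-1/2}$. Any policy must, with constant probability under one of them, commit a constant fraction of rounds to the wrong arm, and the two-point argument above captures exactly that. I will double-check the lower bound on $p_\rho(x)$ used for the $\chi^2$ control, and that the gap computation correctly uses the tetrahedral inner products.
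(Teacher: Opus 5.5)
Your proposal is correct and is essentially the paper's argument: a two-point Bretagnolle--Huber bound on the event $\{T_i(N)\le N/2\}$, with a common gap $\epsilon/3$ under both hypotheses and a per-round KL bounded via $\chi^2$. As in the paper, that KL is action-independent because of the relabeling structure of the SIC outcomes.

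The only difference is cosmetic. The paper uses the same base state as you (Bloch vector $\Delta\mathbf{n}_1$) but pairs it with an alternative of Bloch vector $\Delta(\mathbf{n}_1+2\mathbf{n}_l)$, where $l$ is the least-pulled suboptimal arm. Your symmetric alternative $\epsilon\mathbf{n}_j$ with arbitrary $j$ works just as well, precisely because the KL does not depend on the actions.

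One small fix: your bound $p_\rho(x)\ge\frac14(1-\epsilon)$ degenerates at $N=1$. Either take $N\ge 4$, or use the exact minimum $\frac14(1-\epsilon/3)$ for these tetrahedral states.
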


\begin{proof}
\textbf{Step 1: Construction of candidate environments.} Fix a policy $\pi$ and let $\Delta \in (0, 1/6]$ be a parameter to be determined later. We define two valid candidate target density matrices:
\begin{equation}
    \rho_1 = \frac{1-\Delta}{2} \mathbb{I} + \Delta E_1, \quad \quad \rho_l = \frac{1-3\Delta}{2} \mathbb{I} + \Delta E_1 + 2\Delta E_l,
\end{equation}
where $l = \arg\min_{j>1} \mathbb{E}_{\rho_1, \pi}[T_j(N)]$ is the arm pulled least often in expectation among the suboptimal arms. Both states are positive semi-definite since $\Delta \le 1/6$.

The expected reward for action $i$ under state $\rho$ is $\mu_i(\rho) = \Pr(1|i) = \Tr(M_{\pi_i(1)} \rho) = \frac{1}{2}\Tr(E_i \rho)$. 
For $\rho_1$, we have $\mu_1(\rho_1) =  \frac{1+\Delta}{4}$, and for $i \neq 1$, $\mu_i(\rho_1) =  \frac{3-\Delta}{12}$. The optimal arm is $i=1$, yielding a uniform sub-optimality gap for any $i \neq 1$ of $\mu_1(\rho_1) - \mu_i(\rho_1) = \frac{\Delta}{3} = \Delta_*$. For $\rho_l$, we have $\mu_l(\rho_l) = \frac{3+5\Delta}{12}$ and $\mu_1(\rho_l) = \frac{3+\Delta}{12}$. The optimal arm is exactly $i=l$. The gap for playing arm $1$ instead of $l$ is identical: $\mu_l(\rho_l) - \mu_1(\rho_l) = \frac{\Delta}{3} = \Delta_*$.

\textbf{Step 2: Bounding the KL-divergence.} We now bound the KL divergence between the trajectory distributions $\mathbb{P}_{\rho_1}^\pi$ and $\mathbb{P}_{\rho_l}^\pi$. By the decomposition in~\eqref{eq:div_decomp}, the divergence sums over the actions. Crucially, because the outcomes of action $a$ are simply a permutation of the fixed SIC-POVM, the unordered set of outcome probabilities is identical for every action. Consequently, the KL divergence is strictly independent of $a$. We bound it using the $\chi^2$-divergence on the unpermuted probabilities:
\begin{equation}
    \D(P_{\rho_1}^{(a)} \| P_{\rho_l}^{(a)}) \le \chi^2(P_{\rho_1}, P_{\rho_l})= \sum_{x=1}^{4} \frac{(\Tr(\rho_1 M_x) - \Tr(\rho_l M_x))^2}{\Tr(\rho_l M_x)}.
\end{equation}
The difference in probabilities is $\Tr\left( M_x (\Delta \mathbb{I} - 2\Delta E_l) \right) = \Delta \left( \frac{1}{2} - \Tr(E_x E_l) \right)$. Substituting $\Tr(E_x E_l) = \frac{2\delta_{xl}+1}{3}$, the differences evaluate to $-\frac{\Delta}{2}$ for $x=l$, and $\frac{\Delta}{6}$ for $x \neq l$. Summing the squares yields:
\begin{equation}
    \sum_{x=1}^{4} (P_{\rho_1}(x) - P_{\rho_l}(x))^2 = \left(-\frac{\Delta}{2}\right)^2 + 3\left(\frac{\Delta}{6}\right)^2 =  \frac{\Delta^2}{3}.
\end{equation}
We bound the denominator using $\rho_l \ge \frac{1-3\Delta}{2} \mathbb{I}$, which implies $P_{\rho_l}(x) \ge \frac{1-3\Delta}{4}$. Thus, the $\chi^2$-divergence is bounded by:
\begin{equation}
    \chi^2 \le \frac{4}{1-3\Delta} \frac{\Delta^2}{3} = \frac{4\Delta^2}{3(1-3\Delta)} \le \frac{8\Delta^2}{3},
\end{equation}
where we used $1-3\Delta \ge 1/2$ (since $\Delta \le 1/6$). Because this bound is independent of the policy's action choices, we can directly use the decomposition~\eqref{eq:div_decomp} to show that the total divergence is bounded uniformly by: 
\begin{align}
\D(\mathbb{P}_{\rho_1}^\pi \| \mathbb{P}_{\rho_l}^\pi) \le \sum_a \mathbb{E}[T_a(N)] \frac{8\Delta^2}{3} = \frac{8N\Delta^2}{3}.
\end{align}

\textbf{Step 3: Minimax regret bound.} Using the expression of the regret~\eqref{eq:regret_maqb_actions} and applying Markov's inequality, we establish: 
\begin{align}
    \regret(N, \rho_1) \ge \Delta_* \mathbb{E}_{\rho_1}[N - T_1] \ge \frac{N \Delta_*}{2} \mathbb{P}_{\rho_1}(T_1 \le N/2).
\end{align}
By the definition of $l$, we also have: 
\begin{align}
    \regret(N, \rho_l) \ge \Delta_* \mathbb{E}_{\rho_l}[T_1] \ge \frac{N \Delta_*}{2} \mathbb{P}_{\rho_l}(T_1 > N/2).
\end{align}
Summing these bounds and applying Lemma~\ref{lem:bretagnolle_huber}, we obtain:
\begin{equation}
   \regret (N,\rho_1) + \regret (N,\rho_l) \ge \frac{N \Delta_*}{4} \exp\left(-\frac{8N\Delta^2}{3}\right).
\end{equation}
Setting $\Delta = \sqrt{\frac{3}{8N}}$ yields $\exp(-1)$. Substituting $\Delta_* = \frac{\Delta}{3} = \frac{1}{\sqrt{24N}}$, we conclude:
\begin{equation}
      \regret (N,\rho_1) + \regret (N,\rho_l)\ge \frac{N}{4\sqrt{24N}} e^{-1} = \Omega\big(\sqrt{N}\big).
\end{equation}
Since the maximum regret over the two hypotheses must be at least half of their sum, the lower bound follows.
\end{proof}

\subsection{Lower Bound for QHMMs via Reduction to MAQB}

We now formally map the MAQB problem into our QHMM framework. Because our instruments measure the full SIC-POVM, the recovery map exists and the robustness constant $\kappa_{\mathrm{uc}}$ is exactly $1$.

\begin{theorem}[Regret Lower Bound for QHMM]
\label{thm:qhmm_lower_bound}
Fix the hidden memory dimension $S = 2$ and the episode length $L \ge 1$. Let the action space $\mathcal{A}$ be discrete with size $A = 4$. For any learning policy $\pi$, there exists an $L$-step QHMM environment $\omega_L \in \Omega_{\mathrm{uc}}$ with $O=4$ outcomes and robustness constant $\kappa_{\mathrm{uc}} = 1$ such that the expected cumulative regret over $K$ episodes is lower bounded by:
\begin{equation}
    \regret (K) = \Omega\big(\sqrt{KL}\big).
\end{equation}
\end{theorem}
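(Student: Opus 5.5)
The plan is to reduce an $L$-step QHMM to $N=KL$ rounds of the MAQB instance of Proposition~\ref{prop:sic_povm_bandit}. The hard environment uses the state of that proposition, $\rho\in\{\rho_1,\rho_j\}$, as the initial memory $\rho_1$. Every memory channel is a replacement channel $\mathbb{E}_l(X)=\Tr(X)\,\rho$, i.e. it discards the post-measurement memory and reprepares $\rho$. Action $a\in[4]$ is the SIC-POVM instrument with its outcome labels permuted by $\pi_a$: $\Phi_o^{(a)}(X)=\Tr(M_{\pi_a(o)}X)\,\rho_{\pi_a(o)}$ in measure-and-prepare (equivalently L\"uders, since the effects are rank one) form. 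The reward is $r_l(a,o)=\mathbbm{1}\{o=1\}$, so $R=1$.

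First I check that $\omega\in\Omega_{\mathrm{uc}}$ with $\kappa_{\mathrm{uc}}=1$. This follows directly from Remark~\ref{remark:sic_povm_example} and the subsequent remark on measure-and-prepare instruments. A relabelling of the outcomes is the same as reordering the $\sigma_o^{(a)}$, so the recovery map $R^{(a)}(\sum_o d_o\proj{o})=\sum_o d_o\rho_{\pi_a(o)}\otimes\proj{o}$ satisfies $R^{(a)}\circ\Tr_S\circ\mathcal{P}^{(a)}=\mathcal{P}^{(a)}$ and has $\|R^{(a)}\|_{1\to1,\diag}=1$.

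The key observation is that, because each $\mathbb{E}_l$ is a replacement channel, the prior memory at every round is exactly $\rho$, whatever the history. From~\eqref{eq:unnormalized_filter} we have $\tilde\rho_l(\tau_{l-1})=A_\omega(\tau_{l-1})\rho$, so~\eqref{eq:conditional_prob_compact} gives $\Pr(o_l|\tau_{l-1},a_l)=\Tr(M_{\pi_{a_l}(o_l)}\rho)$. Each round is therefore an independent draw from the MAQB arm $a_l$.

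Next I turn the QHMM learner into a bandit learner. Concatenating the $K$ episodes gives a sequence of $N=KL$ rounds. Any QHMM learner, consisting of episode policies $\pi^k$ chosen adaptively from past trajectories, induces an adaptive MAQB policy, and the trajectory laws coincide. The optimal QHMM value is $V^*=L\max_a\mu_a(\rho)$, achieved by always playing the best arm, because the rounds are memoryless. Hence $\regret(K)=\regret_{\mathrm{MAQB}}(KL,\rho)$, and Proposition~\ref{prop:sic_povm_bandit} with $N=KL$ gives $\Omega(\sqrt{KL})$.

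Two points deserve care. The first is the adaptivity of the hard instance: in Proposition~\ref{prop:sic_povm_bandit} the alternative index $j$ is chosen depending on the policy. This is fine, since the theorem quantifies ``for any policy there exists an environment'', and we simply define $\omega_L$ through that $\rho$. The second is the main technical obstacle: I have to confirm that the divergence decomposition~\eqref{eq:div_decomp} and the Bretagnolle--Huber step still apply when actions within an episode are chosen by a policy that also conditions on the episode boundaries and is updated only between episodes. Both are just special cases of a history-dependent bandit policy, so the chain-rule argument goes through unchanged with $\sum_a\mathbb{E}[T_a(KL)]=KL$.
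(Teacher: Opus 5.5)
Your proposal is correct and follows essentially the same route as the paper: replacement channels $\mathbb{E}_l(X)=\Tr(X)\rho^*$ reset the prior memory to the bandit state at every round, and measure-and-prepare SIC-POVM instruments with permuted labels give $\kappa_{\mathrm{uc}}=1$, so Proposition~\ref{prop:sic_povm_bandit} with $N=KL$ yields $\Omega(\sqrt{KL})$. The only difference is cosmetic: you keep the L\"uders post-measurement states $E_{\pi_a(o)}$ with recovery $\sum_o d_o E_{\pi_a(o)}\otimes\proj{o}$, whereas the paper prepares a fixed pure state $\rho_0$ for every outcome and uses $R^{(a)}(D)=\rho_0\otimes D$; both work because the replacement channel erases the post-measurement state anyway (you should rename your post-measurement states, though, since $\rho_1$ currently denotes the initial memory, a bandit hypothesis, and a post-measurement state at once).
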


\begin{proof}
\textbf{Step 1: Environment construction.} Let $\rho^*$ be the target environment state from Proposition~\ref{prop:sic_povm_bandit}. We define the QHMM environment $\omega$ as follows. The initial memory state is $\rho_1 = \rho^*$. The inter-round memory evolution is governed by the completely replacing channel $\mathbb{E}_l(X) = \Tr(X) \rho^*$ for all $l \in [L-1]$. Crucially, because this CPTP map completely depolarizes the memory, any measurement-induced disturbance is immediately erased. Consequently, the normalized filtered memory state at the beginning of \emph{every} step is strictly identical to be $\rho^*$.

For each action $a \in \mathcal{A} = [4]$, we define the instrument $\mathcal{P}^{(a)}$ with $O=4$ where the maps are defined as
\begin{equation}
    \Phi_o^{(a)}(X) = \Tr(M_{\pi_a(o)} X) \rho_0, \quad o \in [4],
\end{equation}
where $M_x$ is the qubit SIC-POVM, $\pi_a$ is the transposition permutation from Proposition~\ref{prop:sic_povm_bandit}, and $\rho_0$ is a fixed, arbitrary pure state (e.g., $\proj{1}$). Summing over all outcomes gives $\sum_o \Phi_o^{(a)}(X) = \Tr(X) \rho_0$, confirming the instrument is valid and strictly CPTP.

\textbf{Step 2: Verifying the undercomplete assumption.} We must verify that this environment satisfies the undercomplete assumption. The classical observable operator obtained by tracing out the system is:
\begin{equation}
    \Tr_S[\mathcal{P}^{(a)}(X)] = \sum_{o = 1}^{4} \Tr(M_{\pi_a(o)} X) \proj{o}.
\end{equation}
Because the instrument destroys the pre-measurement state $X$ and prepares the fixed state $\rho_0$ for every outcome, the quantum output can be perfectly reconstructed solely from the classical labels. We define the linear recovery map $R^{(a)}: \mathbb{M}_{O} \to \mathbb{M}_{S} \otimes \mathbb{M}_{O} $ simply as:
\begin{equation}
    R^{(a)}(D) = \rho_0 \otimes D,
\end{equation}
for any diagonal matrix $D = \sum_o d_o \proj{o}$. Then, leveraging the informational completeness of SIC-POVMs, applying the recovery map to the observable operator perfectly recovers $\mathcal{P}^{(a)}(X)$. The $1 \to 1$ diagonal norm is precisely:
\begin{equation}
    \|R^{(a)}\|_{1 \to 1, \text{diag}} = \sup_{\|D\|_1 \le 1} \|\rho_0 \otimes D\|_1 = \sup_{\|D\|_1 \le 1} \|\rho_0\|_1 \|D\|_1 = 1.
\end{equation}
Hence, $\kappa_{\mathrm{uc}} = 1$, and the environment strictly belongs to the class $\Omega_{\mathrm{uc}}$ defined in Definition~\ref{def:Omega_uc}.

\textbf{Step 3: Reduction to MAQB.} Let the reward function be defined as $r_l(1) = 1$, and $r_l(o) = 0$ for $o \neq 1$, identically for all steps $l \in [L]$. Since the state prior to every action is $\rho^*$, the conditional probability of receiving reward $1$ is $\mathrm{Pr}(1 | a_l,\tau_{l-1}) = \Tr(M_{\pi_{a_l}(1)} \rho^*) = \Tr(M_{a_l} \rho^*)$. Because the hidden state resets independently at every step, this interaction is mathematically isomorphic to executing a Multi-Armed Quantum Bandit policy against $\rho^*$ for $N = K \times L$ rounds. By Proposition~\ref{prop:sic_povm_bandit}, the cumulative regret incurred by any policy is bounded below by $\Omega( \sqrt{N} ) = \Omega(\sqrt{KL})$.
\end{proof}

\subsection{Lower Bound for the Observability Parameter via Classical POMDPs}\label{sec:reduction_pomdp_classical}

While the MAQB reduction establishes the $\Omega(\sqrt{K})$ scaling, it relies on an environment where measurements are perfectly informative ($\kappa_{\mathrm{uc}}=1$). To show that the polynomial dependence on the undercomplete robustness constant $\kappa_{\mathrm{uc}}$ is strictly necessary, we demonstrate that our input-output QHMM framework fully subsumes classical POMDPs. Consequently, the statistical hardness results for classical POMDPs directly apply to our quantum setting.

To rigorously state this classical result, we briefly introduce the standard classical POMDP notation that was used in~\cite{liu2022partially}. Let $H$ denote the episode horizon (which directly corresponds to our interaction length $L$), $A$ the number of available actions, $S$ the number of latent states, and $O$ the number of possible observations. Furthermore, let $\mathbb{O}_h \in \mathbb{R}^{O \times S}$ be the column-stochastic emission matrix at step $h \in [H]$, where the entry $[\mathbb{O}_h]_{o, s}$ specifies the probability of observing outcome $o$ given the latent state $s$. Finally, let $\sigma_S(\mathbb{O}_h)$ denote the $S$-th singular value of this matrix. We can now state the lower bound established for classical undercomplete POMDPs using the ``combinatorial lock'' construction~\cite{jin2020sample,liu2022partially}.

\begin{theorem}[Classical POMDP Lower Bound, Theorem 6 in \cite{liu2022partially}]
\label{thm:classical_lock}
For any observability parameter $\alpha \in (0, 1/2]$ and integers $H, A \ge 2$, there exists a classical POMDP with $S=2$ states and $O=3$ observations satisfying the classical $\alpha$-weakly revealing condition ($\min_h \sigma_S(\mathbb{O}_h) \ge \alpha$), such that any algorithm requires at least $\Omega\left( \min \left\{ \frac{1}{\alpha H}, A^{H-1} \right\} \right)$ episodes to learn a $(1/2)$-optimal policy (i.e., a policy $\pi$ whose expected value satisfies $V^* - V^\pi \le 1/2$).
\end{theorem}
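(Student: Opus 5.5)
The bound is quoted directly from \cite[Theorem~6]{liu2022partially}. If we had to reprove it, we would use the standard \emph{combinatorial lock} construction together with an averaging argument over a uniformly random secret password.

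\textbf{Construction.} Fix $\alpha\in(0,1/2]$ and $H,A\ge2$, and draw a secret sequence $a^\star=(a^\star_1,\dots,a^\star_{H-1})\in[A]^{H-1}$.
\begin{itemize}
\item \emph{States and transitions.} There are two latent states, \emph{good} ($g$) and \emph{bad} ($b$). The episode starts in $g$. At step $h<H$, playing $a^\star_h$ in $g$ keeps the state in $g$; any other action, or being in $b$, sends it to the absorbing state $b$.
\item \emph{Emissions for $h<H$.} State $g$ emits $(o_1,o_2,o_3)$ with probabilities $(\alpha,0,1-\alpha)$, and $b$ emits them with probabilities $(0,\alpha,1-\alpha)$. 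The Gram matrix of the two columns has eigenvalues $\alpha^2$ and $\alpha^2+2(1-\alpha)^2$, so $\sigma_2(\mathbb{O}_h)=\alpha$.
\item \emph{Final step.} At step $H$ the emission is fully revealing ($g\mapsto o_1$, $b\mapsto o_2$, so $\sigma_2=1\ge\alpha$). The reward is $1$ on $o_1$ at step $H$ and $0$ otherwise.
\end{itemize}
Then $V^\star=1$, and any policy with $V^\star-V^\pi\le 1/2$ must play the exact password $a^\star$ with probability at least $1/2$.

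\textbf{Key steps.}
\begin{enumerate}
\item We call an observation $o_1$ or $o_2$ at a step $h<H$ \emph{revealing}. By a union bound, a single episode contains a revealing observation with probability at most $\alpha H$, regardless of the policy.
\item On the event that no revealing observation occurs, all intermediate observations equal $o_3$, whose law does not depend on $a^\star$. The only $a^\star$-dependent information in such an episode is the final bit: whether the executed action string was exactly $a^\star$.
\item Put the uniform prior on $a^\star$. Let $\mathcal{E}$ be the event that no revealing observation occurs during the first $N$ episodes. Then $\Pr(\mathcal{E})\ge 1-N\alpha H$.
\item On $\mathcal{E}$, the learner's information is equivalent to querying $N$ strings and learning only whether each equals $a^\star$. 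This is a needle-in-a-haystack search over $A^{H-1}$ candidates. After $N$ failed queries, the posterior is uniform on the remaining $A^{H-1}-N$ strings, so any output policy hits $a^\star$ with probability at most $1/(A^{H-1}-N)$.
\item Take $N\le c\min\{1/(\alpha H),A^{H-1}\}$ with a small constant $c$. Then both $\Pr(\mathcal{E}^c)$ and the probability of hitting $a^\star$ are small, so the expected success probability of the learned policy stays below $1/2$. Hence some fixed $a^\star$ defeats the algorithm, by the usual averaging (Yao) argument.
\end{enumerate}

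\textbf{Main obstacle.} The delicate step is step~4: rigorously coupling the adaptive interaction on $\mathcal{E}$ to a non-adaptive needle search. An adaptive policy can condition later actions on earlier $o_3$'s and on failed final bits, so we would argue by induction over episodes. On $\mathcal{E}$, the conditional law of $a^\star$ given the history is always uniform over the strings not yet excluded, because every likelihood factor not involving the final bit is identical across candidates. The remaining steps are routine union bounds.
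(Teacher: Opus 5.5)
Your combinatorial-lock argument is correct and is the standard proof behind \cite[Theorem~6]{liu2022partially}; the paper does not prove this statement itself but imports it by citation, using the lock instance only in Corollary~\ref{cor:qhmm_lock}. One detail in step~5 needs tightening: the residual guessing probability $1/(A^{H-1}-N)$ is small only once $A^{H-1}$ exceeds an absolute constant (for $A=H=2$ a uniform guess already gives $V^\pi=1/2$ with zero episodes), so your argument, like the theorem itself, should be read for $A^{H-1}$ above such a constant.
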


We now provide a corollary formally embedding classical POMDPs into our QHMM framework, proving that the classical structural hardness maps directly to our robustness constant $\kappa_{\mathrm{uc}}$.

\begin{corollary}[Necessity of $\kappa_{\mathrm{uc}}$ scaling in QHMMs]
\label{cor:qhmm_lock}
For any $\alpha \in (0, 1/2]$, discrete action space of size $A \ge 2$, and episode length $L \ge 2$, there exists a family of undercomplete QHMM environments $\omega \in \Omega_{\mathrm{uc}}$ with memory dimension $S = 2$, outcome dimension $O=3$, and robustness constant $\kappa_{\mathrm{uc}} \le 1/\alpha$, for which any learning policy incurs an expected cumulative regret scaling linearly as $\Omega(K)$ for $K \le \mathcal{O}\left( \min \left\{ \frac{1}{\alpha L}, A^{L-1} \right\} \right)$.
\end{corollary}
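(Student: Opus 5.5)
The plan is to embed the combinatorial-lock POMDP of Theorem~\ref{thm:classical_lock} into an input-output QHMM with diagonal memory, and then turn its sample-complexity lower bound into a linear regret bound.

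\textbf{Embedding.} Take the classical POMDP with $S=2$ latent states, $O=3$ observations, horizon $H=L$, emission matrices $\mathbb{O}_l$ satisfying $\sigma_2(\mathbb{O}_l)\ge\alpha$, and action-dependent transitions $\mathbb{T}_{l,a}$. Let the memory be $\mathbb{C}^2$ with the computational basis $\{\ket{s}\}$ labelling latent states, and set $\rho_1=\sum_s \mu_1(s)\proj{s}$.

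Two mismatches have to be fixed. First, classical POMDPs emit an observation before the action, while our instruments act first. Second, the classical transition depends on the action, while our channels $\mathbb{E}_l$ do not. I will absorb both into the instrument. For each action $a$ at step $l$, define the measure-and-prepare instrument
\[
\Phi^{(a)}_o(X)=\sum_{s}[\mathbb{O}_l]_{o,s}\bra{s}X\ket{s}\,\mathbb{T}_{l,a}(\proj{s}),
\]
and take $\mathbb{E}_l=\mathrm{id}$. If the lock's emission depends on $l$, I will append a step counter as a classical register or fold it into the instrument labels. Summing over $o$ gives a dephase-then-transition CPTP map, so the family $\{\Phi^{(a)}_o\}_o$ is a valid instrument.

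\textbf{Trajectory laws.} By induction on $l$, the filtered states $\tilde\rho_l$ stay diagonal and coincide with the classical unnormalized belief vectors. Hence the trajectory laws agree with the POMDP laws, except that the action at step $l$ is played before $o_l$ is seen. The agent in the QHMM therefore has strictly less information than the classical agent. Any QHMM policy is thus a valid classical policy with the same value, and the classical hardness transfers. The rewards carry over directly, with $R=1$.

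\textbf{Robustness constant.} Let $\mathbb{O}^+$ be a left inverse of $\mathbb{O}_l$ with $\|\mathbb{O}^+\|_{1\to1}\lesssim 1/\alpha$. Define
\[
R^{(a)}(D)=\sum_{o,s}\bigl(\cdots\bigr),
\]
the map that recovers $p(s)=\sum_{o'}[\mathbb{O}^+]_{s,o'}d_{o'}$, outputs $\sum_s p(s)[\mathbb{O}]_{o,s}\,\mathbb{T}_{l,a}(\proj{s})\otimes\proj{o}$, and reweights each $o$-branch accordingly. The reconstruction identity $R^{(a)}\circ\Tr_S\circ\mathcal{P}^{(a)}=\mathcal{P}^{(a)}$ should follow from the left-inverse property on diagonal inputs. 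Off-diagonal inputs are killed by the dephasing in both sides, so the identity holds on all of $\mathbb{M}_S$.

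\textbf{Main obstacle.} The step I expect to be hardest is bounding $\|R^{(a)}\|_{1\to1,\diag}\le 1/\alpha$. The paper's normalization uses the singular value $\sigma_S$, but $\|\mathbb{O}^+\|_{1\to1}$ is only controlled by $\sqrt{S}/\sigma_S$ in general. With $S=2$ this costs at most a constant factor, which the $\mathcal{O}(\cdot)$ absorbs. Alternatively, I would check directly that the lock's emission matrices admit an $\ell_1$ left inverse of norm $\le1/\alpha$. The output trace norm is then at most $\sum_s|p(s)|\sum_o[\mathbb{O}]_{o,s}=\|p\|_1$, using that $\mathbb{T}_{l,a}$ maps states to states.

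\textbf{Regret.} Theorem~\ref{thm:classical_lock} says that for $K\le c\min\{1/(\alpha L),A^{L-1}\}$ no algorithm outputs a $1/2$-optimal policy with constant probability. Suppose the expected regret were below $K/8$. Then the uniform mixture of $\pi^1,\dots,\pi^K$ would have expected suboptimality at most $1/8$. Markov's inequality would then yield a $1/2$-optimal policy from $K$ episodes with probability at least $3/4$, a contradiction. Hence the regret is $\Omega(K)$ in this range.
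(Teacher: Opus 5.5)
Your construction is the paper's: the same dephasing measure-and-prepare instrument with $\mathbb{E}_l=\mathrm{id}$, the same recovery map $R^{(a)}(D)=\mathcal{P}^{(a)}\bigl(\sum_s(\mathbb{O}^{+}d)_s\proj{s}\bigr)$ with the same trace-norm estimate, then a PAC-to-regret conversion. Your mixture-plus-Markov conversion is a more rigorous version of the paper's Step~3. Your $\sqrt{S}/\sigma_S$ remark correctly flags a constant that the paper's claim $\|\mathbb{O}^\dagger\|_{1\to1}\le 1/\alpha$ drops.

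The genuine gap is how you dispose of the timing mismatch you rightly noticed. From ``the QHMM agent commits to $a_l$ before seeing $o_l$'' you conclude that the classical hardness transfers, but the inclusion of policy classes points the wrong way. QHMM regret is measured against $V^*_{\mathrm{QHMM}}$, the supremum over observation-delayed policies. Theorem~\ref{thm:classical_lock} speaks of $(1/2)$-optimality relative to the POMDP optimum $V^*_{\mathrm{cl}}\ge V^*_{\mathrm{QHMM}}$. A learner whose mixture policy is within $1/2$ of $V^*_{\mathrm{QHMM}}$ contradicts nothing unless the two benchmarks coincide. If the optimal POMDP policy exploited the current observation, delaying observations could even make the regret problem easy. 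The missing idea is that in a combination lock the optimal policy plays the secret sequence open-loop, ignoring all observations, so $V^*_{\mathrm{QHMM}}=V^*_{\mathrm{cl}}$. With that, your Markov argument goes through.

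Second, step dependence is not a conditional side issue. In Definition~\ref{def:qrl_model} the instruments $\{\Phi^{(a)}_o\}$ carry no step index; only the channels $\mathbb{E}_l$ may vary with $l$. So neither $\mathbb{O}_l$ nor $\mathbb{T}_{l,a}$ may sit inside $\mathcal{P}^{(a)}$. Yet a lock whose secret action changes with the step, which is the source of the $A^{L-1}$ term, requires step-dependent transitions.

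Neither of your repairs preserves the stated parameters:
\begin{itemize}
\item A step-counter register read by the instrument makes the memory dimension $2L$. Since three outcomes cannot reveal the counter, the identity $R^{(a)}\circ\Tr_S\circ\mathcal{P}^{(a)}=\mathcal{P}^{(a)}$ fails on inputs mixed over counter values, unless the counter is also written to the outcome register, which enlarges $O$.
\item Folding $l$ into the action labels enlarges $\mathcal{A}$ and lets the agent invoke another step's maps out of turn. For example, it could use a revealing final-step emission early, which can destroy the hardness.
\end{itemize}
The paper's proof writes $\mathbb{O}(o|s)$ and $T(s'|s,a)$ without a step index and never confronts this. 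As it stands, your argument, like the paper's, yields the $\Omega(K)$ phase only with a larger memory and outcome alphabet. To keep $S=2$ and $O=3$ you would need to exhibit a time-homogeneous two-state hard instance.
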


\begin{proof}
\textbf{Step 1: Classical POMDP embedding.} Consider a classical POMDP with states $\mathcal{S}$, actions $\mathcal{A}$, observations $\mathcal{O}$, transition matrices $T(s'|s, a)$, and emission matrices $\mathbb{O}(o|s)$. The classical $\alpha$-weakly revealing condition guarantees that the emission matrix $\mathbb{O}$ has full column rank and possesses a left-inverse $\mathbb{O}^\dagger \in \mathbb{R}^{S \times O}$ such that $\|\mathbb{O}^\dagger\|_{1 \to 1} \le 1/\alpha$. 

We embed this POMDP into a QHMM by defining the initial state as a diagonal density matrix $\rho_1 = \sum_s \Pr(s_1=s) \proj{s}$. To simultaneously capture the classical emission and state transition within our action-dependent instrument, we define $\mathcal{P}^{(a)} = \{\Phi_o^{(a)}\}_{o \in [O]}$ using completely positive measure-and-prepare branch maps:
\begin{equation}
    \Phi_o^{(a)}(X) = \sum_{s \in [S]} \mathbb{O}(o|s) \langle s | X | s \rangle \Big( \sum_{s' \in  [S]} T(s'|s, a) \proj{s'} \Big).
\end{equation}
Summing over all outcomes $o$ gives $\sum_o \Phi_o^{(a)}(X) = \sum_{s, s'} T(s'|s, a) \langle s | X | s \rangle \proj{s'}$. Since $\sum_{s'} T(s'|s, a) = 1$, the map is trace-preserving on diagonal inputs, making $\mathcal{P}^{(a)}$ a valid CPTP instrument. Because the instrument inherently handles the transition, the internal environment channel is simply the identity: $\mathbb{E}_l(X) = X$. By construction, if the system is initialized in a diagonal state, it will remain diagonal for all steps, perfectly mimicking the classical POMDP trajectory distribution.

\textbf{Step 2: Recovery map and robustness.} We construct the linear recovery map $R^{(a)}: \mathbb{M}_{O} \to \mathbb{M}_{S\times O }$ for any diagonal classical matrix $D = \sum_o d_o \proj{o}$ as:
\begin{equation}
    R^{(a)}(D) = \mathcal{P}^{(a)}\Big( \sum_{s \in \mathcal{S}} (\mathbb{O}^\dagger d)_s \proj{s}\Big),
\end{equation}
where $d$ is the vector of diagonal elements $d_o$. If $D = \Tr_S[\mathcal{P}^{(a)}(X)]$, then $d = \mathbb{O} x$, where $x_s = \langle s|X|s\rangle$. Applying the left-inverse gives $\mathbb{O}^\dagger d = \mathbb{O}^\dagger \mathbb{O} x = x$. Substituting this back yields exactly $R^{(a)}(D) = \mathcal{P}^{(a)}(X)$. The $1 \to 1$ diagonal robustness constant is bounded by:
\begin{equation}
    \|R^{(a)}(D)\|_1 \le \Big\| \sum_{s \in [S]} (\mathbb{O}^\dagger d)_s \proj{s} \Big\|_1 = \|\mathbb{O}^\dagger d\|_1 \le \|\mathbb{O}^\dagger\|_{1 \to 1} \|d\|_1 \le \frac{1}{\alpha} \|D\|_1.
\end{equation}

\textbf{Step 3: Translating PAC Hardness to Cumulative Regret.} As $\|R^{(a)}(D)\|_1 \le  \frac{1}{\alpha} \|D\|_1$, the environment satisfies Assumption~\ref{ass:undercomplete} with $\kappa_{\mathrm{uc}} \le 1/\alpha$. Because this QHMM operates strictly identically to the classical combinatorial lock of Theorem~\ref{thm:classical_lock}, the classical learning limits apply. Theorem~\ref{thm:classical_lock} establishes that any algorithm requires at least $K_0 = \Omega\big(\min\{\frac{1}{\alpha L}, A^{L-1}\}\big)$ episodes to output a $(1/2)$-optimal policy. Consequently, for any horizon $K \le K_0$, the learner's policy $\pi^k$ must be strictly suboptimal by at least a constant margin of $1/2$ for a constant fraction of the episodes. Summing this $\Omega(1)$ expected suboptimality over $K$ episodes yields a cumulative regret bounded below by $\Omega(K)$, strictly validating the stated lower bound.
\end{proof}

\begin{remark}[Consistency with the $\widetilde{\mathcal{O}}(\sqrt{K})$ Upper Bound]
The linear $\Omega(K)$ lower bound established in Corollary~\ref{cor:qhmm_lock} does not contradict our $\widetilde{\mathcal{O}}(\sqrt{K})$ upper bounds. This is because the linear lower bound only applies to an initial phase ($K \le K_0$). During these early episodes, the algorithm is essentially guessing because it has not yet gathered enough data to identify the environment. 

For these hard instances, the robustness constant $\kappa_{\mathrm{uc}}$ scales as $1/\alpha$, which makes the constant prefactor in our $\widetilde{\mathcal{O}}(\sqrt{K})$ upper bound very large. As a result, when $K \le K_0$, our $\sqrt{K}$ upper bound actually evaluates to a number greater than the maximum possible regret $K \times L$. This means the upper bound is mathematically correct, but trivially loose during the initial phase. Once the number of episodes $K$ exceeds $K_0$, the algorithm successfully learns the latent dynamics and the $\sqrt{K}$ sublinear scaling becomes the meaningful, dominating bound.
\end{remark}

\section{Application: State-agnostic quantum work extraction}\label{sec:work_extraction}

Having established the rigorous $\widetilde{\mathcal{O}}(\sqrt{K})$ learning guarantees for the input-output QHMM framework, we now map this abstract setting to a fundamental problem in quantum thermodynamics: state-agnostic work extraction. When extracting free energy from a sequence of non-i.i.d. quantum states, the temporal correlations between the states act as a hidden memory. Because the agent lacks full prior knowledge of this emission source, any sub-optimal extraction protocol inevitably leads to thermodynamic dissipation. We will demonstrate that by modeling this physical interaction as a continuous-action (or many discrete actions) QHMM, the mathematical regret incurred by the agent exactly quantifies this irreversible energy dissipation.

\subsection{Work extraction from sequential-emission HMM environments}
\label{subsec:classical-memory-emission}

We consider sequential-emission HMM environments defined in Section~\ref{sec:sequential_HMM}. The hidden memory is operationally classical, encoded as a diagonal quantum state on a two-dimensional register $M$ with computational basis $\{\ket{1},\ket{2}\}$. Thus, the memory state is restricted to the diagonal subclass
\begin{align}
\label{eq:diag_memory}
\rho_S = x_1 E_1 + x_2 E_2, \qquad E_m := \ket{m}\!\bra{m}, \qquad x_1,x_2 \ge 0, \qquad x_1+x_2=1.
\end{align}
At each time step $l \in [L]$, the internal memory undergoes a classical transition governed by the probability $T(m'| m)$, which corresponds to a completely positive trace-preserving (CPTP) channel $\mathbb{E}(\ket{m}\!\bra{m}) = \sum_{m'}T(m'| m)\ket{m'}\!\bra{m'}$. Conditional on the current memory value $m \in \{1,2\}$, the environment emits a known quantum state $\sigma_m$ on an auxiliary system $Q$. 

The agent probes the system $Q$ to extract work. To extract work, the agent intervenes on the system by selecting a continuous action $a_l \in \mathcal{A}$ as was defined in Sec.~\ref{sec:QHMM}. In our framework, this continuous action directly parameterizes a POVM instrument (as studied in Section~\ref{sec:cont-actions}) that configures the physical work-extraction protocol by specifying a target density matrix $\rho_{a_l}$ that the agent tailors its protocol for:
\begin{align}
\label{eq:action_state}
\rho_{a_l} = \lambda_{a_l} \ket{\psi_{a_l}}\!\bra{\psi_{a_l}} + (1-\lambda_{a_l}) \ket{\psi_{a_l}^\perp}\!\bra{\psi_{a_l}^\perp}.
\end{align}
Therefore, an action $a_l$ fundamentally encapsulates two parameters:
\begin{enumerate}
    \item A measurement basis $\{\ket{\psi_{a_l}}, \ket{\psi_{a_l}^\perp}\}$, which defines a binary projective measurement $M_{0}^{(a_l)} = \ket{\psi_{a_l}}\!\bra{\psi_{a_l}}$ and $M_{1}^{(a_l)} = \ket{\psi_{a_l}^\perp}\!\bra{\psi_{a_l}^\perp}$ on $Q$.
    \item A continuous target purity parameter $\lambda_{a_l} \in (0,1)$, reflecting how ``mixed'' the agent expects the state to be. We will show later in Section~\ref{sec:opt_action} that the optimal $\lambda_{a_l}$ can be chosen a priori. 
\end{enumerate}

The physical extraction protocol yields a discrete classical outcome $o_l \in \{0,1\}$, corresponding to the energy exchanged with the battery. Crucially, the probability of these outcomes conditioned on the memory value $m$ perfectly follows the Born rule for a projective measurement in the basis of $\rho_{a_l}$:
\begin{align}
p_{a_l}(o_l| m) := \Tr\!\big(M_{o_l}^{(a_l)}\sigma_m\big).
\label{eq:classical-memory-born-rule}
\end{align}
This sequential interaction maps perfectly to the quantum instruments defined in our QHMM framework. For each outcome $o\in\{0,1\}$, the corresponding branch map acting on the memory is
\begin{align}
    \Phi_o^{(a)}(X) := \sum_{m=1}^2 \Tr(E_m X)\, p_a(o| m)\, E_m,
\label{eq:classical-memory-branch-map}
\end{align}
yielding the full instrument
\begin{align}
\mathcal P^{(a)}(X) &:= \sum_{o=0}^1  \Phi_o^{(a)}(X)\otimes \ket{o}\!\bra{o} = \sum_{m=1}^2\sum_{o=0}^1 \Tr(E_m X)\, p_a(o| m)\, E_m\otimes \ket{o}\!\bra{o}.
\label{eq:classical-memory-full-instrument}
\end{align}
In particular, if the hidden memory state is \(\rho_S=x_1E_1+x_2E_2\), then
\begin{align}
\mathcal P^{(a)}(\rho_S)
=
\sum_{m=1}^2\sum_{o=0}^1
x_m\, p_a(o| m)\, E_m\otimes \ket{o}\!\bra{o}.
\label{eq:classical-memory-state-output}
\end{align}
Tracing out the hidden memory register yields the observable outcome distribution on the classical register:
\begin{align}
\Tr_S\!\big[\mathcal P^{(a)}(X)\big] = \sum_{o=0}^1 \Big(\sum_{m=1}^2 \Tr(E_m X)\, p_a(o| m)\Big)\ket{o}\!\bra{o}.
\label{eq:classical-memory-observable-operator}
\end{align}
Hence, for a memory state $\rho_S=x_1E_1+x_2E_2$, the observed outcome distribution is exactly the mixture law
\begin{align}
\Pr(o| a,\rho_S) = x_1\,p_a(o| 1)+x_2\,p_a(o| 2).
\label{eq:classical-memory-mixture-law}
\end{align}

\paragraph{Recovery map and undercompleteness.} 
To guarantee the identifiability of this process and verify the undercompleteness assumption, we define the observation matrix associated with action $a$ by
\begin{align}
\mathbb O^{(a)} := 
\begin{pmatrix}
p_a(0| 1) & p_a(0| 2) \\
p_a(1| 1) & p_a(1| 2)
\end{pmatrix}.
\label{eq:classical-memory-observation-matrix}
\end{align}
If \(\mathbb O^{(a)}\) is invertible, then the hidden memory weights can be reconstructed from the observable operator. Indeed, for a diagonal operator on the outcome register, \(D = d_0 \ket{0}\!\bra{0} + d_1 \ket{1}\!\bra{1}\), define
\begin{align}
\hat x(D)
:=
\begin{pmatrix}
\hat x_1(D)
\\
\hat x_2(D)
\end{pmatrix}
:=
\big(\mathbb O^{(a)}\big)^{-1}
\begin{pmatrix}
d_0
\\
d_1
\end{pmatrix}.
\label{eq:classical-memory-recovered-weights}
\end{align}
Then a recovery map is given by
\begin{align}
R^{(a)}(D)
:=
\sum_{m=1}^2
\hat x_m(D)\,
E_m\otimes
\Big(
\sum_{o=0}^1 p_a(o| m)\ket{o}\!\bra{o}
\Big).
\label{eq:classical-memory-recovery-map}
\end{align}
For every diagonal hidden memory operator \(X\), one then has
\begin{align}
R^{(a)}\!\Big(\Tr_S\!\big[\mathcal P^{(a)}(X)\big]\Big)
=
\mathcal P^{(a)}(X).
\label{eq:classical-memory-recovery-identity}
\end{align}
Therefore this model satisfies the same recovery property as in the undercomplete setting, restricted to the diagonal memory subspace, whenever \(\mathbb O^{(a)}\) is invertible.

Let $q_m^{(a)} := \Tr\!\big(M_1^{(a)}\sigma_m\big)$. Then $p_a(1| m)=q_m^{(a)}$ and $p_a(0| m)=1-q_m^{(a)}$, which yields
\begin{align}
\mathbb O^{(a)} = 
\begin{pmatrix}
1-q_1^{(a)} & 1-q_2^{(a)} \\
q_1^{(a)} & q_2^{(a)}
\end{pmatrix}.
\label{eq:classical-memory-binary-observation}
\end{align}
This matrix is invertible if and only if $q_1^{(a)} \neq q_2^{(a)}$. Because our framework supports continuous action spaces, the agent is not locked into pathological discrete measurements; it can dynamically select informationally complete bases to ensure this invertibility condition is strictly met. Equivalently, the recovery map exists precisely when the chosen measurement produces different binary outcome distributions for the two emitted states $\sigma_1$ and $\sigma_2$. Because our continuous action space $\mathcal{A}$ allows the agent to freely select the measurement basis, the agent can dynamically select informationally complete instruments that satisfy this condition. Since the emitted states are known (but we do not know when are they going to be emmited), one can guarantee that the environment belongs to the undercomplete class with a bounded recovery robustness $\kappa_{\mathrm{uc}}$.

\subsection{Work-extraction and action-dependent rewards}

The sequential emission dictated by the classical memory creates a series of quantum states exhibiting temporal correlations, described by the multi-time state:
\begin{equation}
\label{eq:multi_time_state}
    \rho_{Q_{1:L}}\coloneqq \sum_{m_{1:L}}\Pr(M_{1:L}=m_{1:L})\bigotimes _{l=1}^L\sigma_{m_l},
\end{equation}
where the probability $\Pr(M_{1:L}=m_{1:L})$ is defined as
\begin{equation}
    \Pr(M_{1:L}=m_{1:L}) \coloneqq x_{m_1}\prod_{l=1}^{L-1} T(m_{l+1}| m_l).
\end{equation}

For a given ambient temperature, we denote its inverse temperature as $\beta\coloneqq (k_BT)^{-1}$ to avoid confusion with transition matrices. System $Q$ is assumed to have a degenerate Hamiltonian $H_Q$, the thermal equilibrium state is the maximally mixed state $\gamma = \mathbb{I}/2$. This state is defined to have zero non-equilibrium free energy. Accordingly, any state $\rho \neq \gamma$ possesses non-equilibrium free energy ~\cite{brandao2013resource}
\begin{equation}
    \mathcal{F}_{\text{noneq}} (\rho)\coloneqq \beta^{-1}\text{D}(\rho\|\gamma)~.
\end{equation}
The goal of the agent is to transfer this non-equilibrium free energy from this multi-time state into a battery with the aid of some thermal reservoir. To be more specific, given a multi-time state $\rho_{Q_{1:L}}$, a thermal reservoir and a battery, the agent with the ability to act coherently across time steps would be able to implement a huge operation $\mathcal{W_{\text{coh}}}$ across the joint state and transfer the non-equilibrium free energy to the energy of the battery. This increase in battery's energy is what we define as \emph{work}. An illustration of this can be found in Fig.~\ref{fig:work_circuit}.

\begin{figure}[htbp]
    \centering
    \includegraphics[width=0.5\linewidth]{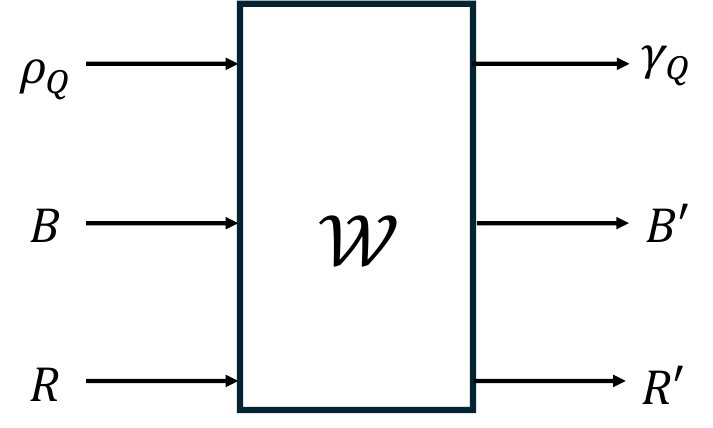}
    \caption{A circuit diagram representation of the work extraction protocol. The system $Q$ represents the system where free energy is drawn, $B$ is a battery and $R$ represents a thermal reservoir as an ancillary system. The protocol aims to transform $\rho_Q$ to a thermal state $\gamma_Q$ with the help of the thermal states from the reservoir; the free energy lost in system $Q$ will be balanced by the increase in energy of the battery $B$.}
    \label{fig:work_circuit}
\end{figure}
Here, $\D(\rho\|\sigma)=\tr{\rho\ln\rho}-\tr{\rho\ln\sigma}$ here is the quantum relative entropy, a quantum generalization of the classical KL-divergence.
The total non-equilibrium free energy of the sequence, $\mathcal{F}_{\text{noneq}} (\rho_{Q_{1:L}})= \beta^{-1}\text{D}(\rho_{Q_{1:L}}\|\gamma^{\otimes L})$, dictates the maximum possible average work $\langle W\rangle$ extractable into a battery system $B$ by any agent using strictly energy-conserving thermal operations (i.e., operations $\mathcal{U}$ satisfying $[H_Q+H_R+H_B, U]=0$) for a energy degenerate system Hamiltonian~\cite{brandao2013resource,horodecki2013fundamental,skrzypczyk2014work,aaberg2014catalytic,brandao2015second,korzekwa2016extraction}. 

The upper bound on the average work extractable by \emph{any} sequential agent is fundamentally limited by the state correlations:
\begin{equation}
\label{eq:work_upperbound}
    \langle W\rangle \leq \F_{\text{noneq}}(\rho_{Q_{1:L}}) - \beta^{-1}\delta(\overrightarrow{Q_1:Q_L}),
\end{equation}
where the term $\delta(\overrightarrow{Q_1:Q_L})$ is the \emph{causal dissipation}. It quantifies the irreversible disturbance the agent introduces into the system by performing sequential measurements without having full prior knowledge of the future states:
\begin{equation}
    \delta(\overrightarrow{Q_1:Q_L}) \coloneqq \min_{\pi} \left( \mathbb{E}_{\pi, \omega} \left[ \sum_{l=1}^{L-1} H(O_l | a_l, \tau_{l-1}) + S(\rho_{Q_L}(\tau_{L-1})) \right] \right) - S(\rho_{Q_{1:L}})~.
\end{equation}
The Shannon entropy of the classical outcome $H(O_l | a_l, \tau_{l-1})$ is computed directly from the Born rule probabilities of the instruments, and $S(\rho_{Q_L}(\tau_{L-1}))$ is the von Neumann entropy of the final unmeasured subsystem conditioned on the specific classical trajectory observed by the agent.

\paragraph{Action-dependent rewards and bounded action space.}
Under idealized set-ups~\cite{skrzypczyk2014work,huang2023engines}, the physical work extracted at time step $l$, which we denote as the random variable $W_l$, takes discrete values perfectly correlated with the instrument outcome $o_l \in \{0,1\}$ and the action's purity parameter $\lambda_{a_l}$:
\begin{equation}
\begin{split}
    \Pr(W_l=w_{l,0}) &= \Tr\!\big(M_{a}^{(a_l)}\sigma_{m_l}\big), \quad \text{where} \quad w_{l,0} = \beta^{-1}(\ln 2+\ln \lambda_{a_l})\\
    \Pr(W_l=w_{l,1}) &= \Tr\!\big(M_{1}^{(a_l)}\sigma_{m_l}\big), \quad \text{where} \quad w_{l,1} = \beta^{-1} [\ln 2+\ln (1-\lambda_{a_l})]~.
\end{split}
\label{eq:work_distribution}
\end{equation}
The derivation behind this probability distribution can be found in Appendix~\ref{sec:detailed_work_distribution}.
Because the outcome distribution perfectly matches the Born rule probabilities of our instrument~\eqref{eq:classical-memory-born-rule}, the battery measurements act simultaneously as the data collection mechanism for the QHMM and the vessel for energy extraction. 

We can thus formally define our RL reward function directly as the work extracted by setting the reward as 
\begin{align}\label{eq:reward_energy}
r(o_l, a_l) \coloneqq w_{l,o_l}.
\end{align}
Unlike standard POMDP formulations where the reward depends strictly on the observation, here the reward is explicitly parameterized by the continuous action variable $\lambda_{a_l}$. This seamlessly integrates into our OMLE objective: during the optimistic planning phase, the agent simply selects the policy and model parameter that maximizes the value function with respect to this action-dependent reward. 

\paragraph{Bounding the Action-Dependent Reward} One physical caveat of this model is that as the action's purity parameter approaches the extremes $\lambda_{a_l} \to 0$ or $1$, the work values diverge. To prevent the maximum reward constant $R$ from blowing up in our regret bounds, we restrict the action space such that $\lambda_{a_l} \in [\epsilon_K, 1 - \epsilon_K]$, where $\epsilon_K$ is a deterministic depolarization parameter strictly dependent on the time horizon $K$ (e.g., $\epsilon_K \propto 1/K$). This effectively bounds the maximum reward amplitude to $R = \mathcal{O}(k_B T \ln K)$. Since the MLE concentration bounds (Propositions~4.3 and 4.4) only depend on the observation probabilities and not the rewards, this regularization fully preserves our theory while preventing arbitrary divergence. The theoretical cumulative regret scales linearly with $R$, yielding an overall optimal sublinear scaling that introduces at most a logarithmic factor.

\subsection{Work extraction protocol}
Here we provide an overview of the work extraction protocol first proposed in~\cite{skrzypczyk2014work}. The protocol assumes the following subsystems
\begin{enumerate}
    \item Source $Q$, where a state $\rho_Q\neq\gamma$ is the state where the free energy will be supplied.
    \item Thermal Reservoir $R$, fixed at an inverse and  supplies thermal state $\gamma$, it is assumed to have a tunable Hamiltonian $H_R(\nu)=\nu\ket{1}\!\bra{1}$.
    \item Battery $B$, a system modeled after a classical weight moving up and down to store energy. It is assumed to have bi-infinite energy levels and we assume that we are far from the ground state~\cite{lipka2021second,aaberg2013truly,horodecki2013fundamental}. This battery system is equipped with a translation operator $\Gamma_\varepsilon$ which increase its energy level by $\varepsilon$.
\end{enumerate}
The difference in energy level of the battery before and after the protocol is then what we define as \emph{work}.

The protocol is designed for a specific state $\rho^* =\sum_{i}p_i\ket{\phi_i}\!\bra{\phi_i}$ and proceeds in 2 stages. In stage 1, the agent implement a unitary rotation to diagonalize $\rho^*$ in the energy eigenbasis followed by $M$ number of SWAP operations with $M$ different thermal states. The thermal states' purities decreases as the energy gap $\nu$ of $H_R$ converge towards the Hamiltonian of $Q$, $H_Q$. All the energy difference incurred during the rotation and SWAP operation will be added/taken from the battery, $B$. In the quasi-static limit where $M\to\infty$, we recover the work distribution given in~\eqref{eq:work_distribution} for a degenerate source Hamiltonian. This protocol obeys strict energy conservation when the source Hamiltonian is degenerate. 

In an ideal world, where the agent knows the identity of $\rho_Q$, this protocol would on average extract work that is equal to the non-equilibrium free energy $\beta^{-1}\D(\rho_Q\|\gamma)$. However if the agent tailors the protocol to a state that is not the actual state fed into the operation, the average work extracted would then be given by 
\begin{equation}
    \beta\langle W\rangle =\D(\rho_Q\|\gamma)-\D(\rho_Q\|\rho^*)~,
\end{equation}
which is in line with the expression in~\eqref{eq:exp_work}, where $\rho^*$ is the state that the agent tailors the protocol for. The algorithmic representation of the protocol is shown in Alg.~\ref{alg:sc_work_extraction}.

\begin{algorithm}[H]
	\caption{\textsf{$\rho^*$-ideal work extraction}} 
	\label{alg:sc_work_extraction}
 
        Require: An source state $\rho$, an estimate of the source state  $\rho^*$, a battery state $\varphi(x)$ with the battery energy $\mu=0$, a reservoir at inverse temperature $\beta$

        Set $\{\phi_{i}\}_i$ and $\{p_{i}\}_i$ to be the eigenvectors and eigenvalues of $\rho^*$

        \vspace{1mm}
        \textit{Unitary Rotation}
        \vspace{1mm}
        
        Apply unitary $U=\sum_i \ketbra{i}{\phi_i}$ to try to diagonalize the system qubit in computational basis.
        \vspace{1mm}
        
        \For{$l=1,2,\ldots,M$}{
            \vspace{1mm}
            \textit{Prepare a fresh a reservoir qubit and exchange it with the system}
            \vspace{1mm}
            
            Take a fresh thermal qubit $\gamma_\beta(\nu(l))=\frac{1}{Z_R(\nu(l))}e^{-\beta H_R(\nu(l))}$ where $\nu(l)=\beta^{-1}\ln\frac{p_{0,l}}{p_{1,l}}$, $p_{i,l} = p_i-(-1)^i l \delta p$ and $\delta p=\frac{1}{M}(p_0-\frac{1}{2})$ from the reservoir.  
            
            Apply the swap unitary $V_{\rho^*,l}= \sum_{ij}\ketbra{i}{j}_A\otimes\ketbra{j}{i}_R \otimes \Gamma_{(i-j)\nu(l)}$ on the system, the battery and the reservoir qubit. 
            
            Discard the reservoir qubit. 
        }
        
        \vspace{1mm}
        \textit{Measure the extracted work}
        \vspace{1mm}
            
        Measure the battery energy, obtain the battery energy $\mu'$ and compute the extracted work $\Delta W=\mu'-\mu$. 
\end{algorithm}
In the first stage of the protocol, we rotate the unknown qubit via unitary 
\begin{equation}
    U = \sum_i\ketbra{i}{\phi_i}~.
\end{equation}
This operation attempts to diagonalize the system qubit in the computation basis.
We then interact the system with the battery. The state of the system together with the battery is 
\begin{equation}
\label{eq:intial_joint_state}
    \rho_{AB} = \sum_{ij}\bra{\phi_i}\rho\ket{\phi_j}\ketbra{i}{j}_Q\otimes \varphi(x)_B.
\end{equation}

In the second stage of the protocol, we perform $M$ repetitions of the following process. In repetition $l$, we take a fresh thermal qubit $\gamma_\beta(\nu(l,\epsilon))$ with Hamiltonian $H_{R}(\nu(l,\epsilon))$ from the reservoir where $\nu(l,\epsilon)= \beta^{-1} \ln \frac{p_{0,l}}{p_{1,l}}$, $p_{i,l}= p_{i} - (-1)^i l \delta p$ and $\delta p = \frac{1}{M}(p_{0} - \frac{1}{2})$. Note that the reservoir qubit we take depends on which repetition we are in. Then we perform the swap unitary 
\begin{align}
    V_{\rho^*,l}  = \sum_{ij}\ketbra{i}{j}_Q\otimes\ketbra{j}{i}_R \otimes \Gamma_{(i-j)\nu(l,\epsilon)}~.
\end{align}
This unitary swaps the system and the fresh qubit from the reservoir, extracts work into the battery due to the different energy gap between $\{\ket{i}_Q\}_{i=0,1}$ and $\{\ket{i}_R\}_{i=0,1}$ and conserves energy of the system, the qubit from the reservoir and the battery. Finally, the qubit from the reservoir is discarded. At the end of each repetition $l$, the reduced state is 
\begin{align}
    \rho_{AB,l}  = \Tr_R\left(V_{\rho^*,l}\left(\rho_{AB,l-1}\otimes \gamma_\beta(\nu(l,\epsilon))\right) V_{\rho^*,l}^\dagger\right),
\end{align}
After the first repetition, we obtain 
\begin{align}
    \label{eq:joint_state_1}
    \rho_{AB,1} = \sum_{i} p_{i,1} \ketbra{i}{i}_Q \otimes \rho_{B,i,1},
\end{align}
where 
\begin{align}
    \label{eq:battery_state_1}
    \rho_{B,i,1} = \sum_j \bra{\phi_{j}}\rho\ket{\phi_{j}} \varphi(x-(i-j)\nu(l,\epsilon)), 
\end{align}
and after repetition $l$ where $l\geq 2$, we obtain
\begin{align}
    \label{eq:joint_state_tau}
    \rho_{AB,l} = \sum_{i} p_{i,l} \ketbra{i}{i}_Q \otimes \rho_{B,i,l}, 
\end{align}
where
\begin{align}
    \label{eq:battery_state_tau}
    \rho_{B,i,l} = \sum_j  p_{j,l-1} \Gamma_{(i-j)\nu(l,\epsilon)} \rho_{B,j,l-1} \Gamma_{(i-j)\nu(l,\epsilon)}^\dagger. 
\end{align}
From~\eqref{eq:joint_state_tau}, we observe that the reduced state of the system changes gradually, which resembles a quasi-static process in thermodynamics. This is the reason why we take the swap unitary in repetition. We will leave the detailed derivation for the work distribution to Appendix.~\ref{sec:detailed_work_distribution}

\subsubsection{Extracted work for different inputs}

We can now use these results to determine the expected work for any arbitrary input state.

\begin{theorem}
\label{thm:arbitrary_input_work}
Let $\{\phi_i\}_{i=0,1}$ and $\{p_i\}_{i=0,1}$ be the eigenvectors and eigenvalues of the target state $\rho^*$, and let $w_i$ be the fixed work values defined in Theorem \ref{thm:work_convergence}.
\begin{enumerate}
    \item When applying the protocol to any arbitrary state $\rho$, the probability of extracting work $\Delta W = w_i$ is given by:
    \begin{equation}
        \Pr(\Delta W=w_{i})=\langle\phi_{i}|\rho|\phi_{i}\rangle.
    \end{equation}
    \item The expected work extracted when the protocol (optimized for $\rho^*$) is applied to $\rho$ is:
    \begin{equation}
        \mathbb{E}[\Delta W]=\beta^{-1}[\D(\rho\|\mathbb{I}/2)-\D(\rho\|\rho^{*})].
    \end{equation}
    The second term, $\beta^{-1}\D(\rho\|\rho^{*})$, represents the thermodynamic dissipation due to the mismatch between the true environment $\rho$ and the protocol's target state $\rho^*$.
\end{enumerate}
\end{theorem}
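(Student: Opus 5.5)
The plan is to reduce the arbitrary-input case to the computation already carried out for the ideal protocol. The key observation is that every operation in Algorithm~\ref{alg:sc_work_extraction} after the initial rotation $U=\sum_i\ketbra{i}{\phi_i}$ acts on the system only through its populations in the computational basis. The swaps $V_{\rho^*,l}$ and the battery translations $\Gamma_{(i-j)\nu(l)}$ are block-diagonal with respect to $\proj{i}_Q$. Their effect on the battery is a translation conditioned on the classical system label. In addition, after the first swap the system is replaced by the reservoir qubit, as in \eqref{eq:joint_state_1}.

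I would therefore begin from \eqref{eq:intial_joint_state} and show that the off-diagonal terms $\bra{\phi_i}\rho\ket{\phi_j}$ with $i\neq j$ never reach the reduced battery state after the reservoir qubit is traced out. This gives \eqref{eq:battery_state_1}: the battery is a mixture labelled by $j$, with weights $q_j:=\bra{\phi_j}\rho\ket{\phi_j}$.

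I would then follow the recursion \eqref{eq:joint_state_tau}--\eqref{eq:battery_state_tau}, which does not depend on the input $\rho$ beyond the first step. Each branch $j$ is a classical trajectory of reservoir labels. Its total energy shift is the same telescoping sum $\sum_l (i_l-i_{l-1})\nu(l)$ used in the derivation of Theorem~\ref{thm:work_convergence}. In the quasi-static limit $M\to\infty$, this shift concentrates on $w_j=\beta^{-1}(\ln 2+\ln p_j)$. The only new ingredient is the initial weights $q_j$ in place of $p_j$. I would use linearity in the initial battery/system mixture to invoke Theorem~\ref{thm:work_convergence} branch by branch. This yields part 1, $\Pr(\Delta W=w_i)=q_i$.

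Part 2 is then a direct computation:
\begin{align}
\beta\,\mathbb{E}[\Delta W]=\sum_i q_i(\ln 2+\ln p_i)=\ln 2+\Tr(\rho\ln\rho^*).
\end{align}
Here $\sum_i q_i\ln p_i=\Tr(\rho\ln\rho^*)$ holds because $\ln\rho^*=\sum_i\ln p_i\proj{\phi_i}$. Using $\D(\rho\|\mathbb{I}/2)=\Tr(\rho\ln\rho)+\ln 2$ and $\D(\rho\|\rho^*)=\Tr(\rho\ln\rho)-\Tr(\rho\ln\rho^*)$, the difference is exactly $\ln 2+\Tr(\rho\ln\rho^*)$, which proves the claim.

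The main obstacle is justifying that the branch-wise work distribution in the quasi-static limit is insensitive to the initial populations. The first swap uses $\nu(1)$ built from $p_{i,1}$, not from $q_i$, so there is a mismatch at step one. I expect to handle this by noting that the energy increments are determined by the label transitions alone, while the probabilities of those transitions after step one come from the reservoir qubits with populations $p_{i,l}$. Consequently, the distribution of the final label depends on $q$ only through the starting label, which is precisely what Theorem~\ref{thm:work_convergence} controls. If that theorem is stated only for $\rho=\rho^*$, I would instead redo its fluctuation estimate with the first-step weights replaced by $q_j$. Since those weights only multiply the branches, the $\mathcal{O}(1/M)$ variance bound should go through unchanged.
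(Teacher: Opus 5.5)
Your proposal is correct and follows the paper's proof. Both arguments drop the off-diagonal terms, since tracing out the reservoir after the swap removes them, treat the input as the mixture of eigenstates $\phi_j$ with weights $\langle\phi_j|\rho|\phi_j\rangle$, apply Theorem~\ref{thm:work_convergence} to each branch, and then compute $\sum_i q_i w_i=\beta^{-1}[\ln 2+\Tr(\rho\ln\rho^*)]$. The obstacle you flag is already handled: Theorem~\ref{thm:work_convergence} is stated for the eigenstate inputs $\rho=\phi_i$ themselves, whose populations already differ from the reservoir's $p_{i,1}$, so no fluctuation estimate needs to be redone.
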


\begin{proof}
It worth noting that: 
\begin{enumerate}
    \item The off-diagonal terms of the joint state $\rho_{AB}$ do not affect the reduced state of the battery after the repetitions, see~\eqref{eq:work_distribution}. Therefore, applying the protocol to $\rho$ is equivalent to applying it to the dephased state $\sum_i \langle\phi_{i}|\rho|\phi_{i}\rangle \proj{\phi_i}$. This can be viewed as a probabilistic mixture where the input is $\phi_i$ with probability $\langle\phi_{i}|\rho|\phi_{i}\rangle$. By Theorem \ref{thm:work_convergence}, an input of $\phi_i$ yields work $w_i$ (in the limit of large $M$). Thus, $\Pr(\Delta W=w_{i})=\langle\phi_{i}|\rho|\phi_{i}\rangle$.
    
    \item Using the probabilities above, the expected work is:
    \begin{align*}
        \mathbb{E}[\Delta W] &= \langle\phi_{0}|\rho|\phi_{0}\rangle w_0 + \langle\phi_{1}|\rho|\phi_{1}\rangle w_1 \\
        &= \langle\phi_{0}|\rho|\phi_{0}\rangle\beta^{-1}[\D(\phi_{0}\|\mathbb{I}/2)+\ln p_{0}]+\langle\phi_{1}|\rho|\phi_{1}\rangle\beta^{-1}[\D(\phi_{1}\|\mathbb{I}/2)+\ln p_{1}] \\
        &= \beta^{-1}\left[\langle\phi_{0}|\rho|\phi_{0}\rangle(-\Tr(\phi_{0}\ln \mathbb{I}/2)+\ln p_{0})+\langle\phi_{1}|\rho|\phi_{1}\rangle(-\Tr(\phi_{1}\ln \mathbb{I}/2)+\ln p_{1})\right]
    \end{align*}
    Letting $\mathcal{P}(\rho)=\sum_{i}\phi_{i}\rho\phi_{i}$ be the pinching map, we can rewrite this as:
    \begin{equation}
        \mathbb{E}[\Delta W] = \beta^{-1}[\Tr(\mathcal{P}(\rho)\ln\rho^{*})-\Tr(\mathcal{P}(\rho)\ln \mathbb{I}/2)] = \beta^{-1}[\Tr(\rho\ln\rho^{*})-\Tr(\rho\ln \mathbb{I}/2)].
    \end{equation}
    Finally, expressing this in terms of relative entropy yields:
    \begin{equation}
        \mathbb{E}[\Delta W]=\beta^{-1}[\D(\rho\|\mathbb{I}/2)-\D(\rho\|\rho^{*})]. 
    \end{equation}
\end{enumerate}
\end{proof}

\subsection{Cumulative dissipation as mathematical regret}
\label{sec:dissipation_regret}
The bulk of the derivations, definitions, algorithms and discussions covered in Section~\ref{sec:dissipation_regret} and Section~\ref{sec:numerics} are developed in a separate paper~\cite{huang2026timeorder}. For more in-depth discussions and understanding, please refer to the paper. Here we will cover what is relevant to the current paper. 

The expectation value of the work extracted at time step $l$, when the true quantum state is $\sigma_{m_l}$, evaluates to:
\begin{equation}
\begin{split}
\label{eq:exp_work}
    \beta\mathbb{E}[W_l] &= \ln 2 + \bra{\psi_{a_l}}\sigma_{m_l}\ket{\psi_{a_l}}\ln\lambda_{a_l}+\bra{\psi_{a_l}^\perp}\sigma_{m_l}\ket{\psi_{a_l}^\perp}\ln(1-\lambda_{a_l})\\
    &= \ln 2 + \Tr(\sigma_{m_l}\ln\rho_{a_l}) =\D(\sigma_{m_l}\|\mathbb{I}/2) -  \D(\sigma_{m_l}\|\rho_{a_l})~.
\end{split}
\end{equation}
If the agent correctly tailors the protocol such that $\rho_{a_l} = \sigma_{m_l}$, the expected work matches the local free energy $\D(\sigma_{m_l}\|\mathbb{I}/2)$. The penalty term $\D(\sigma_{m_l}\|\rho_{a_l})$ strictly represents the local thermodynamic dissipation caused by the agent's misaligned expectations~\cite{kolchinsky2017dependence,riechers2021initial}. This would be termed \emph{local dissipation} for the rest of the discussion.

To rigorously map our reinforcement learning framework to quantum thermodynamics, we distinguish between three levels of energy loss, building upon the definitions developed in~\cite{huang2026timeorder}. \emph{Local dissipation} in~\eqref{eq:exp_work} refers to the irreversible free energy loss at a single time step $l$ due to a mismatch between the agent's target state $\rho_{a_l}$ and the actual emitted state $\sigma_{m_l}$. \emph{Causal dissipation} $\delta(\overrightarrow{Q_1:Q_L})$ in~\eqref{eq:work_upperbound} represents the fundamental, unavoidable energy cost inherent in any sequential extraction process; it arises because a causal agent must disturb the system's temporal correlations to acquire information. Finally, cumulative dissipation $W_{\mathrm{diss}}$ is the \emph{excess} free energy loss across $K$ episodes because the agent does not initially know the environment's true parameters. While causal dissipation is an inherent physical limit of the optimal policy $V^*$, the cumulative dissipation is the thermodynamic equivalent of the mathematical regret that our algorithm seeks to minimize.

The cumulative value function optimized by the RL agent is the total expected work extracted:
\begin{equation}
\label{eq:work_value_function}
    V^\pi(\rho_{_{Q_{1:L}}}) = \mathbb{E}_{\pi, \omega}\left[\sum_{l=1}^L W_l\right] = \mathbb{E}_{\pi, \omega}\sum_{l=1}^L \beta^{-1} \Big(\D(\rho_{l|\tau_{l-1}}\|\mathbb{I}/2)-\D(\rho_{l|\tau_{l-1}}\|\rho_{a_l}^\pi)\Big),
\end{equation}
where $\rho_{l|\tau_{l-1}}$ is the reduced local state of $\rho_{Q_{1:L}}$ conditioned on the classical trajectory $\tau_{l-1}$ of past actions and outcomes, and $\rho_{a_l}^\pi$ corresponds to the target state dictated by the policy $\pi$ at time step $l$.

Because the agent is causally constrained, it cannot perfectly predict the upcoming state without measuring. To maximize global work extraction, the optimal policy $\pi^*$ must balance immediate work yields against the need for information acquisition. Consequently, the agent may deliberately incur a necessary local dissipation penalty $\D(\rho_{l|\tau_{l-1}}\|\rho_{a_l}^\pi)>0$ to obtain information that sharpen its future beliefs.

Consequently, the optimal value function for the optimal policy $\pi^*$, is then:
\begin{equation}
\label{eq:opt_work_value_function}
    V^*(\rho_{_{Q_{1:L}}}) = \mathbb{E}_{\pi^*, \omega}\left[\sum_{l=1}^L W_l\right] = \mathbb{E}_{\pi^*, \omega}\sum_{l=1}^L \beta^{-1} \Big(\D(\rho_{l|\tau_{l-1}}\|\mathbb{I}/2)-\D(\rho_{l|\tau_{l-1}}\|\rho_{a_l}^*)\Big).
\end{equation}
We can verify that~\eqref{eq:opt_work_value_function} perfectly matches the thermodynamic upper bound of~\eqref{eq:work_upperbound}. By expanding $V^*(\rho_{Q_{1:L}})$, we have:
\begin{equation}
    \begin{split}
        \beta V^*(\rho_{Q_{1:L}}) &= L\ln2 - \mathbb{E}_{\pi^*,\omega}\left[\sum_{l=1}^L \Big(S(\rho_{l|\tau_{l-1}})+\D(\rho_{l|\tau_{l-1}}\|\rho^{\pi^*}_{a_l})\Big)\right]\\
        &=L\ln2+\mathbb{E}_{\pi^*,\omega}\left[\sum_{l=1}^L\Tr(\rho_{l|\tau_{l-1}}\ln \rho^{\pi^*}_{a_l})\right]\\
        &= L\ln2+\mathbb{E}_{\pi^*,\omega}\left[\sum_{l=1}^L \Big(\bra{\psi_{a_l^*}}\rho_{l|\tau_{l-1}}\ket{\psi_{a_l^*}}\ln\lambda_{a_l^*} + \bra{\psi_{a_l^*}^\perp}\rho_{l|\tau_{l-1}}\ket{\psi_{a_l^*}^\perp}\ln(1-\lambda_{a_l^*})\Big)\right].
    \end{split}
\end{equation}
Because the optimal purity matches the true environment's probabilities (i.e., $\lambda_{a_l^*}=\bra{\psi_{a_l^*}}\rho_{l|\tau_{l-1}}\ket{\psi_{a_l^*}}$ and $1-\lambda_{a_l^*}= \bra{\psi_{a_l^*}^{\perp}}\rho_{l|\tau_{l-1}}\ket{\psi_{a_l^*}^{\perp}}$, as rigorously justified in Section~\ref{sec:opt_action}), the logarithmic terms convert exactly to the Shannon entropy of the observation:
\begin{equation}
    \begin{split}
        \beta V^*(\rho_{Q_{1:L}}) &= L\ln2 - \mathbb{E}_{\pi^*,\omega}\left[\sum_{l=1}^L H(O_l | a_l^*, \tau_{l-1})\right]\\
        &= L\ln2 - \mathbb{E}_{\pi^*,\omega}\left[\sum_{l=1}^{L-1}  H(O_l | a_l^*, \tau_{l-1}) + S(\rho_{Q_{L}}(\tau_{L-1}))\right]~.
    \end{split}
\end{equation}
The conversion from Shannon entropy to von Neumann entropy in the last step reflects that in the very last time step $L$, the agent minimizes local dissipation purely by choosing the eigenbasis of $\rho_{Q_{L}}(\tau_{L-1})$, without incurring any further sequential measurement penalties as there are no future timesteps. This expression is exactly equivalent to~\eqref{eq:work_upperbound}.

Because the RL reward is defined as the physical work extracted~\eqref{eq:reward_energy}, the environment parameter $\omega$ simply corresponds to the underlying multi-time state $\rho_{Q_{1:L}}$. Consequently, the suboptimality of any policy at episode $k$---the gap between the maximum extractable work $V^*(\omega)$ and the achieved work $V^{\pi^k}(\omega)$---is precisely the unrecoverable thermodynamic work dissipated during that episode. 

This allows us to directly equate the mathematical regret from~\eqref{eq:regret} with the \emph{cumulative dissipation} over $K$ episodes:
 
\begin{equation}
\begin{split}
\label{eq:cum_dissipation}
\regret (K) = W_{\mathrm{diss}}(K)  &= \sum_{k=1}^K\left[V^*(\rho_{Q_{1:L}}) - V^{\pi_k}(\rho_{Q_{1:L}})\right] \\
&= \beta^{-1} \sum_{k=1}^K\Bigg[\mathbb{E}_{\pi^k,\omega}\left[\sum_{l=1}^{L-1} H(O_l | a_l, \tau_{l-1}) + S(\rho_{Q_{L}}(\tau_{L-1}))\right]\\ 
&\quad\quad\quad\quad-\mathbb{E}_{\pi^*,\omega}\left[\sum_{l=1}^{L-1}H(O_l | a_l^*, \tau_{l-1}) +S(\rho_{Q_{L}}(\tau_{L-1}))\right]\Bigg]~.
\end{split}
\end{equation}
This precise mathematical equivalence establishes a profound physical consequence: designing a regret-minimizing RL algorithm is physically equivalent to engineering an adaptive Maxwell's demon. By deploying our continuous-action OMLE strategy over the input-output QHMM, the agent utilizes past battery outcomes to learn the environment's latent memory dynamics. Crucially, our $\widetilde{\mathcal{O}}(\sqrt{K})$ regret bounds guarantee that the agent can continuously update its extraction basis and purity parameters to achieve a strictly sublinear accumulation of irreversible thermodynamic dissipation

\section{Numerical experiments}\label{sec:numerics}

To demonstrate the validity of the bound we derived, we will showcase the learning and sequential work extraction from a simple model with a classical memory discussed in Sec.~\ref{subsec:classical-memory-emission}. We will first discuss how the optimal policy in~\eqref{eq:argmax} in Sec.~\ref{sec:omle_discrete} can be achieved utilizing dynamic programming, more specifically backward value iteration. Following that we will specify the model we simulate and showcase the simulation result.

\subsection{Computing optimal policy via dynamic programming}
In this specific setting, the hidden memory is operationally classical (restricted to the diagonal subspace $\rho_S =\sum_m x_mE_m$), and the environment sequentially emits known quantum states $\sigma_{m_l}$ conditioned on latent memory state $m_l\in\{1,2\}$.

Because the memory transitions classically via a known stochastic matrix $T(m'|m)$, the problem reduces to a Partially Observable Markov Decision Process (POMDP) where the latent state is classical, but the observations are generated by quantum instruments. The latent memory transitions essentially form an underlying Markov Chain and the agent's action $a_l$ (a choice of quantum measurement) acts strictly on the emitted quantum system, $Q_l$, rather than directly on the memory. Consequently, the agent's interventions do not induce non-commutative quantum back-action or generate coherence within the latent memory itself. 
The quantum mechanics of the problem is therefore entirely encapsulated within the observation and reward functions. Specifically, the conditional probability of extracting a specific work value $w_l$ corresponding to a measurement index $i\in\{0,1\})$ given the latent state $m_l$ and the chosen measurement action $a_l$ is simply given by Born's rule as
\begin{equation}
    \Pr(w_l|m_l,a_l) = \Tr(M^{(a_l)}_{i}\sigma_{m_l})~,
\end{equation}
as mentioned in~\eqref{eq:classical-memory-born-rule}. Because the latent state dynamics are classical, and the observation probabilities depend only on the current latent state and action, the framework perfectly satisfies the standard definition of a POMDP.

To rigorously evaluate the regret of our learning algorithm and establish the true maximum extractable work, we must first compute the exact optimal policy $\pi^*$ by reformulating this POMDP as Belief MDP~\cite{aastrom1965optimal,kaelbling1998planning}.

At any time step $l\in[L]$, the agent's sufficient statistic for the past history $\tau_{l-1}$ is its belief state $\eta_{l-1}$ defined as a classical probability distribution over the latent memory states:
\begin{equation}
    \eta_{l-1}(m) \coloneqq \Pr(M_l=m|\tau_{l-1})~.
\end{equation}
Based on this belief, the agent anticipates that the emitted quantum system will be in the expected state $\xi_{\eta_{l-1}}=\sum_{m=1}^1 \eta_{l-1}(m)\sigma_m$.

The agent selects a continuous action $a_l\in\A$, which configures the physical extraction protocol by specifying a measurement basis $\{M_{0}^{(a_l)},M_{1}^{(a_l)}\}$ and a target purity $\lambda_{a_l}$. Upon interacting with the system, the agent reads the battery and directly observes the extracted work value $w_l\in\{w_{l,0}, w_{l,1}\}$. This work value serves simultaneously as the RL reward and the observation. The probability of obtaining work $w_{l,i}$ corresponding to basis index $i$ is governed by Born's rule on the expected state:
\begin{equation}
    \Pr(w_{l,i}|\eta_{l-1},a_l)= \Tr[M_{i}^{(a_l)}\xi_{\eta_{l-1}}] =\sum_{m=1}^2\eta_{l-1}(m)\Tr[M_{i}^{(a_l)}\sigma_m]~.
\end{equation}
Observing the work value $w_l$ allows the agent to update its belief via Bayesian inference and subsequently pass it through the underlying environmental transition matrix $T$ to obtain the prior belief for the next step~\cite{cox1946probability,jaynes2003probability}. The exact deterministic belief transition function $\mathcal{B}$ is:
\begin{equation}
\label{eq:bayes}
    \eta_l(m') = \mathcal{B}(\eta_{l-1},a_l,w_{l,i})(m')\coloneqq \sum_{m=1}^2 T(m'|m)\frac{\eta_{l-1}(m)\Tr(M_{i}^{(a_l)}\sigma_m)}{\Pr(w_{l,i}|\eta_{l-1},a_l)}
\end{equation}

In the reinforcement learning framework, the agent's objective is to maximize the expected cumulative reward, which in this case would reflect the maximization of expected cumulative work extracted. As established in our thermodynamic analysis~\eqref{eq:exp_work}, this expected work extracted evaluates to a difference in relative entropies involving the expected state $\xi_{\eta_{l-1}}$, the thermal state $\gamma$, and the protocol's target state $\rho_{a_l}$:
\begin{equation}
\label{eq:work_reward}
    \mathbb{E}(w_l|\eta_{l-1},a_l) = \beta^{-1}\left[\D(\xi_{\eta_{l-1}}\|\gamma)-\D(\xi_{\eta_{l-1}}\|\rho_{a_l})\right]~.
\end{equation}
The action-value function (Q-function) can then be elegantly expressed via the Bellman optimality equation~\cite{bellman1966dynamic,puterman2014markov}. In its core, Bellman optimality is the mathematical principle that a globally optimal strategy is simply a sequence of optimal smaller steps. The Q-value is the sum of the expected immediate work and the expected work the agent can extract in the future quantified by the value function $V^*_{l+1}$~\eqref{eq:value_func} averaged over the possible measurement outcomes. It therefore quantifies the total expected work extraction of taking a specific action $a_l$ in your current state $\eta_{l-1}$, assuming the agent act perfectly from that point forward. 
\begin{equation}
    Q_l^*(\eta_{l-1},a_l)\coloneqq \mathbb{E}(w_l|\eta_{l-1},a_l)+\sum_{i\in\{0,1\}}\Pr(w_{l,i}|\eta_{l-1},a_l)\left[V^*_{l+1}(\mathcal{B}(\eta_{l-1},a_l,w_{l,i}))\right]~,
\end{equation}
Here $\mathcal{B}(\eta_{l-1},a_l,w_{l,i})$ refers to the belief state the agent transitions to upon obtaining work $w_{l,i}$.

To maximize the expected immediate work $w_l$ for any chosen eigenbasis, the optimal target purity of the protocol must always match the projection of the expected state onto that basis: $\lambda_{a_l}^*=\Tr\big(M^{(a_l)}_{0}\xi_{\eta_l}\big)$, reducing the agent's action space $\A$ (see Sec.~\ref{sec:opt_action} for more detail). The agent no longer needs to search the full space of measurement operators, but only the continuous space of measurement bases.

The optimal value function at time step $l\in[L]$ is then the maximization over these measurement bases:
\begin{equation}
\label{eq:value_func}
    V_l^*(\eta_l)=\max_{M^{(a_l)}\in\A} Q_l^*(\eta_{l-1},a_l)
\end{equation}
with the terminal boundary condition $V_L^*(\eta)=0$ for all $\eta$, signifying that at the end of $L$ time steps, there is no free energy left to be extracted in future.

\subsubsection{Backward iteration algorithm}
We solve this MDP via backward dynamic programming, a standard technique used in complex optimization of non-Markovian processes. Readers may refer to~\cite{puterman2014markov} for a detailed exposition. We discretize the one-dimensional probability simplex of the belief space $\eta\in\Delta^1$ into a finite set $\mathcal{K}_{DP}$, and similarly discretize the action space of measurement bases. The pseudocode of the said algorithm can be found in Algorithm~\ref{alg:backward_vi_work_extraction}. Note in this section, we assumes that the agent knows what the underlying process is, which would correspond to knowing $\omega_k$ in~\eqref{eq:argmax}, the agent would then use this algorithm to find the optimal policy $\pi^k$ for the process described by $\omega_k$.

\begin{algorithm}[H]
\caption{Backward Value Iteration for Sequential Work Extraction}
\label{alg:backward_vi_work_extraction}
\KwIn{Discretized belief simplex $\mathcal{K}_{DP}$, discretized basis angles $\mathcal{A}$, known emitted states $\{\sigma_1, \sigma_2\}$, transition matrix $T$, horizon $L$, inverse temperature $\beta$.}
\KwOut{Optimal policy $\pi^* = \{\pi_l^*\}_{l=1}^L$, representing a look-up table mapping any belief state $\eta \in \mathcal{K}_{DP}$ at step $l$ to the optimal measurement basis $a^* \in \mathcal{A}$.}

\textbf{Initialize:} Set $V_{L+1}^*(\eta^{(k)}) = 0$ for all $\eta^{(k)} \in \mathcal{K}_{DP}$\;

\For{step $l = L, L-1, \dots, 1$}{
    \For{each belief state $\eta^{(k)} \in \mathcal{K}_{DP}$}{
        Compute expected state: $\xi = \eta^{(k)}(1)\sigma_1 + \eta^{(k)}(2)\sigma_2$\;
        
        \For{each measurement basis $a \in \mathcal{A}$}{
            Set optimal target purity: $\lambda_a = \text{Tr}[M_0^{(a)} \xi]$\;
            Calculate expected immediate work: $\mathbb{E}(W|a,\eta) = \beta^{-1} \big[ \D(\xi||\gamma) - \D(\xi||\rho_{a}) \big]$\;
            Initialize $Q_l(\eta^{(k)}, a) = \mathbb{E}(W|a,\eta)$\;
            
            \For{each outcome index $i \in \{0, 1\}$}{
                Calculate outcome probability: $p_i = \text{Tr}[M_i^{(a)} \xi]$\;
                \If{$p_i > 0$}{
                    Compute next belief state $\eta_{l+1} = \mathcal{B}(\eta^{(k)}, a, i)$ following~\eqref{eq:bayes}\;
                    Find closest discrete belief $\eta^{(j)} \in \mathcal{K}_{DP}$ to $\eta_{l+1}$\;
                    Accumulate future value: $Q_l(\eta^{(k)}, a) \mathrel{+}= p_i V_{l+1}^*(\eta^{(j)})$\;
                }
            }
        }
        
        Update value function: $V_l^*(\eta^{(k)}) = \max_{a \in \mathcal{A}} Q_l(\eta^{(k)}, a)$\;
        Extract optimal policy mapping: $\pi_l^*(\eta^{(k)}) = \arg\max_{a \in \mathcal{A}} Q_l(\eta^{(k)}, a)$\;
    }
}
\end{algorithm}

The backward value iteration algorithm provides a standard reinforcement learning methodology to compute the maximum cumulative reward. Crucially, we can explicitly map the optimal value function obtained via these Bellman equations, $V_1^*(\eta_0)$, directly to the thermodynamic bound in~\eqref{eq:work_upperbound}.

At step $l$, the DP agent's belief state $\eta_{l-1}$ translate to an expected state $\xi_{\eta_{l-1}} = \sum_{m=1}^2\eta_{l-1}(m)\sigma_m$. Physically, this expected state is exactly conditional local state of the sequence, $\rho_{l|\tau_{l-1}}$. 
It is critical to distinguish the globally optimal policy $\pi^*$ from a purely local-greedy strategy. A local-greedy agent would maximize immediate work by always choosing a measurement basis that perfectly diagonalizes its expected state $\xi_{\eta_{l-1}}$. This would strictly eliminate the local thermodynamic dissipation penalty $\D(\xi_{\eta_{l-1}}\|\rho_{a_l})$ at current step, meaning the immediate extracted work would locally equal $\D(\xi_{\eta_{l-1}}\|\gamma)$. However, such a extraction often yields poor information regarding the underlying latent memory state, leading to highly mixed future belief states and diminished future work yields.

Because the DP agent optimizes the Q-function over the entire sequential horizon, the optimal policy may select an action $a_l^*$ that incurs a non-zero local thermodynamic dissipation penalty. Physically, the agent chooses to sacrifice immediate work to perform a measurement that extracts more information about the latent memory, unlocking greater total work extraction over the subsequent steps.
Consequently, the global optimal value function recovers the thermal dynamic bound in~\eqref{eq:work_upperbound}.

By establishing this exact mathematical equivalence, the regret of the learning agent over $K$ episodes is precisely the cumulative thermodynamic dissipation $W_{\mathrm{diss}}(K)$ in~\eqref{eq:cum_dissipation}. Minimizing regret via the estimated OOMs is, therefore, operationally equivalent to an adaptive Maxwell's demon learning to optimally balance the thermodynamic cost of information acquisition against future work exploitation.

\subsubsection{Optimal actions}
\label{sec:opt_action}
Recall that the value of the work extracted is dependent on the purity of the estimate, and the probability distribution is dependent on the eigenvectors of the estimate. The Bayesian update in~\eqref{eq:bayes} itself depends on the probability, not the magnitude of work (just depend on which of the 2 work values). That is to say, for a given eigenbasis, the inference of the agent stays the same, but we have the freedom to optimize the eigenvalues (or purity in our lingo) for the work extractions.

Recall that expected work extracted when at agent is in $\eta^{(i)}$ is given by
\begin{equation}
    \langle W\rangle = \ln2 +(\bra{\psi}\xi_{\eta^{(i)}}\ket{\psi}\ln\lambda+\bra{\psi^\perp}\xi_{\eta^{(i)}}\ket{\psi^\perp}\ln(1-\lambda))
\end{equation}
to maxmise this expression, one would obtain $\lambda=\bra{\psi}\xi_{\eta^{(i)}}\ket{\psi}$ as the optimum. that is to say the optimal state to tailor for any given eigenbasis is
\begin{equation}
    \rho^* = \sum_{i}p_i\ket{\psi_i}\!\bra{\psi_i}~,\quad p_i=\bra{\psi_i}\xi_{}\ket{\psi_i}~,
\end{equation}
in order words, the projection of the expected state onto the chosen eigenbasis. Hence the set of action consist of all the basis, and the purity will be determined automatically. For the case where there are only 2 states, one would choose a plane that the 2 quantum state reside in and choose all possible eigenbasis that lies within the plane for optimal Bayesian inference~\eqref{eq:bayes}, the intuition is that out of plane measurements do not yield more information compared to in plane measurements.

\subsection{Case Study}

To explicitly connect our theoretical learning guarantees with numerical practice, we implement the learning algorithm using the Observable Operator Model (OOM) framework (introduced in Section~\ref{sec:oom_uc}) on a canonical 2-state classical memory model (Section~\ref{subsec:classical-memory-emission}). The numerical simulation in this study is shown in Figure~\ref{fig:intro_cum_dissipation_compare}.

The environment features a diagonal memory state with two latent configurations, $\rho_S = x_1E_1+x_2E_2$ as was mentioned in~\eqref{eq:diag_memory}. Between time steps, the memory transitions via a parameter-dependent stochastic matrix:

\begin{equation}
    \mathbb{E} = \begin{pmatrix}
        \theta,&1-\theta\\
        1-\theta,&\theta
    \end{pmatrix}~.
\end{equation}
The agent is provided with the identities of the two conditionally emitted quantum states $\{\sigma_m\}_{m=1}^2$, but the transition probability $\theta\in(0,1)$ is strictly unknown. The agent's task is to learn this parameter on the fly to maximize the cumulative work extracted over $K$ episodes.

To perform Maximum Likelihood Estimation (MLE) without attempting to track the unobservable latent memory, the agent leverages the OOM superoperators. Because the memory is classical, the recovery map $R^{(a_l)}$ in~\eqref{eq:classical-memory-recovery-map} exists and is simply given by $(\mathbb{O}^{(a_l)})^{-1}$. The $\theta$-dependent OOM operator which maps the observable classical state forward upon taking action $a_{l+1}$ after observing outcome $o_l$ from action $a_l$, is explicitly constructed using the definition in Definition~\ref{def:oom_qhmm}:
\begin{equation}
    \mathbf{A}_{\theta}(o_l,a_l,a_{l+1}) = \Tr_S\circ\mathcal{P}^{(a_{l+1})}\circ(\mathbb{E}_\theta\otimes \mathcal{T}_{o_l})\circ R^{(a_l)}~.
\end{equation}
We first note that the recovery map will act on a observation vector $d= \begin{pmatrix}
    d_0\\d_1
\end{pmatrix}$ to obtain the unnormalized latent memory state 
\begin{equation}
    \hat{x} = (\mathbb{O}^{(a_l)})^{-1}d
\end{equation} as mentioned in~\eqref{eq:classical-memory-recovered-weights}.  is just the inverse of $(\mathbb{O}^{(a_l)})^{-1}$. The probability of observing an outcome conditioned on $a_l$ being taken can then be written as a diagonal matrix $\mathbb{D}$ where 
\begin{equation}
    \mathbb{D}_{o_l}^{(a_l)}= \begin{pmatrix}
        p_{a_l}(o_l|m=1)&0\\ 0 & p_{a_l}(o_l|m=2)
    \end{pmatrix}~,
\end{equation}
this is analogous to $\mathcal{T}_{o_l}$. The latent state transition is given by $\mathbb{E}_\theta$. We then have to map the latent state into the observable space using the next observation matrix conditioned on next action $\mathbb{O}^{(a_{l+1})}$. This brings us to the explicit construction of the OOM in the form of
\begin{equation}
    \mathbf{A}(o_l,a_l,a_{l+1} ) = \mathbb{O}^{(a_{l+1})}\cdot\mathbb{E}_\theta\cdot \mathbb{D}_{o_l}^{(a_l)}\cdot(\mathbb{O}^{(a_l)})^{-1}
\end{equation}
The probability of the action-outcome trajectory $\tau_{L}=(a_1,o_1,\cdots,a_L,o_L)$ can be computed by multiplying these OOM operators as shown in Lemme.~\ref{lem:probabilities_operator_multiquantum}~.
\begin{equation}
A(\tau_{L}) = \mathbf{e}_{o_L}^\mathrm{T}\mathbf{A}(o_{L-1},a_{L-1},a_{L})\cdots\mathbf{A}(o_{1},a_{1},a_{2})\mathbf{a}(a_1)
\end{equation}
where $\mathbf{e}_{o_L}$ is an elementary vector.

At the beginning of each episode $k$, the agent possesses a dataset of all past executed trajectories $D_k=\{\tau^{k'}_L\}_{k'=1}^{k-1}$ The sequential learning process operates via the following loop:
\begin{enumerate}
    \item \textbf{Estimation (OOM-based MLE)}: The agent computes the Maximum Likelihood Estimate $\hat{\theta}_k$ of the environment by maximizing the log-likelihood of the observed trajectories using the OOM trajectory probabilities:
    \begin{equation}
        \hat{\theta}_k = \arg \max_{\theta\in [0,1]}\sum_{j=1}^{k-1}\log A(\tau^j_{L})~.
    \end{equation}\\
    \item \textbf{Planning}: Using the estimated transition parameter $\hat{\theta}_k$, the agent computes the optimal policy $\pi^k$ via Backward Value Iteration algorithm (Alg.~\ref{alg:backward_vi_work_extraction}).\\
    \item \textbf{Execution}: The agent executes the extraction protocol defined by $\pi^k$ on the physical sequence $\rho_{Q_{1:L}}$, collects a new trajectory $\tau_L^k$ along with the total extracted work $W_L^k$ and appends the trajectory to the dataset $D_{k+1}$.
\end{enumerate}
We simulate this process over 50 independent runs. To evaluate the learning efficiency, we plot the cumulative work dissipation $W_{\mathrm{diss}}(K)$ in ~\eqref{eq:cum_dissipation} against the number of elapsed episodes K, yielding the graph in Figure~\ref{fig:intro_cum_dissipation_compare}. The strictly sublinear accumulation of dissipation confirms that the OOM-based estimation rapidly converges, allowing the agent to adapt its measurement basis and extraction purities on the fly to saturate the fundamental thermodynamic limits of the environment.

\section*{Acknowledgments}
J.L. and M.F. gratefully acknowledge Marco Tomamichel for funding M.F.'s visit to Singapore in 2024, which facilitated the initial discussions that seeded this project.
J.L. further thanks Erkka Haapasalo and Marco Tomamichel for fruitful discussions on multi-armed quantum bandit lower bounds several years ago, which inspired the previously unpublished SIC-POVM lower bound presented here.  

This work is supported by the National Research Foundation through the NRF Investigatorship on Quantum-Enhanced Agents (Grant No. NRF-NRFI09-0010), the Singapore Ministry of Education Tier 1 Grant RT4/23 and RG77/22 (S), the National Quantum Office, hosted in A*STAR, under its Centre for Quantum Technologies Funding Initiative (S24Q2d0009), and the Hong Kong Research Grant Council (RGC) through the General Research Fund (GRF) grant 17302724. 

Finally, the authors acknowledge the use of Google Gemini for refining the presentation and style of this manuscript.

\section*{Code Availability Statement} 

The code used to generate the numerical results and figures in this work is publicly available at
\url{https://github.com/ruochenghuang/QHMM_work_dissipation_simulation}

\newcommand{\bbE}{\mathbb{E}}
\newcommand{\bbP}{\mathbb{P}}

\bibliographystyle{ultimate}
\bibliography{biblio_workFCS}
\appendix

\section{Proofs for MLE concentration bounds with continuous actions}
\label{sec:appendix_mle_continuous}

In this appendix, we formalize the maximum-likelihood estimation (MLE) concentration bounds for environments with continuous action spaces, providing the full rigorous proofs for Proposition~\ref{prop:lr-cont} and Proposition~\ref{prop:l1-cont}. Our approach adapts the classical MLE analysis for POMDPs~\cite{liu2022partially, liu2023optimistic} to our input-output quantum hidden Markov model (QHMM), carefully handling the continuous action densities and bounding the complexity of the likelihood class via a finite-dimensional linear embedding.

A fundamental advantage of the sequential decision-making framework is that the agent's continuous policy density perfectly cancels out in the likelihood ratio. Furthermore, because the environment's response to continuous actions can be embedded via the linearization of the quantum state space, the statistical complexity depends purely on the finite dimensionality of the quantum systems.

\subsection{Trajectory densities and parameter space embedding}
\label{app:parameter-space}

Recall from Section~\ref{sec:continous_model} that the reference measure on the trajectory space $\mathcal{T}_L = (\mathcal{A}\times[O])^L$ is defined as $\nu_L := \mu^{\otimes L} \otimes \mathrm{count}^{\otimes L}$, where $\mu$ is the dominating measure on the continuous action space $\mathcal{A}$. For any policy $\pi$ and model $\omega \in \Omega$, the probability measure of a trajectory $\tau_L = (a_1, o_1, \dots, a_L, o_L)$ admits a density with respect to $\nu_L$, given by
\begin{equation}
p_{\omega}^{\pi}(\tau_L) = \pi(\tau_L)\,A_\omega(\tau_L),
\end{equation}
where $\pi(\tau_L) = \prod_{l=1}^L \pi_l(a_l | \tau_{l-1})$ is the action density governed strictly by the agent, and $A_\omega(\tau_L) = \prod_{l=1}^L \Pr_\omega(o_l | \tau_{l-1}, a_l)$ is the observation likelihood. 

Crucially, for any two models $\omega, \omega' \in \Omega$, the continuous policy density cancels out exactly in the likelihood ratio:
\begin{equation}
\frac{p_{\omega'}^{\pi}(\tau_L)}{p_{\omega}^{\pi}(\tau_L)} \;=\; \frac{\pi(\tau_L)\, A_{\omega'}(\tau_L)}{\pi(\tau_L)\, A_\omega(\tau_L)} \;=\; \frac{A_{\omega'}(\tau_L)}{A_\omega(\tau_L)},
\end{equation}
provided the denominator is non-zero. This cancellation implies that any concentration bound relying on the log-likelihood ratio only requires us to control the complexity of the function class of observation likelihoods $\mathcal{F} = \{A_\omega : \omega \in \Omega\}$, completely bypassing the need to cover the infinite-dimensional policy space or continuous action measure.

We now bound the complexity of $\mathcal{F}$ by embedding the unknown parameters into a finite-dimensional space. As defined in Section~\ref{sec:embedding-dim}, the spanning dimension of the action class $\mathcal{A}$ is $d_\mathcal{A} \le (O-1)S^2$. We fix a Hilbert--Schmidt orthonormal basis $\{P_\mu\}_{\mu=1}^{S^2}$. Because the POVM effects $\{M_o^{(a)}\}_{o\in[O]}$ lie in a space of dimension $d_\mathcal{A}$, the environment's response to \emph{any} continuous action is entirely determined by its response on this finite-dimensional basis. 

As a result, any candidate model $\omega \in \Omega$ can be parameterized by at most
\begin{equation}
d_\omega \leq  L S^4 + d_\mathcal{A}  S^2 + S^2
\end{equation}
real parameters (accounting for the initial state, the evolution channels, and the instrument elements). The physical constraints (e.g., CPTP maps, CPTNI instruments) naturally confine the parameters to a bounded, compact subset of $\mathbb{R}^{d_\omega}$.

The observation likelihood $A_\omega(\tau_L)$ is formed by taking the trace of the composition of these linear maps. By multi-linearity, $A_\omega(\tau_L)$ is a polynomial of degree $L$ in these bounded parameters. Therefore, the mapping $\omega \mapsto A_\omega(\tau_L)$ is uniformly Lipschitz continuous. That is, there exists a constant $L_{\mathrm{lip}} > 0$ such that for any two models $\omega, \omega' \in \Omega$, parameterized by vectors $\varphi(\omega), \varphi(\omega') \in \mathbb{R}^{d_\omega}$, and any trajectory $\tau_L$:
\begin{equation}
\label{eq:lipschitz_bound}
\big|A_\omega(\tau_L) - A_{\omega'}(\tau_L)\big| \leq L_{\mathrm{lip}} \|\varphi(\omega) - \varphi(\omega')\|_\infty.
\end{equation}

\subsection{Optimistic bracketing and mass inflation}
\label{app:bracketing}

Using the Lipschitz property~\eqref{eq:lipschitz_bound}, we construct a finite optimistic $\epsilon$-bracketing cover for the trajectory densities. We build an $\epsilon$-grid over the compact parameter space $\Omega \subset \mathbb{R}^{d_\omega}$ with scale 
\begin{align}\label{eq:eps_grid}
\epsilon := \frac{1}{2 K O^L L_{\mathrm{lip}}}.
\end{align}
The number of points in this grid, and thus the number of pseudo-models $\overline{\Omega}$, is bounded by
\begin{equation}
\label{eq:cover_size}
    \log |\overline{\Omega}| \leq  \mathcal{O}\Big( d_\omega L\log\!\big(K O\big) \Big).
\end{equation}
where we just used that for a $d$-dimensional space, the number of grid points required to cover it scales inversely with the volume of an $\epsilon$-ball $|\overline{\Omega}| \approx \left(\frac{\text{Constant}}{\epsilon}\right)^d$.

Then for each $\omega' \in \Omega$, we can choose a grid point $\omega_c \in \overline{\Omega}$ such that $\|\varphi(\omega') - \varphi(\omega_c)\|_\infty \le \epsilon$. We define the corresponding optimistic upper bracket density $p_{\bar{\omega}'}^\pi$ (associated with $\bar{\omega}'$) uniformly for all $\tau_L$ as:
\begin{equation}
p_{\bar{\omega}'}^\pi(\tau_L) \coloneqq \pi(\tau_L) \big( A_{\omega_c}(\tau_L) + L_{\mathrm{lip}}\epsilon \big).
\end{equation}
By the Lipschitz bound, this guarantees the pointwise upper bound:
\begin{equation}
\label{eq:optimistic-bracketing}
p_{\omega'}^\pi(\tau_L) \leq  p_{\bar{\omega}'}^\pi(\tau_L), \qquad \forall \tau_L, \forall \pi.
\end{equation}
Integrating this pseudo-density with respect to the continuous reference measure $\nu_L$ (defined in Section~\ref{sec:continous_model}), we obtain the \emph{mass inflation} bound:
\begin{equation}
\begin{aligned}
    \big\|p_{\bar{\omega}'}^\pi\big\|_{\ell_1(\nu_L)} 
    & =  \sum_{o_{1:L}\in [O]^L} \int_{\mathcal{A}^L} p_{\bar{\omega}'}^\pi(\tau_L) \mu^{\otimes L}(\dd a_{1:L})  \\
    & \leq \sum_{o_{1:L}\in [O]^L} \int_{\mathcal{A}^L} \pi(\tau_L) \big( A_{\omega'}(\tau_L) + 2 L_{\mathrm{lip}}\epsilon \big)  \mu^{\otimes L}(\dd a_{1:L})  \\
    & = \int_{\mathcal{T}_L} p_{\omega'}^\pi(\tau_L) \nu_L(\dd\tau_L) + 2 L_{\mathrm{lip}}\epsilon \sum_{o_{1:L}\in [O]^L} \int_{\mathcal{A}^L} \pi(\tau_L) \mu^{\otimes L}(\dd a_{1:L}).
\end{aligned}
\end{equation}
The first term is exactly 1 as $p_{\omega'}^\pi$ is a valid probability density. For the second term, observe that 
\begin{align}
    \int_{\mathcal{A}^L} \pi(a_{1:L}) \mu^{\otimes L}(\dd a_{1:L}) = 1, 
\end{align}
for any fixed observation sequence. There there are exactly $O^L$ such sequences. Thus:
\begin{equation}
    \big\|p_{\bar{\omega}'}^\pi\big\|_{\ell_1(\nu_L)} \leq 1 + 2 L_{\mathrm{lip}}\epsilon  O^L  = 1 + \frac{1}{K}, 
    \label{eq:mass-inflation-assump}
\end{equation}
where our choice of $\epsilon$ in~\eqref{eq:eps_grid} ensures the mass inflation strictly satisfies the required scale.

\subsection{Proof of Proposition~\ref{prop:lr-cont} (Likelihood ratio concentration)}

\begin{proof}[Proof of Proposition~\ref{prop:lr-cont}]
Let $\omega \in \Omega$ denote the true underlying model generating the trajectories. For any pseudo-model $\bar{\omega} \in \overline{\Omega}$, define the cumulative log-likelihood ratio
\begin{equation}
    L_t(\bar{\omega}) \coloneqq  \sum_{i=1}^t \log \left( \frac{p_{\bar{\omega}}^{\pi^i}(\tau^i)}{p_{\omega}^{\pi^i}(\tau^i)} \right).
\end{equation}
Let $\mathcal{F}_{t-1}$ be the filtration containing the history up to the end of episode $t-1$, which deterministically specifies the policy $\pi^t$ to be executed at episode $t$. Taking the conditional expectation of a single term yields:
\begin{align}
    \mathbb{E} \left[ \exp\left( \log \frac{p_{\bar{\omega}}^{\pi^t}(\tau^t)}{p_{\omega}^{\pi^t}(\tau^t)} \right) \middle|  \mathcal{F}_{t-1} \right]
    =  \mathbb{E} \left[ \frac{p_{\bar{\omega}}^{\pi^t}(\tau^t)}{p_{\omega}^{\pi^t}(\tau^t)} \middle| \mathcal{F}_{t-1} \right]  = \int_{\mathcal{T}_L} p_{\bar{\omega}}^{\pi^t}(\tau) \nu_L(\dd\tau)  = \big\| p_{\bar{\omega}}^{\pi^t} \big\|_{\ell_1(\nu_L)} \leq  1 + \frac{1}{K}, 
    \label{eq:step-mgf}
\end{align}
where the inequality follows from the mass inflation bound~\eqref{eq:mass-inflation-assump}. 
Applying the tower rule and telescoping over the $t$ episodes, the moment generating function satisfies:
\begin{equation}
    \mathbb{E}\left[ \exp(L_t(\bar{\omega})) \right] \leq \left( 1 + \frac{1}{K} \right)^t \leq  \left( 1 + \frac{1}{K} \right)^K \leq  e.
\end{equation}
By Markov's inequality, for any $u > 0$:
\begin{equation}
    \Pr\left[ L_t(\bar{\omega}) > u \right] = \Pr \left[ \exp(L_t(\bar{\omega})) > e^u \right] \leq e^{-u} \mathbb{E}[\exp(L_t(\bar{\omega}))] \leq e^{1-u}.
\end{equation}
Setting $u = \log(K |\overline{\Omega}| / \delta) + 1$ and applying a union bound over all episodes $t \in [K]$ and all pseudo-models $\bar{\omega} \in \overline{\Omega}$, we ensure that with probability at least $1-\delta$:
\begin{equation}
    L_t(\bar{\omega}) \leq \log |\overline{\Omega}| + \log(K/\delta) + 1,  \qquad \forall t \in [K],   \forall \bar{\omega} \in \overline{\Omega}.
\end{equation}
Finally, for any continuous candidate model $\omega' \in \Omega$, we map it to its optimistic upper bracket $\bar{\omega}' \in \overline{\Omega}$ satisfying $p_{\omega'}^\pi \le p_{\bar{\omega}'}^\pi$ pointwise. Thus:
\begin{equation}
    \sum_{i=1}^t \log \left(\frac{p_{\omega'}^{\pi^i}(\tau^i)}{p_{\omega}^{\pi^i}(\tau^i)}\right) \leq \sum_{i=1}^t \log \left(\frac{p_{\bar{\omega}'}^{\pi^i}(\tau^i)}{p_{\omega}^{\pi^i}(\tau^i)}\right) = L_t(\bar{\omega}') \leq \log |\overline{\Omega}| + \log(K/\delta) + 1.
\end{equation}
Substituting the bracketing entropy bound $\log |\overline{\Omega}| \le \mathcal{O}(d_\omega L \log(K O))$ from~\eqref{eq:cover_size} and absorbing constants yields the stated result.
\end{proof}

\subsection{Proof of Proposition~\ref{prop:l1-cont} (Trajectory  \texorpdfstring{$\ell_1$}{l1} concentration)}

\begin{proof}[Proof of Proposition~\ref{prop:l1-cont}]
Fix $t \in [K]$. We adapt a tangent-sequence argument over the continuous trajectory space. Let $D_t = \{(\pi^i, \tau^i)\}_{i=1}^t$ be the true observed dataset, where $\tau^i \sim \mathbb{P}_{\omega}^{\pi^i}$ and $\omega$ is the true model. Let $\widetilde{D}_t = \{(\pi^i, \tilde{\tau}^i)\}_{i=1}^t$ be a ``tangent'' sequence, where conditional on $D_t$, each $\tilde{\tau}^i \sim \mathbb{P}_{\omega}^{\pi^i}$ is drawn independently.

For any measurable function $\ell(\pi, \tau) \in \mathbb{R}$, define
\begin{align}
    L(D_t) = \sum_{i=1}^t \ell(\pi^i, \tau^i), \qquad L(\widetilde{D}_t) = \sum_{i=1}^t \ell(\pi^i, \tilde{\tau}^i). 
\end{align}
By a tower-rule induction (integrability is ensured since we work with densities w.r.t.\ $\nu_L$), we have the identity~\cite{liu2023optimistic}:
\begin{equation}
\label{eq:tangent-identity}
    \mathbb{E} \left[ \exp\left( L(D_t) - \log \mathbb{E}[\exp(L(\widetilde{D}_t)) | D_t] \right) \right] = 1.
\end{equation}

For each pseudo-model $\bar{\omega} \in \overline{\Omega}$ from our optimistic bracketing cover, define the strictly normalized density:
\begin{equation}
    \tilde{p}_{\bar{\omega}}^\pi(\tau_L) \coloneqq  \frac{p_{\bar{\omega}}^\pi(\tau_L)}{\| p_{\bar{\omega}}^\pi \|_{\ell_1(\nu_L)}}.
\end{equation}
Since $\tilde{p}_{\bar{\omega}}^\pi$ is a true probability density w.r.t.\ $\nu_L$, its Hellinger affinity with the true density is bounded by 1:
\begin{equation}
    \int_{\mathcal{T}_L} \sqrt{\tilde{p}_{\bar{\omega}}^\pi(\tau_L)  p_\omega^\pi(\tau_L)}  \nu_L(\dd \tau_L) \leq  1.
\end{equation}

Define the function $\ell_{\bar{\omega}}(\pi, \tau) := \frac{1}{2} \log \left( \frac{\tilde{p}_{\bar{\omega}}^\pi(\tau)}{p_\omega^\pi(\tau)} \right)$ if $p_\omega^\pi(\tau) > 0$, and $0$ otherwise. 
Let $L_{\bar{\omega}}(D_t) = \sum_{i=1}^t \ell_{\bar{\omega}}(\pi^i, \tau^i)$. By the conditional independence of the tangent sequence given $D_t$, we have:
\begin{equation}
\label{eq:hellinger-affinity}
    \mathbb{E}\exp[(L_{\bar{\omega}}(\widetilde{D}_t)) | D_t] = \prod_{i=1}^t \int_{\mathcal{T}_L} \sqrt{\tilde{p}_{\bar{\omega}}^{\pi^i}(\tau) p_\omega^{\pi^i}(\tau)}  \nu_L(\dd\tau).
\end{equation}
Similarly to~\eqref{eq:tangent-identity}, 
\begin{align}
    \mathbb{E}\left[\exp\left( L_{\bar{\omega}}(D_t) - \log \mathbb{E}[\exp(L_{\bar{\omega}}(\widetilde{D}_t)) | D_t] \right)\right]=1. 
\end{align}
By Markov's inequality and union bound over all $t \in [K]$ and $\bar{\omega} \in \overline{\Omega}$ with probability at least $1-\delta$:
\begin{equation}
\label{eq:chernoff-step}
    - \log \mathbb{E}[\exp(L_{\bar{\omega}}(\widetilde{D}_t)) | D_t] \leq  - L_{\bar{\omega}}(D_t) + \log(K |\overline{\Omega}|/\delta ).
\end{equation}
Using the scalar inequality $-\log x \ge 1 - x$ for $x \in (0, 1]$, and~\eqref{eq:hellinger-affinity} yields:
\begin{equation}
-\log \mathbb{E}[\exp(L_{\bar{\omega}}(\widetilde{D}_t)) | D_t] \;\ge\; \sum_{i=1}^t \left(1 - \int_{\mathcal{T}_L} \sqrt{\tilde{p}_{\bar{\omega}}^{\pi^i}(\tau) \, p_\omega^{\pi^i}(\tau)} \, \nu_L(d\tau) \right).
\label{eq:minuslog-lower}
\end{equation}
By Cauchy-Schwarz inequality, the $\ell_1$ (total variation) distance is strictly bounded by the Hellinger affinity gap: 
\begin{align}
    \left(\int |p - q| \dd \nu_L\right)^2 \leq 8\left(1 - \int \sqrt{pq} \dd\nu_L\right). 
\end{align} 
Combining this with~\eqref{eq:chernoff-step} and~\eqref{eq:minuslog-lower}, and letting $\tilde{\mathbb{P}}_{\bar{\omega}}^{\pi}$ denote the probability measure corresponding to the density $\tilde{p}_{\bar{\omega}}^{\pi}$, gives:
\begin{equation}
\label{eq:baromega-bound}
    \sum_{i=1}^t \big\| \tilde{\mathbb{P}}_{\bar{\omega}}^{\pi^i} - \mathbb{P}_\omega^{\pi^i} \big\|_1^2 \leq 8 \left( - L_{\bar{\omega}}(D_t) + \log (K |\overline{\Omega}|/\delta ) \right).
\end{equation}

We now map this pseudo-model bound to an arbitrary target model $\omega' \in \Omega$. Select its associated optimistic bracket $\bar{\omega}' \in \overline{\Omega}$ from~\eqref{eq:optimistic-bracketing}. 
Because $p_{\omega'}^\pi \le p_{\bar{\omega}'}^\pi = \| p_{\bar{\omega}'}^{\pi^i} \|_{\ell_1(\nu_L)}  \tilde{p}_{\bar{\omega}'}^\pi$ point-wise:
\begin{equation}
\label{eq:log-compare}
    - L_{\bar{\omega}'}(D_t) = \frac{1}{2} \sum_{i=1}^t \log \left( \frac{p_\omega^{\pi^i}(\tau^i)}{\tilde{p}_{\bar{\omega}'}^{\pi^i}(\tau^i)} \right) \leq \frac{1}{2} \sum_{i=1}^t \log \left( \frac{p_\omega^{\pi^i}(\tau^i)}{p_{\overline{\omega}'}^{\pi^i}(\tau^i)} \right) + \frac{1}{2} \sum_{i=1}^t \log \| p_{\bar{\omega}'}^{\pi^i} \|_{\ell_1(\nu_L)}.
\end{equation}
Using~\eqref{eq:mass-inflation-assump}, $\log \| p_{\bar{\omega}'}^{\pi^i} \|_{\ell_1(\nu_L)} \leq 1/K$ and $t \le K$, the sum $\sum_{i=1}^t \log \| p_{\bar{\omega}'}^{\pi^i} \|_{\ell_1(\nu_L)} \le 1$. 

By the triangle inequality and the mass inflation bound:
\begin{equation}
    \big\| \mathbb{P}_{\omega'}^{\pi^i} - \mathbb{P}_\omega^{\pi^i} \big\|_1 \leq  \big\| \tilde{\mathbb{P}}_{\bar{\omega}'}^{\pi^i} - \mathbb{P}_\omega^{\pi^i} \big\|_1 + \big\| \tilde{p}_{\bar{\omega}'}^{\pi^i} - p_{\omega'}^{\pi^i} \big\|_{\ell_1(\nu_L)}.
\end{equation}
The residual term evaluates to:
\begin{equation}
    \big\| \tilde{p}_{\bar{\omega}'}^{\pi^i} - p_{\omega'}^{\pi^i} \big\|_{\ell_1(\nu_L)} \leq \big\| \tilde{p}_{\bar{\omega}'}^{\pi^i} - p_{\bar{\omega}'}^{\pi^i} \big\|_{\ell_1(\nu_L)} + \big\| p_{\bar{\omega}'}^{\pi^i} - p_{\omega'}^{\pi^i} \big\|_{\ell_1(\nu_L)} \leq  (\| p_{\bar{\omega}'}^{\pi^i} \|_{\ell_1(\nu_L)} - 1) + \frac{1}{K} \leq \frac{2}{K}.
\end{equation}
Using the algebraic inequality $(x + 2/K)^2 \le 2x^2 + 8/K^2$, we obtain:
\begin{equation}
    \sum_{i=1}^t \big\| \mathbb{P}_{\omega'}^{\pi^i} - \mathbb{P}_\omega^{\pi^i} \big\|_1^2 \;\le\; 2 \sum_{i=1}^t \big\| \tilde{\mathbb{P}}_{\bar{\omega}'}^{\pi^i} - \mathbb{P}_\omega^{\pi^i} \big\|_1^2 + \frac{8}{K}.
\end{equation}
Combining this with~\eqref{eq:baromega-bound},~\eqref{eq:log-compare}, and observing that $\sum_{i=1}^t \log \| p_{\bar{\omega}'}^{\pi^i} \|_{\ell_1(\nu_L)}  \le 1$, we absorb all additive constants into a universal constant $c > 0$, concluding that:
\begin{equation}
    \sum_{i=1}^t \big\| \mathbb{P}_{\omega'}^{\pi^i} - \mathbb{P}_\omega^{\pi^i} \big\|_1^2 \;\le\; c \left( \sum_{i=1}^t \log \left( \frac{p_\omega^{\pi^i}(\tau^i)}{p_{\omega'}^{\pi^i}(\tau^i)} \right) + \log |\overline{\Omega}| + \log(K/\delta) \right).
\end{equation}
Substituting the bracketing entropy bound $\log |\overline{\Omega}| \le \mathcal{O}\big(d_\omega L \log(K O)\big)$ finishes the proof.
\end{proof}

\section{Detailed derivation for work distribution}
\label{sec:detailed_work_distribution}

In this section, we provide a detailed derivation of the work distribution resulting from the $\rho^*$-ideal work extraction protocol (Alg.~\ref{alg:sc_work_extraction}) and analyze the expected work extracted for different input states. We will utilize the Lagrange mean value theorem and the first mean value theorem for definite integrals.

\begin{theorem}
\label{thm:work_convergence}
Let $\{\phi_i\}_{i=0,1}$ and $\{p_i\}_{i=0,1}$ be the eigenvectors and eigenvalues of the target state $\rho^*$. Let $\Delta W$ be the extracted work (which is a continuous random variable) and $M$ be the number of repetitions as in Alg.~\ref{alg:sc_work_extraction}. If the extraction protocol is operated on a state $\rho$ that is an eigenstate of $\rho^*$ (i.e., $\rho = \phi_i$), then the expected extracted work $\mathbb{E}[\Delta W]$ converges to a fixed value $w_i$, and the extracted work $\Delta W$ converges in probability to its expectation $\mathbb{E}[\Delta W]$ as $M \rightarrow \infty$.

To be precise, it means:
\begin{equation}
    \lim_{M\rightarrow\infty}\mathbb{E}[\Delta W]=w_{i}
\end{equation}
where
\begin{equation}
    w_{i} := \beta^{-1}(\D(\phi_{i}\|\mathbb{I}/2)+\ln p_{i}),
\end{equation}
and for any $\epsilon>0$,
\begin{equation}
    \lim_{M\rightarrow\infty}\Pr[|\Delta W-\mathbb{E}[\Delta W]|\ge\epsilon]=0.
\end{equation}
\end{theorem}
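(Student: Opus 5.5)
Since the input is an eigenstate $\phi_i$ of $\rho^*$, the rotation $U$ maps it to the computational basis state $\ket{i}$, and the joint system--battery state is a product $\proj{i}\otimes\varphi(x)$ with no coherences. I will therefore track the system population through the $M$ swaps as a classical Markov chain. At repetition $l$ the swap $V_{\rho^*,l}$ replaces the system qubit by a fresh thermal qubit. That qubit is in state $j$ with probability $p_{j,l}$, where $p_{j,l}$ are the Gibbs weights of $H_R(\nu(l))$ (by construction $\nu(l)=\beta^{-1}\ln(p_{0,l}/p_{1,l})$). The battery is shifted by $(i_{\rm old}-j)\nu(l)$. The total work is then a sum of independent-ish increments,
\[
\Delta W=\sum_{l=1}^M (s_{l-1}-s_l)\,\nu(l),
\]
where $s_0=i$ and $s_l\sim p_{\cdot,l}$ are independent across $l$ because each swap discards the old qubit. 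Writing $s_l\in\{0,1\}$ as the occupation of the excited level, this telescoping-type form is the key structural fact.

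For the expectation, I rewrite the sum by summation by parts:
\[
\Delta W=s_0\nu(1)-s_M\nu(M)+\sum_{l=1}^{M-1}s_l\big(\nu(l+1)-\nu(l)\big).
\]
Here $\nu(M)=0$, since $p_{0,M}=1/2$. Taking expectations with $\mathbb{E}[s_l]=p_{1,l}$ gives
\[
\mathbb{E}[\Delta W]=i\,\nu(1)+\sum_{l=1}^{M-1}p_{1,l}\big(\nu(l+1)-\nu(l)\big).
\]
This is a Riemann--Stieltjes sum for $\int p_1\,d\nu$ along the linear path $p_1(t)$ from $p_1+\delta p$ to $1/2$. With $\nu=\beta^{-1}\ln(p_0/p_1)$, we have $p_1\,d\nu=-\beta^{-1}\,dp_1/(1-p_1)$ up to sign conventions. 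I will use the Lagrange mean value theorem on each increment $\nu(l+1)-\nu(l)$ and the first mean value theorem for integrals to show the sum converges to the integral as $M\to\infty$, with an $O(1/M)$ error. Evaluating the integral and combining it with $i\,\nu(1)\to i\beta^{-1}\ln(p_0/p_1)$ should give $\beta^{-1}(\ln 2+\ln p_i)$. Since $\D(\phi_i\|\mathbb{I}/2)=\ln 2$, this equals $w_i$. I will check the sign bookkeeping of $\Gamma_{(i-j)\nu}$ carefully, because that is where an error is most likely.

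For concentration, I use independence of the $s_l$ for $l\ge1$:
\[
\mathrm{Var}[\Delta W]=\sum_{l=1}^{M-1}p_{0,l}\,p_{1,l}\big(\nu(l+1)-\nu(l)\big)^2 .
\]
Each increment is $O(1/M)$, because $\nu$ is smooth in $p$ on the compact segment, which stays away from $0$ and $1$ when $p_0\in(0,1)$. Hence the variance is $O(1/M)$, and Chebyshev's inequality gives convergence in probability.

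I expect the main obstacle to be the boundary terms and the endpoint $l=1$. If $p_i$ is close to $0$ or $1$, the derivative of $\nu$ blows up, so I will assume $p_0\in(0,1)$ and take uniform bounds on the segment.
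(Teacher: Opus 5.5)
Your route is the paper's: model the post-swap system states as independent bits $s_l$ with $\Pr(s_l=1)=p_{1,l}$, sum by parts, evaluate the expectation with the Lagrange mean value theorem and a sum-to-integral comparison, and use independence for concentration. There is, however, a sign error in the work increment, and as written it gives the wrong limit.

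\textbf{The sign error.} In $V_{\rho^*,l}=\sum_{ij}\ketbra{i}{j}_Q\otimes\ketbra{j}{i}_R\otimes\Gamma_{(i-j)\nu(l)}$, the system is taken from $j$ (its old state) to $i$ (the reservoir's old state). The battery is therefore raised by $(s_l-s_{l-1})\nu(l)$, i.e.\ new minus old. Energy conservation demands the same thing, because the reservoir qubit goes from energy $s_l\nu(l)$ to $s_{l-1}\nu(l)$. You used $(s_{l-1}-s_l)\nu(l)$ instead.

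Carry your own formula through. Your computation $p_1\,d\nu=-\beta^{-1}dp_1/(1-p_1)$ is correct. Integrating it from $p_1$ to $1/2$ gives $-\beta^{-1}(\ln 2+\ln p_0)$. Adding $i\nu(1)\to i\beta^{-1}\ln(p_0/p_1)$ then yields $-\beta^{-1}(\ln 2+\ln p_i)=-w_i$, not $w_i$.

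A quick sanity check shows the problem directly. For $i=0$ and $p_0>1/2$, $\nu(l)$ decreases in $l$, so every term $p_{1,l}(\nu(l+1)-\nu(l))$ is negative. Under your sign, a perfectly matched pure input would cost work rather than yield $\beta^{-1}(\ln 2+\ln p_0)>0$.

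With the correct orientation you get
\[
\mathbb{E}[\Delta W]=-i\,\nu(1)+\sum_{l=1}^{M-1}p_{1,l}\big(\nu(l)-\nu(l+1)\big),
\]
which is exactly the paper's expression, and the rest of your expectation argument goes through unchanged.

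\textbf{Concentration.} This part is unaffected, since the variance does not depend on the sign. Here you differ mildly from the paper. You compute $\mathrm{Var}[\Delta W]=\sum_{l}p_{0,l}p_{1,l}(\nu(l)-\nu(l+1))^2=O(1/M)$ and apply Chebyshev. The paper instead puts each summand $x_l(\nu(l)-\nu(l+1))$ in an interval of length $O(\delta p)$ and applies Hoeffding, which gives an exponential tail in $M$.

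Both arguments rest on the independence of the $s_l$, which you justify explicitly and the paper leaves implicit. Both suffice for convergence in probability; Hoeffding just gives a sharper rate. Your caveat that $p_1>0$, so that $\nu$ stays smooth on the segment, is also needed in the paper's argument, whose bound divides by $p_1$.
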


\begin{proof}
We consider the case where the input state is $\rho = \phi_i$. We assume that $p_0 > \frac{1}{2}$ without loss of generality.

According to the state evolution described in Section 9.2, the state after the first repetition can be viewed as a classical state: it is $\phi_{x_1}$ where $x_1$ is a random bit sampled from $\{0,1\}$ according to the probability distribution $(p_{0,1}, p_{1,1})$. The extracted work after this first repetition, conditioned on $x_1$, is $(x_1 - i)\nu(1)$.

The evolution in subsequent repetitions $l$ (where $l \ge 2$) can be viewed as a classical process: the state after repetition $l$ is $\phi_{x_l}$, where $x_l$ is a random bit sampled from $\{0,1\}$ according to $(p_{0,l}, p_{1,l})$. The extracted work in repetition $l$, conditioned on $x_{l-1}$ and $x_l$, is $(x_l - x_{l-1})\nu(l)$.

Suppose the sequence of random bits sampled over $M$ repetitions is $x_1 \dots x_M$. The total extracted work after $M$ repetitions, conditioned on this sequence, is:
\begin{equation}
    \Delta W = (x_{1}-i)\nu(1)+\sum_{l=2}^{M}(x_{l}-x_{l-1})\nu(l) = -i\nu(1)+\sum_{l=1}^{M-1}x_{l}(\nu(l)-\nu(l+1))+x_{M}\nu(M)
\end{equation}
Recall that $\nu(l)=\beta^{-1}\ln\frac{p_{0}-l\delta p}{p_{1}+l\delta p}$, where $\delta p=\frac{1}{M}(p_{0}-\frac{1}{2})$.

The expected extracted work is:
\begin{equation}
    \mathbb{E}[\Delta W] = -i\nu(1)+\sum_{l=1}^{M-1}\mathbb{E}[x_{l}](\nu(l)-\nu(l+1))+\mathbb{E}[x_{M}]\nu(M)
\end{equation}
Since $\mathbb{E}[x_l] = p_{1,l} = p_1 + l\delta p$, we obtain:
\begin{equation}
    \mathbb{E}[\Delta W] = -i\beta^{-1}\ln\frac{p_{0}-\delta p}{p_{1}+\delta p}+\beta^{-1}\sum_{l=1}^{M-1}\left(\ln\frac{p_{0}-l\delta p}{p_{1}+l\delta p}-\ln\frac{p_{0}-(l+1)\delta p}{p_{1}+(l+1)\delta p}\right)(p_{1}+l\delta p).
\end{equation}

Using the Lagrange mean value theorem on $\nu(l)$, we can express the difference as:
\begin{equation}
    \beta(\nu(l)-\nu(l+1)) = \frac{1}{\xi_{l}(1-\xi_{l})}\delta p,
\end{equation}
for some $\xi_{l}\in[p_{0}-(l+1)\delta p,p_{0}-l\delta p]$.

Therefore, the expectation simplifies to:
\begin{align*}
    \mathbb{E}[\Delta W] &= -i\beta^{-1}\left(\ln\frac{p_{0}}{p_{1}}-\frac{\delta p}{\xi_{0}(1-\xi_{0})}\right)+\beta^{-1}\sum_{l=1}^{M-1}\frac{p_{1}+l\delta p}{\xi_{l}(1-\xi_{l})}\delta p \\
    &= -i\beta^{-1}\ln\frac{p_{0}}{p_{1}}+\beta^{-1}\sum_{l=1}^{M}\frac{p_{1}+l\delta p}{\xi_{l}(1-\xi_{l})}\delta p+i\beta^{-1}\frac{\delta p}{\xi_{0}(1-\xi_{0})}-\beta^{-1}\frac{\delta p}{2\xi_{M}(1-\xi_{M})}
\end{align*}

We approximate the sum with a definite integral. Using the first mean value theorem for definite integrals, the remainder $R_1(\delta p)$ can be shown to be bounded such that $R_1(\delta p) < \frac{1}{2} \delta p \frac{1}{p_1 (1-p_0)}$. Thus, the summation is finite while the remainder terms are of order $\mathcal{O}(\delta p)$:
\begin{equation}
    \beta^{-1}\sum_{l=1}^{M}\frac{p_{1}+l\delta p}{\xi_{l}(1-\xi_{l})}\delta p = \beta^{-1}\int_{1/2}^{p_{0}}\frac{dp}{p}+\beta^{-1}R_{1}(\delta p) = \beta^{-1}\ln p_{0}+\beta^{-1}\ln(2)+\mathcal{O}(\delta p).
\end{equation}

This yields:
\begin{align*}
    \mathbb{E}[\Delta W] &= -i\beta^{-1}\ln\frac{p_{0}}{p_{1}}+\beta^{-1}\ln p_{0}+\beta^{-1}\ln(2)+\mathcal{O}(\delta p) \\
    &= -\beta^{-1}\Tr(\phi_{i}\ln \mathbb{I}//2)-i\beta^{-1}\ln\frac{p_{0}}{p_{1}}+\beta^{-1}\ln p_{0}+\mathcal{O}(\delta p) \\
    &= \beta^{-1}[\D(\phi_{i}\|\mathbb{I}/2)+\ln p_{i}]+\mathcal{O}(\delta p).
\end{align*}
Taking the limit $M \rightarrow \infty$ (which implies $\delta p \rightarrow 0$), we arrive at:
\begin{equation}
    \lim_{M\rightarrow\infty}\mathbb{E}[\Delta W] = \beta^{-1}[\D(\phi_{i}\|\mathbb{I}/2)+\ln p_{i}] = w_i.
\end{equation}

To demonstrate convergence in probability, we note that $x_{l}(\nu(l)-\nu(l+1))\in[0,\frac{2}{p_{1}}\delta p]$ and apply Hoeffding's inequality:
\begin{equation}
    \Pr[|\Delta W-\mathbb{E}[\Delta W]|\ge\zeta] \le 2\exp\left(-\frac{\zeta^{2}}{\sum_{l=1}^{M-1}(\frac{2}{p_{1}}\delta p)^{2}}\right) \le 2\exp\left(-\frac{p_{1}^{2}\zeta^{2}M}{(2p_{0}-1)^{2}}\right).
\end{equation}
As $M \rightarrow \infty$, this probability vanishes, proving the theorem.
\end{proof}
\end{document}